\newcommand{\llb}{\llbracket}
\newcommand{\rrb}{\rrbracket}
\newcommand{\ran}{\rangle}
\newcommand{\lan}{\langle}
\spnewtheorem{notat}{Notation}{\bfseries}{\rmfamily}
\def\thmt@rst@storecounters#1{%
%THIS IS THE LINE I ADDED:
\vspace{-1.9ex}%
  \bgroup
        % ugly hack: save chapter,..subsection numbers
        % for equation numbers.
  %\refstepcounter{thmt@dummyctr}% why is this here?
  %% temporarily disabled, broke autorefname.
  \def\@currentlabel{}%
  \@for\thmt@ctr:=\thmt@innercounters\do{%
    \thmt@sanitizethe{\thmt@ctr}%
    \protected@edef\@currentlabel{%
      \@currentlabel
      \protect\def\@xa\protect\csname the\thmt@ctr\endcsname{%
        \csname the\thmt@ctr\endcsname}%
      \ifcsname theH\thmt@ctr\endcsname
        \protect\def\@xa\protect\csname theH\thmt@ctr\endcsname{%
          (restate \protect\theHthmt@dummyctr)\csname theH\thmt@ctr\endcsname}%
      \fi
      \protect\setcounter{\thmt@ctr}{\number\csname c@\thmt@ctr\endcsname}%
    }%
  }%
  \label{thmt@@#1@data}%
  \egroup
}%
\newcommand\Item[1][]{%
  \ifx\relax#1\relax  \item \else \item[#1] \fi
  \abovedisplayskip=0pt\abovedisplayshortskip=0pt~\vspace*{-\baselineskip}}
\begin{document}

\title{Evidence Logics with Relational Evidence }
\titlerunning{Ev. Log. with Rel. Ev.}  % abbreviated title (for running head)
%                                     also used for the TOC unless
%                                     \toctitle is used
%
\author{Alexandru Baltag \inst{1} \and Andr\'es Occhipinti Liberman\inst{2}}
\authorrunning{Alexandru Baltag and Andr\'es Occhipinti Liberman} % abbreviated author list (for running head)
%
%%%% list of authors for the TOC (use if author list has to be modified)
\tocauthor{Alexandru Baltag and Andr\'es Occhipinti Liberman}
\institute{University of Amsterdam, Amsterdam, Netherlands,\\
\email{A.Baltag@.uva.nl}\\
\and
DTU Compute, Technical University of Denmark, Copenhagen, Denmark,\\
\email{aocc@dtu.dk}}

\maketitle              % typeset the title of the contribution

\begin{abstract}
We introduce a family of logics for reasoning about \textit{relational evidence}: evidence that involves an orderings of states in terms of their relative plausibility. We provide sound and complete axiomatizations for the logics. We also present several evidential actions and prove soundness and completeness for the associated dynamic logics.
\keywords{evidence logic, dynamic epistemic logic, belief revision}
\end{abstract}
\textit{Dynamic evidence logics} \cite{vBDP-evnbdhd-APAL,vanBenthem2011,vBP-dynev,vBP-logdynev,Baltag2016} are logics for reasoning about the evidence and evidence-based beliefs of agents in a dynamic environment. Evidence logics are concerned with scenarios in which an agent collects several pieces of evidence about a situation of interest, from a number of sources, and uses this evidence to form and revise her beliefs about this situation. The agent is typically uncertain about the actual state of affairs, and as a result takes several alternative descriptions of this state as possible (these descriptions are typically called \textit{possible worlds} or \textit{possible states}). The existing evidence logics, i.e., \textit{neighborhood evidence logics} (\textsf{NEL}) \cite{vBDP-evnbdhd-APAL,vanBenthem2011,vBP-dynev,vBP-logdynev,Baltag2016}, have the following features:

\begin{enumerate}[leftmargin=*]
\item \textit{All evidence is `binary'}. Each piece of evidence is modeled as a set of possible states.  This set indicates which states are good candidates for the actual state, and which ones are not, according to the source. Hence the name binary; every state is either a good candidate (`in'), or a bad candidate (`out').
\item \textit{All evidence is equally reliable}. The agent treats all evidence pieces on a par. There is no explicit modeling of the relative \textit{reliability} of pieces of evidence.
\item \textit{One procedure to combine evidence}. The logics developed so far study the evidence and beliefs held by an agent relying on one specific procedure for combining evidence. 
\end{enumerate}

This work presents a family of dynamic evidence logics which we call \textit{relational evidence logics} (\textsf{REL}). Relational evidence logics aim to contribute to the existing work on evidence logic as follows. 

\begin{enumerate}[leftmargin=*]

\item \textit{Relax the assumption that all evidence is binary}. This is accomplished by modeling pieces of evidence by \textit{evidence relations}. Evidence relations are preorders over the set of possible states. The ordering is meant to represent the relative plausibility of states given an evidence item. While a special type of evidence relation -- \textit{dichotomous weak order} -- can be used to represent binary evidence, less `black-and-white' forms of evidence can also be encoded in \textsf{REL} models. 
\item \textit{Model levels of evidence reliability}. In general, not all evidence is equally reliable. Expert advice and gossip provide very different grounds for belief, and a rational agent should weight the evidence that it is exposed to accordingly. To model evidence reliability, we equipped our models with \textit{priority orders}, i.e., orderings of the family of evidence relations according to their relative reliability. Priority orders were introduced in \cite{Andr}, and have already been used in other \textsf{DEL} logics (see, e.g. \cite{Girard2011,Liu2011}). Here, we use them to model the relative reliability of pieces of evidence.
\item \textit{Explore alternative evidence aggregation rules}. Our evidence models come equipped with an aggregator, which merges the available evidence relations into a single relation representing the combined plausibility of the possible states. The beliefs of the agent are then defined on the basis of this combined plausibility order. By focusing on different classes of evidence models, given by their underlying aggregator, we can then compare the logics of belief arising from different approaches to combining evidence.
\end{enumerate}

\section{Relational Evidence Models}

\noindent \textbf{Relational evidence.} \label{rel ev def} 
We call \textit{relational evidence} any type of evidence that induces an ordering of states in terms of their relative plausibility. 
A suitable representation for relational evidence, which we adopt, is given by the class of \textit{preorders}. We call preorders representing relational evidence, \textit{evidence relations}, or \textit{evidence orders}. As is well-known, preorders can represent several meaningful types of orderings, including those that feature incomparable or tied alternatives.

\begin{definition}[Preorder] A \emph{preorder} is a binary relation that is reflexive and transitive. We denote the set of all preorders on a set $X$ by $Pre(X)$. For a preorder $R$ on $X$ and an element $x\in X$, we define the following associated notions: $R[x]\coloneqq \{y\in X \mid Rxy\}$; $R^< \coloneqq \{(x,y)\in X^2 \mid Rxy \text{ and } \neg Ryx\}$; $R^\sim \coloneqq \{(x,y) \in X^2 \mid Rxy \text{ and } Ryx\}$.
\end{definition}

\noindent \textbf{Evidence reliability.}  In general, not all sources are equally trustworthy, so an agent combining evidence may be justified in giving priority to some evidence items over others. Thus, as suggested in \cite{vBP-logdynev}, a next reasonable step in evidence logics is modeling levels of reliability of evidence. One general format for this is given by the \textit{priority graphs} of \cite{Andr}, which have already been used extensively in dynamic epistemic logic (see, e.g., \cite{Girard2011,Liu2011}). In this thesis, we will use the related, yet simpler format of a `priority order', as used in \cite{Baltag_talkingyour,Baltag_prot}, to represent hierarchy among pieces of evidence. Our definition of a priority order is as follows:

\begin{definition}[Priority order] Let $\mathscr{R}$ be a family of evidence orders over $W$. A \textit{priority order for} $\mathscr{R}$ is a preorder $\preceq$ on $\mathscr{R}$. For $R,R'\in \mathscr{R}$, $R\preceq R'$ reads as: ``the evidence order $R'$ has at least the same priority as evidence order $R$''. 
\end{definition}

Intuitively, priority orders tell us which pieces of evidence are more reliable according to the agent. They give the agent a natural way to break stalemates when faced with inconsistent evidence.\\ 

\noindent \textbf{Evidence aggregators.} \label{ev ag} We are interested in modeling a situation in which an agent integrates evidence obtained from multiple sources to obtain and update a combined plausibility ordering, and forms beliefs based on this ordering. When we consider relational evidence with varying levels of priority, a natural way model the process of evidence combination is to define a function that takes as input a family of evidence orders $\mathscr{R}$ together with a priority order $\preceq$ defined on them, and combines them into a plausibility order. The agent's beliefs can then be defined in terms of this output. 

\begin{definition}[Evidence aggregator] Let $W$ be a set of alternatives. Let $\mathcal{W}$ be the set of preorders on $W$. An \emph{evidence aggregator for} $W$ is a function $Ag$ mapping any preordered family $P=\lan \mathscr{R},\preceq\ran$ to a preorder $Ag(P)$ on $W$, where $\emptyset\not\in\mathscr{R}\subseteq \mathcal{W}$ and $\preceq$ is a preorder on $\mathscr{R}$. $\mathscr{R}$ is seen here as a family of evidence orders over $W$, $\preceq$ as a priority order for $\mathscr{R}$, and $Ag(P)$ as an evidence-based plausibility order on $W$.
\end{definition}

At first glance, our definition of an aggregator may seem to impose mild constraints that are met by most natural aggregation functions. However, as it is well-known, the output of some common rules, like the majority rule, may not be transitive (thus not a preorder), and hence they don't count as aggregators. A specific aggregator that \textit{does} satisfy the constraints is the \textit{lexicographic rule}. This aggregator was extensively studied in \cite{Andr}, where it was shown to satisfy several nice aggregative properties. The definition of the aggregator is the following:

\begin{definition} \label{lex def} The (anti-)lexicographic rule is the aggregator $\mathsf{lex}$ given by
\[(w,v)\in \mathsf{lex}(\lan \mathscr{R},\preceq\ran) \text{ iff } \forall R' \in \mathscr{R} \ (R' wv \ \lor \ \exists R \in \mathscr{R} (R'\prec R \wedge R^< wv))\]
\end{definition}

Intuitively, the lexicographic rule works as follows.  Given a particular hierarchy $\preceq$ over a family of evidence $\mathscr{R}$, aggregation is done by giving priority to the evidence orders further up the hierarchy in a compensating way: the agent follows what all evidence orders agree on, if it can, or follows more influential pieces of evidence, in case of disagreement. Other well-known aggregators that satisfy the constraints, but don't make use of the priority structure, are the intersection rule (defined below), or the Borda rule.

\begin{definition} \label{cap def} The \textit{intersection rule} is the aggregator $Ag_\cap$ given by $(w,v)\in Ag_\cap(\lan \mathscr{R},\preceq\ran) \text{ iff } (w,v)\in \bigcap\mathscr{R}$.
\end{definition}

\noindent \textbf{The models.} Having defined relational evidence and evidence aggregators, we are now ready to introduce relational evidence models.

\begin{definition}[Relational evidence model] Let $\mathsf{P}$ be a set of propositional variables. A \emph{relational evidence model} (\textsf{REL} model, for short) is a tuple $M=\langle W, \lan \mathscr{R},\preceq\ran, V, Ag\rangle$ where $W$ is a non-empty set of \emph{states}; $\lan \mathscr{R},\preceq\ran$ is an \textit{ordered family of evidence}, where: $\mathscr{R}$ is a set of evidence orders on $W$ with $W^2\in \mathscr{R}$ and $\preceq$ is a priority order for $\mathscr{R}$; $V:\mathsf{P}\to 2^W$ is a valuation function; $Ag$ is an evidence aggregator for $W$. $M=\langle W, \lan \mathscr{R},\preceq\ran, V, Ag\rangle$ is said to be an \textit{$f$-model} iff $Ag=f$.
\end{definition}

$W^2\in \mathscr{R}$ is called the \textit{trivial evidence order}. It represents the evidence stating that  ``the actual state is in $W$''. This evidence represents full uncertainty and is taken to be always available to the agent as a starting point. 

\subsubsection{Syntax and semantics.} We now introduce a \textit{static} language for reasoning about relational evidence, which we call $\mathscr{L}$. In \cite{Baltag2016}, this language is interpreted over \textsf{NEL} models (there, the language is called  $\mathscr{L}_{\forall\Box\Box_0}$).

\begin{definition}[$\mathscr{L}$] Let $\mathsf{P}$ be a countably infinite set of propositional variables. The language $\mathscr{L}$ is defined by:
\[ \varphi  \Coloneqq  \ p \mid \neg \varphi \mid \varphi\wedge\varphi \mid \Box_0\varphi \mid \Box\varphi \mid \forall \varphi \ \ \ (p\in \mathsf{P})\]
We define $\bot \coloneqq p \wedge \neg p$ and  $\top \coloneqq \neg\bot$. The Boolean connectives $\lor$ and $\to$ are defined in terms of $\neg$ and $\wedge$ in the usual manner. The duals of the modal operators are defined in the following way: $\Diamond_0\coloneqq \neg \Box_0 \neg$, $\Diamond\coloneqq \neg \Box \neg$, $\exists \coloneqq \neg \forall \neg$. 
\end{definition}

The intended interpretation of the modalities is as follows. $\Box_0\varphi$ reads as: `the agent has basic, factive evidence for $\varphi$'; $\Box\varphi$ reads as: `the agent has combined, factive evidence for $\varphi$'. The language $\mathscr{L}$ is interpreted over \textsf{REL} models as follows.

\begin{definition}[Satisfaction] Let $M=\langle W, \lan \mathscr{R},\preceq\ran, V, Ag\rangle$ be an \textsf{REL} model and $w\in W$. The satisfaction relation $\models$ between pairs $(M,w)$ and formulas $\varphi\in\mathscr{L}$ is defined as follows:
\begin{flalign*}
\begin{array}{@{}>{\displaystyle}l@{}>{\displaystyle{}}l@{}>{\displaystyle{{}}}l@{{}}}
M,w\models p & \text{ iff }  w\in V(p) \\
M,w\models \neg \varphi & \text{ iff }  M,w\not\models \varphi \\
M,w\models \varphi \wedge \psi & \text{ iff }  M,w\models \varphi \text{ and } M,w\models \psi \\
M,w\models \Box_0 \varphi & \text{ iff }  \text{there is } R\in \mathscr{R} \text{ such that, for all } v\in W, R wv \text{ implies } M,v\models \varphi  \\
M,w\models \Box \varphi & \text{ iff } \text{for all } v\in W, Ag(\lan \mathscr{R},\preceq\ran)wv \text{ implies } M,v\models \varphi\\ 
M,w\models \forall \varphi & \text{ iff }   \text{for all } v\in W, M,v\models \varphi
\end{array}
&&
\end{flalign*}
\end{definition}

\begin{definition}[Truth map] Let $M=\langle W, \lan \mathscr{R},\preceq\ran, V, Ag\rangle$ be a \textsf{REL} model. We define a \textit{truth map} $\llb \cdot \rrb_M: \mathscr{L}\to 2^W$ given by: $\llb \varphi \rrb_M  = \{w\in W \mid M,w\models \varphi \}$
\end{definition}

Next, we introduce some definable notions of evidence and belief over $\mathsf{REL}$ models. Fix a model $M=\langle W, \lan \mathscr{R},\preceq\ran, V, Ag\rangle$.\\

\subsubsection{Basic (factive) evidence.} We say that a piece of evidence $R\in\mathscr{R}$ \textit{supports} $\varphi$ at $w\in W$ iff $R[w]\subseteq \llb \varphi \rrb_M$. That is, every world that is at least as plausible as $w$ under $R$ satisfies $\varphi$. Using this notion of support, we say that the agent has basic, factive evidence for $\varphi$ at $w\in W$ if there is a piece of evidence $R\in\mathscr{R}$ that supports $\varphi$ at $w$. That is: `the agent has basic evidence for $\varphi$ at $w\in W$' iff $\exists R\in \mathscr{R}(R[w]\subseteq \llb\varphi\rrb_M)$ iff $M,w\models \Box_0\varphi$.
We also have a non-factive version of this notion, which says that the agent has basic evidence for $\varphi$ if there is a piece of evidence $R$ that supports $\varphi$ at \textit{some} state, i.e.: `the agent has basic evidence for $\varphi$ (at any state)' iff  $\exists w (\exists R\in \mathscr{R}(R[w]\subseteq\llb\varphi\rrb_M))$ iff $M,w\models \exists \Box_0\varphi$. We can also have a \textit{conditional} version of basic evidence: `the agent has basic, factive evidence for $\psi$ at $w$, conditional on $\varphi$ being true'. Putting $\Box_0^\varphi\psi \coloneqq \Box_0(\varphi\to\psi)$, we have: `the agent has basic, factive evidence for $\psi$ at $w$, conditional on $\varphi$ being true' iff $\exists R\in\mathscr{R}(\forall v(R wv \Rightarrow ( v\in \llb \varphi \rrb_M \Rightarrow v\in \llb \psi\rrb_M)))$ iff $M,w\models \Box_0^\varphi\psi$. The notion of conditional evidence reduces to that of plain evidence when $\varphi=\top$.

\subsubsection{Aggregated (factive) evidence.} We propose a notion of aggregated evidence based on the output of the aggregator: the agent has aggregated, factive evidence for $\varphi$  at $w\in W$ iff $Ag(\lan \mathscr{R},\preceq\ran)[w]\subseteq \llb\varphi\rrb_M$ iff $M,w\models \Box\varphi$. The non-factive version of the previous notion is as follows: the agent has aggregated evidence for $\varphi$ (at any state) iff  $\exists w (Ag(\lan \mathscr{R},\preceq\ran)[w]\subseteq \llb\varphi\rrb_M)$ iff $M,w\models \exists \Box \varphi$. As we did with basic evidence, we can define a conditional notion of aggregated evidence in $\varphi$ by putting $\Box^\varphi\psi \coloneqq \Box(\varphi\to\psi)$. The unconditional version is given by $\varphi=\top$.

\subsubsection{Evidence-based belief.} The notion of belief we will work with is based on the agent's plausibility order, which in \textsf{REL} models corresponds to the output of the aggregator. As we don't require the plausibility order to be converse-well founded, it may have no maximal elements, which means that Grove's definition of belief may yield inconsistent beliefs. For this reason, we adopt a usual generalization of Grove's definition, which defines beliefs in terms of truth in all `plausible enough' worlds (see, e.g., \cite{vBP-dynev,Baltag2014}). Putting $B\varphi \coloneqq \forall \Diamond \Box \varphi$, we have: the agent believes $\varphi$ (at any state) iff $\forall w( \exists v((w,v)\in Ag(\lan \mathscr{R},\preceq\ran) \text{ and } Ag(\lan \mathscr{R},\preceq\ran)[v]\subseteq \llb\varphi\rrb_M))$ iff $M,w\models \forall\Diamond\Box\varphi$. That is, the agent believes $\varphi$ iff for every state $w\in W$, we can always find a more plausible state $v\in \llb \varphi \rrb_M$, all whose successors are also in $\llb \varphi \rrb_M$. When the plausibility relation is indeed converse well-founded, this notion of belief coincides with Grove's one, while ensuring consistency of belief otherwise. We can also define a notion of conditional belief.  Putting $B^\varphi\psi \coloneqq \forall(\varphi\to\Diamond(\varphi\to (\Box \varphi\to\psi)))$, we have: `the agent believes $\psi$ conditional on $\varphi$ iff $\forall w (w\in \llb \varphi \rrb_M \Rightarrow \exists v (Ag(\lan \mathscr{R},\preceq\ran)wv \text { and } v\in \llb \varphi \rrb_M \text{ and } Ag(\lan \mathscr{R},\preceq\ran)[v]\cap \llb \varphi \rrb_M \subseteq \llb \psi \rrb_M))$ iff $M,w\models B^\varphi\psi$. As before, this conditional notion reduces to that of absolute belief when $\varphi=\top$.

\begin{example}[The diagnosis] Consider an agent seeking medical advice on an ongoing health issue. To keep thing simple, assume that there are four possible diseases: asthma ($a$), allergy ($al$), cold ($c$), and flu ($f$). This can be described by a set $W$ consisting of four possible worlds, $\{w_{a},w_{al},w_{c},w_{f}\}$ and a set of atomic formulas $\{a, al, c, f\}$ (each true at the corresponding world). The agent consults three sources, a medical intern ($IN$), a family doctor ($FD$) and an allergist ($AL$). The doctors inspect the patient, observing fairly non-specific symptoms: cough, no fever, and some inconclusive swelling at an allergen test spot. Given the non-specificity of the symptoms, the doctors can't single out a condition that best explains all they observed. Instead, comparing the diseases in terms of how well they explain the observed symptoms and drawing on their experience, each doctor arrives at a ranking of the possible diseases. Let us denote by $R_{IN}$, $R_{FD}$ and $R_{AL}$ the evidence orders representing the judgment of the intern, family doctor and allergist, respectively, which we assume to be as depicted below. If the agent has no information about how reliable each doctor is, she may just trust them all equally. We can model this by a priority order $\preceq$ over the evidence orders $R_{IN} \sim R_{FD} \sim R_{AL}$ that puts all evidence as equally likely. On the other hand, if the agent knows that the intern is the least experienced of the doctors, she may give consider his evidence as strictly less reliable than the one provided by the other doctors. Similarly, if allergist has a very strong reputation, the agent may wish to give the allergist's judgment strict priority over the rest. We can model this by a different priority order $\preceq'$ given by $R_{IN} \prec' R_{FD} \prec' R_{AL}$ (note that this is meant to be reflexive and transitive). If, e.g., the agent uses the lexicographic rule, we arrive at the following scenarios, with different aggregated evidence depending on the priority order used:

\begin{center}

\scalebox{0.6}{
\begin{tikzpicture}

\node[circle,fill=black][label=below:$a$] (v8) at (7,1) {};
\node[circle,fill=black][label=above:$al$] (v7) at (8.5,1) {};
\node[circle,fill=black][label=below:$c$] (v6) at (10,1) {};
\node[circle,fill=black][label=below:$f$] (v5) at (8.5,-1) {};

\draw [<->] (v5) edge (v8);
\draw [<->] (v8) edge (v7);
\draw [<->] (v7) edge (v6);
\draw [<->] (v6) edge (v5);
\draw [<->] (v7) edge (v5);

\node[circle,fill=black][label=right:$al$] (v4) at (13.5,-0.5) {};
\node[circle,fill=black][label=below:$a$] (v2) at (12.5,1) {};
\node[circle,fill=black][label=below:$c$] (v3) at (14.5,1) {};
\node[circle,fill=black][label=below:$f$] (v1) at (13.5,-2.5) {};

\draw [<->] (v2) edge (v3);
\draw [->] (v4) edge (v2);
\draw [->] (v4) edge (v3);
\draw [->] (v1) edge (v4);

\node[circle,fill=black][label=below:$a$] (v8) at (17,-0.5) {};
\node[circle,fill=black][label=above:$al$] (v7) at (18.5,1) {};
\node[circle,fill=black][label=below:$c$] (v6) at (20,-0.5) {};
\node[circle,fill=black][label=below:$f$] (v5) at (18.5,-0.5) {};

\draw [->] (v8) edge (v7);
\draw [->] (v6) edge (v7);
\draw [<->] (v8) edge (v5);
\draw [<->] (v5) edge (v6);
\node at (8.5,2) {$R_{IN}$};
\node at (13.5,2) {$R_{FD}$};
\node at (18.5,2) {$R_{AL}$};

\node[circle,draw][label=below:$a$] (v8) at (7,-8) {};
\node[circle,draw][label=above:$al$] (v7) at (8.5,-6.5) {};
\node[circle,draw][label=below:$c$] (v6) at (10,-8) {};
\node[circle,fill=black][label=below:$f$] (v5) at (8.5,-9.5) {};

\draw [<->] (v8) edge (v6);
\draw [->] (v5) edge (v8);
\draw [->] (v5) edge (v6);
\draw [->] (v5) edge (v7);

\node (v9) at (7.5,-5) {IN};
\node (v10) at (8.5,-5) {FD};
\node (v11) at (9.5,-5) {AL};

\node at (8.5,-4) {Lexicographic aggregation based on $\preceq$};
\node at (8,-5) {$\sim$};
\node at (9,-5) {$\sim$};

\node[circle,fill=black][label=below:$a$] (v8) at (17,-8) {};
\node[circle,draw][label=above:$al$] (v7) at (18.5,-6.5) {};
\node[circle,fill=black][label=below:$c$] (v6) at (20,-8) {};
\node[circle,fill=black][label=below:$f$] (v5) at (18.5,-9.5) {};

\draw [<->] (v8) edge (v6);
\draw [->] (v5) edge (v8);
\draw [->] (v5) edge (v6);

\node at (18.5,-4) {Lexicographic aggregation based on $\preceq'$};

\node (v9) at (17.5,-5) {$R_{IN}$};
\node (v10) at (18.5,-5) {$R_{FD}$};
\node (v11) at (19.5,-5) {$R_{AL}$};
\node at (18,-5) {$\prec$};
\node at (19,-5) {$\prec$};

\draw [->] (v8) edge (v7);
\draw [->] (v6) edge (v7);

\end{tikzpicture}
}
\end{center}

The best candidates for the actual disease, in each case, are depicted in white. Note that, e.g., the agent has basic evidence for $a\vee al\vee c$, but she doesn't have evidence for $f$. Moreover, in the scenario based on $\preceq'$, the agent believes that the allergy is the actual disease, but she doesn't in the scenario based on $\preceq$.

\end{example}

\subsubsection{A $\mathsf{PDL}$ language for relational evidence} \label{pdl lang rel} Later in this work, we will discuss \textit{evidential actions} by which the agent, upon receiving a new piece of relational evidence, revises its existing body of evidence. To encode syntactically the evidence pieces featured in evidential actions, we will enrich our basic language $\mathscr{L}$ with formulas that stand for specific evidence relations. A natural way to introduce relation-defining expressions, in a modal setting such as ours, is to employ suitable program expressions from Propositional Dynamic Logic ($\mathsf{PDL}$). We will follow this approach, augmenting $\mathscr{L}$ with $\mathsf{PDL}$-style \textit{evidence programs} that define pieces of relational evidence. As evidence orders are preorders, we will employ a set of program expressions whose terms are guaranteed to always define preorders. An natural fragment of $\mathsf{PDL}$ meeting this condition is the one provided by programs of the form $\pi^*$, which always define the reflexive transitive closure of some relation. 

\begin{definition}[Evidence programs] The set $\Pi$ has all program symbols $\pi$ defined as follows:
\[\pi \Coloneqq \ A \mid ?\varphi \mid \pi\cup\pi \mid \pi;\pi \mid \pi^*\]
where $\varphi\in \mathscr{L}$. Here $A$ denotes the universal program, while the rest of the programs have their usual $\mathsf{PDL}$ meanings (see, e.g., \cite{kozen}). We call $\Pi_*\coloneqq \{ \pi^* \mid \pi\in\Pi\}$ the set of \emph{evidence programs}. 
\end{definition}

To interpret evidence programs in \textsf{REL} models, we extend the truth map $\llb \cdot \rrb_M$ as follows:

\begin{definition}[Truth map] Let $M=\langle W, \lan \mathscr{R},\prec\ran, V, Ag\rangle$ be an \textsf{REL} model. We define an extended \textit{truth map} $\llb \cdot \rrb_M: \mathscr{L} \cup \Pi \to 2^W \cup 2^{W^2}$ given by: $ \llb \varphi \rrb_M  = \{w\in W \mid M,w\models \varphi \}$; $\llb A \rrb_M  = W^2$; $ \llb ?\varphi \rrb_M  = \{(w,w)\in W^2 \mid  w\in \llb \varphi \rrb^M\}$; $ \llb \pi \cup \pi' \rrb_M  = \llb \pi  \rrb_M  \cup \llb \pi' \rrb_M$;
$\llb \pi ; \pi' \rrb_M   = \llb \pi  \rrb_M \circ \llb \pi' \rrb_M$; $ \llb \pi^*\rrb_M = \llb \pi  \rrb_M^*$.
\end{definition}

\subsubsection{Some examples of definable evidence programs.} Here are some natural types of relational evidence that can be constructed with programs from $\Pi_*$. \newline

\noindent \textit{Dichotomous evidence}. For a formula $\varphi$, let $\pi_\varphi\coloneqq (A;?\varphi)\cup(?\neg\varphi ; A ; ?\neg \varphi)$. $\pi_\varphi$ puts the $\varphi$ worlds strictly above the $\neg\varphi$ worlds, and makes every world equally plausible within each of these two regions. It is easy to see that $\pi_\varphi$ always defines a preorder, and therefore $(\pi_\varphi)^*$ is an evidence program equivalent to $\pi_\varphi$.
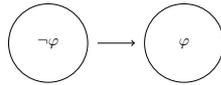
\begin{figure}[H]
\begin{center}
\scalebox{0.65}{
\begin{tikzpicture}

\draw  (-3.25,4) ellipse (0.8 and 0.8);
\draw  (-0.5,4) ellipse (0.8 and 0.8);

\node at (-3.25,4) {$\neg\varphi$};
\node at (-0.5,4) {$\varphi$};

\draw [->](-2.25,4) -- (-1.5,4);
\end{tikzpicture}
}
\end{center}
\caption{The dichotomous order defined by $\pi_\varphi$.}
\end{figure}

\noindent\textit{Totally ordered evidence}. Several programs can be used to define total orders. For example, for formulas $\varphi_1,\dots,\varphi_n$, we can define the program
\begin{flalign*}
\pi^t({\varphi_1,\dots,\varphi_n})\coloneqq(A;?\varphi_1)& \cup  (?\neg\varphi_1 ; A ; ?\neg \varphi_1;?\varphi_2) \\
& \cup (?\neg\varphi_1 ; \neg\varphi_2; A ; ?\neg \varphi_1;?\neg \varphi_2;?\varphi_3)\\ 
& \cup \dots\\
& \cup (?\neg \varphi_1;\dots;?\neg\varphi_n;A;?\neg\varphi_1;\dots;?\neg\varphi_{n-1};?\varphi_n) 
\end{flalign*}
This type of program, described in \cite{vanEijck}, puts the $\varphi_1$ worlds above everything else, the $\neg\varphi_1\wedge \varphi_2$ worlds above the $\neg\varphi_1\wedge\neg\varphi_2$ worlds, and so on, and the $\neg\varphi_1\wedge\neg\varphi_2 \wedge \dots \wedge \neg\varphi_{n-1}\wedge \varphi_n$ above the $\neg\varphi_1\wedge\neg\varphi_2 \wedge \dots \wedge \neg \varphi_n$ worlds. $\pi^t({\varphi_1,\dots,\varphi_n})$ always defines a preorder, so the evidence program $(\pi^t({\varphi_1,\dots,\varphi_n}))^*$ is equivalent to it.

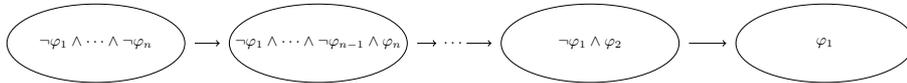
\begin{figure}[H]
\begin{center}
\scalebox{0.65}{
\begin{tikzpicture}

\draw  (-4,4) ellipse (1.8 and 0.8);
\draw  (0.5,4) ellipse (1.8 and 0.8);
\node (v1) at (3,4) {};
\draw  (6,4) ellipse (1.8 and 0.8);
\draw  (10.75,4) ellipse (1.8 and 0.8);
\node at (-4,4) {$\neg{\varphi_1}\wedge \dots \wedge \neg{\varphi_n}$};
\node at (0.5,4) {$\neg{\varphi_1}\wedge \dots \wedge \neg{\varphi_{n-1}}\wedge\varphi_n$};
\node at (6,4) {$\neg{\varphi_1}\wedge \varphi_2$};
\node at (10.75,4) {$\varphi_1$};

\draw [->](-2,4) -- (-1.5,4);
\draw [->](3.5,4) -- (4,4);
\draw [->](8,4) -- (8.75,4);
\draw [->](2.5,4) -- (v1);

\node at (3.25,4) {$\dots$};
\end{tikzpicture}
}
\end{center}
\caption{The total order defined by $\pi^t({\varphi_1,\dots,\varphi_n})$.}
\end{figure}

\noindent \textit{Partially ordered evidence}. Several programs can be used to define evidence orders featuring incomparabilities. To illustrate this, let us consider the program $\pi_{\varphi\wedge \psi} \coloneqq (A;?\varphi\wedge\psi) \cup (?\neg\varphi\wedge \neg\psi ; A ; ?\varphi \vee \psi) \cup (?\neg\varphi\wedge\psi ; A ; ?\neg\varphi\wedge\psi) \cup (?\varphi\wedge\neg\psi ; A ; ?\varphi\wedge\neg\psi)$. As depicted in Figure 3, this program puts the $\varphi\wedge\psi$ worlds above everything else, the $\neg\varphi \wedge\psi$ and $\varphi \wedge\neg \psi$ as incomparable `second-best' worlds, and the $\neg\varphi \wedge\neg\psi$ below everything else. As with the other programs $\pi_{\varphi\wedge \psi}$ always defines a preorder, so $(\pi_{\varphi\wedge \psi})^*$ is an equivalent evidence program.

\begin{figure}[!htb]
\begin{center}
\scalebox{0.7}{
\begin{tikzpicture}

\draw  (-4,4) ellipse (0.8 and 0.8);
\draw  (-1,5) ellipse (0.8 and 0.8);

\draw  (-1,3) ellipse (0.8 and 0.8);
\draw  (2,4) ellipse (0.8 and 0.8);
\node at (-4,4) {$\neg{\varphi}\wedge \neg{\psi}$};
\node at (-1,5) {$\neg{\varphi}\wedge \psi$};
\node at (-1,3) {${\varphi}\wedge \neg\psi$};
\node at (2,4) {$\varphi\wedge \psi$};

\draw [->](-3,4.5) -- (-2,5);

\draw [->](-3,3.5) -- (-2,3);

\draw [->](0,5) -- (1,4.5);

\draw [->](0,3) -- (1,3.5);

\end{tikzpicture}
}
\end{center}
\caption{A partial evidence order.}
\end{figure}
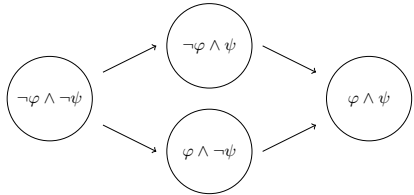

\subsubsection{Normal form programs} We now introduce a `normal form' lemma for the programs in $\Pi$. This lemma shows that, for any evidence program $\pi\in \Pi$, we can find another program $\pi'\in \Pi$, which is a union of certain programs, and which is equivalent to $\pi$. Of special interest for us is the normal form established for programs of the shape $\pi^*$. The fact that every evidence program $\pi^*$ is equivalent to a program with a specific syntactic shape is used extensively in the completeness proofs for dynamic extensions of $\mathscr{L}$ discussed later on. We first introduce some notation and necessary definitions:
\begin{notat} For $\pi_1,\dots,\pi_n\in \Pi$ we write $\bigcup^n_{i=1}\pi_i$ to denote the program $\pi_1\cup\dots\cup\pi_n$. We denote the set of all finite sequences of elements of a set $X$ by $S_0(X)$. Moreover, we denote by $\mathsf{len}(s)$ the length of a sequence $s=(s_1,s_2,\dots,s_{\mathsf{len}(s)})$ and by $s | s'$ the concatenation of sequences $s$ and $s'$. 
\end{notat} 

\begin{definition}[Program equivalence] Two programs $\pi,\pi'\in \Pi$ are \emph{equivalent} iff for every \textsf{REL} model $M$, $\llb \pi \rrb_M = \llb \pi' \rrb_M$. 
\end{definition}

\begin{definition}[Normal form] A \emph{normal form} for a program $\pi\in \Pi$ is a program $\pi'\in\Pi$ such that (1) $\pi'$ has the form $\bigcup_{i\in I}(?\varphi_i;A;?\psi_i)\cup(?\theta)$, where $\varphi_i,\psi_i,\theta\in \mathscr{L}$ and $I$ is a finite index set; and (2) $\pi$ and $\pi'$ are equivalent.
\end{definition}

The proofs for the following lemma, as well as the other results presented in this paper, can be found in the `Proofs Appendix' at the end of the paper.\\

\begin{restatable}[Normal Form Lemma]{lma}{unf}
\label{unf}  Given any program $\pi\in \Pi$, we can find a normal form $\pi'$ for it. In particular, the normal form for program $\pi^*$, where $\pi$ has a normal form $\bigcup_{i\in I}(?\varphi_i;A;?\psi_i)\cup(?\theta)$, is $\bigcup_{s\in S_0(I)}\big(?(\varphi_{s_1} \wedge \bigwedge^{\mathsf{len}(s)}_{k=2}(\exists(\psi_{s_{k-1}} \wedge \varphi_{s_k}))); A ; ?\psi_{s_{\mathsf{len}(s)}})\big) \cup (?\top)$.
 \end{restatable}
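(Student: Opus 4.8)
The plan is to establish the normal form for each program constructor by induction on the structure of $\pi$, then treat the Kleene star case separately and in detail, since that is where the nontrivial combinatorics live. First I would fix normal forms for the base cases: $A$ is already in normal form (take $I$ a singleton with $\varphi_i = \psi_i = \top$ and $\theta = \bot$), and $?\varphi$ is in normal form by taking $I = \emptyset$ and $\theta = \varphi$. For the inductive step on $\pi \cup \pi'$, I would take the normal forms $\bigcup_{i \in I}(?\varphi_i;A;?\psi_i)\cup(?\theta)$ and $\bigcup_{j \in J}(?\varphi'_j;A;?\psi'_j)\cup(?\theta')$ and observe that their union is $\bigcup_{i \in I \sqcup J}(?\dots;A;?\dots) \cup (?(\theta \vee \theta'))$, which is again in normal form. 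For $\pi;\pi'$ I would use that relation composition distributes over union, that $(?\varphi;A;?\psi)\circ(?\varphi';A;?\psi') = ?(\varphi \wedge \exists(\psi \wedge \varphi'));A;?\psi'$ (the middle $A$-steps collapse, and the composition is nonempty exactly when some world satisfies $\psi \wedge \varphi'$, which the semantics renders as the global test $?\exists(\psi \wedge \varphi')$), that $(?\varphi;A;?\psi)\circ(?\theta') = ?(\varphi \wedge \exists(\psi \wedge \theta'));A;?(\psi \wedge \theta')$ — hmm, more carefully, composing an $A$-program on the right with a test just restricts the target, so $(?\varphi;A;?\psi)\circ(?\theta') = ?\varphi;A;?(\psi\wedge\theta')$ after checking the resulting relation is $\{(w,v): w \in \llb\varphi\rrb, v\in\llb\psi\wedge\theta'\rrb\}$ which is of the required shape — and symmetrically on the left, and finally $(?\theta)\circ(?\theta') = ?(\theta\wedge\theta')$. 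Collecting these gives a normal form for $\pi;\pi'$.

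The heart of the argument is the star case. Given that $\pi$ has normal form $\rho := \bigcup_{i\in I}(?\varphi_i;A;?\psi_i)\cup(?\theta)$, I want to show $\llb\pi^*\rrb_M = \llb\rho\rrb_M^* = \llb\rho^*\rrb_M$ equals the relation denoted by the claimed program $\sigma := \bigcup_{s\in S_0(I)}\big(?(\varphi_{s_1} \wedge \bigwedge^{\mathsf{len}(s)}_{k=2}(\exists(\psi_{s_{k-1}} \wedge \varphi_{s_k}))); A ; ?\psi_{s_{\mathsf{len}(s)}})\big) \cup (?\top)$. The key observation is that a nonempty finite path through $\llb\rho\rrb_M$ from $w$ to $v$ either never uses any $A$-step — in which case all intermediate relation steps are identity steps coming from $?\theta$, so $w = v$, accounted for by the $?\top$ disjunct — or it uses at least one $A$-step, and then the path decomposes as: some $?\theta$-steps (forcing equality), then a first $A$-step of type $i = s_1$ (so $w \in \llb\varphi_{s_1}\rrb$), then between consecutive $A$-steps of types $s_{k-1}$ and $s_k$ there is an intermediate world satisfying $\psi_{s_{k-1}} \wedge \varphi_{s_k}$ (witnessing $\exists(\psi_{s_{k-1}}\wedge\varphi_{s_k})$, and the $A$-step lets us jump anywhere so the identities in between don't matter), and finally after the last $A$-step of type $s_{\mathsf{len}(s)}$ we land anywhere satisfying $\psi_{s_{\mathsf{len}(s)}}$, possibly followed by more $?\theta$-steps — giving $v \in \llb\psi_{s_{\mathsf{len}(s)}}\rrb$. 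One direction shows every $\llb\rho\rrb_M$-path yields membership in $\llb\sigma\rrb_M$ via the sequence $s$ of $A$-step types encountered; the converse shows that whenever $w \in \llb\varphi_{s_1} \wedge \bigwedge_{k}\exists(\psi_{s_{k-1}}\wedge\varphi_{s_k})\rrb$ and $v \in \llb\psi_{s_{\mathsf{len}(s)}}\rrb$, one can assemble an explicit $\llb\rho\rrb_M$-path from $w$ to $v$ by picking the witnessing worlds and using $A$-steps to travel between them.

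The main obstacle I expect is bookkeeping in the star case: making precise the claim that the *global* tests $\exists(\psi_{s_{k-1}} \wedge \varphi_{s_k})$ — which either hold everywhere or nowhere — correctly capture "there exists an intermediate world along the path", and handling the degenerate subcases cleanly (a single $A$-step, i.e. $\mathsf{len}(s) = 1$, where the conjunction $\bigwedge_{k=2}^{1}$ is empty; the empty path; paths consisting only of $?\theta$-steps; the fact that $A$-steps can always absorb adjacent identity steps since $A$ relates all pairs). A useful lemma to isolate first is that for any formulas $\chi, \chi'$ the relation $\llb ?\chi;A;?\chi'\rrb_M = \llb\chi\rrb_M \times \llb\chi'\rrb_M$, and that composing two such "rectangular" relations $\llb\chi_1\rrb\times\llb\chi_2\rrb$ and $\llb\chi_3\rrb\times\llb\chi_4\rrb$ gives $\llb\chi_1\rrb\times\llb\chi_4\rrb$ if $\llb\chi_2\rrb\cap\llb\chi_3\rrb\neq\emptyset$ and $\emptyset$ otherwise — equivalently $\llb ?(\chi_1 \wedge \exists(\chi_2\wedge\chi_3));A;?\chi_4\rrb_M$. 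With that lemma the star computation becomes a clean induction on path length, and the $?\theta$ and $?\top$ disjuncts just record the reflexive part. Once the equivalence $\llb\pi^*\rrb_M = \llb\sigma\rrb_M$ is verified pointwise over all \textsf{REL} models $M$, we are done by the definition of program equivalence.
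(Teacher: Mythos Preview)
Your approach is essentially the paper's: induction on program structure, with the star case handled by decomposing paths through the union of ``rectangular'' relations $\llb\varphi_i\rrb\times\llb\psi_i\rrb$. Your rectangular-composition lemma is what the paper records as its standard PDL facts, and your inline absorption of $?\theta$-steps is the content of the paper's observation that $\llb\pi\cup ?\theta\rrb_M^* = \llb\pi\rrb_M^*$.

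There is one point you have not addressed that the paper does. The definition of normal form requires the index set to be \emph{finite}, but $S_0(I)$ is infinite whenever $I$ is nonempty. Your argument establishes the semantic equality $\llb\pi^*\rrb_M = \llb\sigma\rrb_M$ and then declares the proof finished, but you have not shown that $\sigma$ is (equivalent to) a program with finitely many disjuncts, so as written it is not yet a normal form. The paper closes this gap with two auxiliary propositions: every walk along $\llb\bigcup_{i\in I}(?\varphi_i;A;?\psi_i)\rrb_M$ contains a path, and every such path of length exceeding $|I|$ contains one of length at most $|I|$. This restricts the relevant sequences $s$ to those with $\mathsf{len}(s)\le |I|$, a finite set. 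In your rectangular framework the same reduction is immediate: if $s_k = s_{k'}$ with $k<k'$, then $\psi_{s_k}=\psi_{s_{k'}}$, so the witness for $\exists(\psi_{s_{k'}}\wedge\varphi_{s_{k'+1}})$ already serves for $\exists(\psi_{s_k}\wedge\varphi_{s_{k'+1}})$, and the shortened sequence $(s_1,\ldots,s_k,s_{k'+1},\ldots)$ captures the same pair $(w,v)$. You should make this explicit; one sentence suffices, but without it the argument is incomplete.
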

 
As the normal form  for $*$-programs is a rather long program, we will generally use the following abbreviation to ease reading. 
 
\begin{notat} 
 Let $I$ be an index set. Let $s\in S_0(I)$ be a sequence. We will use the following abbreviation: $s(\bm{\varphi},\bm{\psi}) \coloneqq (\varphi_{s_1} \wedge \bigwedge^{\mathsf{len}(s)}_{k=2}(\exists(\psi_{s_{k-1}} \wedge \varphi_{s_k})))$. With this abbreviation, the union normal form for $*$-programs will be written as $\bigcup_{s\in S_0(I)}\big(?s(\bm{\varphi},\bm{\psi}); A ; ?\psi_{s_{\mathsf{len}(s)}})\big) \cup (?\top)$.
\end{notat}

\section{The Logics of $Ag_\cap$-Models and $\mathsf{lex}$-Models }

\subsubsection{Static logics.} We initiate here our logical study of the statics of belief and evidence in the \textsf{REL} setting. We first zoom into two specific classes of \textsf{REL} models, the classes of $Ag_\cap$-models and $\mathsf{lex}$-models, and study the static logics for belief and evidence based on these models. In particular, we introduce systems $\mathsf{L}_\cap$ and $\mathsf{L}_\mathsf{lex}$ that axiomatize the class of $Ag_\cap$-models and the class of $\mathsf{lex}$-models, respectively. (To simplify notation, we write $\cap$-models instead of $Ag_\cap$-models hereafter). In later sections, after extending our basic static language appropriately, we will `zoom out' and study the class of all \textsf{REL} models. Our decision to study $\cap$ and $\mathsf{lex}$ models in some detail is as follows. The class of $\cap$-models is interesting because it links our relational evidence setting back to the \textsf{NEL} setting that inspired it. Indeed, as we show below, given any \textsf{NEL} model with finitely many pieces of evidence, we can always find a $\cap$-model that is modally equivalent to it (with respect to language $\mathscr{L}$). This $\cap$-model represents binary evidence in a relational way, thereby encoding the same information present in the \textsf{NEL} model. $\mathsf{lex}$-models, on the other hand, provide a good study case for the \textsf{REL} setting, as they exemplify its main novel features: non-binary evidence and reliability-sensitive aggregation.  We start by recalling the definition of a \textsf{NEL} model. The definition of these models follows the one in \cite{Baltag2016}. For a more general notion, see \cite{vBDP-evnbdhd-APAL}, where the models we consider are called \textit{uniform} models. 

\begin{definition}[Neighborhood evidence model] A \emph{neighborhood evidence model} is a tuple $M=\langle W, E_0, V \rangle$ where: $W$ is a non-empty set of states; $E_0\subseteq \mathscr{P}(W)$ is a family of evidence sets, such that $\emptyset\not\in E_0$ and $W\in E_0$; $V: \mathsf{P} \to \mathscr{P}(W)$ is a valuation function. A  model is called \emph{feasible} if $E_0$ is finite.
\end{definition}

\begin{definition}[Satisfaction] Let $M=\langle W, E_0, V\rangle$ be an \textsf{NEL} model and $w\in W$. The satisfaction relation $\models$ between pairs $(M,w)$ and formulas $\varphi\in\mathscr{L}$ is:
\begin{flalign*}
\begin{array}{@{}>{\displaystyle}l@{}>{\displaystyle{}}l@{}>{\displaystyle{{}}}l@{{}}}
M,w\models \Box_0 \varphi & \text{ iff }  \text{there is } e\in E_0 \text{ such that } w\in e\subseteq \llb \varphi \rrb_M  \\
M,w\models \Box \varphi & \text{ iff } \text{there is } e\in E \text{ such that } w\in e\subseteq \llb \varphi \rrb_M\\ 
M,w\models \forall \varphi & \text{ iff }  W=\llb \varphi \rrb_M
\end{array}
&&
\end{flalign*}
\end{definition}

We now present a way to `transform' a \textsf{NEL} model into a matching \textsf{REL} model. To do that, we first encode binary evidence, the type of evidence considered in \textsf{NEL} models, as relational evidence.

\begin{definition}\label{ev ord assoc} Let $W$ be a set. For each $e\subseteq W$, we denote by $R_e$ the relation given by: $(w,v)\in R_e \text{ iff } w\in e \Rightarrow v\in e$.
\end{definition}

That is, $R_e$ is a preorder with at most two indifference classes (i.e., a dichotomous weak order) of `good' and `bad' candidates for the actual state, which puts all the `bad' candidates strictly below the `good' ones. Having fixed this connection evidence sets and evidence orders, we can now consider a natural way to transform every \textsf{NEL} into a $\cap$-model in which each evidence order is dichotomous. To fix this connection, we define a mapping between \textsf{NEL} and \textsf{REL} models.

\begin{definition} Let $Rel$ be a map from \textsf{NEL} to \textsf{REL} models given by: \[\langle W, E_0, V \rangle \mapsto \langle Rel(W),\lan Rel(E_0),\preceq\ran, Rel(V), Ag_\cap \rangle \] where $Rel(W)\coloneqq W$, $Rel(V)\coloneqq V$ $Rel(E_0)\coloneqq\{R_e \mid e\in E_0\}$ and $\preceq=Rel(E_0)^2$. 
\end{definition}

We can then observe that feasible \textsf{NEL} models and their images under $Rel$ are modally equivalent, in the sense of having point-wise equivalent modal theories.\\

\begin{restatable}{prop}{NELtoREL}
Let $M = \langle W, E_0, V \rangle$ be a feasible \textsf{NEL}  model. For any $\varphi \in \mathscr{L}$ and any $w\in W$, we have: $M,w\models \varphi \emph{ iff } Rel(M),w\models \varphi$.
\end{restatable}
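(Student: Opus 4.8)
The proof is a straightforward structural induction on $\varphi\in\mathscr{L}$, and the only real content is checking that the three modalities behave the same way in $M$ and in $Rel(M)$. Since $Rel(M)$ shares the carrier set $W$ and the valuation $V$ with $M$, the atomic and Boolean cases are immediate, and by the induction hypothesis we may freely assume $\llb\psi\rrb_M = \llb\psi\rrb_{Rel(M)}$ for all proper subformulas $\psi$; write this common set as $\llb\psi\rrb$. The $\forall$ case is also essentially trivial: the \textsf{NEL} clause says $M,w\models\forall\psi$ iff $W = \llb\psi\rrb$, which clearly holds at some (equivalently every) world iff it holds in $Rel(M)$ under the \textsf{REL} clause $\forall\psi$ true at $w$ iff $\llb\psi\rrb = W$. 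So the work is in $\Box_0$ and $\Box$.

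\textbf{The $\Box_0$ case.} I would show that for any $e\subseteq W$ and any $w\in W$, $w\in e\subseteq\llb\psi\rrb$ iff $R_e[w]\subseteq\llb\psi\rrb$. For the forward direction: if $w\in e$ then by definition of $R_e$ every $v$ with $R_e w v$ satisfies $v\in e$, hence $R_e[w]\subseteq e\subseteq\llb\psi\rrb$. For the converse: if $w\notin e$, then $R_e w v$ holds for all $v\in W$ (the implication $w\in e\Rightarrow v\in e$ is vacuous), so $R_e[w] = W$; thus $R_e[w]\subseteq\llb\psi\rrb$ forces $\llb\psi\rrb = W$, and since $W\in E_0$ we get $w\in W\subseteq\llb\psi\rrb$, so the \textsf{NEL} clause is witnessed by $e' = W$ even if not by $e$. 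Combining, $M,w\models\Box_0\psi$ iff there is $e\in E_0$ with $w\in e\subseteq\llb\psi\rrb$ iff there is $R\in Rel(E_0)$ with $R[w]\subseteq\llb\psi\rrb$ iff $Rel(M),w\models\Box_0\psi$ — using here that $Rel(E_0) = \{R_e\mid e\in E_0\}$ and that $W\in E_0$ guarantees $W^2 = R_W\in Rel(E_0)$, so the \textsf{REL} model-hood requirement $W^2\in\mathscr{R}$ is met and no spurious witnesses are introduced.

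\textbf{The $\Box$ case.} Here I need to identify the aggregated relation. By Definition~\ref{cap def}, $Ag_\cap(\lan Rel(E_0),\preceq\ran) = \bigcap Rel(E_0) = \bigcap_{e\in E_0} R_e$. The key observation is that $(w,v)\in\bigcap_{e\in E_0}R_e$ iff for every $e\in E_0$, $w\in e\Rightarrow v\in e$ iff every evidence set containing $w$ also contains $v$. Meanwhile, in the \textsf{NEL} semantics, $M,w\models\Box\psi$ iff there is $e\in E$ (the closure of $E_0$ under finite intersection and the addition of $W$, per the standard \textsf{NEL} convention) with $w\in e\subseteq\llb\psi\rrb$. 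Since $M$ is feasible, $E_0$ is finite, so the intersection $e_w := \bigcap\{e\in E_0\mid w\in e\}$ is itself a finite intersection, hence $e_w\in E$, and moreover $(Ag_\cap(\lan Rel(E_0),\preceq\ran))[w] = e_w$ exactly. So: $Rel(M),w\models\Box\psi$ iff $e_w\subseteq\llb\psi\rrb$; and $M,w\models\Box\psi$ iff some $e\in E$ has $w\in e\subseteq\llb\psi\rrb$, which — since $e_w$ is the smallest element of $E$ containing $w$ — is equivalent to $e_w\subseteq\llb\psi\rrb$. This closes the case. \textbf{The main obstacle} is precisely this $\Box$ case: one has to be careful about what the \textsf{NEL} family $E$ is (the excerpt leaves the closure implicit in the satisfaction clause), and the argument that $(\bigcap_{e\in E_0}R_e)[w]$ equals the minimal evidence-superset $e_w$ of $w$ uses feasibility in an essential way — without finiteness $e_w$ need not lie in $E$ and the two notions of $\Box$ can come apart. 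Everything else is bookkeeping.
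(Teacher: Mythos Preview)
Your proposal is correct and follows essentially the same inductive structure as the paper's proof. If anything, your treatment of the $\Box_0$ case is more careful: the paper's chain of biconditionals tacitly uses that $R_e[w]=e$ whenever $w\in e$ and glosses over the backward direction when $w\notin e$, whereas you explicitly handle that edge case by falling back on the witness $W\in E_0$; your $\Box$ argument via the minimal set $e_w=\bigcap\{e\in E_0\mid w\in e\}$ is just a repackaging of the paper's chain of equivalences through finite intersections, pinpointing exactly where feasibility is used.
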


That is, feasible \textsf{NEL} models can be seen as `special cases' of \textsf{REL} models in which all evidence is dichotomous and equally reliable. As the following proposition shows, the modal equivalence result does not extend to non-feasible \textbf{NEL} models. This is because, in models with infinitely many pieces of evidence, the notion of combined evidence presented in \cite{Baltag2016} differs from the one proposed here for \textbf{REL} models. To clarify this, consider a \textbf{NEL} model $M = \langle W, E_0, V \rangle$. Recall that the agent has combined evidence for a proposition $\varphi$ at $w$ if there is a \textit{finite} body of evidence whose combination contains $w$ and supports $\varphi$, i.e., if there is some finite $F\subseteq E_0$ such that $w\in \bigcap F\subseteq \llb \varphi \rrb_M$. Suppose $M$ is a non-feasible model in which we have $w\in \bigcap E_0\subseteq \llb \varphi \rrb_M$, while no finite family $F\subseteq E_0$ is such that $w\in \bigcap F\subseteq \llb \varphi \rrb_M$. That is, the combination of \textit{all} the evidence supports $\varphi$ at $w$, but no combination of a finite subfamily of $E_0$ does. In a \textbf{NEL} model like this, the agent does \textit{not} have combined evidence for $\varphi$ at $w$. That is, $M,w\not\models \Box\varphi$. However, our proposed notion of aggregated evidence for \textbf{REL} models is based on combining \textit{all} the available evidence, and as a result in $Rel(M)$ the agent does have aggregated evidence for $\varphi$ (i.e., $Rel(M),w\models \Box\varphi$). A concrete example of such a model is $M = \langle W, E_0, V \rangle$ with $W=\mathbb{N}$, $E_0=\{\mathbb{N} \setminus\{2n+1\} \mid n\in\mathbb{N}\}$ and $V(p)=\{2n \mid n\in\mathbb{N}\}$. It is easy to verify that $M, 0\not\models p$, while  $Rel(M), 0\models p$.  \\

\begin{restatable}{prop}{NELtoRELbis}  Non-feasible \textbf{NEL} models need not be modally equivalent to their images under $Rel$. In particular, the left-to-right direction of Proposition 1 holds for non-feasible evidence models, but the right-to-left direction doesn't: there are non-feasible neighborhood models $M$ s.t. $Rel(M),w\models \Box \psi$ but $M,w\not\models \Box \psi$.
\end{restatable}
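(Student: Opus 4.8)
The plan is to treat the two halves of the statement separately: first establish the left-to-right preservation claim by induction on formulas (noting that feasibility is never used in that direction), then exhibit the promised counterexample for the converse. For the induction, one proves $M,w\models\varphi$ implies $Rel(M),w\models\varphi$ by structural induction on $\varphi\in\mathscr{L}$. The atomic and Boolean cases are immediate because $Rel$ leaves $W$ and $V$ unchanged, and the $\forall$ case follows directly from the induction hypothesis since both semantics range over the same set $W$. The two modal cases rest on two elementary computations inside $Rel(M)=\lan W,\lan Rel(E_0),\preceq\ran, V, Ag_\cap\ran$: (i) for $e\in E_0$ with $w\in e$ one has $R_e[w]=e$; and (ii) writing $E_0^w\coloneqq\{e\in E_0\mid w\in e\}$ (nonempty, since $W\in E_0$), one has $Ag_\cap(\lan Rel(E_0),\preceq\ran)[w]=\bigcap E_0^w$. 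Using (i): if $M,w\models\Box_0\varphi$, pick $e\in E_0$ with $w\in e\subseteq\llb\varphi\rrb_M$; then $R_e\in Rel(E_0)$ and $R_e[w]=e\subseteq\llb\varphi\rrb_M=\llb\varphi\rrb_{Rel(M)}$ by the induction hypothesis, so $Rel(M),w\models\Box_0\varphi$. Using (ii): if $M,w\models\Box\varphi$ there is a finite $F\subseteq E_0$ with $w\in\bigcap F\subseteq\llb\varphi\rrb_M$; each $e\in F$ has $w\in e$, so $F\subseteq E_0^w$ and hence $Ag_\cap(\lan Rel(E_0),\preceq\ran)[w]=\bigcap E_0^w\subseteq\bigcap F\subseteq\llb\varphi\rrb_{Rel(M)}$, giving $Rel(M),w\models\Box\varphi$. (One should also record at the start that $Rel(M)$ is indeed a $\cap$-model: every $R_e$ is a preorder, none is empty since each is reflexive, and $R_W=W^2$ so $W^2\in Rel(E_0)$.)

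For the failure of the converse I would use the model already flagged in the discussion above. Take $W=\mathbb{N}$, $E_0=\{\mathbb{N}\setminus\{2n+1\}\mid n\in\mathbb{N}\}\cup\{\mathbb{N}\}$ (adjoining $\mathbb{N}$ so that $W\in E_0$; this changes nothing below), and $V(p)=\{2n\mid n\in\mathbb{N}\}$. This is a legitimate non-feasible \textsf{NEL} model, since every member of $E_0$ is infinite, hence nonempty. First, $M,0\not\models\Box p$: any finite $F\subseteq E_0$ satisfies $\bigcap F=\mathbb{N}\setminus\{2n+1\mid n\in S\}$ for some finite $S\subseteq\mathbb{N}$, and since $S$ is finite this set contains an odd number, so $\bigcap F\not\subseteq V(p)=\llb p\rrb_M$; thus no finite body of evidence containing $0$ supports $p$. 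Second, $Rel(M),0\models\Box p$: because $0$ is even it lies in every member of $E_0$, so $E_0^0=E_0$, and by computation (ii) above $Ag_\cap(\lan Rel(E_0),\preceq\ran)[0]=\bigcap E_0=\{2n\mid n\in\mathbb{N}\}=V(p)=\llb p\rrb_{Rel(M)}$, whence $Rel(M),0\models\Box p$. Taking $\psi\coloneqq p$ and $w\coloneqq 0$ yields the claim, and combined with the first half it also shows the two models are not modally equivalent.

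The argument is largely bookkeeping; the one point that needs care — and the conceptual core of the proposition — is the $\Box$ case, where one must observe that the mismatch between \textsf{NEL}'s ``some finite subfamily'' reading of combined evidence and \textsf{REL}'s ``all of $\mathscr{R}$'' reading of $Ag_\cap$ is asymmetric: $\bigcap E_0^w$ is contained in every finite intersection $\bigcap F$ with $w\in\bigcap F$, so passing to $Rel(M)$ can only validate \emph{more} $\Box$-formulas, never fewer. This is precisely why the left-to-right direction survives the removal of the feasibility hypothesis while the right-to-left direction does not, and why any model in which $\bigcap E_0^w$ is strictly smaller than all such finite intersections — like the one exhibited — refutes the converse.
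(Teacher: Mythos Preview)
Your counterexample half is correct and is essentially the paper's own argument; your observation that $\mathbb{N}$ must be adjoined to $E_0$ so that $W\in E_0$ is a good catch the paper glosses over.

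However, there is a genuine gap in your first half. A one-way implication of the form ``$M,w\models\varphi$ implies $Rel(M),w\models\varphi$'' cannot be proved by structural induction on $\varphi$, because negation reverses the direction: to pass from $M,w\models\neg\psi$ (i.e.\ $M,w\not\models\psi$) to $Rel(M),w\models\neg\psi$ (i.e.\ $Rel(M),w\not\models\psi$) you would need the \emph{converse} implication for $\psi$, which you do not have. Your own counterexample already refutes the global one-way claim: from $M,0\not\models\Box p$ and $Rel(M),0\models\Box p$ you get $M,0\models\neg\Box p$ while $Rel(M),0\not\models\neg\Box p$. So the sentence ``the atomic and Boolean cases are immediate'' hides a step that in fact fails.

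The paper's phrasing here is admittedly loose, but its proof makes clear what is really meant: not a global left-to-right preservation over all of $\mathscr{L}$, but the narrower observation that in the inductive proof of Proposition~1 the only place feasibility is invoked is the right-to-left direction of the $\Box$ clause (passing from $(\bigcap\mathscr{R})[w]\subseteq\llb\psi\rrb$ back to the existence of a \emph{finite} $F\subseteq E_0$ with $w\in\bigcap F\subseteq\llb\psi\rrb$). That local asymmetry is what your concluding paragraph correctly identifies; you should state the first half at that level rather than as a formula-by-formula preservation result.
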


Having motivated our interest in $\cap$-models via their connection to neighborhood evidence logics, we now focus again on the static logics of $\cap$- and $\mathsf{lex}$-models. Table 1 lists the axioms and rules in $\mathsf{L}_\cap$ and $\mathsf{L}_\mathsf{lex}$. \\

\begin{table}[]
\label{systems}
\centering
\caption{The systems $\mathsf{L}_\cap$ and $\mathsf{L}_\mathsf{lex}$}
\begin{tabular}{lllll}
\multicolumn{1}{l|}{\begin{tabular}[c]{@{}l@{}}Axioms and  inference rules\end{tabular}}                                                                 & System(s)                 &  &  &  \\ \cline{1-2}
\multicolumn{1}{l|}{All tautologies of propositional logic}                                                                                                   & both                      &  &  &  \\
\multicolumn{1}{l|}{$\mathsf{S5}$ axioms for $\forall$, $\mathsf{S4}$ axioms for $\Box$, axiom $\mathsf{4}$ for $\Box_0$ }                                                                                                       & both                      &  &  &  \\
\multicolumn{1}{l|}{\begin{tabular}[c]{@{}l@{}}$\forall\varphi\to\Box_0\varphi$   (Universality for $\Box_0$)\end{tabular}}                                 & both                      &  &  &  \\
\multicolumn{1}{l|}{\begin{tabular}[c]{@{}l@{}}$(\Box_0\varphi\wedge\forall\psi)\leftrightarrow\Box_0(\varphi\wedge\forall\psi)$ (Pullout)\end{tabular}} & both                      &  &  &  \\
\multicolumn{1}{l|}{$\Box_0\varphi \to \Box\varphi$}                                                                                                          & $\mathsf{L}_\cap$         &  &  &  \\
\multicolumn{1}{l|}{\begin{tabular}[c]{@{}l@{}}Axioms $\mathsf{T}$ and $\mathsf{N}$ for   $\Box_0$\end{tabular}}                                            & $\mathsf{L}_\mathsf{lex}$ &  &  &  \\
\multicolumn{1}{l|}{\begin{tabular}[c]{@{}l@{}}$\forall\varphi\to\Box\varphi$   (Universality for $\Box$)\end{tabular}}                                     & $\mathsf{L}_\mathsf{lex}$ &  &  &  \\
\multicolumn{1}{l|}{Modus ponens}                                                                                                                             & both                      &  &  &  \\
\multicolumn{1}{l|}{\begin{tabular}[c]{@{}l@{}}Necessitation Rule for $\bullet\in\{\forall,\Box\}$ (from   $\varphi$ infer $\bullet\varphi$)\end{tabular}}                     & both                      &  &  &  \\
\multicolumn{1}{l|}{\begin{tabular}[c]{@{}l@{}}Monotonicity Rule for $\Box_0$ (from   $\varphi\to\psi$ infer $\Box_0\varphi \to \Box_0\psi$)\end{tabular}}  & both                      &  &  &  \\
\end{tabular}
\end{table}

\begin{restatable}{thm}{compcaplex}  $\mathsf{L}_\cap$  and $\mathsf{L}_\mathsf{lex}$ are sound and strongly complete with respect to $\cap$-models and $\mathsf{lex}$-models, respectively.
\end{restatable}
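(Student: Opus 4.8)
\textbf{Soundness} is a routine induction on derivations. The $\mathsf{S5}$ axioms for $\forall$ hold because $\forall$ is the global modality; the $\mathsf{S4}$ axioms for $\Box$, and axiom $\mathsf{4}$ (and, where listed, $\mathsf{T},\mathsf{N}$) for $\Box_0$, hold because $Ag(\lan\mathscr{R},\preceq\ran)$ and every $R\in\mathscr{R}$ are preorders with non-empty up-sets; Universality for $\Box_0$ holds because $W^2\in\mathscr{R}$ and $W^2[w]=W$; Pullout holds because $\llb\forall\psi\rrb_M$ is always $W$ or $\emptyset$. For $\mathsf{L}_\cap$, $\Box_0\varphi\to\Box\varphi$ is valid because $Ag_\cap(\lan\mathscr{R},\preceq\ran)=\bigcap\mathscr{R}\subseteq R$ for all $R\in\mathscr{R}$; for $\mathsf{L}_\mathsf{lex}$, Universality for $\Box$ is valid since $\llb\forall\varphi\rrb_M\in\{W,\emptyset\}$. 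For completeness (which, both systems being finitely schematically axiomatized with finitary rules, yields strong completeness) I would build a canonical model. Given an $\mathsf{L}_X$-consistent $\Gamma$ ($X\in\{\cap,\mathsf{lex}\}$), extend it to a maximal consistent $\Gamma^+$ and let $W$ be the ``universal cluster'' of all maximal consistent sets agreeing with $\Gamma^+$ on $\forall$-formulas (well defined since $\forall$ is $\mathsf{S5}$); put $\hat\varphi:=\{w\in W\mid\varphi\in w\}$, $V(p):=\hat p$, and $R_\Box:=\{(w,v)\in W^2\mid\forall\chi(\Box\chi\in w\Rightarrow\chi\in v)\}$, a preorder with the standard properties $R_\Box[w]\subseteq\hat\psi\Leftrightarrow\Box\psi\in w$ and $\hat\varphi=W\Leftrightarrow\forall\varphi\in w$ for $w\in W$.

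It then suffices to equip $W$ with an ordered family $\lan\mathscr{R},\preceq\ran$ with $W^2\in\mathscr{R}$ such that for all $w\in W,\psi\in\mathscr{L}$: (a) $\Box_0\psi\in w \Leftrightarrow \exists R\in\mathscr{R}(R[w]\subseteq\hat\psi)$, and (b) $Ag(\lan\mathscr{R},\preceq\ran)[w]\subseteq\hat\psi \Leftrightarrow \Box\psi\in w$, where $Ag$ is $Ag_\cap$ (resp.\ $\mathsf{lex}$); a routine Truth Lemma $\hat\varphi=\llb\varphi\rrb_M$ (induction on $\varphi$, using (a), (b) and the $R_\Box$- and $\forall$-facts) then shows the constructed $\cap$-model (resp.\ $\mathsf{lex}$-model) satisfies $\Gamma$ at $\Gamma^+$. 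Condition (a) is handled identically for both systems by taking, as in Definition~\ref{ev ord assoc}, the dichotomous orders attached to the ``evidence propositions'': include $R_{\widehat{\Box_0\varphi}}$ for every $\varphi$ (so $R_{\widehat{\Box_0\varphi}}[w]=\widehat{\Box_0\varphi}$ if $\Box_0\varphi\in w$, else $W$), together with $W^2$. The direction $\Rightarrow$ of (a) uses $\mathsf{T}$ for $\Box_0$ (whence $\widehat{\Box_0\psi}\subseteq\hat\psi$; in $\mathsf{L}_\cap$, $\mathsf{T}$ for $\Box_0$ is derived from $\Box_0\psi\to\Box\psi$ and $\mathsf{T}$ for $\Box$). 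The direction $\Leftarrow$ uses the derivable schema $\vdash(\Box_0\varphi\wedge\forall(\Box_0\varphi\to\psi))\to\Box_0\psi$ --- obtained by instantiating Pullout at $\alpha:=\Box_0\varphi,\ \beta:=(\Box_0\varphi\to\psi)$, using $\mathsf{4}$ for $\Box_0$ to supply $\Box_0\Box_0\varphi$, and closing under Monotonicity for $\Box_0$ --- together with the fact that $\widehat{\Box_0\varphi}\subseteq\hat\psi$ (within $W$) implies $\forall(\Box_0\varphi\to\psi)\in w$, plus Universality for $\Box_0$ for the case where the witness is $W^2$.

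The real work is condition (b). For $\mathsf{L}_\cap$, take $\mathscr{R}$ to be exactly the family above and $\preceq$ arbitrary (irrelevant to $Ag_\cap$); then $Ag_\cap(\lan\mathscr{R},\preceq\ran)=\bigcap\mathscr{R}$, and one shows $\bigcap\mathscr{R}=R_\Box$. The inclusion $\supseteq$ holds because $\Box_0\varphi\to\Box\Box_0\varphi$ is derivable (from $\mathsf{4}$ for $\Box_0$ and $\Box_0\chi\to\Box\chi$), so each $R_{\widehat{\Box_0\varphi}}\supseteq R_\Box$. The inclusion $\subseteq$ reduces, by compactness of the Stone space of $W$, to showing $\Box\psi\in w$ iff $\widehat{\Box_0\varphi_1}\cap\dots\cap\widehat{\Box_0\varphi_n}\subseteq\hat\psi$ for some finite family with each $\Box_0\varphi_i\in w$; the ``if'' direction uses normality of $\Box$ together with $\Box_0\varphi_i\to\Box\Box_0\varphi_i$ and $\forall\chi\to\Box\chi$ (derivable in $\mathsf{L}_\cap$), and the ``only if'' direction is again compactness. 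This step closely parallels, and can largely be imported from, the completeness of $\mathscr{L}_{\forall\Box\Box_0}$ over neighbourhood evidence models in~\cite{Baltag2016}, to which $\mathsf{L}_\cap$ corresponds.

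For $\mathsf{L}_\mathsf{lex}$, (b) is the main obstacle. Since $\mathsf{L}_\mathsf{lex}$ contains no axiom linking $\Box_0$ and $\Box$, one cannot put $R_\Box$ itself into $\mathscr{R}$: being reflexive it would validate $\Box\psi\to\Box_0\psi$, which is not a theorem; indeed, no single evidence order can both induce $\Box$ correctly and satisfy the $\Leftarrow$ half of (a). So $\Box$ must be recovered as a genuine $\mathsf{lex}$-aggregate of orders that each already satisfy (a). The plan exploits (i) that every preorder $Q$ equals $\bigcap_v R_{Q[v]}$, the intersection of the dichotomies of its principal up-sets, and that $\mathsf{lex}$ over an antichain priority computes exactly such an intersection; and (ii) that $\mathsf{lex}(\lan\mathscr{R},\preceq\ran)[w]\subseteq R[w]$ for every top-priority $R$. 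Concretely, one would place at top priority a family of ``(a)-admissible'' variants of the dichotomies $R_{R_\Box[v]}$ --- obtained by enlarging each up-set $R_\Box[v]$ with carefully chosen ``padding'' worlds of $W$ (of which there are abundantly many) so that (1) each enlarged order still satisfies the $\Leftarrow$-constraint of (a) at every world, and (2) the enlarged family still separates each world from each $\hat\psi$ exactly as $R_\Box$ does --- place $\{R_{\widehat{\Box_0\varphi}}\}\cup\{W^2\}$ below them, and choose $\preceq$ so that the $\mathsf{lex}$-combination collapses to $R_\Box$ (or to a preorder with the same up-set/$\hat\psi$-behaviour). Carrying out (1) and (2) simultaneously, and verifying that the resulting $\mathsf{lex}$-aggregate satisfies (b), is the delicate technical heart of the proof, and is where I expect essentially all the difficulty to lie.
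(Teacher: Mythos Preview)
Your treatment of $\mathsf{L}_\cap$ takes a different route from the paper's. The paper does not build a canonical $\cap$-model directly; it invokes the completeness and \emph{finite model property} of the same axiom system over \textsf{NEL} models (\cite{Baltag2016}), and then transfers via Proposition~1 (every feasible \textsf{NEL} model is pointwise $\mathscr{L}$-equivalent to its image under $Rel$). This is both shorter and avoids the issue you gloss over: your ``only if'' direction is not ``again compactness''. Compactness tells you that \emph{if} the infinite intersection $\bigcap_{\Box_0\varphi\in w}\widehat{\Box_0\varphi}$ is contained in $\hat\psi$ then a finite subfamily already is; it does \emph{not} tell you that $\Box\psi\in w$ forces that infinite inclusion in the first place. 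For that you need a genuine syntactic argument (there is no axiom extracting $\Box_0$-information from $\Box$), which is exactly the \textsf{NEL} Truth-Lemma content you say you would ``import'' from \cite{Baltag2016}. So both routes lean on the same external result; the paper's packaging via FMP is the cleaner one, and it is precisely FMP that bridges finite-intersection (\textsf{NEL}) semantics and full-intersection ($\cap$-model) semantics for $\Box$.

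For $\mathsf{L}_\mathsf{lex}$ there is a genuine gap, and the paper's solution is quite different from your sketch. Your plan --- pad the dichotomies $R_{R_\Box[v]}$ so that each padded order still satisfies the $\Leftarrow$-half of (a) while their $\mathsf{lex}$-aggregate recovers $R_\Box$ --- is not carried out, and it is unclear it can be: padding enlarges up-sets, so each padded order is easier to make (a)-admissible but harder to keep jointly separating exactly as $R_\Box$ does; you give no mechanism for reconciling (1) and (2). The paper instead proceeds in three steps. First it builds a canonical \emph{pre}-model in which the aggregator is defined ad hoc to output $R^\Box$, and adds to $\mathscr{R}^c$ a single order $R'=(R^\Box\cup\{(T,F(T,\varphi))\})^*$ whose auxiliary $F$-edges send each $T$ with $\Box_0\varphi\notin T$ to a $\neg\varphi$-world, guaranteeing the $\Leftarrow$-half of (a) for $R'$. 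Second, it \emph{unravels} this pre-model into a tree of histories. Third --- and this is the key idea you are missing --- on the tree it places \emph{two copies} $R'_l,R'_r$ of (the unravelled) $R'$, built from disjoint edge-labels, at top $\preceq$-priority; unique paths on the tree force $R'_l\cap R'_r$ to consist of exactly the $\Box$-labelled paths, i.e.\ $\tilde R^\Box$, and antisymmetry on the tree makes $\mathsf{lex}$ over these two top elements collapse to their intersection. A bounded \emph{aggregation}-morphism (a $\Box_0$/$\Box$-tailored variant of bounded morphism) then transfers truth back to the pre-model. The unravelling-plus-two-copies device is what simultaneously secures (a) and (b) without any ``padding''; nothing in your proposal anticipates it.
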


\subsubsection{Evidence dynamics for $\cap$-models.} Having established the soundness and completeness of the static logics, we now turn to evidence dynamics, starting with $\cap$-models. In line with the work on \textsf{NEL}, we consider update, evidence addition and evidence upgrade actions for $\cap$-models. As the intersection rule is insensitive to the priority order, when we consider $\cap$-models, it is convenient to treat the models as if they came with a family of evidence orders $\mathscr{R}$ only, instead of an ordered family $\lan \mathscr{R}, \preceq\ran$. Accordingly, hereafter we will write  $\cap$-models as follows: $M=\langle W, \mathscr{R}, V, Ag_\cap\rangle$. Throughout this section, we fix a $\cap$-model $M=\langle W, \mathscr{R}, V, Ag_\cap\rangle$, some proposition $P\subseteq W$ and some evidence order $R\in Pre(W)$.\\ 

\noindent\textbf{Update.} We first consider updates that involve learning a new fact $P$ with absolute certainty. Upon learning $P$, the agent rules out all possible states that are incompatible with it. For \textsf{REL} models, this means keeping only the worlds in $\llb P\rrb_M$ and restricting each evidence order accordingly.

\begin{definition}[Update] The model $M^{!P}=\langle W^{!P}, \mathscr{R}^{!P}, V^{!P}, Ag^{!P}_\cap \rangle$ has $ W^{!P}\coloneqq P$, $\mathscr{R}^{!P}\coloneqq \{ R\cap P^2 \mid R\in \mathscr{R}\}$, $Ag^{!P}_\cap\coloneqq Ag_\cap$ restricted to $P$, and for all $p\in \mathsf{P}$, $V^{!P}(p)\coloneqq V(p)\cap P$.
\end{definition}

\noindent\textbf{Evidence addition.} Unlike update, which is standardly defined in terms of an incoming proposition $P\subseteq W$, our proposed notion of evidence addition for $\cap$-models involves accepting a new piece of \textit{relational evidence} $R$ from a trusted source. That is, relational evidence addition consists of adding a new piece of relational evidence $R\subseteq Pre(W)$ to the family $\mathscr{R}$. 

\begin{definition}[Evidence addition] The model 
$M^{+R}= \lan W^{+R},\mathscr{R}^{+R}, V^{+R}, Ag^{+R}_\cap\rangle$ has $ W^{+R}\coloneqq W$, $\mathscr{R}^{+R}\coloneqq \mathscr{R}\cup \{R\}$, $V^{+R}\coloneqq V$ and $Ag^{+R}_\cap\coloneqq Ag_\cap$.
\end{definition}

\noindent\textbf{Evidence upgrade.} Finally, we consider an action of upgrade with a piece of relational evidence $R$. This upgrade action is based on the notion of \textit{binary lexicographic merge} from Andr\'eka et. al. \cite{Andr}. 

\begin{definition}[Evidence upgrade] The model $M^{\Uparrow R}=\langle W^{\Uparrow R}, \mathscr{R}^{\Uparrow R}, V^{\Uparrow R}, Ag^{\Uparrow R}_\cap\rangle$ has $ W^{\Uparrow R}\coloneqq W$, $\mathscr{R}^{\Uparrow R}\coloneqq \{R^<\cup (R\cap R') \mid R'\in \mathscr{R}\}$, $V^{\Uparrow R}\coloneqq V$ and $Ag^{\Uparrow R}_\cap\coloneqq Ag_\cap$.
\end{definition}

Intuitively, this operation modifies each existing piece of evidence $R'$ with $R$ following the rule: ``keep whatever $R$ and $R'$ agree on, and where they conflict, give priority to $R$''. To encode syntactically the evidential actions described above, we present extensions of $\mathscr{L}$, obtained by adding to $\mathscr{L}$ dynamic modalities for update, evidence addition and evidence upgrade. The modalities for update will be standard, i.e., modalities of the form $[!\varphi]\psi$. The new formulas of the form $[!\varphi]\psi$ are used to express the statement: ``$\psi$ is true after $\varphi$ is publicly announced''. 

\begin{definition}[$\mathscr{L}^!$] The language $\mathscr{L}^!$ is defined recursively by:
\begin{align*}
\begin{array}{@{}>{\displaystyle}l@{}>{\displaystyle{}}l@{}>{\displaystyle{{}}}l@{{}}}
\varphi & \Coloneqq & \ p \mid \neg \varphi \mid \varphi\wedge\varphi \mid \Box_0\varphi \mid \Box\varphi \mid \forall \varphi \mid [!\varphi]\varphi \ \ \ (p\in\mathsf{P})\end{array}
&&
\end{align*}
\end{definition}

\begin{definition}[Satisfaction for {$[!\varphi]\psi$}] Let $M=\langle W, \mathscr{R}, V, Ag_\cap\rangle$ be a $\cap$-model and $w\in W$. The satisfaction relation $\models$ between pairs $(M,w)$ and formulas $[!\varphi]\psi\in\mathscr{L}^!$ is defined by: $M,w\models[!\varphi]\psi \text{ iff } M,w\models\varphi \text{ implies }  M^{!\llb\varphi\rrb_M},w\models\psi$.
\end{definition}

For the remaining actions, we  extend $\mathscr{L}$ with dynamic modalities of the form $[+\pi]\psi$ for addition and $[\Uparrow\pi]\psi$ for upgrade, where the symbol $\pi$ occurring inside the modality is an evidence program.

\begin{definition}[$\mathscr{L}^\bullet$] Let $\bullet\in\{+,\Uparrow\}$. The language $\mathscr{L}^\bullet$ is defined by:
\begin{align*}
\begin{array}{@{}>{\displaystyle}l@{}>{\displaystyle{}}l@{}>{\displaystyle{{}}}l@{{}}}
\varphi & \Coloneqq & \ p \mid \neg \varphi \mid \varphi\wedge\varphi \mid \Box_0\varphi \mid \Box\varphi \mid \forall \varphi \mid [\bullet\pi^*]\varphi \ \ (p\in\mathsf{P}) \\
\pi & \Coloneqq & \ A \mid ?\varphi \mid \pi\cup\pi \mid \pi;\pi \mid \pi^*  
\end{array}
&&
\end{align*}
\end{definition}

The new formulas of the form $[+\pi]\varphi$ are used to express the statement:  ``$\varphi$ is true after the evidence order defined by $\pi$ is added as a piece of evidence'', while the $[\Uparrow\pi]\varphi$ are used to express: ``$\varphi$ is true after the existing evidence is upgraded with the relation defined by $\pi$''. 
We extend the satisfaction relation $\models$ to cover formulas of the form $[\bullet\pi]\varphi$ as follows:

\begin{definition}[Satisfaction for {$[\bullet\pi]\varphi$}] Let $M=\langle W, \mathscr{R}, V, Ag_\cap\rangle$ be a $\cap$-model, $w\in W$ and $\bullet\in\{+,\Uparrow\}$. The satisfaction relation $\models$ between pairs $(M,w)$ and formulas $[\bullet\pi]\varphi\in\mathscr{L}^\bullet$ is defined by: $M,w\models[\bullet\pi]\varphi \text{ iff } M^{\bullet\llb\pi\rrb_M},w\models\varphi$.
\end{definition}

We now introduce proof systems whose logics are sound and complete with respect to $\cap$-models. The soundness and completeness proofs work via a standard reductive analysis, appealing to \textit{reduction axioms}. We refer to \cite{kooi2004reduction} for an extensive explanation of this technique. 

\begin{definition}[$\mathsf{L}^!$, $\mathsf{L}^+$ and $\mathsf{L}^\Uparrow$] The proof system $\mathsf{L}^!$ extends $\mathsf{L}$ with the following  \emph{reduction axioms}:
\begin{itemize}[leftmargin=*]
\small
\item[] $\text{PA1}_\cap$: $[! \varphi]p \leftrightarrow (\varphi \to p)$ for all $p\in \mathsf{P}$
\item[] $\text{PA2}_\cap$: $[! \varphi]\neg\psi \leftrightarrow (\varphi \to \neg [! \varphi]\psi) $
\item[] $\text{PA3}_\cap$: $[! \varphi](\psi\wedge\psi') \leftrightarrow [! \varphi]\psi \wedge [! \varphi]\psi' $
\item[] $\text{PA4}_\cap$: $[! \varphi]\Box_0\psi \leftrightarrow (\varphi \to \Box_0(\varphi \to [! \varphi]\psi)) $ 
\item[] $\text{PA5}_\cap$: $[! \varphi]\Box\psi \leftrightarrow (\varphi \to \Box(\varphi \to [! \varphi]\psi)) $ 
\item[] $\text{PA6}_\cap$: $[! \varphi]\forall\psi \leftrightarrow (\varphi \to \forall[!\varphi]\psi)$
\end{itemize}
Let $\pi$ be an evidence program with normal form $\bigcup_{s\in S_0(I)}\big(?s(\bm{\varphi},\bm{\psi}); A ; ?\psi_{s_{\mathsf{len}(s)}}\big) \cup (?\top)$. The proof system $\mathsf{L}^+$ extends $\mathsf{L}$ with the following \emph{reduction axioms}:
\begin{itemize}[leftmargin=*]
\small
\item[] $\text{EA1}_\cap$: $[+ \pi]p \leftrightarrow p$ for all $p\in \mathsf{P}$
\item[] $\text{EA2}_\cap$: $[+ \pi]\neg\chi \leftrightarrow \neg [+ \pi]\chi $
\item[] $\text{EA3}_\cap$: $[+ \pi](\chi\wedge\chi') \leftrightarrow [+\pi]\chi \wedge [+ \pi]\chi' $
\item[] $\text{EA4}_\cap$: $[+ \pi]\Box_0\chi \leftrightarrow \Box_0[+ \pi]\varphi \lor \big([+ \pi]\chi \wedge \bigwedge_{s\in S_0(I)}( s(\bm{\varphi},\bm{\psi}) \to \forall (\psi_{s_{\mathsf{len}(s)}} \to [+ \pi]\chi))\big)$
\item[] $\text{EA5}_\cap$: $[+ \pi]\Box\chi \leftrightarrow \big([+ \pi]\chi \wedge \bigwedge_{s\in S_0(I)}( s(\bm{\varphi},\bm{\psi}) \to \Box (\psi_{s_{\mathsf{len}(s)}} \to [+ \pi]\chi))\big)$
\item[] $\text{EA6}_\cap$: $[+ \pi]\forall\chi \leftrightarrow \forall [+ \pi]\chi $
\end{itemize}
The proof system $\mathsf{L}^\Uparrow$ extends $\mathsf{L}$ with the following \emph{reduction axioms}:
\begin{itemize}[leftmargin=*]
\small
\item[] $\text{EU1}_\cap$: $[\Uparrow \pi]p \leftrightarrow p$ for all $p\in \mathsf{P}$
\item[] $\text{EU2}_\cap$: $[\Uparrow \pi]\neg\chi \leftrightarrow \neg [\Uparrow \pi]\chi $
\item[] $\text{EU3}_\cap$: $[\Uparrow \pi](\chi\wedge\chi') \leftrightarrow [\Uparrow\pi]\chi \wedge [\Uparrow \pi]\chi' $
\item[] $\text{EU4}_\cap$: $[\Uparrow \pi]\Box_0\chi \leftrightarrow [\Uparrow\pi]\chi \wedge  \pi^<(\chi) \wedge \pi^\cap(\chi)$
\item[] $\text{EU5}_\cap$ : $[\Uparrow \pi]\Box\chi \leftrightarrow [\Uparrow\pi]\chi \wedge   \pi^<(\chi) \wedge \bigwedge_{s\in S_0(I)}( s(\bm{\varphi},\bm{\psi}) \to \Box (\psi_{s_{\mathsf{len}(s)}} \to [\Uparrow \pi]\chi))$
\item[] $\text{EU6}_\cap$: $[\Uparrow \pi]\forall\chi \leftrightarrow \forall [\Uparrow \pi]\chi$
\end{itemize}
where the formulas $\pi^<(\chi)$ and $\pi^\cap(\chi)$ in ${EU4}_\cap$ and ${EU5}_\cap$ are given by:
\small
\begin{align*}
J(\bm{\varphi})\coloneqq & \bigwedge_{j\in J}\varphi_j \wedge \bigwedge_{j'\in I\setminus J}\neg\varphi_{j'},  J(\bm{\psi})\coloneqq \bigwedge_{j\in J}\psi_j \wedge \bigwedge_{j'\in I\setminus J}\neg\psi_{j'}&&\\
\pi^\cap(\chi)\coloneqq & [\Uparrow \pi]\chi \wedge \bigvee_{J\subseteq I}\big( J(\bm{\varphi}) \wedge \Box_0\big((\bigvee_{\substack{s\in S_0(I) \text{ s.t. } s_1\in J}}(\exists(s(\bm{\varphi},\bm{\psi})) \wedge \psi_{s_{\mathsf{len}(s)}})) \to [\Uparrow\pi]\chi\big)\big) &&\\
\pi^<(\chi)\coloneqq & \bigvee_{J\subseteq I}\big( J(\bm{\psi}) \wedge \mathsf{suc^<}(\chi)\big) &&\\
\mathsf{suc^<}(\chi)\coloneqq & \bigwedge_{s\in S_0(I)}( s(\bm{\varphi},\bm{\psi}) \to \forall\big( (\psi_{s_{\mathsf{len}(s)}} \wedge \bigwedge_{s'\in S_0(I)}( s'(\bm{\varphi},\bm{\psi}) \to \forall( \psi_{s'_{\mathsf{len}(s')}} \to \bigwedge_{j\in J}\neg\varphi_j))) \to [\Uparrow\pi]\chi)\big) &
\end{align*}
\normalsize
\end{definition}

\begin{restatable}{thm}{dynamicscap} $\mathsf{L}^!$, $\mathsf{L}^+$ and $\mathsf{L}^\Uparrow$ are sound and complete with respect to $\cap$-models.
\end{restatable}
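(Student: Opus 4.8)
I would treat the three systems uniformly by a \emph{reductive analysis}: each of $\mathsf{L}^!$, $\mathsf{L}^+$, $\mathsf{L}^\Uparrow$ is shown sound over $\cap$-models and then proved complete by translating its language into the static language $\mathscr{L}$ and invoking the completeness of $\mathsf{L}_\cap$ (and $\mathsf{L}_\mathsf{lex}$, though here only the $\cap$ instance is needed) already established above. So the proof has two halves: (i) validity of each reduction axiom over $\cap$-models, and (ii) a terminating, provably-correct elimination of dynamic modalities.

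\textbf{Soundness.} Since $\mathsf{L}$ is sound over $\cap$-models, it is enough to check each reduction axiom. For $\text{PA1}_\cap$--$\text{PA6}_\cap$ this is the classic public-announcement computation: $M^{!P}$ is the restriction of $M$ to $\llb P\rrb_M$ with every $R\in\mathscr R$ and the aggregated order cut down to $\llb P\rrb_M$, and one uses $\bigcap\{R\cap P^2:R\in\mathscr R\}=(\bigcap\mathscr R)\cap P^2$ for the $\Box$-clause. For the addition and upgrade axioms the content is concentrated in the $\Box_0$ and $\Box$ clauses, and the key input is the Normal Form Lemma~\ref{unf}: first, $\llb\pi^*\rrb_M$ is the reflexive transitive closure of $\llb\pi\rrb_M$, hence a preorder, so $M^{+\llb\pi^*\rrb_M}$ and $M^{\Uparrow\llb\pi^*\rrb_M}$ are legitimate $\cap$-models; second, from the normal form one reads off $\llb\pi^*\rrb_M[w]=\{w\}\cup\bigcup\{\llb\psi_{s_{\mathsf{len}(s)}}\rrb_M : w\in\llb s(\bm\varphi,\bm\psi)\rrb_M\}$. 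Using $\mathscr R^{+R}=\mathscr R\cup\{R\}$ and $Ag_\cap(\mathscr R\cup\{R\})=Ag_\cap(\mathscr R)\cap R$, the first disjunct of $\text{EA4}_\cap$ handles the case where an old evidence order already witnesses $\Box_0$, and the remaining conjunctions say exactly that the successor set of the new order $R$ (broken up over the ``rectangles'' of the normal form) is included in $\llb[+\pi^*]\chi\rrb_M$; $\text{EA5}_\cap$ is the same computation intersected with the old aggregated order. $\text{EU4}_\cap$/$\text{EU5}_\cap$ are the laborious cases: one must unfold the successor set at $w$ of $R^<\cup(R\cap R')$ for each $R'\in\mathscr R$, and the auxiliary formulas $\pi^<(\chi)$, $\pi^\cap(\chi)$ together with the profile disjunctions $J(\bm\varphi)$, $J(\bm\psi)$ are precisely a device for expressing membership in $R^<[w]$ resp. $(R\cap R')[w]$ inside $\mathscr L$, in terms of which of the normal-form guards $\varphi_i,\psi_i$ hold at $w$; verifying their validity is a finite but long case analysis on these profiles.

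\textbf{Completeness.} For $\star\in\{!,+,\Uparrow\}$ I would define a translation $(\cdot)^\tau:\mathscr L^\star\to\mathscr L$ commuting with the Boolean and static modal connectives, and, on a formula whose main connective is a dynamic modality, first translating its argument and (in the $+$/$\Uparrow$ cases, after putting the carried program into normal form by Lemma~\ref{unf}) the exposed guard formulas, then rewriting the outermost dynamic modality via the matching reduction axiom, iterating until none remain (the base case being $\text{PA1}_\cap$/$\text{EA1}_\cap$/$\text{EU1}_\cap$ in front of a proposition letter). Two things must then be verified: termination, via a well-founded complexity measure on $\mathscr L^\star$ that the right-hand side of every reduction axiom strictly decreases; and $\vdash_{\mathsf L^\star}\varphi\leftrightarrow\varphi^\tau$, proved by induction on that measure from the reduction axioms plus a small auxiliary lemma that all connectives of $\mathscr L^\star$, including the dynamic ones, are congruential in the system. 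Granting these, if $\varphi$ is valid over $\cap$-models then so is $\varphi^\tau\in\mathscr L$ (by soundness of the reduction axioms together with $\vdash_{\mathsf L^\star}\varphi\leftrightarrow\varphi^\tau$), whence $\vdash_{\mathsf L}\varphi^\tau$ by completeness of $\mathsf L_\cap$, hence $\vdash_{\mathsf L^\star}\varphi^\tau$, hence $\vdash_{\mathsf L^\star}\varphi$; combined with the soundness half this gives the theorem (and the same argument relativised to a premise set yields strong completeness if desired).

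\textbf{Main obstacle.} The delicate point is not the skeleton of the reduction argument but the complexity measure: $\text{EA4}_\cap$, $\text{EA5}_\cap$, $\text{EU4}_\cap$, $\text{EU5}_\cap$ replace $[\bullet\pi^*]$ in front of a modal formula by a large conjunction/disjunction that still contains several copies of $[\bullet\pi^*]$ in front of strict subformulas, plus fresh occurrences of the normal-form guards $\varphi_i,\psi_i$ of $\pi$, which may themselves contain dynamic modalities (tests range over $\mathscr L^\bullet$). A naive symbol count therefore \emph{increases}. One needs a measure in the style of the one for public announcement logic but adapted so that (a) the weight of a dynamic modality strictly dominates the size of its scope and of the program it carries, (b) normalising $\pi$ does not increase the relevant weight of its test formulas, and (c) dynamic modalities nested inside tests inside dynamic modalities are still decreased. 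Showing that the Normal Form Lemma interacts with the chosen measure so that (a)--(c) cohere is the crux; once it is in place, the remaining verifications are bookkeeping.
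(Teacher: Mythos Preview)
Your proposal is correct and follows essentially the same route as the paper: soundness by checking the reduction axioms (using the Normal Form Lemma to unpack $\llb\pi^*\rrb_M[w]$), and completeness by a reductive translation into the static language together with the already established completeness of $\mathsf L_\cap$. The paper's own treatment is in fact sketchier than yours on the completeness side---it simply cites the standard reduction-axiom technique of \cite{vanDitmarsch2007,kooi2004reduction} and does not spell out a complexity measure---while it is more explicit than you on the soundness of the upgrade axioms, isolating as a separate lemma the semantic characterisations $M,w\models\pi^\cap(\chi)$ iff $\exists R\in\mathscr R\,(R\cap\llb\pi\rrb_M)[w]\subseteq\llb[\Uparrow\pi]\chi\rrb_M$ and $M,w\models[\Uparrow\pi]\chi\wedge\pi^<(\chi)$ iff $\llb\pi\rrb_M^<[w]\subseteq\llb[\Uparrow\pi]\chi\rrb_M$ and $w\in\llb[\Uparrow\pi]\chi\rrb_M$, which is exactly what you describe informally. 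Your flagged ``main obstacle'' (the termination measure when tests inside programs may themselves carry dynamic modalities) is a genuine technical point that the paper does not address; it is handled by the usual inside-out elimination order, but you are right that it requires care.
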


\noindent\textbf{Evidence dynamics for $\mathsf{lex}$-models.} We now have a first look at the dynamics of evidence over $\mathsf{lex}$ models. In the \textsf{REL} setting, evidential actions can be seen as complex actions involving two possible transformations on the initial model: (i) modifying the stock of evidence, $\mathscr{R}$, perhaps by adding a new evidence relation $R$ to it, or modifying the existing evidence with $R$; and (ii) updating the priority order, $\preceq$, e.g. to `place' a new evidence item where it fits, according to its reliability. We may also have actions involving evidence, not about the world, but about evidence itself or its sources (sometimes called `higher-order evidence` \cite{CHRISTENSEN2010}), which trigger a reevaluation of the priority order without changing the stock of evidence (for instance, upon learning that a specific source is less reliable than we initially thought, we may want to lower the priority of the evidence provided by this source). To illustrate the type of actions that can be explored in this setting, here we study an action of \textit{prioritized addition} over $\mathsf{lex}$ models. For the sake of generality, we describe this action over \textsf{REL} models. \\

\noindent \textbf{Prioritized addition}. Let $M=\langle W, \lan \mathscr{R}, \preceq\ran, V, Ag \rangle$ be a \textsf{REL} model and $R\in Pre(W)$ a piece of relational evidence. The prioritized addition of $R$ adds $R$ to the set of available evidence $\mathscr{R}$, giving the highest priority to the new evidence. 

\begin{definition}[Prioritized addition] The model $M^{\oplus R}=\langle W^{\oplus R}, \lan \mathscr{R}^{\oplus R}, \preceq^{\oplus R} \ran, V^{\oplus R}, Ag^{\oplus R}\rangle$ has $ W^{\oplus R}\coloneqq W$, $\mathscr{R}^{\oplus R}\coloneqq \mathscr{R}\cup \{R\}$, $V^{\oplus R}\coloneqq V$, $Ag^{\oplus R}\coloneqq Ag$ and $\preceq^{\oplus R}\coloneqq \preceq \cup \{(R',R)\mid R'\in \mathscr{R} \}$.
\end{definition}

To encode prioritized addition, we add formulas of the form $[\oplus \pi]\varphi$, used to express the statement that $\varphi$ is true after the prioritized addition of the evidence order defined by $\pi$.

\begin{definition}[$\mathscr{L}^\oplus$] The language $\mathscr{L}^\oplus$ is given by:
\begin{align*}
\begin{array}{@{}>{\displaystyle}l@{}>{\displaystyle{}}l@{}>{\displaystyle{{}}}l@{{}}}
\varphi & \Coloneqq & \ p \mid \neg \varphi \mid \varphi\wedge\varphi \mid \Box_0\varphi \mid \Box\varphi \mid \forall \varphi \mid [\oplus\pi^*]\varphi \ \ (p\in\mathsf{P}) \\
\pi & \Coloneqq & \ A \mid ?\varphi \mid \pi\cup\pi \mid \pi;\pi \mid \pi^*  
\end{array}
&&
\end{align*}
\end{definition}

\begin{definition}[Satisfaction for {$[\oplus \pi]\varphi$}] Let $M=\langle W, \lan \mathscr{R},\preceq\ran, V, Ag\rangle$ be an \textsf{REL} model and $w\in W$. The satisfaction relation $\models$ between pairs $(M,w)$ and formulas $[\oplus\pi ]\varphi\in\mathscr{L}^\oplus$ is defined as follows: $M,w\models[\oplus \pi]\varphi \text{ iff } M^{\oplus \llb\pi\rrb_M},w\models\varphi$.
\end{definition}

As we did with the dynamic extensions presented for actions in $\cap$-models, we wish to obtain a matching proof system for our dynamic language $\mathscr{L}^\oplus$. We do this via reduction axioms. Before presenting the proof system $\mathsf{L}^\oplus$, we introduce some abbreviations that will be used in the definition of these axioms.

\begin{notat}Let $\pi$ be a normal form $\pi\coloneqq \bigcup_{s\in S_0(I)}\big(?s(\bm{\varphi},\bm{\psi}); A ; ?\psi_{s_{\mathsf{len}(s)}}\big) \cup (?\top)$. For a formula $[\oplus \pi]\chi$, we define the following abbreviations:
\small
\begin{align*}
\pi^<(\chi)\coloneqq & \bigvee_{J\subseteq I}\big( J(\bm{\psi}) \wedge \mathsf{suc^<}(\chi)\big) &&\\
\mathsf{suc^<}(\chi)\coloneqq & \bigwedge_{s\in S_0(I)}( s(\bm{\varphi},\bm{\psi}) \to \forall\big( (\psi_{s_{\mathsf{len}(s)}} \wedge \bigwedge_{s'\in S_0(I)}( s'(\bm{\varphi},\bm{\psi}) \to \forall( \psi_{s'_{\mathsf{len}(s')}} \to \bigwedge_{j\in J}\neg\varphi_j))) \to [\oplus\pi]\chi)\big) &
\end{align*}
\normalsize
\end{notat}

\begin{definition}[$\mathsf{L}^\oplus$] Let $\chi,\chi'\in \mathscr{L}^\oplus$ and let $\pi\in \Pi_*$ be an evidence program with normal form $ \bigcup_{s\in S_0(I)}(?s(\bm{\varphi},\bm{\psi}); A ; ?\psi_{s_{\mathsf{len}(s)}}) \cup (?\top)$. The proof system $\mathsf{L}^\oplus$ extends $\mathsf{L}_0$ with the following  \emph{reduction axioms}:

\begin{itemize}[leftmargin=*]
\small
\item[] $\oplus $EA1: $[\oplus \pi]p \leftrightarrow p$ for all $p\in \mathsf{P}$
\item[] $\oplus $EA2: $[\oplus \pi]\neg\chi \leftrightarrow \neg [\oplus\vec{\pi}]\chi $
\item[] $\oplus $EA3: $[\oplus \pi](\chi\wedge\chi') \leftrightarrow [\oplus \pi]\chi \wedge [\oplus\vec{\pi}]\chi' $
\item[] $\oplus $EA4: $[\oplus \pi]\Box_0\chi \leftrightarrow \Box_0[\oplus \pi]\varphi \lor \big([\oplus \pi]\chi \wedge \bigwedge_{s\in S_0(I)}( s(\bm{\varphi},\bm{\psi}) \to \forall (\psi_{s_{\mathsf{len}(s)}} \to [\oplus \pi]\chi))\big)$
\item[] $\oplus $EA5: $[\oplus \pi]\Box\chi \leftrightarrow [\oplus \pi]\chi \wedge   \pi^<(\chi) \wedge \bigwedge_{s\in S_0(I)}( s(\bm{\varphi},\bm{\psi}) \to \Box (\psi_{s_{\mathsf{len}(s)}} \to [\oplus \pi]\chi))$
\item[] $\oplus $EA6: $[\oplus \pi]\forall\chi \leftrightarrow \forall [\oplus \vec{\pi}]\chi $
\end{itemize}
\end{definition}

\begin{restatable}{thm}{dynamiclex} $\mathsf{L}^\oplus$ is sound and strongly complete with respect to $\mathsf{lex}$ models.
\end{restatable}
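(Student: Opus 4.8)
### Proof proposal for Theorem (soundness and completeness of $\mathsf{L}^\oplus$)

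The plan is to follow the same reductive strategy already used for $\mathsf{L}^!$, $\mathsf{L}^+$ and $\mathsf{L}^\Uparrow$: show that every reduction axiom in $\mathsf{L}^\oplus$ is valid over $\mathsf{lex}$-models, conclude soundness, and then argue that the reduction axioms, read left-to-right, constitute a terminating rewriting procedure that translates every formula of $\mathscr{L}^\oplus$ into a provably equivalent formula of $\mathscr{L}$; completeness of $\mathsf{L}^\oplus$ then follows from strong completeness of $\mathsf{L}_{\mathsf{lex}}$ (the earlier theorem), since $\mathsf{L}^\oplus$ extends $\mathsf{L}_{\mathsf{lex}}$. Concretely, first I would establish the semantic lemma underpinning the Boolean reduction axioms $\oplus$EA1--$\oplus$EA3 and $\oplus$EA6, which are routine because prioritized addition does not touch $W$ or $V$, and the $\forall$-modality quantifies over the unchanged domain; the functionality of the map $M\mapsto M^{\oplus\llb\pi\rrb_M}$ gives the $\neg$ case. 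Then I would turn to the two substantive axioms, $\oplus$EA4 and $\oplus$EA5.

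For $\oplus$EA4, I would unfold $M,w\models[\oplus\pi]\Box_0\chi$ as $M^{\oplus R},w\models\Box_0\chi$ with $R=\llb\pi\rrb_M$, i.e. there is an evidence order $R'\in\mathscr{R}^{\oplus R}=\mathscr{R}\cup\{R\}$ with $R'[w]\subseteq\llb\chi\rrb_{M^{\oplus R}}$. Splitting on whether the witnessing $R'$ is an old order or the newly added $R$ yields the disjunction: the first disjunct $\Box_0[\oplus\pi]\chi$ captures ``some old $R'$ works'', using the reduction-axiom translation $\llb\chi\rrb_{M^{\oplus R}}=\llb[\oplus\pi]\chi\rrb_M$ and the fact that old orders are unchanged; the second disjunct captures ``$R$ itself works'', where $R=\llb\pi\rrb_M$ is the reflexive-transitive closure of the normal-form program, so its successor set from $w$ is described syntactically by the big conjunction over $s\in S_0(I)$ of $s(\bm\varphi,\bm\psi)\to\forall(\psi_{s_{\mathsf{len}(s)}}\to[\oplus\pi]\chi)$ — exactly the encoding already validated in the proof of $\mathrm{EA4}_\cap$. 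Here I would lean heavily on the Normal Form Lemma to justify that $\llb\pi\rrb_M[w]$ is correctly captured by the $s(\bm\varphi,\bm\psi)$ pivots.

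The genuinely new work is $\oplus$EA5, the $\Box$-axiom, because on $\mathsf{lex}$-models $Ag(\lan\mathscr{R}^{\oplus R},\preceq^{\oplus R}\ran)=\mathsf{lex}(\lan\mathscr{R}\cup\{R\},\preceq\cup\{(R',R)\mid R'\in\mathscr{R}\}\ran)$, and since $R$ is given strict top priority, the lexicographic definition collapses (on the new family) to ``$R$ decides first, then the old $\mathsf{lex}$ of $\mathscr{R}$ breaks ties within $R$-indifference classes''. Formally I would prove the key semantic identity $(w,v)\in\mathsf{lex}(\lan\mathscr{R}\cup\{R\},\preceq^{\oplus R}\ran)$ iff $R wv$ and, whenever $R^\sim wv$, also $(w,v)\in\mathsf{lex}(\lan\mathscr{R},\preceq\ran)$; this is a direct manipulation of Definition~\ref{lex def} using that $R$ dominates every old order. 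The right-hand side of $\oplus$EA5 is then assembled from three pieces mirroring this: $[\oplus\pi]\chi$ (the point $w$ itself, by reflexivity of $R$), the conjunct $\bigwedge_{s}(s(\bm\varphi,\bm\psi)\to\Box(\psi_{s_{\mathsf{len}(s)}}\to[\oplus\pi]\chi))$ encoding ``all states strictly $R$-above $w$ satisfy $\chi$'' (here $\Box$ must range over the \emph{old} aggregated order to locate those states, and universality for $\Box$ in $\mathsf{L}_{\mathsf{lex}}$ is what makes this faithful), and the abbreviation $\pi^<(\chi)$, whose auxiliary $\mathsf{suc^<}(\chi)$ is designed to pin down the states that are $R$-indifferent to $w$ yet old-$\mathsf{lex}$-above $w$, requiring $\chi$ of exactly those. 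I would verify that the syntactic description of $R$-strict-above and $R$-indifference via the pivot formulas $J(\bm\varphi)$, $J(\bm\psi)$ and $s(\bm\varphi,\bm\psi)$ matches the relational closure given by the Normal Form Lemma.

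The main obstacle I anticipate is precisely the bookkeeping in $\oplus$EA5: disentangling, at the syntactic level, the two regimes (the $R^<$-part versus the $R^\sim$-part governed by the old $\mathsf{lex}$) and checking that the chosen pivot formulas $J(\bm\psi)$ and the nested universal quantifiers in $\mathsf{suc^<}(\chi)$ correctly isolate ``$v$ is $R$-indifferent to $w$ but strictly old-plausible-above $w$''. This requires using the normal-form characterization of $\llb\pi\rrb_M$ carefully on both $w$ and $v$ simultaneously, and the worst-case reasoning that the relevant index set $J\subseteq I$ is the one identifying the $R$-class of $w$. Once this identity is nailed down, soundness is immediate. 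For completeness, I would define a complexity measure on $\mathscr{L}^\oplus$-formulas (counting $\oplus$-modality occurrences with appropriate weighting for nesting, as in the cited reduction-axiom literature) on which each left-to-right application of $\oplus$EA1--$\oplus$EA6 strictly decreases complexity while preserving provable equivalence in $\mathsf{L}^\oplus$, hence every $\mathscr{L}^\oplus$-formula is $\mathsf{L}^\oplus$-provably equivalent to an $\mathscr{L}$-formula; combined with the strong completeness of $\mathsf{L}_{\mathsf{lex}}$ for $\mathsf{lex}$-models and soundness of the new axioms, strong completeness of $\mathsf{L}^\oplus$ for $\mathsf{lex}$-models follows.
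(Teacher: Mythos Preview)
Your overall strategy is right and matches the paper's: validate the reduction axioms over $\mathsf{lex}$-models, then run the standard rewriting argument to reduce $\mathscr{L}^\oplus$ to $\mathscr{L}$ and invoke completeness of $\mathsf{L}_{\mathsf{lex}}$. Your treatment of $\oplus$EA1--$\oplus$EA4 and $\oplus$EA6 is correct, and your key semantic identity for the new aggregated order is equivalent to the paper's Lemma decomposing $\mathsf{lex}(\lan\mathscr{R}^{\oplus R},\preceq^{\oplus R}\ran)[w]$ as $\llb\pi\rrb_M^<[w]\cup\big(\mathsf{lex}(\lan\mathscr{R},\preceq\ran)\cap\llb\pi\rrb_M\big)[w]$.

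However, you have the roles of the two substantive conjuncts in $\oplus$EA5 reversed, and this would derail the actual verification. The abbreviation $\pi^<(\chi)$ (via $\mathsf{suc^<}$) contains only the universal modality $\forall$, never $\Box$; it therefore cannot express anything about the old aggregated order. What it encodes is precisely $\llb\pi\rrb_M^<[w]\subseteq\llb[\oplus\pi]\chi\rrb_M$, i.e.\ the \emph{strictly $R$-above} part (this is exactly item~2 of the upgrade lemma already proved for $\cap$-models, which the paper simply re-imports here since the formula is $\Box$-free). Conversely, the conjunction $\bigwedge_{s}\big(s(\bm\varphi,\bm\psi)\to\Box(\psi_{s_{\mathsf{len}(s)}}\to[\oplus\pi]\chi)\big)$ uses $\Box$, which ranges over the \emph{old} $\mathsf{lex}$-order; together with the normal-form pivots this captures $\big(\mathsf{lex}(\lan\mathscr{R},\preceq\ran)\cap\llb\pi\rrb_M\big)[w]\subseteq\llb[\oplus\pi]\chi\rrb_M$, i.e.\ the $R$-and-old-lex intersection, not the strict part. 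Once you swap these attributions, the verification of $\oplus$EA5 goes through exactly along the lines you sketch: combine the two inclusion lemmas with the semantic decomposition of the new $\mathsf{lex}$-order.
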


\section{The Logic of \textsf{REL} Models}

In this section, we study the logic of evidence and belief based on some abstract aggregator. That is, instead of fixing an aggregator, we are now interested in reasoning about the beliefs that an agent would form, based on her evidence, \textit{irrespective} of the aggregator used. With respect to dynamics, we will focus on the action of \textit{prioritized addition} introduced for $\mathsf{lex}$-models, considering an \textit{iterated} version of prioritized addition, defined with a (possibly empty) sequence of evidence orders $\vec{R}=\lan R_1,\dots, R_n\ran$ as input.

\begin{definition}[Iterated prioritized addition] \label{iterated} Let $M=\langle W, \lan \mathscr{R}, \preceq\ran, V, Ag \rangle$ be a \textsf{REL} model and $\vec{R}=\lan R_1,\dots, R_n\ran$ be a sequence of evidence orders.The model $M^{\oplus R}=\langle W^{\oplus \vec{R}}, \lan \mathscr{R}^{\oplus \vec{R}}, \preceq^{\oplus \vec{R}} \ran, V^{\oplus \vec{R}}, Ag^{\oplus \vec{R}}\rangle$ has $ W^{\oplus \vec{R}}\coloneqq W$, $\mathscr{R}^{\oplus \vec{R}}\coloneqq \mathscr{R}\cup \{R_i \mid i\in \{1,\dots n\}\}$, $V^{\oplus \vec{R}}\coloneqq V$, $Ag^{\oplus \vec{R}}\coloneqq Ag$ and
{\small
\begin{align*}
\preceq^{\oplus \vec{R}}\coloneqq \preceq & \cup \{(R,R_1)\mid R\in \mathscr{R} \} \cup \{(R,R_2)\mid R\in \mathscr{R} \cup \{R_1\}\} &\\
& \cup \dots &\\
& \cup \{(R,R_n)\mid R\in \mathscr{R}\cup \{R_j \mid j\in \{1,\dots, n-1\}\} \}
\end{align*}
}%
\end{definition}
That is, first $R_1$ is added as the highest priority evidence, then $R_2$ is added as the highest priority evidence, on top of every other evidence (including $R_1$), and so on, up to $R_n$. Naturally, when $\vec{R}$ has one element, we are back to the basic notion of prioritized addition.\\ 

\noindent\textbf{Syntax and semantics.} To pre-encode part of the dynamics of iterated prioritized addition, we will modify our basic language $\mathscr{L}$ with \textit{conditional aggregated evidence modalities} of the form $\Box^{\vec{\pi}}$, where $\vec{\pi}$ is a finite, possibly empty sequence of evidence programs $\pi_1,\dots,\pi_n$ (i.e., $\pi_i\in \Pi_*$, for $i\in \{1,\dots n\}$). The intended interpretation of $\Box^{\vec{\pi}}\varphi$ is ``the agent would have aggregated evidence for $\varphi$, if she performed the iterated prioritized addition of the evidence orders defined by $\vec{\pi}$''.

\begin{definition}[$\mathscr{L}_c$] The language $\mathscr{L}_c$ is defined as follows:
\[
\varphi  \Coloneqq  \ p \mid \neg \varphi \mid \varphi\wedge\varphi \mid \Box_0\varphi \mid \Box^{\vec{\pi}}\varphi \mid \forall \varphi \ \  (p\in \mathsf{P});   \ \ \pi  \Coloneqq  \ A \mid ?\varphi \mid \pi\cup\pi \mid \pi;\pi \mid \pi^*  
\]
where $\vec\pi$ is a (possibly empty) finite sequence of evidence programs (i.e. $*$-programs).
\end{definition}

As we allow $\vec{\pi}$ to be empty, $\Box^{\vec{\pi}}$  reduces to the $\Box\varphi$ from $\mathscr{L}$ when $\vec{\pi}$ is the empty sequence, giving us a fully \textit{static} sub-language.

\begin{notat} We abuse the notation for the truth map $\llb \cdot\rrb_M$ and write $\llb \vec{\pi}\rrb_M$ to denote $\lan \llb \pi_1 \rrb_M, \dots,\llb \pi_n \rrb_M \ran$, where $\vec{\pi}=\lan \pi_1,\dots,\pi_n \ran$. 
\end{notat}

Satisfaction for formulas $\Box^{\vec{\pi}}\varphi\in\mathscr{L}_c$ is given as follows.

\begin{definition} Let $M=\langle W, \lan \mathscr{R},\preceq\ran, V, Ag\rangle$ be an \textsf{REL} model and $w\in W$. The satisfaction relation $\models$ between pairs $(M,w)$ and formulas $\Box^{\vec{\pi}}\varphi\in\mathscr{L}_c$ is defined as follows: $M,w\models \Box^{\vec{\pi}} \varphi \text{ iff } Ag(\lan \mathscr{R}^{\oplus \llb \vec{\pi}\rrb_M}, \preceq^{\oplus \llb \vec{\pi}\rrb_M} \ran)[w]\subseteq \llb \varphi\rrb_M$.
\end{definition}

That is, $\Box^{\vec{\pi}} \varphi$ is true at a state $w$ if the agent would have aggregated evidence for $\varphi$, assuming that the current ordered body of evidence is transformed by the iterated prioritized addition of $\llb \vec{\pi}\rrb_M$. Note that, as we allow $\vec{\pi}$ to be empty, $\Box^{\vec{\pi}}$  reduces to the standard $\Box\varphi$ from $\mathscr{L}$ when $\vec{\pi}$ is the empty sequence.\\

\noindent\textbf{Static logic.} Next, we introduce a complete proof system for the language with conditional modalities.

\begin{definition}[$\mathsf{L}_c$] The system $\mathsf{L}_c$ includes the same axioms and inference rules as $\mathsf{L}_\mathsf{lex}$, with axioms and inference rules for $\Box$ in $\mathsf{L}_\mathsf{lex}$ applying to $\Box^ {\vec{\pi}}$ in $\mathsf{L}_c$. 
\end{definition}

\begin{restatable}{thm}{lc} $\mathsf{L}_c$ is sound and strongly complete with respect to \textsf{REL} models. 
\end{restatable}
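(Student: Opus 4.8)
The plan is to prove soundness and strong completeness of $\mathsf{L}_c$ with respect to \textsf{REL} models. Soundness amounts to checking that each axiom of $\mathsf{L}_\mathsf{lex}$ — now read with $\Box$ replaced by $\Box^{\vec{\pi}}$ for an arbitrary fixed $\vec{\pi}$ — is valid over \textsf{REL} models, and that the rules preserve validity. The key observation is that for any \textsf{REL} model $M$ and any $\vec{\pi}$, the relation $Ag(\lan \mathscr{R}^{\oplus \llb\vec\pi\rrb_M}, \preceq^{\oplus \llb\vec\pi\rrb_M}\ran)$ is itself a preorder (by the definition of an evidence aggregator, since $\mathscr{R}^{\oplus \llb\vec\pi\rrb_M}$ is a nonempty family of preorders containing $W^2$ and $\preceq^{\oplus\llb\vec\pi\rrb_M}$ is a preorder on it — one should double-check that $\preceq^{\oplus\vec R}$ is genuinely reflexive and transitive, but this follows from its definition as a union of $\preceq$ with the ``new on top'' pairs). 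Hence each $\Box^{\vec\pi}$ behaves as an $\mathsf{S4}$ box interpreted via a preorder, so the $\mathsf{S4}$ axioms, the Necessitation rule, and Universality for $\Box^{\vec\pi}$ (which holds because $\forall$ is interpreted by $W^2$, which is contained in every preorder including the aggregated one) are all sound. The $\Box_0$, $\forall$, Pullout, and $\mathsf{T}/\mathsf{N}$ axioms are sound exactly as in $\mathsf{L}_\mathsf{lex}$ since they do not mention $\Box$.

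For completeness the plan is a canonical-model construction mirroring the one used for Theorem (soundness and completeness of $\mathsf{L}_\mathsf{lex}$), adapted to carry the whole family $\{\Box^{\vec\pi}\}_{\vec\pi}$ and the evidence programs simultaneously. First I would form the canonical model out of maximal $\mathsf{L}_c$-consistent sets, defining: the $\forall$-relation as an equivalence relation and restricting to a single $\forall$-cluster (a ``generated submodel'' step) so that $\forall$ genuinely becomes the universal modality; the $\Box_0$-neighborhoods from the $\Box_0$-modality using Pullout, $\mathsf{4}$ (and, for the $\mathsf{lex}$ flavour, $\mathsf{T}$ and $\mathsf{N}$) exactly as in the proof of the $\mathsf{L}_\mathsf{lex}$ completeness theorem; and for each sequence $\vec\pi$ a canonical preorder $\le_{\vec\pi}$ extracted from $\Box^{\vec\pi}$ in the standard $\mathsf{S4}$ way ($x\le_{\vec\pi}y$ iff every $\Box^{\vec\pi}\varphi\in x$ has $\varphi\in y$), which is reflexive and transitive by the $\mathsf{S4}$ axioms for $\Box^{\vec\pi}$. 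The Existence/Truth Lemma for all three kinds of modality is then routine. The real work is to exhibit an actual \textsf{REL} model — a family $\mathscr{R}$ of evidence orders, a priority order $\preceq$, and an aggregator $Ag$ — whose built-in $\Box_0$ matches the canonical $\Box_0$-neighborhoods, and such that for \emph{every} $\vec\pi$ the aggregated-plus-iterated-prioritized-addition relation $Ag(\lan\mathscr{R}^{\oplus\llb\vec\pi\rrb},\preceq^{\oplus\llb\vec\pi\rrb}\ran)$ coincides with the canonical $\le_{\vec\pi}$, and in particular for the empty sequence $\le_{\lan\ran}$ is $Ag(\lan\mathscr{R},\preceq\ran)$.

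The main obstacle — and the step I would spend most effort on — is precisely this coherence requirement across all $\vec\pi$ at once: we need one fixed model whose aggregator, when fed the model's own evidence prioritized-extended by $\llb\pi_1\rrb,\dots,\llb\pi_n\rrb$, reproduces the canonical $\le_{\vec\pi}$ for every finite sequence, including nested and repeated programs. The natural strategy is to choose $Ag$ to be (or to behave like) $\mathsf{lex}$ on the relevant inputs and to build $\mathscr{R}$ large enough and sufficiently ``spread out'' in priority that adding the definable relations $\llb\pi_i\rrb$ on top yields, via the lexicographic rule, exactly the canonical relations; one would verify this by induction on $\mathsf{len}(\vec\pi)$, using at the base case the completeness argument for $\mathsf{L}_\mathsf{lex}$ and at the inductive step a semantic analysis of how one prioritized addition transforms the $\mathsf{lex}$-output (this is the same computation underlying the reduction axioms $\oplus\mathrm{EA}4$–$\oplus\mathrm{EA}6$ and the Normal Form Lemma, which lets us treat each $\pi_i^*$ as a union $\bigcup_{s}(?s(\bm\varphi,\bm\psi);A;?\psi_{s_{\mathsf{len}(s)}})\cup(?\top)$ whose extension is controlled by formulas in the language). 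One must also confirm that the canonical model can be taken so that $W^2\in\mathscr{R}$ (using Universality for $\Box$) and that the $\Box_0$-part is a genuine family of preorders on which $\preceq$ can be defined; again these are handled as in the $\mathsf{L}_\mathsf{lex}$ proof. Finally, strong completeness (as opposed to weak) comes for free from the canonical-model method since we work with arbitrary consistent sets of formulas throughout; the only point to watch is that the generated-submodel restriction to a $\forall$-cluster preserves consistency of the given set, which it does because $\forall$ is an $\mathsf{S5}$ modality.
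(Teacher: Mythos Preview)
Your skeleton is right: canonical model of maximal $\mathsf{L}_c$-consistent sets, a generated $\forall$-cluster, the $\Box_0$-relations $R^{\Box_0\varphi}$ as in the $\mathsf{L}_\mathsf{lex}$ proof, and for each $\vec\pi$ a canonical preorder $\le_{\vec\pi}$ read off from $\Box^{\vec\pi}$ via $\mathsf{S4}$. The Existence and Truth Lemmas go through exactly as you say.

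Where you go astray is the ``coherence'' step. You propose to take $Ag=\mathsf{lex}$ (or something behaving like it) and then argue, by induction on $\mathsf{len}(\vec\pi)$ and the Normal Form Lemma, that $\mathsf{lex}$ applied to the prioritized-extended family reproduces each canonical $\le_{\vec\pi}$. This is both unnecessary and almost certainly false. The system $\mathsf{L}_c$ imposes \emph{no} axioms relating $\Box^{\vec\pi}$ and $\Box^{\vec\rho}$ for distinct sequences; each is just an independent $\mathsf{S4}$ box under $\forall$. So in the canonical model the relations $\le_{\vec\pi}$ are essentially unconstrained with respect to one another, and there is no reason they should be produced by $\mathsf{lex}$ (or by any fixed concrete rule) from a single underlying $\langle\mathscr{R},\preceq\rangle$.

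The paper exploits exactly this freedom: since completeness is with respect to the class of \emph{all} \textsf{REL} models, you may define the aggregator \emph{ad hoc}. Concretely, set
\[
Ag^c(\langle\mathscr{R},\preceq\rangle)=
\begin{cases}
\le_{\vec\pi} & \text{if }\langle\mathscr{R},\preceq\rangle=\langle\mathscr{R}^{c\,\oplus\llb\vec\pi\rrb_{M^c}},\preceq^{c\,\oplus\llb\vec\pi\rrb_{M^c}}\rangle\text{ for some }\vec\pi,\\
W^c\times W^c & \text{otherwise.}
\end{cases}
\]
This dispatches your ``main obstacle'' in one line: the aggregator is simply \emph{declared} to output the right preorder on every input that actually arises, so the coherence across all $\vec\pi$ is built in by fiat. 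No unravelling, no Step~2/Step~3, and no inductive analysis of how prioritized addition interacts with a concrete rule is needed. (One should check that $Ag^c$ is well-defined as a function, i.e.\ that distinct $\vec\pi$ yielding the same ordered family also yield the same canonical relation; the paper treats this as routine.) Your $\mathsf{lex}$-based plan would be the right idea if the target class were $\mathsf{lex}$-models, but here it is the wrong tool.
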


\noindent\textbf{Evidence dynamics for \textsf{REL} models.} Having established the soundness and completeness of the static logic, we now turn to evidence dynamics, focusing on prioritized evidence addition. To encode prioritized addition, we add formulas of the form $[\oplus \vec{\pi}]\varphi$, used to express the statement that $\varphi$ is true after the prioritized addition of the sequence of evidence orders defined by $\vec{\pi}$.

\begin{definition}[$\mathscr{L}_c^\oplus$] The language $\mathscr{L}_c^\oplus$ is given by:
\[\varphi  \Coloneqq  \ p \mid \neg \varphi \mid \varphi\wedge\varphi \mid \Box_0\varphi \mid \Box^\pi\varphi \mid \forall \varphi \mid [\oplus\vec{\pi}]\varphi \ \ (p\in\mathsf{P}); \ \ \pi  \Coloneqq  \ A \mid ?\varphi \mid \pi\cup\pi \mid \pi;\pi \mid \pi^*\]
where ${\vec{\pi}}$ is a (possibly empty) finite sequence of evidence programs (i.e. $*$-programs).
\end{definition}

\begin{definition}[Satisfaction for {$[\oplus \pi]\varphi$}] Let $M=\langle W, \lan \mathscr{R},\preceq\ran, V, Ag\rangle$ be an \textsf{REL} model and $w\in W$. The satisfaction relation $\models$ between pairs $(M,w)$ and formulas $[\oplus\pi ]\varphi\in\mathscr{L}^\oplus$ is defined as follows: $M,w\models[\oplus {\vec{\pi}}]\varphi \text{ iff } M^{\oplus \llb{\vec{\pi}}\rrb_M},w\models\varphi$.
\end{definition}

Having fixed our dynamic language, we now present reduction axioms for it.

\begin{definition}[$\mathsf{L}_c^\oplus$] Let $\chi,\chi'\in \mathscr{L}_{\vec\pi}^\oplus$ and let $\vec{\pi}=\lan \pi_1,\dots,\pi_n\ran\in S_0(\Pi_*)$ be a sequence of evidence programs where each $\pi_i$ has a normal form $\pi_i \coloneqq \bigcup_{s\in S_0(I_i)}(?s(\bm{\varphi},\bm{\psi}); A ; ?\psi_{s_{\mathsf{len}(s)}}) \cup (?\top)$. The proof system $\mathsf{L}_c^\oplus$ includes all \textit{axioms schemas and inference rules} of $\mathsf{L}_c$, together with the following \emph{reduction axioms}:
\begin{itemize}[leftmargin=0cm]
\small
\item[] PEA1: $[\oplus \vec{\pi}]p \leftrightarrow p$ for all $p\in \mathsf{P}$
\item[] PEA2: $[\oplus \vec{\pi}]\neg\chi \leftrightarrow \neg [\oplus\vec{\pi}]\chi $
\item[] PEA3: $[\oplus \vec{\pi}](\chi\wedge\chi') \leftrightarrow [\oplus \vec{\pi}]\chi \wedge [\oplus\vec{\pi}]\chi' $
\item[] PEA4: $[\oplus \vec{\pi}]\Box_0\chi \leftrightarrow \Box_0[\oplus \vec{\pi}]\varphi \lor \big([+  \vec{\pi}]\chi \wedge \bigvee_{1\leq i\leq n}(\bigwedge_{s\in S_0(I_i)}( s(\bm{\varphi},\bm{\psi}) \to \forall (\psi_{s_{\mathsf{len}(s)}} \to [\oplus \vec{\pi}]\chi)))\big)$
\item[] PEA5: $[\oplus \vec{\pi}]\Box^{\vec{\rho}}\chi \leftrightarrow \Box^{\vec{\pi} \vec{\rho}}\chi$, for  $\vec{\rho}\in S_0(\Pi_*)$
\item[] PEA6: $[\oplus \vec{\pi}]\forall\chi \leftrightarrow \forall [\oplus\vec{\pi}]\chi $
\end{itemize}
where $\vec{\pi}\vec{\rho}$ denotes the concatenation of the sequences $\vec{\pi}$ and $\vec{\rho}$. 
\end{definition}

\begin{restatable}{thm}{lcplus} $\mathsf{L}_c^\oplus$ is sound and complete with respect to \textsf{REL} models.
\end{restatable}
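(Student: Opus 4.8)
The plan is to prove soundness and completeness of $\mathsf{L}_c^\oplus$ by the standard reductive analysis: show that the reduction axioms PEA1--PEA6 are valid over \textsf{REL} models, and that together with the usual replacement-of-equivalents they let us translate every $\mathscr{L}_c^\oplus$-formula into a provably equivalent $\mathscr{L}_c$-formula, after which completeness follows from completeness of $\mathsf{L}_c$ (the previous theorem $\mathsf{L}_c$ being sound and strongly complete with respect to \textsf{REL} models).

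\textbf{Soundness.} First I would check that each reduction axiom is valid. PEA1, PEA2, PEA3, PEA6 are routine, since prioritized addition does not change the domain $W$, the valuation $V$, or the aggregator $Ag$; they follow exactly as the analogous clauses for $[\oplus\pi]$ in $\mathsf{L}^\oplus$ (Theorem on $\mathsf{L}^\oplus$), now with a vector $\vec\pi$ in place of a single program. The genuinely new clause is PEA5: $[\oplus\vec{\pi}]\Box^{\vec{\rho}}\chi \leftrightarrow \Box^{\vec{\pi}\vec{\rho}}\chi$. Here the key observation is that iterated prioritized addition composes: performing $\oplus\vec{\rho}$ after $\oplus\vec{\pi}$ on $M$ yields exactly the model $M^{\oplus(\vec{\pi}\vec{\rho})}$, because in Definition \ref{iterated} each new order is placed strictly above \emph{all} orders present so far, so concatenating the sequences and concatenating the constructions give the same ordered family $\lan \mathscr{R}^{\oplus\llb\vec{\pi}\vec{\rho}\rrb_M},\preceq^{\oplus\llb\vec{\pi}\vec{\rho}\rrb_M}\ran$. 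One must be slightly careful that $\llb\vec{\rho}\rrb_{M^{\oplus\llb\vec\pi\rrb_M}} = \llb\vec{\rho}\rrb_M$, which holds because $W$ and $V$ are untouched and the evidence programs are interpreted purely in terms of $W$ and $V$; hence the semantic clause for $\Box^{\vec\rho}$ evaluated in $M^{\oplus\llb\vec\pi\rrb_M}$ coincides with the clause for $\Box^{\vec\pi\vec\rho}$ evaluated in $M$. For PEA4, validity is inherited from the argument for $\text{EA4}_\cap$/$\oplus$EA4 applied to the combined family: $\Box_0$ quantifies existentially over \emph{some} evidence order, which after the iterated addition is either an old order (giving the $\Box_0[\oplus\vec\pi]\chi$ disjunct, using the normal-form description of the programs via Lemma \ref{unf}) or one of the newly added ones $\llb\pi_i\rrb$, whose successor sets are described by the normal form $\bigcup_{s\in S_0(I_i)}(?s(\bm\varphi,\bm\psi);A;?\psi_{s_{\mathsf{len}(s)}})\cup(?\top)$ -- this is the source of the big disjunction over $1\le i\le n$ and the conjunctions over $S_0(I_i)$.

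\textbf{Completeness.} I would define a translation $t:\mathscr{L}_c^\oplus\to\mathscr{L}_c$ that pushes dynamic modalities inward using PEA1--PEA6 and eliminates them. The standard subtlety is termination: one needs a complexity measure on formulas that strictly decreases under each rewrite so the translation is well-defined. The measure must treat $[\oplus\vec\pi]\psi$ as more complex than its reduct; the delicate cases are PEA4 and PEA5, where a dynamic modality may move past $\Box_0$ or be absorbed into a conditional modality $\Box^{\vec\pi\vec\rho}$. For PEA5 the right-hand side has no dynamic modality governing $\chi$ anymore (only the static conditional modality $\Box^{\vec\pi\vec\rho}$), so the count of $[\oplus\cdot]$ operators strictly drops; for PEA4 the dynamic modalities on the right are all applied to proper subformulas of $\Box_0\chi$ and to $\chi$ itself, so the usual ``number of dynamic modalities weighted by the size of their scope'' measure decreases, exactly as in the proof for $\mathsf{L}^+$/$\mathsf{L}^\oplus$. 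Then by induction on this measure, $\vdash_{\mathsf{L}_c^\oplus}\psi\leftrightarrow t(\psi)$ for every $\psi$, using the reduction axioms plus replacement of provable equivalents (available since $\mathsf{L}_c^\oplus$ contains all rules of $\mathsf{L}_c$, in particular necessitation and monotonicity, which suffice to substitute equivalents under $\forall,\Box_0,\Box^{\vec\pi}$). Finally, given $\models\psi$ over \textsf{REL} models, soundness gives $\models t(\psi)$, strong completeness of $\mathsf{L}_c$ gives $\vdash_{\mathsf{L}_c} t(\psi)$, hence $\vdash_{\mathsf{L}_c^\oplus} t(\psi)$, and the equivalence $\vdash_{\mathsf{L}_c^\oplus}\psi\leftrightarrow t(\psi)$ gives $\vdash_{\mathsf{L}_c^\oplus}\psi$; strong completeness follows since the translation is compatible with finite sets of premises (or by the deduction-theorem-style argument, noting the added axioms are schemas).

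\textbf{Main obstacle.} The crux is PEA5 together with the termination of the translation. One has to verify precisely that the ``composition law'' $M^{\oplus\llb\vec\pi\rrb_M}{}^{\oplus\llb\vec\rho\rrb} = M^{\oplus\llb\vec\pi\vec\rho\rrb_M}$ holds at the level of the \emph{priority order} (not just the evidence family), which requires unwinding the nested unions in Definition \ref{iterated}; and one must design the complexity measure so that absorbing a dynamic modality into the superscript of $\Box^{\vec\pi\vec\rho}$ counts as a strict decrease even though the program-vector superscript grows. Everything else is bookkeeping analogous to the earlier dynamic completeness theorems in the paper.
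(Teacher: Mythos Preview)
Your approach is essentially the paper's: soundness by verifying PEA1--PEA6 (the paper spells out PEA4, PEA5, PEA6; the rest are routine since $W$, $V$, $Ag$ are unchanged), and completeness by the standard translation into $\mathscr{L}_c$ followed by an appeal to the completeness of $\mathsf{L}_c$. Your extra discussion of a terminating complexity measure is more explicit than the paper, which simply says the completeness half ``follows the same steps'' as the earlier dynamic systems.

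One caveat worth noting: your justification that $\llb\vec{\rho}\rrb_{M^{\oplus\llb\vec\pi\rrb_M}} = \llb\vec{\rho}\rrb_M$ ``because evidence programs are interpreted purely in terms of $W$ and $V$'' is not literally correct as stated. Tests $?\varphi$ inside a program may contain $\Box_0$- or $\Box^{\vec\sigma}$-subformulas, and those are interpreted via $\mathscr{R}$, $\preceq$ and $Ag$, which \emph{do} change under $\oplus\vec\pi$. The paper's own argument for PEA5 is terse and does not isolate this step either, so you are not diverging from it; but if you want your version to be airtight you should either argue the invariance more carefully (e.g.\ as part of the inside-out induction that establishes $\llb[\oplus\vec\pi]\theta\rrb_M=\llb t(\theta)\rrb_M$ simultaneously for formulas and programs) or note explicitly where the needed equality comes from.
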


\section{Conclusions and Future Work}

We have presented evidence logics that use a novel representation for evidence and incorporate reliability-sensitive forms of evidence aggregation.  Clearly, many open problems remain. Here are a few more specific avenues for future research:
\begin{itemize}
\item \textit{Additional aggregators}: We studied two natural aggregators. As we know from the social choice literature, many other aggregators have nice properties. An interesting extension to this work could involve developing logics based on other well-known aggregators.
\item \textit{Additional evidential actions}: As we saw, in a setting with ordered evidence, evidence actions are complex transformations, both of the stock of evidence and the priority order. For the lexicographic case, we studied a form of prioritized addition. It could be interesting to consider more general forms of addition, or actions that transform the priority order (re-evaluation of reliability) without affecting the stock of evidence.
\item \textit{Probabilistic evidence}: We moved from the binary evidence case to the relational evidence case. Another important form of evidence is probabilistic evidence, i.e., evidence that comes in the form of a probability distribution over the set of states. The aggregation of probability functions is studied in \textit{probabilistic opinion pooling} \cite{dietrich2013probabilistic} and \textit{pure inductive logic} \cite{paris2015pure}, but a dynamic-logic study has yet to be developed.
\end{itemize}

\pagebreak

\section*{PROOFS APPENDIX}

\subsection*{PROOF OF PROPOSITIONS 1 AND 2}

\NELtoREL*

\begin{proof}
By induction on the structure of $\varphi$. The base case for $\varphi = p$ ($p\in \mathsf{P}$) and the inductive step for $\varphi = \neg \psi$,$\varphi = \psi \wedge \chi$ and $\varphi = \forall \psi$ are shown by unfolding the definitions. We show now the cases involving $\Box_0$ and $\Box$ modalities.
\begin{itemize}
\item $\varphi = \Box_0\psi$. Note that:
\small
\begin{align*}
M,w\models \Box_0\psi \text{ iff } & \text{there is an } e\in E_0 \text{ such that } w\in e \subseteq \llb \psi \rrb_M  \\ 
{\text{ iff }} & \text{there is an } R_e\in Rel(E_0) \text{ such that } R_e[w]=e \text{ and } e\subseteq \llb \psi \rrb_M  \\
\stackrel{i.h.}{\text{ iff }} & \text{there is an } R_e\in Rel(E_0) \text{ such that } R_e[w]\subseteq \llb \psi \rrb_{Rel(M)} \\
 \text{ iff } & Rel(M), w\models \Box_0\psi 
\end{align*}
\normalsize
\item $\varphi = \Box\psi$. Note that:
\small
\begin{align*}
M,w\models \Box\psi \text{ iff }  & \text{ there is an } e\in E \text{ such that } w\in e \subseteq \llb \psi \rrb_M  \\ 
\text{ iff }  & \text{ there are } e_1,\dots,e_n\in E_0 \text{ such that } \bigcap^n_{i=1}e_i = e  \\
& \text{ and }  w\in e \subseteq \llb \psi \rrb_M \\
\text{ iff } & \text{ there are } R_{e_1},\dots,R_{e_n}\in Rel(E_0) \text{ such that } R_{e_i}[w]=e_i \\
& \text{ and }  w\in e \subseteq \llb \psi \rrb_M \\
\text{ iff }  & \text{ there are } R_{e_1},\dots,R_{e_n}\in Rel(E_0) \text{ such that } 
 (\bigcap^n_{i=1}R_{e_i})[w]\subseteq \llb \psi \rrb_M \\
\text{ iff }  &  (\bigcap\mathscr{R})[w] \subseteq (\bigcap^n_{i=1}R_{e_i})[w] \subseteq\llb \psi \rrb_M \\
\stackrel{i.h.}{\text{ iff }} &  (\bigcap\mathscr{R})[w] \subseteq \llb \psi \rrb_{Rel(M)}\\
\text{ iff }  & Rel(M), w\models \Box\psi 
\end{align*}
\normalsize
\end{itemize}
\end{proof}

\NELtoRELbis*

\begin{proof}
That the left-to-right direct holds is clear from the fact that the proofs for this direction don't depend on the cardinality of $E_0$. For the right-to-left direction, the following is a counterexample. Let $M = \langle W, E_0, V \rangle$, with $W=\mathbb{N}$, $E_0=\{\mathbb{N} \setminus\{2n+1\} \mid n\in\mathbb{N}\}$ and $V(p)=\{2n \mid n\in\mathbb{N}\}$. Note that for all $e\in E$, $e\not\subseteq \llb p \rrb_M$ and thus $M,0\not\models \Box p$. 
Moreover, we have:
\[(\bigcap_{R\in Rel(E_0)}R)[0]=  \bigcap_{e\in E_0} (R_e [0])\]
And as $0\in e$ for all $e\in E_0$, by the fact that $0\in e$ implies $R_e[0]=e$, we have $R_e[0]=e$ for each $e\in E_0$. Hence
\[\bigcap_{e\in E_0} (R_e [0])=\bigcap_{e\in E_0} (e)= \bigcap E_0\]
Note that $\bigcap E_0 = \llb p \rrb_M$, and thus, $(\bigcap_{R\in Rel(E_0)}R)[0]= \llb p \rrb_M$. We also have $\llb p \rrb_M= \llb p \rrb_{Rel(M)}$ and hence $(\bigcap_{R\in Rel(E_0)}R)[0]= \llb p \rrb_{Rel(M)}$, which implies $Rel(M), 0\models \Box p$. 
\end{proof}

\subsection*{PROOF OF LEMMA 1}

The following well-known results about relational composition will be used in the normal form lemma.

\begin{restatable}{prop}{} \label{rel dist} Relational composition distributes over arbitrary unions. That is, for any binary relation $R$ and any indexed family of binary relations $Q_i$: (1) $R\circ (\bigcup_i Q_i)= \bigcup_i (R\cup Q_i)$; (2) $(\bigcup_i Q_i) \circ R = \bigcup_i (Q_i\circ R)$.
\begin{proof}
See, e.g., \cite[8]{kozen}.
\end{proof}
\end{restatable}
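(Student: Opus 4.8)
The plan is to prove both identities by pointwise ($\in$-chasing) double inclusion, reducing each set identity to a purely logical equivalence about commuting two existential quantifiers. Throughout I use the forward-composition convention fixed by the truth map for $\pi;\pi'$, namely that $(x,z)\in S\circ T$ holds iff there exists $y$ with $(x,y)\in S$ and $(y,z)\in T$. Since the claim is about distributivity of composition over union, I read (1) as $R\circ(\bigcup_i Q_i)=\bigcup_i(R\circ Q_i)$.

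For part (1), I would fix an arbitrary pair $(x,z)$ and chase membership through the definitions. We have $(x,z)\in R\circ(\bigcup_i Q_i)$ iff there is some $y$ with $(x,y)\in R$ and $(y,z)\in\bigcup_i Q_i$, i.e.\ iff there is some $y$ and some index $i$ with $(x,y)\in R$ and $(y,z)\in Q_i$. The only substantive move is then to commute the two existential quantifiers, over $y$ and over $i$: this yields ``there is some $i$ such that there is some $y$ with $(x,y)\in R$ and $(y,z)\in Q_i$'', which unfolds to ``there is some $i$ with $(x,z)\in R\circ Q_i$'', i.e.\ $(x,z)\in\bigcup_i(R\circ Q_i)$. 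Reading the resulting chain of biconditionals in both directions yields the two inclusions simultaneously.

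For part (2), I would run the symmetric argument: $(x,z)\in(\bigcup_i Q_i)\circ R$ iff there is some $y$ with $(x,y)\in\bigcup_i Q_i$ and $(y,z)\in R$, iff there are $y$ and $i$ with $(x,y)\in Q_i$ and $(y,z)\in R$, and after the same quantifier swap this is iff $(x,z)\in\bigcup_i(Q_i\circ R)$. Because the index set is left arbitrary throughout, both parts hold for unions of any cardinality, which is exactly the generality the later appeal in the Normal Form Lemma requires.

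The argument has no genuine obstacle: its entire content is the validity of commuting two existential quantifiers, which holds unconditionally. The only points demanding a little care are bookkeeping ones, namely fixing the composition convention so the quantifier order lines up, and keeping every step a genuine biconditional so that no inclusion is silently assumed. For this reason the result can equally be cited from a standard reference such as \cite{kozen}, as the authors do.
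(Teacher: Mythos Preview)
Your proof is correct. The paper does not actually prove this statement; it simply defers to a standard reference (\cite{kozen}), so your elementary $\in$-chasing argument is precisely the kind of proof one would find there. You also correctly caught and silently repaired the typo in part~(1) of the statement, where the right-hand side should read $\bigcup_i(R\circ Q_i)$ rather than $\bigcup_i(R\cup Q_i)$.
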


The following standard \textsf{PDL} facts will also be used.

\begin{restatable}{prop}{} \label{x sees y} Let $M$ be a \textsf{REL} model. Then:
\begin{enumerate}
\item $(x,y)\in \llb ?\varphi;A;?\psi \rrb_M$ iff $x\in \llb \varphi\rrb_M$ and $x\in \llb \psi\rrb_M$.
\item $(x,y)\in \llb ?\varphi;?\psi \rrb_M$ iff $(x,y)\in \llb ?(\varphi\wedge\psi)\rrb_M$.
\item $(x,y)\in \llb ?\varphi_1;A;?(\psi_1\wedge\varphi_2);A;?\psi_2 \rrb_M$ iff $(x,y)\in \llb ?(\varphi_1\wedge \exists (\psi_1\wedge\varphi_2);A;?\psi_2 \rrb_M$
\end{enumerate}
\end{restatable}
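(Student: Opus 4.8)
The plan is to prove each of the three items of Proposition \ref{x sees y} separately, since they are independent statements about specific $\mathsf{PDL}$ program compositions, and each follows by unfolding the semantic clauses in the extended truth map together with the definition of relational composition. Throughout I would rely on the two basic facts that $\llb A \rrb_M = W^2$ (so every pair is an $A$-edge) and that $\llb ?\varphi \rrb_M = \{(w,w) \mid w\in\llb\varphi\rrb_M\}$ is a subset of the diagonal (so a $?\varphi$-edge both starts and ends at the same world, which must satisfy $\varphi$). The key observation I would use repeatedly is that composing with a test on the left pins down the source world's truth value, while composing with the universal program $A$ discards all constraints on the intermediate and target worlds except what surrounding tests impose.

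\textbf{Item (1).} Here I would unfold $\llb ?\varphi;A;?\psi\rrb_M = \llb ?\varphi\rrb_M \circ \llb A\rrb_M \circ \llb ?\psi\rrb_M$. A pair $(x,y)$ lies in this composition iff there are $u,v$ with $(x,u)\in\llb ?\varphi\rrb_M$, $(u,v)\in\llb A\rrb_M = W^2$, and $(v,y)\in\llb ?\psi\rrb_M$. Since the first relation is a subset of the diagonal on $\llb\varphi\rrb_M$, this forces $u=x$ and $x\in\llb\varphi\rrb_M$; since the last is a subset of the diagonal on $\llb\psi\rrb_M$, it forces $v=y$ and $y\in\llb\psi\rrb_M$; and the middle $A$-step imposes nothing, as any $(x,y)\in W^2$ is an $A$-edge. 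Conversely, if $x\in\llb\varphi\rrb_M$ and $y\in\llb\psi\rrb_M$ then the witnesses $u=x$, $v=y$ work. Hence $(x,y)\in\llb ?\varphi;A;?\psi\rrb_M$ iff $x\in\llb\varphi\rrb_M$ and $y\in\llb\psi\rrb_M$. (I note that the statement as printed says ``$x\in\llb\psi\rrb_M$'' on the right-hand side, which appears to be a typographical slip for $y\in\llb\psi\rrb_M$; the intended content is the one just derived, and I would prove that.)

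\textbf{Item (2).} This is the standard fact that sequential composition of tests is the test of the conjunction. I would unfold $\llb ?\varphi;?\psi\rrb_M = \llb ?\varphi\rrb_M \circ \llb ?\psi\rrb_M$: a pair $(x,y)$ is in it iff there is $u$ with $(x,u)\in\llb ?\varphi\rrb_M$ and $(u,y)\in\llb ?\psi\rrb_M$. As both are subsets of the diagonal, this forces $x=u=y$ with $x\in\llb\varphi\rrb_M$ and $x\in\llb\psi\rrb_M$, i.e. $x=y$ and $x\in\llb\varphi\wedge\psi\rrb_M$, which is exactly $(x,y)\in\llb ?(\varphi\wedge\psi)\rrb_M$. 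The converse is immediate by taking $u=x$.

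\textbf{Item (3).} This is the main step and the only one with real content, since it is precisely the algebraic identity that the Normal Form Lemma exploits when collapsing iterated $A$-compositions. I would again unfold both sides as compositions and show they describe the same set of pairs. For the left side, a pair $(x,y)\in\llb ?\varphi_1;A;?(\psi_1\wedge\varphi_2);A;?\psi_2\rrb_M$ exists iff there are intermediate worlds through which the $A$-steps pass and a world $z$ satisfying $\psi_1\wedge\varphi_2$ sitting between the two $A$-edges; after resolving the diagonal tests this amounts to: $x\in\llb\varphi_1\rrb_M$, there exists some $z$ with $z\in\llb\psi_1\wedge\varphi_2\rrb_M$, and $y\in\llb\psi_2\rrb_M$. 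The existence of such a $z$ is exactly the truth of $\exists(\psi_1\wedge\varphi_2)$, which by the satisfaction clause for $\exists$ is a global (state-independent) condition. I would then match this against the right side $\llb ?(\varphi_1\wedge\exists(\psi_1\wedge\varphi_2));A;?\psi_2\rrb_M$, which by item (1) holds iff $x\in\llb\varphi_1\wedge\exists(\psi_1\wedge\varphi_2)\rrb_M$ and $y\in\llb\psi_2\rrb_M$; since $\exists(\psi_1\wedge\varphi_2)$ holds at $x$ precisely when some world globally satisfies $\psi_1\wedge\varphi_2$, the two descriptions coincide. (I would flag the missing closing bracket in the printed right-hand side program, reading it as $\llb ?(\varphi_1\wedge\exists(\psi_1\wedge\varphi_2));A;?\psi_2\rrb_M$.) The one point requiring care, and the place where I expect the main obstacle, is justifying that the middle test $?(\psi_1\wedge\varphi_2)$ contributes only the global existential and does not constrain $x$ or $y$: this is exactly where the universal program $A$ does its work, since the $A$-edges on either side let the witnessing world $z$ range freely over all of $W$ independently of the endpoints. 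I would make this explicit by carefully tracking the diagonal collapse of each test and verifying that no unintended equality between $z$ and the endpoints is forced.
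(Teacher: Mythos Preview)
Your proposal is correct. The paper itself does not give a proof of this proposition: it simply labels the three facts as ``standard \textsf{PDL} facts'' and moves on, so there is nothing to compare against beyond the implicit expectation that one unfolds the semantic clauses for $?\varphi$, $A$, and $;$ exactly as you do. Your observation that item~(1) contains a typographical slip (the right-hand side should read $y\in\llb\psi\rrb_M$ rather than $x\in\llb\psi\rrb_M$) is also correct and is confirmed by how the paper later \emph{uses} item~(1), e.g.\ in the proof of Lemma~\ref{unf} and in the soundness arguments for the reduction axioms.
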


In the step of the normal form lemma concerning $*$-programs, we will make use of the following definitions and results.

\begin{definition}[Walks and paths] Let $R\subseteq W\times W$. A \emph{walk} along $R$ is a sequence of (not necessarily distinct) vertices $w_1, w_2, \dots , w_k$, where $w_i\in W$ for $i=1,2,\dots,k$, 
such that $(v_i,v_{i+1}) \in R$ for $i = 1, 2, \dots, k - 1$. A \emph{path} is a walk in which all vertices are distinct (except possibly the first and last). A $wv$-walk is a walk with first vertex $w$ and last vertex $v$. A $wv$-path is defined similarly. The \emph{length} of a walk (path) is its number of edges. 
\end{definition}

\begin{restatable}{prop}{} \label{ray} Let $R\subseteq W\times W$. Every $wv$-walk along $R$ contains a $wv$-path along $R$. 
\end{restatable}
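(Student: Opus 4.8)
The plan is to prove this by induction on the length $\ell$ of the walk, using the standard "shortcutting" argument; here "contains a $wv$-path" is understood in the obvious way, namely that some $wv$-path along $R$ has its vertex sequence a subsequence of the walk's vertex sequence.

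First I would fix a $wv$-walk $w_1, w_2, \dots, w_k$ along $R$, so that $w_1 = w$, $w_k = v$, $\ell = k - 1$, and $(w_i, w_{i+1}) \in R$ for $i = 1, \dots, k-1$. If all the vertices $w_1, \dots, w_k$ are distinct, except possibly for $w_1 = w_k$, then by definition this sequence is already a $wv$-path along $R$ and there is nothing to do; this handles the base cases $\ell = 0$ and $\ell = 1$ in particular. Otherwise there exist indices $i < j$ with $(i,j) \neq (1,k)$ and $w_i = w_j$. Form the shorter sequence $s \coloneqq w_1, \dots, w_i, w_{j+1}, \dots, w_k$ by deleting $w_{i+1}, \dots, w_j$. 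Since $w_i = w_j$ we have $(w_i, w_{j+1}) = (w_j, w_{j+1}) \in R$, and every other consecutive pair of $s$ already occurs in the original walk, so $s$ is again a $wv$-walk along $R$; its length is $\ell - (j-i) < \ell$ because $j - i \geq 1$. By the induction hypothesis $s$ contains a $wv$-path along $R$, and since the vertices of $s$ form a subsequence of those of the original walk, that path is contained in the original walk as well. This closes the induction.

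The hard part, such as it is, is purely bookkeeping around the clause "except possibly the first and last" in the definition of a path: one must verify that a walk whose only coincidence is $w_1 = w_k$ genuinely qualifies as a path (so that it is not itself a candidate for shortcutting), and that in the inductive step excluding the pair $(1,k)$ still leaves a bona fide repeated pair $(i,j)$ to shorten by — which is exactly what "not already a path" gives us. Termination is immediate from the strict decrease of $\ell \in \mathbb{N}$, so no separate well-foundedness argument is needed.
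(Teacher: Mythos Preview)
Your inductive ``shortcutting'' argument is correct and is exactly the standard proof of this fact. The paper itself does not give an argument at all: its proof reads, in full, ``This is a standard result. For a proof, see, e.g., \cite{SahaRay2013}.'' So there is nothing to compare against beyond noting that what you wrote is precisely the textbook argument such a citation points to. Your care about the ``except possibly the first and last'' clause --- excluding the pair $(i,j)=(1,k)$ in the inductive step --- is the only subtlety, and you handle it correctly.
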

\begin{proof}
This is a standard result. For a proof, see, e.g., \cite[19]{SahaRay2013}. 
\end{proof}

\begin{restatable}{prop}{}  \label{pathshort} Let $M$ be a \textsf{REL} model. Every $wv$-path along $\llb \bigcup^n_{i=1}(?\varphi_i ; A ; ?\psi_i)\rrb_M$ of length $\ell>n$ contains a $wv$-path along $\llb \bigcup^n_{i=1}(?\varphi_i ; A ; ?\psi_i) \rrb_M$ of length at most $n$. \end{restatable}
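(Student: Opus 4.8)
The plan is to show that any $wv$-path along $\llb \bigcup^n_{i=1}(?\varphi_i;A;?\psi_i)\rrb_M$ of length greater than $n$ can be shortened to a $wv$-path of length at most $n$, by a pigeonhole argument on the "labels" of the edges. First I would observe, using Proposition \ref{x sees y}(1), that an edge $(x,y)$ lies in $\llb ?\varphi_i;A;?\psi_i\rrb_M$ iff $x\in\llb\varphi_i\rrb_M$ and $y\in\llb\psi_i\rrb_M$; in particular, whether an edge labelled $i$ can be used depends only on the endpoints, and the intermediate target is arbitrary (the $A$ in the middle). So along a path $w=u_0,u_1,\dots,u_\ell=v$ with $\ell>n$, each consecutive pair $(u_{k-1},u_k)$ is witnessed by some index $i_k\in\{1,\dots,n\}$ with $u_{k-1}\in\llb\varphi_{i_k}\rrb_M$ and $u_k\in\llb\psi_{i_k}\rrb_M$.

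The key step is the following. Since the path has $\ell>n$ edges but only $n$ available labels, by the pigeonhole principle two edges must carry the same label: there are indices $1\le a<b\le\ell$ with $i_a=i_b=:i$. Then $u_{a-1}\in\llb\varphi_i\rrb_M$ (from edge $a$) and $u_b\in\llb\psi_i\rrb_M$ (from edge $b$), so by Proposition \ref{x sees y}(1) again the single edge $(u_{a-1},u_b)$ lies in $\llb ?\varphi_i;A;?\psi_i\rrb_M\subseteq\llb\bigcup^n_{i=1}(?\varphi_i;A;?\psi_i)\rrb_M$. Replacing the sub-walk $u_{a-1},u_a,\dots,u_b$ by the direct edge $u_{a-1}\to u_b$ yields a strictly shorter $wv$-walk (at least one edge is removed since $b>a$). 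Finally I would iterate: as long as the current walk has length $>n$, repeat the shortening; since the length strictly decreases, the process terminates with a $wv$-walk of length at most $n$, and by Proposition \ref{ray} this walk contains a $wv$-path, which still has length at most $n$ (a path extracted from a walk is no longer than the walk). This path is the desired one.

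I do not expect any serious obstacle here; the argument is a routine pigeonhole-plus-shortcut. The only points requiring a little care are: (i) making sure the shortcut edge is genuinely available, which is exactly where Proposition \ref{x sees y}(1) is used — the middle $A$ step is what allows an arbitrary target, so no condition on $u_b$ other than $u_b\in\llb\psi_i\rrb_M$ is needed; (ii) keeping the endpoints $w,v$ fixed throughout the shortening (they are, since we only ever contract an internal sub-walk, and the case $a=1$ or $b=\ell$ is harmless); and (iii) passing from walks back to paths at the end via Proposition \ref{ray}, noting that this pass cannot increase the length. An alternative, cleaner phrasing avoids the walk/path distinction during the induction by working directly with walks and only invoking Proposition \ref{ray} once at the very end.
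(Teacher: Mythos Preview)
Your proposal is correct and is essentially the argument the paper has in mind: the paper merely writes ``Straightforward proof by induction on the length $\ell$ of a $wv$-path $\mathcal{P}=wu_1u_2\dots u_\ell v$,'' and your pigeonhole-plus-shortcut argument is exactly the natural way to carry out that inductive step. One minor simplification: since shortening a path by deleting an internal segment leaves a subsequence of the original (distinct) vertices, the result is automatically still a path, so the final appeal to Proposition~\ref{ray} is not actually needed.
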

\begin{proof} Straightforward proof by induction on the length $\ell$ of a $wv$-path $\mathcal{P}=wu_1u_2\dots u_\ell v$.
\end{proof}

\begin{restatable}{prop}{}  \label{star elim} Let $M$ be an \textsf{REL} model. Then $\llb \pi \cup ?\varphi \rrb^*_M= \llb\pi \rrb^*_M$.\end{restatable}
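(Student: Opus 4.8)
The plan is to reduce the statement to an elementary fact about reflexive--transitive closures, after the single observation that a test program can only denote a subrelation of the diagonal. Concretely, I would write $R \coloneqq \llb \pi \rrb_M$ and let $\Delta_W \coloneqq \{(w,w) \mid w \in W\}$. By the test clause of the extended truth map, $\llb ?\varphi \rrb_M = \{(w,w) \mid w \in \llb \varphi \rrb_M\} \subseteq \Delta_W$, and by the union clause, $\llb \pi \cup ?\varphi \rrb_M = R \cup \llb ?\varphi \rrb_M$. So the claim $\llb \pi \cup ?\varphi \rrb^*_M = \llb \pi \rrb^*_M$ amounts to showing $(R \cup S)^* = R^*$ for an arbitrary $S \subseteq \Delta_W$.

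For the inclusion $\llb \pi \rrb^*_M \subseteq \llb \pi \cup ?\varphi \rrb^*_M$ I would invoke monotonicity of reflexive--transitive closure: from $R \subseteq R \cup \llb ?\varphi \rrb_M$ we get $R^* \subseteq (R \cup \llb ?\varphi \rrb_M)^*$. For the converse inclusion, I would use that $R^*$ is a preorder, hence reflexive, so $\Delta_W \subseteq R^*$ and in particular $\llb ?\varphi \rrb_M \subseteq R^*$; combined with $R \subseteq R^*$ this gives $R \cup \llb ?\varphi \rrb_M \subseteq R^*$, and applying the closure operation to both sides together with its idempotence $(R^*)^* = R^*$ yields $(R \cup \llb ?\varphi \rrb_M)^* \subseteq R^*$. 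The two inclusions give the desired equality.

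I do not expect any real obstacle here: all the weight of the argument sits in the remark that tests denote diagonal subrelations, and the rest is the standard fact that enlarging a relation by part of the identity does not affect its reflexive--transitive closure. The only points requiring a little care are using monotonicity of the $*$-operation in the correct direction and its idempotence; both are routine $\mathsf{PDL}$ facts (see, e.g., \cite{kozen}), and if one prefers they can be read off directly from the inductive description $R^* = \bigcup_{n \in \mathbb{N}} R^n$ with $R^0 = \Delta_W$ and $R^{n+1} = R^n \circ R$.
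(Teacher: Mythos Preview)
Your argument is correct and is exactly the kind of reasoning the paper has in mind: the proof in the paper is just the word ``Straightforward.'' Your write-up makes explicit the only real content, namely that $\llb ?\varphi \rrb_M \subseteq \Delta_W \subseteq R^*$, after which monotonicity and idempotence of $(\cdot)^*$ finish the job.
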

\begin{proof}
Straightforward.
\end{proof}

After fixing the auxiliary results, we get to the proof of Lemma \ref{unf}.

\unf*

\begin{proof}
The proof is by induction on the structure of $\pi$. Let $M$ be any \textsf{REL} model. 
\begin{itemize}
\item $\pi = A$. Let $\pi'$ be the union form $\pi' \coloneqq (?\top;A;?\top) \cup (?\bot)$. It is straightforward to check that $\pi'$ and $\pi$ are equivalent.
\item$\pi = ?\varphi$. Let $\pi'$ be the union form $\pi' \coloneqq  (?\bot;A;?\bot) \cup (?\varphi)$. Again, it is straightforward to check that $\pi'$ and $\pi$ are equivalent.
\item  $\pi = \pi_1\cup \pi_2$. By induction hypothesis, we can find normal forms for $\pi_1$ and $\pi_2$. Let the forms be $\pi'_1 \coloneqq \bigcup_{i\in I}(?\varphi_i;A;?\psi_i)\cup ?\theta$ and $\pi'_2 \coloneqq  \bigcup_{j\in J}(?\varphi_j;A;?\psi_j)\cup ?\theta'$ respectively. Let $\pi'$ be the union form $\pi' \coloneqq  (\bigcup_{k\in I\cup J}?\varphi_k;A;?\psi_k)\cup (?\theta\lor\theta')$. Again, it is straightforward to check that $\pi'$ and $\pi$ are equivalent.
\item $\pi = \pi_1; \pi_2$. By induction hypothesis, we can find normal forms for $\pi_1$ and $\pi_2$. Let the forms be $\pi'_1 \coloneqq \bigcup_{i\in I}?(\varphi_i;A;?\psi_i)\cup ?\theta$ and $\pi'_2 \coloneqq \bigcup_{j\in J}(?\varphi_j;A;?\psi_j)\cup ?\theta'$ respectively. It is not difficult to transform $\pi$ into a normal form shape, essentially using (several times) the fact that composition distributes over union (proposition \ref{rel dist}) to pull the big unions to the left side of the program, and re-indexing the formulas appropriately. 
\item  $\pi = \pi_1^*$. By induction hypothesis, we can find a normal form for $\pi_1$. Let this normal form be $\pi'_1 \coloneqq \bigcup_{i\in I}(?\varphi_i;A;?\psi_i)\cup ?\theta$. We recall here that $S_0(I)$ denotes the set of all finite sequences of elements from $I$, and $\mathsf{len}(s)$ denotes the length of sequence $s=(s_1,s_2,\dots, s_{\mathsf{len}(s)})$. Let $\pi'$ be the union form: 
\[\pi' \coloneqq \bigcup_{s\in S_0(I)}\Big(?(\varphi_{s_1} \wedge \bigwedge^{\mathsf{len}(s)}_{k=2}(\exists(\psi_{s_{k-1}} \wedge \varphi_{s_k}))); A ; ?\psi_{s_{\mathsf{len}(s)}})\Big) \cup (?\top)\]
We will show that $\pi'$ is a normal form for $\pi$. Observe that:
\small
\begin{flalign*}
& (x,y) \in \llb\pi_1^*\rrb_M&&\\
 \text{ iff } & (x,y) \in \llb\pi_1\rrb_M^*\\
\text{ iff } & (x,y) \in \llb\bigcup_{i\in I}(?\varphi_i;A;?\psi_i)\cup ?\theta\rrb_M^*\\
\text{ iff } & (x,y) \in \llb\bigcup_{i\in I}(?\varphi_i;A;?\psi_i)\rrb_M^* \ \ \text{(by proposition. } \ref{star elim})\\
 \text{ iff } & \text{ there is an } xy\text{-walk along }  \llb\bigcup_{i\in I}(?\varphi_i;A;?\psi_i)\rrb_M \text{ of length } \ell, \text{or } x=y\\
\text{ iff } & \text{ there is an } xy\text{-path along }  \llb\bigcup_{i\in I}(?\varphi_i;A;?\psi_i)\rrb_M \text{ of length } \ell', \text{or } x=y \ \ \text{(by proposition. } \ref{ray})\\
\text{ iff } & \text{ there is an } xy\text{-path along }  \llb\bigcup_{i\in I}(?\varphi_i;A;?\psi_i)\rrb_M \text{ of length at most }|I|,  \text{or } x=y \ \ \text{(by proposition. } \ref{pathshort})\\
 \text{ iff } & \text{ for some } s\in S_0(I),\text{there are } z_1,z_2,\dots,z_{\mathsf{len}(s)},z_{\mathsf{len}(s)+1} \text{ such that } z_1=x \text{ and } z_{\mathsf{len}(s)+1}=y \\
 &  \text{ and } \text{for each } k\in \{1,\dots,\mathsf{len}(s)\}, (z_k,z_{k+1})\in \llb?\varphi_{s_k} ; A ; ?\psi_{s_k}\rrb_M, \text{ or } x=y\\
  \text{ iff } & \text{ for some } s\in S_0(I),\text{there are } z_1,z_2,\dots,z_{\mathsf{len}(s)},z_{\mathsf{len}(s)+1} \text{ such that } z_1=x \text{ and } z_{\mathsf{len}(s)+1}=y\\
 & \text{and for each } k\in \{1,\dots,\mathsf{len}(s)\}, z_k\in \llb\varphi_{s_k}\rrb_M \text{ and } z_{k+1}\in \llb\psi_{s_k}\rrb_M \text{ and } y\in \llb \psi_{s_{\mathsf{len}(s)}}\rrb_M \\ 
 & \text{or } x=y \ \text{(by proposition. } \ref{x sees y})\\
\text{ iff } &\text{ for some } s\in S_0(I), x\in \llb\varphi_{s_1} \wedge \bigwedge^{\mathsf{len}(s)}_{k=2}(\exists(\psi_{s_{k-1}} \wedge \varphi_{s_k}))\rrb_M \text{ and } y\in \llb \psi_{s_{\mathsf{len}(s)}}\rrb_M, \text{ or } x=y  \\
\text{ iff } &\text{ for some } s\in S_0(I), (x,y)\in \llb?(\varphi_{s_1} \wedge \bigwedge^{\mathsf{len}(s)}_{k=2}(\exists(\psi_{s_{k-1}} \wedge \varphi_{s_k}))); A ; ?\psi_{s_{\mathsf{len}(s)}})\rrb_M \\ 
& \text{ or } x=y \ \ \ \text{(by proposition. } \ref{x sees y})\\
\text{ iff } & (x,y)\in \bigcup_{s\in S_0(I)} \llb?(\varphi_{s_1} \wedge \bigwedge^{\mathsf{len}(s)}_{k=2}(\exists(\psi_{s_{k-1}} \wedge \varphi_{s_k}))); A ; ?\psi_{s_{\mathsf{len}(s)}})\rrb_M,\text{ or } x=y\\
\text{ iff } & (x,y)\in  \llb\bigcup_{s\in S_0(I)}?\Big((\varphi_{s_1} \wedge \bigwedge^{\mathsf{len}(s)}_{k=2}(\exists(\psi_{s_{k-1}} \wedge \varphi_{s_k}))); A ; ?\psi_{s_{\mathsf{len}(s)}}\Big)\rrb_M,\text{ or } x=y\\
\text{ iff } & (x,y)\in  \llb\bigcup_{s\in S_0(I)}?\Big((\varphi_{s_1} \wedge \bigwedge^{\mathsf{len}(s)}_{k=2}(\exists(\psi_{s_{k-1}} \wedge \varphi_{s_k}))); A ; ?\psi_{s_{\mathsf{len}(s)}}\Big)\rrb_M,\text{ or } (x,y)\in\llb?\top\rrb_M\\
\text{ iff } & (x,y)\in \llb\bigcup_{s\in S_0(I)}\Big(?(\varphi_{s_1} \wedge \bigwedge^{\mathsf{len}(s)}_{k=2}(\exists(\psi_{s_{k-1}} \wedge \varphi_{s_k}))); A ; ?\psi_{s_{\mathsf{len}(s)}})\Big) \cup (?\top)\rrb_M\\
\text{ iff } & (x,y) \in \llb\pi'\rrb_M\\
\end{flalign*}
\normalsize
\end{itemize}
\end{proof}

\subsection*{PROOF OF THEOREM 1}

We recall the theorem:

\compcaplex*

\noindent We prove the soundness and completeness of $\mathsf{L}_\cap$ and $\mathsf{L}_\mathsf{lex}$ separately.

\subsubsection*{Soundness and completeness of $\mathsf{L}_\cap$.} The soundness proof is straightforward; it suffices to check that each axiom is valid and that the inference rules preserve truth. We focus on the completeness proof. The completeness of $\mathsf{L}$ w.r.t $\cap$-models follows from Proposition 1 and the fact that $\mathsf{L}$ is complete, and has the finite model property, w.r.t \textsf{NEL} models (a result from \cite{Baltag2016}). We recall Proposition 1.

\NELtoREL*

Next, we recall a result proven in \cite{Baltag2016}:\\ 

\noindent\textbf{Theorem 6, \cite{Baltag2016}}. \label{comp nel} $\mathsf{L}$ is sound, strongly complete and has the finite model property with respect to the class of \textsf{NEL} models.\\

The theorem above, together with the fact that feasible \textsf{NEL} models are modally equivalent to their images under $Rel$, gives us the completeness of $\mathsf{L}$ w.r.t $\cap$-models. 

\begin{claim}$\mathsf{L}$ is complete w.r.t $\cap$-models.
\begin{proof}
As indicated, e.g., in \cite[194-195]{blackburn2002modal}, a logic $\Lambda$ is strongly complete with respect to a class of models iff every $\Lambda$-consistent set of formulas is satisfiable on some model in this class. Hence, it suffices to show that every $\mathsf{L}$-consistent set of formulas is satisfiable on some $\cap$-model. Let $\Gamma$ be an $\mathsf{L}$-consistent set of formulas. As $\mathsf{L}$ is complete and has the finite model property with respect to \textsf{NEL} models, there is a finite (and hence feasible) \textsf{NEL} model $M$ and a state $w$ in $M$ such that $M,w\models \varphi$ for all $\varphi\in \Gamma$. By proposition 1, we have $Rel(M),w\models \varphi$ for all $\varphi\in \Gamma$. Thus, $\Gamma$ is satisfiable on a $\cap$-model. 
\end{proof}
\end{claim}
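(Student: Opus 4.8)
The plan is to prove completeness in its equivalent satisfiability form and then transport a witnessing model from the neighborhood setting into the relational one along the map $Rel$. Following \cite{blackburn2002modal}, $\mathsf{L}$ is (strongly) complete with respect to the class of $\cap$-models iff every $\mathsf{L}$-consistent set of formulas is satisfiable on some $\cap$-model. So it suffices to fix an arbitrary $\mathsf{L}$-consistent $\Gamma\subseteq\mathscr{L}$ and exhibit a $\cap$-model with a state verifying every formula in $\Gamma$.

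First I would appeal to the established metatheory of $\mathsf{L}$ over \textsf{NEL} models, namely Theorem 6 of \cite{Baltag2016}, stating that $\mathsf{L}$ is strongly complete and has the finite model property with respect to \textsf{NEL} models. From the consistency of $\Gamma$ this yields a \textsf{NEL} model $M=\langle W, E_0, V\rangle$ and a state $w$ with $M,w\models\varphi$ for every $\varphi\in\Gamma$; the point of additionally invoking the finite model property is to take $M$ \emph{finite}, and hence \emph{feasible} (its evidence family $E_0$ is finite). Then I would transport this witness along $Rel$: by definition $Rel(M)=\langle W,\langle Rel(E_0),\preceq\rangle, V, Ag_\cap\rangle$ is a $\cap$-model — note that $W\in E_0$ gives $R_W=W^2\in Rel(E_0)$ and $\emptyset\notin E_0$ gives $\emptyset\notin Rel(E_0)$, so it is a legitimate \textsf{REL} model — and, since $M$ is feasible, Proposition 1 gives $M,w\models\varphi$ iff $Rel(M),w\models\varphi$ for all $\varphi\in\mathscr{L}$. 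Hence $Rel(M),w\models\varphi$ for every $\varphi\in\Gamma$, so $\Gamma$ is satisfiable on a $\cap$-model, which is exactly what is required.

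The step I expect to be delicate is securing a \emph{feasible} witness, which is precisely where Proposition 1 has its hypothesis and why the finite model property is indispensable. This is not a mere convenience: Proposition 2 shows that the transfer genuinely fails for non-feasible \textsf{NEL} models, because the $\Box$-clause in \textsf{NEL} quantifies only over \emph{finite} intersections of evidence while $Ag_\cap$ takes the full intersection, so a non-feasible witness $M$ could verify some $\Box\psi\in\Gamma$ while $Rel(M)$ falsifies it. For a single consistent formula, equivalently a finite consistent set via conjunction, the finite model property delivers a feasible witness outright, and this is the case the argument handles cleanly. The point requiring the most care is the general strong-completeness statement for an infinite $\Gamma$: here one cannot simply invoke strong completeness over \textsf{NEL} (which on its own might return a non-feasible model), and must instead lean on the finite model property together with the satisfiability characterization so that the feasible finite model furnished for $\Gamma$ is the one pushed through $Rel$.
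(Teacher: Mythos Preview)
Your proposal follows essentially the same route as the paper's proof: reduce strong completeness to satisfiability of consistent sets, invoke the known completeness and finite model property of $\mathsf{L}$ over \textsf{NEL} models to obtain a finite (hence feasible) witness, and then transport via $Rel$ using Proposition~1. The concern you flag in your last paragraph---that the finite model property on its own does not obviously furnish a \emph{single} finite model for an \emph{infinite} consistent $\Gamma$---is legitimate, and the paper's proof glosses over exactly the same point, simply asserting that such a finite model exists.
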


\subsubsection*{Soundness and completeness of $\mathsf{L}_\mathsf{lex}$.} The soundness proof is straightforward; we focus on completeness. The approach to the proof is similar to the one used by Fagin et. al. \cite{faginet} to prove completeness for the logic of distributed knowledge. Before going into the details of the proof, we give an outline of the main steps in it. 
\begin{enumerate}[leftmargin=*]
\item Step 1: \textit{Completeness of $\mathsf{L}_\mathsf{lex}$ with respect to pre-models.} First, we define a specific type of canonical \textsf{REL} model for each $\mathsf{L}_\mathsf{lex}$-consistent theory $T_0$, which we call a \textit{pre-model} for $T_0$. Then we prove completeness of $\mathsf{L}_\mathsf{lex}$ via canonical pre-models. 
\item Step 2: \textit{Unraveling}. In the second step, we unravel the canonical pre-model for $T_0$ (see Chapter 4.5 in \cite{blackburn2002modal} for details about this technique). This involves creating all possible histories in the pre-model rooted at $T_0$. The histories are the paths of the canonical pre-model that start at $T_0$. These histories are related in such a way that they form a tree.
\item Step 3: \textit{Completeness of $\mathsf{L}_\mathsf{lex}$ with respect to $\mathsf{lex}$ models}. In the third step, we take the tree we just
constructed, and from we define a $\mathsf{lex}$ model for $T_0$. Then we define a variant of a \textit{bounded morphism} between the canonical pre-model and the $\mathsf{lex}$ model generated from the tree, which makes completeness with respect to those models immediate.
\end{enumerate}

\noindent\textbf{Step 1: Completeness with respect to pre-models.} We build a canonical pre-model for each $\mathsf{L}_\mathsf{lex}$-consistent set of formulas $T_0$. We first fix a standard lemma.

\begin{lemma}[Lindenbaum's Lemma] Every consistent set of formulas of $\mathscr{L}$ can be extended to a maximally consistent one.
\begin{proof}
The proof is a special case of \cite[197]{blackburn2002modal}.
\end{proof}
\end{lemma}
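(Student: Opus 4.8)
The plan is to carry out the standard enumeration argument, adapted to the language $\mathscr{L}$. Since $\mathsf{P}$ is countably infinite and the formula-building operations ($\neg$, $\wedge$, $\Box_0$, $\Box$, $\forall$) are finite in number, the set of all $\mathscr{L}$-formulas is countable. First I would fix an enumeration $\varphi_1, \varphi_2, \varphi_3, \dots$ of all formulas of $\mathscr{L}$. Given the consistent set $\Gamma$, I would then build an increasing chain of consistent sets $\Gamma = \Gamma_0 \subseteq \Gamma_1 \subseteq \Gamma_2 \subseteq \cdots$ by the recursion: set $\Gamma_{n+1} := \Gamma_n \cup \{\varphi_{n+1}\}$ if this set is consistent, and $\Gamma_{n+1} := \Gamma_n$ otherwise. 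Finally, define $\Gamma^\ast := \bigcup_{n\in\mathbb{N}} \Gamma_n$, and the claim is that $\Gamma^\ast$ is a maximally consistent set extending $\Gamma$.

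The two things to verify are consistency and maximality of $\Gamma^\ast$. For \emph{consistency}, I would argue contrapositively: if $\Gamma^\ast$ were inconsistent, then $\Gamma^\ast \vdash \bot$, and since derivations are finitary (any proof uses only finitely many premises), there would be a finite subset $\Delta \subseteq \Gamma^\ast$ with $\Delta \vdash \bot$. Each element of $\Delta$ enters $\Gamma^\ast$ at some finite stage, so by taking the maximal such stage $n$ we get $\Delta \subseteq \Gamma_n$, whence $\Gamma_n$ is inconsistent, contradicting that every $\Gamma_n$ is consistent by construction. That each $\Gamma_n$ is consistent follows by an easy induction: $\Gamma_0 = \Gamma$ is consistent by hypothesis, and the recursion only ever adds a formula when doing so preserves consistency.

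For \emph{maximality}, I would show that for every formula $\varphi$, either $\varphi \in \Gamma^\ast$ or $\neg\varphi \in \Gamma^\ast$, which for a consistent set is equivalent to maximal consistency. Suppose $\varphi = \varphi_{n+1}$ in the enumeration and $\varphi_{n+1} \notin \Gamma^\ast$. Then in particular $\varphi_{n+1} \notin \Gamma_{n+1}$, so at stage $n+1$ the set $\Gamma_n \cup \{\varphi_{n+1}\}$ must have been inconsistent. A standard propositional fact then gives $\Gamma_n \vdash \neg\varphi_{n+1}$, and since adding a derivable consequence preserves consistency, $\Gamma_n \cup \{\neg\varphi_{n+1}\}$ is consistent; it follows that $\neg\varphi_{n+1}$ is in $\Gamma^\ast$ (it gets added at its own enumeration stage, as adding it to the relevant consistent finite-stage set cannot introduce inconsistency). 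Hence $\Gamma^\ast$ contains one of $\varphi, \neg\varphi$ for every $\varphi$.

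There is essentially no hard step here: the only point requiring care is the appeal to the \emph{finitary} character of the proof system $\mathsf{L}_\mathsf{lex}$ (i.e.\ that $\vdash$ is compact, so inconsistency of a union is witnessed at a finite stage), which is what licenses passing from the chain to its union while preserving consistency. Since this is a purely proof-theoretic property shared by all the Hilbert-style systems in this paper, the argument is uniform across $\mathscr{L}$ and its extensions, and the lemma follows as the cited special case of the standard construction in \cite[197]{blackburn2002modal}.
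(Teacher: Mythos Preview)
Your proposal is correct and is precisely the standard enumeration argument that the paper defers to by citing \cite[197]{blackburn2002modal}; there is no alternative approach to compare, as the paper gives no proof of its own beyond that reference.
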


We recall some properties of maximally consistent sets:

\begin{proposition}\label{mcs} Let $T_0$ be a maximally consistent set. The following hold:
\begin{enumerate}
\item For any formula $\varphi$: $\varphi\in T_0$ or $\neg\varphi\in T_0$.
\item $\varphi\in T_0$ iff $T_0 \vdash_{\mathsf{L}_{\mathsf{lex}}} \varphi$.
\item $T_0$ is closed under modus ponens: $\varphi,\varphi\to\psi\in T_0$ implies $\psi\in T_0$.
\item $\neg\varphi\in T_0$ iff $\varphi\not\in T_0$.
\item $\varphi\wedge\psi\in T_0$ iff $\varphi\in T_0$ and $\psi\in T_0$.
\end{enumerate}
\begin{proof}
The proofs are all standard. See, e.g., \cite[53]{chellas}. 
\end{proof}
\end{proposition}

We now define the notion of a canonical pre-model that we will use in the completeness proof of Step 1.

\begin{definition}[Canonical pre-model for $T_0$] \label{canmod} Let $T_0$ be a $\mathsf{L}_\mathsf{lex}$-\emph{consistent} set of formulas. A canonical pre-model for $T_0$ is a structure $M^c=\langle W^c, \lan \mathscr{R}^c, \preceq^c\ran, V^c, Ag^c\rangle$ with:
\small
\begin{itemize}
\item $W^c\coloneqq \{ T \mid T \text{ is a maximally consistent theory and } R^\forall T_0 T \}$.
\item $\mathscr{R}^c\coloneqq \{ R^{\Box_0\varphi} \mid \varphi\in \mathscr{L} \text{ and } (\exists\Box_0\varphi)\in T_0 \}\cup \{R'\}$.
\item $\preceq^c$ is the reflexive closure of $\{(R,R')\mid R\in \mathscr{R}\}$.
\item $V^c$ is a valuation function given by $V^c(p)\coloneqq \| p \|$.
\item $Ag^c$ is an aggregator for $W^c$ given by 
$$
Ag^c(\lan\mathscr{R},\preceq\ran)= 
\begin{cases}
R^{\Box} & \text{ if } \lan\mathscr{R},\preceq\ran=\lan \mathscr{R}^{c}, \preceq^{c}\ran\\
W^c\times W^c & \text{ otherwise }
\end{cases}
$$
\end{itemize}
Where:
\begin{itemize}
\item $R^{\forall}\subseteq W^c \times W^c$ is given by: $R^{\forall}TS$ iff for all $\varphi\in \mathscr{L}$: $(\forall\varphi)\in T \Rightarrow (\forall\varphi)\in S$.
\item for each $\varphi\in \mathscr{L}$, $R^{\Box_0\varphi}\subseteq W^c \times W^c$ is given by: $R^{\Box_0\varphi}TS$ iff $\Box_0\varphi\in T \Rightarrow \Box_0\varphi\in S$.
\item $R^\Box\subseteq W^c \times W^c$ is given by: $R^\Box TS$ iff for all $\varphi\in \mathscr{L}$: $\Box\varphi\in T \Rightarrow \varphi\in S$.
\item $F:W^c\times \mathscr{L} \to W^c$ is a function given by cases:
\begin{itemize}
\item[(a)] for every pair $(T,\varphi)$ such that $(\Box_0 \varphi)\not\in T$, choose some theory $S \in W^c$ such that $\varphi\not\in S$, and put $F(T,\varphi)\coloneqq S$;
\item[(b)] for every pair $(T,\varphi)$ not satisfying the condition of case (a), put $F(T,\varphi)\coloneqq T$.
\end{itemize}
\item $R'\coloneqq (R^\Box \cup \{(T,F(T,\varphi)) \mid T\in W^c \text{ and } \varphi\in \mathscr{L} \})^*$
\item for each $\varphi\in \mathscr{L}$, $\| \varphi \|_{c}\coloneqq \{T\in W^c \mid \varphi\in T \}$
\end{itemize}
\normalsize
\end{definition}

We first need to show that this canonical pre-model is indeed a \textsf{REL} model.

\begin{proposition} \label{is rel mod pre} $M^c$ is a \textsf{REL} model.
\begin{proof} In order to show that $M^c$ is an \textsf{REL} model, we have to show that: 
\begin{enumerate}
\item $\mathscr{R}^c$ is a family of evidence, i.e., every $R\in \mathscr{R}$ is a preorder.
\item $W^c\times W^c\in \mathscr{R}^c$.
\item $R^{\Box}$ is a preorder, and thus $Ag^c$ is well-defined.
\end{enumerate}
For item 1, let $\varphi\in\mathscr{L}$ be arbitrary. Let $R\in \mathscr{R}$ be arbitrary. Then either $R= R'$ or $R=R^{\Box_0\varphi}$ for some $\varphi$. As $R'$ is the reflexive transitive closure of the relation $R^\Box \cup \{(T,F(T,\varphi)) \mid T\in W^c \text{ and } \varphi\in \mathscr{L} \}$, it is a preorder, as required. Now consider $R=R^{\Box_0\varphi}$ for some $\varphi$. The reflexivity of $R$ is immediate from the definition of $R^{\Box_0\varphi}$. For the transitivity, let $T,S,U\in M^c$ and suppose that $R^{\Box_0\varphi}TS$ and $R^{\Box_0\varphi}SU$. Either $\Box_0\varphi\not\in T$ or $\Box_0\varphi\in T$. Note that, by definition of $R^{\Box_0\varphi}$, if $\Box_0\varphi\not\in T$, then $R^{\Box_0\varphi}[T]=W^c$ and thus $R^{\Box_0\varphi}TU$. Suppose now that $\Box_0\varphi\in T$. Then by definition of $R^{\Box_0\varphi}$, given $R^{\Box_0\varphi}TS$ we have $\Box_0\varphi\in S$, and thus as $R^{\Box_0\varphi}SU$ we get $\Box_0\varphi\in U$, which implies $R^{\Box_0\varphi}TU$. For item 2, observe that $\mathsf{N}_\Box$, i.e., $\Box_0\top$, is an axiom of our system. Thus it is a member of any maximal consistent set, which implies that $R^{\Box_0\top}=W^c\times W^c$. Now we consider item 3. For reflexivity, suppose that $(\Box\varphi)\in T$ for some  $T\in M^c$. As $\mathsf{T}_{\Box}$ is an axiom and $T$ is maximal consistent, $(\Box\varphi\to\varphi)\in T$. As $(\Box\varphi)\in T$ and $T$ is closed under modus ponens, we have $\varphi\in T$. Thus $R^\Box T T$. For transitivity, let $T,S,U\in M^c$ and suppose that $R^\Box TS$ and $R^\Box SU$. Suppose $(\Box\varphi)\in T$. As $\mathsf{4}_\Box$ is an axiom and $T$ is maximally consistent, $(\Box\varphi\to\Box\Box\varphi)\in T$. As $(\Box\varphi)\in T$ and $T$ is closed under modus ponens, we have $\Box\Box\varphi\in T$. As $R^\Box T S$, we then have $\Box\varphi\in S$. Hence, as $R^\Box SU$, we have $\varphi\in U$. As $\varphi$ was arbitrary, this holds for each $\varphi$ and hence we have $R^\Box TU$.
\end{proof}
\end{proposition}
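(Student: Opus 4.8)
The plan is simply to unfold the definition of a \textsf{REL} model and check, clause by clause, that $M^c=\langle W^c, \lan \mathscr{R}^c, \preceq^c\ran, V^c, Ag^c\rangle$ meets it. Since $V^c$ is patently a valuation and $W^c$ is non-empty (any maximally consistent extension $T_0^+$ of $T_0$, which exists by Lindenbaum's Lemma, satisfies $R^\forall T_0 T_0^+$ and hence lies in $W^c$), the real work amounts to: (a) every member of $\mathscr{R}^c$ is a preorder on $W^c$; (b) the trivial order $(W^c)^2$ belongs to $\mathscr{R}^c$; (c) $\preceq^c$ is a preorder on $\mathscr{R}^c$, i.e.\ a priority order; and (d) $Ag^c$ genuinely maps $\lan\mathscr{R}^c,\preceq^c\ran$ (and any other admissible input) to a preorder on $W^c$. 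All four are routine canonical-model bookkeeping.

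For (a), an element of $\mathscr{R}^c$ is either $R'$ or some $R^{\Box_0\varphi}$. The relation $R'$ is by definition a reflexive transitive closure, hence automatically reflexive and transitive. For $R^{\Box_0\varphi}$, reflexivity is immediate from the defining clause ``$\Box_0\varphi\in T\Rightarrow\Box_0\varphi\in S$'', and transitivity follows by chaining this implication through an intermediate theory. For (b), I would observe that axiom $\mathsf{N}$ for $\Box_0$ puts $\Box_0\top$ in every maximally consistent set, while the $\mathsf{S5}$ axioms for $\forall$ give $\Box_0\top\to\exists\Box_0\top$, so $\exists\Box_0\top\in T_0$ and therefore $R^{\Box_0\top}\in\mathscr{R}^c$; since $\Box_0\top$ is in every state, $R^{\Box_0\top}$ relates every pair, i.e.\ $R^{\Box_0\top}=(W^c)^2$. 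This is exactly the point where the construction uses that the logic at hand is $\mathsf{L}_{\mathsf{lex}}$, which has $\mathsf{N}$ for $\Box_0$, rather than $\mathsf{L}_\cap$.

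For (c), $\preceq^c$ is the reflexive closure of $\{(R,R')\mid R\in\mathscr{R}^c\}$, so reflexivity holds by construction; transitivity is a short case split, noting that the only non-diagonal pairs have second coordinate $R'$ and no non-diagonal pair has first coordinate $R'$, so composing pairs produces nothing new — $R'$ is a top element and every other evidence order is pairwise incomparable. For (d), on any input other than $\lan\mathscr{R}^c,\preceq^c\ran$ the aggregator returns $(W^c)^2$, which is a preorder; on the distinguished input it returns $R^\Box$, so it remains to verify that $R^\Box$ is a preorder: reflexivity uses axiom $\mathsf{T}$ for $\Box$ together with closure of maximally consistent sets under modus ponens, and transitivity uses axiom $\mathsf{4}$ for $\Box$ in the standard way. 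With (a)--(d) in hand, $M^c$ satisfies the definition of a \textsf{REL} model.

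I do not anticipate a genuine obstacle: the statement is essentially a sanity check on the canonical construction. If any step deserves care, it is clause (b) — ensuring the trivial evidence order $(W^c)^2$ is actually present in $\mathscr{R}^c$ — since this is where the $\mathsf{L}_{\mathsf{lex}}$-specific axioms ($\mathsf{N}$ and $\mathsf{T}$ for $\Box_0$) and the $\mathsf{S5}$ machinery for $\forall$ are invoked, and where the analogous argument would have to proceed differently for $\mathsf{L}_\cap$.
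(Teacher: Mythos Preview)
Your proposal is correct and follows essentially the same approach as the paper: verify that each $R^{\Box_0\varphi}$ and $R'$ are preorders, that $R^{\Box_0\top}=(W^c)^2$ via axiom $\mathsf{N}$ for $\Box_0$, and that $R^\Box$ is a preorder via axioms $\mathsf{T}$ and $\mathsf{4}$ for $\Box$. You are in fact slightly more thorough than the paper, which omits your checks (c) on $\preceq^c$ and the explicit verification that $\exists\Box_0\top\in T_0$ (needed to place $R^{\Box_0\top}$ in $\mathscr{R}^c$), as well as the non-emptiness of $W^c$.
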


Having established that $M^c$ is a \textsf{REL} model, we prove now the standard lemmas to show that the canonical pre-model works as expected.

\begin{lemma}[Existence Lemma for $\forall$] \label{ex forall lex} $\|\exists\varphi\|\neq\emptyset$ iff $\|\varphi\|\neq\emptyset$.
\begin{proof} ($\Rightarrow$). Assume $T\in \|\exists\varphi\|$, i.e., $(\exists\varphi)\in T\in W^c$. We first prove the following:

\begin{claim} The set $\Gamma \coloneqq \{\forall\psi \mid (\forall\psi)\in T\}\cup\{\varphi\}$ is consistent.
\begin{proof}
Suppose that $\Gamma$ is inconsistent, i.e., $\Gamma\vdash_{\mathsf{L}_{\mathsf{lex}}} \bot$. Then there are finitely many sentences $\forall\psi_1,\dots,\forall\psi_n\in T$ such that $\vdash_{\mathsf{L}_{\mathsf{lex}}}\forall\psi_1\wedge\dots\wedge\forall\psi_n\to \neg\varphi$.  By Necessitation for $\forall$ we have $\vdash_{\mathsf{L}_{\mathsf{lex}}} \forall (\forall\psi\wedge\dots\wedge\forall\psi_n\to \neg\varphi)$ and from this, by $\mathsf{K}_\forall$ and modus ponens we get  $\vdash_{\mathsf{L}_{\mathsf{lex}}} \forall(\forall\psi_1\wedge\dots\wedge\forall\psi_n)\to \forall \neg\varphi$.
The system $\mathsf{S5}$ has the theorem $\vdash_{\mathsf{L}_{\mathsf{lex}}} (\forall\forall\psi_1\wedge\dots\wedge\forall\forall\psi_n) \to \forall(\forall\psi_1\wedge\dots\wedge \forall\psi_n)$ (see, e.g., \cite[20]{chellas}). Hence by propositional logic we have $ \vdash_{\mathsf{L}_{\mathsf{lex}}} (\forall\forall\psi_1\wedge\dots\wedge\forall\forall\psi_n)\to \forall \neg\varphi$. Given $\mathsf{4}_\forall$ we have $\vdash_{\mathsf{L}_{\mathsf{lex}}}\forall\psi_1\to\forall\forall\psi_1, \dots, \vdash_{\mathsf{L}_{\mathsf{lex}}}\forall\psi_n\to\forall\forall\psi_n$, which by propositional logic implies $\vdash_{\mathsf{L}_{\mathsf{lex}}}(\forall\psi_1\wedge\dots\wedge\forall\psi_n)\to\forall\forall\psi_1\wedge\dots\wedge\forall\forall\psi_n$. Thus we have $\vdash_{\mathsf{L}_{\mathsf{lex}}}(\forall\psi_1\wedge\dots\wedge\forall\psi_n)\to\forall\neg\varphi$. Hence as $T$ is maximal consistent and closed under modus ponens, we get $(\forall\neg\varphi)\in T$. But we also have $(\exists\varphi)\in T$, i.e., $(\neg \forall \neg \varphi)\in T$, and since $T$ is maximal consistent, this means that $(\forall \neg \varphi)\not\in T$. Contradiction.
\end{proof}
\end{claim}
Given the Claim, by Lindenbaum's Lemma, there is some maximally consistent theory $S$ such that $\Gamma\subseteq S$. As $\varphi\in \Gamma$ we have $\varphi\in S$. Moreover, as $\{\forall\psi \mid (\forall\psi)\in T\}\subseteq \{\forall\chi \mid (\forall\chi)\in S\}$ we have $R^\forall TS$. As $T\in W^c$, we also have $R^\forall T_0 T$. That is,  $\{\forall\theta \mid (\forall\theta)\in T_0\}\subseteq \{\forall\psi \mid (\forall\psi)\in T\}$. Thus $\{\forall\theta \mid (\forall\theta)\in T_0\}\subseteq \{\forall\chi \mid (\forall\chi)\in S\}$ and thus $R^\forall T_0 S$. Hence $S\in W^c$, which together with $\varphi\in S$ gives us $S\in \|\varphi\|$.\\

($\Leftarrow$) Assume $T\in \|\varphi\|$, i.e., $\varphi\in T$. Given $\mathsf{T}_\forall$ we have $\vdash_{\mathsf{L}_{\mathsf{lex}}} \forall\neg\varphi \to \neg\varphi$, and by contraposition we get $\vdash_{\mathsf{L}_{\mathsf{lex}}} \neg\neg \varphi \to\neg\forall\neg\varphi$, i.e.,  $\vdash_{\mathsf{L}_{\mathsf{lex}}} \varphi \to\exists\varphi$. Hence $ (\varphi \to\exists\varphi)\in T$ and as $T$ is closed under modus ponens, given also $\varphi\in T$ we get  $(\exists\varphi)\in T$, i.e., $T\in \|\exists\varphi\|$.
\end{proof}
\end{lemma}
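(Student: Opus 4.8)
The plan is to prove each direction of the biconditional separately, treating the worlds of the canonical pre-model as maximal consistent sets and exploiting the $\mathsf{S5}$ behaviour of $\forall$.

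First I would dispatch the easy direction $(\Leftarrow)$. Suppose some $T\in W^c$ has $\varphi\in T$. Since the reflexivity axiom $\mathsf{T}_\forall$ (i.e.\ $\forall\psi\to\psi$) is in $\mathsf{L}_\mathsf{lex}$, contraposing its instance $\forall\neg\varphi\to\neg\varphi$ yields the theorem $\varphi\to\exists\varphi$, and closure of $T$ under modus ponens gives $\exists\varphi\in T$, so $T\in\|\exists\varphi\|$, which is therefore non-empty. Nothing about the structure of $W^c$ enters here beyond the fact that maximal consistent sets contain all theorems.

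For the harder direction $(\Rightarrow)$, assume $\exists\varphi\in T$ for some $T\in W^c$; I need a witness $S\in W^c$ with $\varphi\in S$. I would put $\Gamma:=\{\forall\psi\mid \forall\psi\in T\}\cup\{\varphi\}$ and show $\Gamma$ is consistent. If it were not, finitely many $\forall\psi_1,\dots,\forall\psi_n\in T$ would give $\vdash\bigwedge_i\forall\psi_i\to\neg\varphi$; applying necessitation for $\forall$, then $\mathsf{K}_\forall$, then $\mathsf{4}_\forall$ together with the distribution of $\forall$ over conjunctions, one derives $\vdash\bigwedge_i\forall\psi_i\to\forall\neg\varphi$, so $\forall\neg\varphi\in T$; but $\exists\varphi\in T$ means $\forall\neg\varphi\notin T$, a contradiction. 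With $\Gamma$ consistent, Lindenbaum's Lemma extends it to a maximal consistent $S$ containing $\varphi$. Since $S$ contains every $\forall$-formula of $T$ we have $R^\forall TS$; combined with $R^\forall T_0T$ (which holds because $T\in W^c$) and the transitivity of $R^\forall$, this yields $R^\forall T_0S$, hence $S\in W^c$, and so $S\in\|\varphi\|$.

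The hard part will be the consistency argument for $\Gamma$: it is the only step that genuinely invokes the $\mathsf{S5}$ axioms for $\forall$ (necessitation, $\mathsf{K}_\forall$, $\mathsf{4}_\forall$, and the derived distribution of $\forall$ over finite conjunctions), and the inferences must be chained in the right order. Everything else reduces to routine bookkeeping with maximal consistent sets and the transitivity of $R^\forall$; the aggregator $Ag^c$ and the evidence family $\mathscr{R}^c$ play no role in this lemma.
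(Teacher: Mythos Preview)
Your proposal is correct and follows essentially the same argument as the paper: same $\Gamma$, same consistency-by-contradiction using necessitation, $\mathsf{K}_\forall$, $\mathsf{4}_\forall$ and distribution of $\forall$ over conjunctions, same Lindenbaum extension, and the same verification that $S\in W^c$ via $R^\forall T_0T$ and $R^\forall TS$. The only cosmetic difference is that you phrase the last step as ``transitivity of $R^\forall$'' whereas the paper writes out the subset chain of $\forall$-formulas explicitly.
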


\begin{lemma}[Existence Lemma for $\Box$] \label{ex box lex} $T\in \|\Diamond\varphi\|$ iff there is an $S\in \|\varphi\|$ such that $R^{\Box}TS$.
\begin{proof}($\Rightarrow$). Assume $T\in \|\Diamond\varphi\|$, i.e., $\Diamond\varphi\in T\in W^c$. We first prove the following:

\begin{claim} The set $\Gamma \coloneqq \{\psi \mid (\Box\psi)\in T\}\cup\{\forall\theta \mid \forall\theta\in T_0\}\cup\{\varphi\}$ is consistent.
\begin{proof}
Suppose that $\Gamma$ is inconsistent. Then there is a finite $\Gamma_0\subseteq \Gamma$ such that $\Gamma_0\vdash_{\mathsf{L}_{\mathsf{lex}}} \bot$. By the theorems $\vdash_{\mathsf{L}_{\mathsf{lex}}} \Box(\psi_{i_1}\wedge\dots\wedge\psi_{i_n})\leftrightarrow (\Box\psi_{i_1}\wedge\dots\wedge\Box\psi_{i_n})$ and $\vdash_{\mathsf{L}_{\mathsf{lex}}}\forall(\theta_{j_1}\wedge\dots\wedge\theta_{j_n})\leftrightarrow (\forall\theta_{j_1}\wedge\dots\wedge\forall\theta_{j_n})$ we can assume that $\Gamma_0=\{\Box\psi,\forall\theta,\neg\varphi\}$ for some $\Box\psi,\forall\theta\in T$. That is, we have $\vdash_{\mathsf{L}_{\mathsf{lex}}} \Box\psi \wedge \forall \theta \to \neg\varphi$. By Necessitation for $\Box$ we obtain $\vdash_{\mathsf{L}_{\mathsf{lex}}} \Box(\Box\psi \wedge \forall \theta \to \neg\varphi)$. From this, by $\mathsf{K}_{\Box}$ we get $\vdash_{\mathsf{L}_{\mathsf{lex}}} \Box(\Box\psi \wedge \forall \theta)\to \Box \neg\varphi$. By the theorem   $\vdash_{\mathsf{L}_{\mathsf{lex}}} \Box(\Box\psi \wedge \forall \theta)\leftrightarrow (\Box\Box\psi \wedge\Box\forall \theta)$, from propositional logic we get  $\vdash_{\mathsf{L}_{\mathsf{lex}}} (\Box\Box\psi \wedge\Box\forall \theta)\to \Box \neg\varphi$. Given the axioms in our system we have $\vdash_{\mathsf{L}_{\mathsf{lex}}} \Box\psi\to \Box\Box\psi$ and $\vdash_{\mathsf{L}_{\mathsf{lex}}} \forall(\forall\theta)\to \Box(\forall\theta)$. Using these, by propositional logic we obtain $\vdash_{\mathsf{L}_{\mathsf{lex}}} (\Box\psi \wedge\forall\forall \theta)\to \Box \neg\varphi$. Given our axioms, we also have $\vdash_{\mathsf{L}_{\mathsf{lex}}} \forall\theta\to \forall\forall\theta$. Hence by propositional logic we get $\vdash_{\mathsf{L}_{\mathsf{lex}}} (\Box\psi \wedge\forall \theta)\to \Box \neg\varphi$. As $\Box\psi,\forall\theta\in T$ and $T$ is closed under modus ponens, we get $(\Box\neg\varphi)\in T$. But we also have $(\Diamond\varphi)\in T$, i.e., $(\neg \Box \neg \varphi)\in T$, and since $T$ is maximal consistent, this means that $(\Box \neg \varphi)\not\in T$. Contradiction.
\end{proof}
\end{claim}
Given the Claim, by Lindenbaum's Lemma, there is some maximally consistent theory $S$ such that $\Gamma\subseteq S$. As $\varphi\in \Gamma$ we have $\varphi\in S$. Moreover, as $\{\psi \mid (\Box\psi)\in T\}\subseteq S$, we have $R^{\Box}TS$. Additionally, we have
$\{\forall\theta \mid (\forall\theta)\in T_0\}\subseteq  S$ and thus $R^\forall T_0 S$. Hence $S\in W^c$, which together with $\varphi\in S$ gives us $S\in \|\varphi\|$.\\

($\Leftarrow$) Assume $T\in \|\varphi\|$, i.e., $\varphi\in T$. Given $\mathsf{T}_{\Box}$ we have $\vdash_{\mathsf{L}_{\mathsf{lex}}} \Box\neg\varphi \to \neg\varphi$, and by contraposition we get $\vdash_{\mathsf{L}_{\mathsf{lex}}} \neg\neg \varphi \to\neg\Box\neg\varphi$, i.e.,  $\vdash_{\mathsf{L}_{\mathsf{lex}}} \varphi \to\Diamond\varphi$. Hence, $(\varphi \to\Diamond\varphi)\in T$ and as $T$ is closed under modus ponens, given also $\varphi\in T$ we get $(\Diamond\varphi)\in T$, i.e., $T\in \|\Diamond\varphi\|$.
\end{proof}
\end{lemma}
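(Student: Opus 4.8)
The plan is to prove the two directions of the biconditional separately, following the standard ``diamond existence lemma'' recipe for canonical models and mirroring the structure already used for the Existence Lemma for $\forall$. The right-to-left direction is immediate, and the left-to-right direction does all the work.

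For ($\Leftarrow$), assume there is some $S\in\|\varphi\|$ with $R^\Box TS$, and suppose for contradiction that $\Diamond\varphi\notin T$. By maximal consistency of $T$, $\neg\Diamond\varphi=\Box\neg\varphi\in T$, so $R^\Box TS$ gives $\neg\varphi\in S$, contradicting $\varphi\in S$ and the consistency of $S$. Hence $\Diamond\varphi\in T$, i.e.\ $T\in\|\Diamond\varphi\|$.

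For ($\Rightarrow$), assume $\Diamond\varphi\in T$. I want a maximally consistent $S$ with $\varphi\in S$, $R^\Box TS$, and $R^\forall T_0S$, the last to ensure $S\in W^c$. The candidate is the Lindenbaum completion of
\[\Gamma\coloneqq\{\psi\mid\Box\psi\in T\}\cup\{\forall\theta\mid\forall\theta\in T_0\}\cup\{\varphi\}.\]
Granting the consistency of $\Gamma$, Lindenbaum's Lemma yields a maximally consistent $S\supseteq\Gamma$; then $\{\psi\mid\Box\psi\in T\}\subseteq S$ gives $R^\Box TS$, $\{\forall\theta\mid\forall\theta\in T_0\}\subseteq S$ gives $R^\forall T_0S$ and hence $S\in W^c$, and $\varphi\in S$ gives $S\in\|\varphi\|$, which is exactly what is needed. (The $\forall$-component of $\Gamma$ is what keeps $S$ inside the generated submodel $W^c$; alternatively one could note that $R^\Box\subseteq R^\forall$ in $M^c$ via the bridge axiom $\forall\chi\to\Box\chi$ and that $R^\forall$ is transitive, but adding the component directly is cleaner.)

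The crux, and what I expect to be the main obstacle, is the consistency of $\Gamma$. Suppose $\Gamma\vdash_{\mathsf{L}_{\mathsf{lex}}}\bot$. Since $\Box$ and $\forall$ both commute with finite conjunction, a witnessing finite inconsistency can be repackaged as $\{\Box\psi,\forall\theta,\neg\varphi\}$ with $\Box\psi,\forall\theta\in T$, i.e.\ $\vdash_{\mathsf{L}_{\mathsf{lex}}}\Box\psi\wedge\forall\theta\to\neg\varphi$. Applying $\Box$-necessitation and $\mathsf{K}_\Box$ gives $\vdash_{\mathsf{L}_{\mathsf{lex}}}\Box\Box\psi\wedge\Box\forall\theta\to\Box\neg\varphi$; the first conjunct is supplied by $\mathsf{4}_\Box$ ($\Box\psi\to\Box\Box\psi$) and the second by the bridge axiom ``Universality for $\Box$'' together with $\mathsf{4}_\forall$ ($\forall\theta\to\forall\forall\theta\to\Box\forall\theta$), so $\vdash_{\mathsf{L}_{\mathsf{lex}}}\Box\psi\wedge\forall\theta\to\Box\neg\varphi$. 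Since $\Box\psi,\forall\theta\in T$ and $T$ is closed under modus ponens, $\Box\neg\varphi\in T$, contradicting $\Diamond\varphi\in T$ and the consistency of $T$. The delicate point is precisely this shuttling of the $\forall$-theory of $T_0$ through $\Box$-necessitation via the bridge axiom $\forall\chi\to\Box\chi$ and the transitivity axioms $\mathsf{4}_\Box,\mathsf{4}_\forall$; the remaining steps are routine manipulations of maximally consistent sets.
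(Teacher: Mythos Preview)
Your proposal is correct and follows essentially the same approach as the paper: the same candidate set $\Gamma$, the same consistency argument via $\Box$-necessitation, $\mathsf{K}_\Box$, $\mathsf{4}_\Box$, the bridge axiom $\forall\chi\to\Box\chi$ and $\mathsf{4}_\forall$, and the same Lindenbaum extension. Your ($\Leftarrow$) direction is argued contrapositively from the definition of $R^\Box$ rather than via $\mathsf{T}_\Box$ as in the paper, but this is a cosmetic difference.
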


\begin{lemma}[Existence Lemma for $\Box_0$] \label{ex box0 lex} $T\in \|\Box_0\varphi\|$ iff there is an $R\in\mathscr{R}^c$ such that $R[T]\subseteq \|\varphi\|$.
\begin{proof} ($\Rightarrow$). Assume $T\in \|\Box_0\varphi\|$, i.e., $(\Box_0\varphi)\in T\in W^c$. We first prove the following:

\begin{claim} $\exists\Box_0\varphi\in T_0$.
\begin{proof} Suppose not. As $T_0$ is maximal consistent, we have $\neg\exists\Box_0\varphi\in T_0$, i.e., $\forall\neg\Box_0\varphi\in T_0$. As $T\in W^c$, we have $R^\forall T_0 T$. So given  $\forall\neg\Box_0\varphi\in T_0$ we have  $\forall\neg\Box_0\varphi\in T$. By $\mathsf{T}_\forall$  we have $\vdash_{\mathsf{L}_{\mathsf{lex}}} \forall\neg\Box_0\varphi \to \neg\Box_0\varphi$, i.e., $(\forall\neg\Box_0\varphi \to \neg\Box_0\varphi)\in T$. As $T$ is closed under modus ponens, given $(\forall\neg\Box_0\varphi)\in T$ we get $(\neg\Box_0\varphi)\in T$. But we also have $(\Box_0\varphi)\in T\in W^c$ and thus $T$ is inconsistent. Contradiction.
\end{proof}
\end{claim}
Hence $R^{\Box_0\varphi}\in\mathscr{R}^c$. We will show that $R^{\Box_0\varphi}[T]\subseteq \|\varphi\|$. Let $S\in W^c$ be arbitrary and suppose that $R^{\Box_0\varphi}TS$. By definition of $R^{\Box_0\varphi}$, we have $(\Box_0\varphi)\in T$ implies $(\Box_0\varphi)\in S$. As $(\Box_0\varphi)\in T$ we get $(\Box_0\varphi)\in S$. Given $\mathsf{T}_{\Box_0}$ we have $\vdash_{\mathsf{L}_{\mathsf{lex}}} \Box_0\varphi \to \varphi$ and thus $(\Box_0\varphi \to \varphi)\in S$. Since $S$ is closed under modus ponens we thus get $\varphi\in S$, i.e., $S\in \|\varphi\|$. As $S$ was picked arbitrarily, we have $R^{\Box_0\varphi}[T]\subseteq \|\varphi\|$.\\

($\Leftarrow$) Let $T\in W^c$ and suppose there is an $R\in\mathscr{R}^c$ such that $R[T]\subseteq \|\varphi\|$. By definition of ${R}^c$, (i) $R=R'$ or (ii) $R=R^{\Box_0\theta}$ for some $\theta\in\mathscr{L}$ such that $(\exists\Box_0\theta)\in T_0$.\\ 

We first consider the case (i), i.e., $R=R'$. We need to show that $T\in \|\Box_0\varphi\|$. Suppose not, i.e., $\Box_0\varphi\not\in T$. Recall that, given the definition of $F:W^c\times \mathscr{L} \to W^c$, given that  $(\Box_0 \varphi)\not\in T$, we have $F(T,\varphi)= S$ for some theory $S \in W^c$ such that $\varphi\not\in S$. Moreover, $R'$ is the reflexive transitive closure of the relation $R^\Box \cup \{(T,F(T,\varphi)) \mid T\in W^c \text{ and } \varphi\in \mathscr{L} \}$. Hence $ (T,S)\in \{(T,F(T,\varphi)) \mid T\in W^c \text{ and } \varphi\in \mathscr{L} \}$ and thus $(T,S)\in R'$. But then, as $S\in W^c$ and $\varphi\not\in S$, we have $S\not\in \| \varphi\|$. This implies $R'[T]\not\subseteq \|\varphi\|$, contradicting our assumption to the contrary.\\ 

We now consider case (ii), i.e., $R=R^{\Box_0\theta}$ for some $\theta\in\mathscr{L}$ such that $(\exists\Box_0\theta)\in T_0$. Either $\Box_0\theta\in T$ or $\Box_0\theta\not\in T$. We consider both cases.\\

\noindent\textit{Case 1:} Suppose that $\Box_0\theta\in T\in W^c$. We first prove the following:
\begin{claim} The set $\Gamma \coloneqq \{\Box_0\theta\}\cup\{\forall\psi \mid (\forall\psi)\in T\}\cup\{\neg\varphi\}$ is inconsistent.
\begin{proof}
Suppose that $\Gamma$ is consistent. By Lindenbaum's Lemma there is some maximal consistent theory $S$ such that $\Gamma\subseteq S$. Moreover, as $\{\forall\psi \mid (\forall\psi)\in T_0\}\subseteq\{\forall\psi \mid (\forall\psi)\in T\}\subseteq S$, we have $R^\forall T_0 S$ and thus $S\in W^c$. As $\neg\varphi\in \Gamma$ we have  $\neg\varphi\in S$. Since $S$ is consistent we have $\varphi\not\in S$, i.e., $S\not\in \|\varphi\|$. From $\Box_0\theta\in \Gamma$ we have $\Box_0\theta\in S$. By definition of $R^{\Box_0\theta}$, we get $R^{\Box_0\theta}TS$. But then, given $S\not\in \|\varphi\|$, we have $R^{\Box_0\theta}[T]\not\subseteq \|\varphi\|$. Contradiction.
\end{proof}
\end{claim}
Given the Claim, there is a finite $\Gamma_0\subseteq \Gamma$ such that $\Gamma_0\vdash_{\mathsf{L}_{\mathsf{lex}}} \bot$. By the theorem $\vdash_{\mathsf{L}_{\mathsf{lex}}} \forall(\psi_1\wedge\dots\wedge\psi_n)\leftrightarrow (\forall\psi_1\wedge\dots\wedge\forall\psi_n)$ we can assume that $\Gamma_0=\{\Box_0\theta,\forall\psi,\neg\varphi\}$ for some $\psi\in T$. Since $\Gamma_0\vdash_{\mathsf{L}_{\mathsf{lex}}} \bot$ we have $\vdash_{\mathsf{L}_{\mathsf{lex}}} (\Box_0\theta\wedge\forall\psi\wedge\neg\varphi) \to \bot$, so by propositional logic $\vdash_{\mathsf{L}_{\mathsf{lex}}} (\Box_0\theta\wedge\forall\psi)\to(\neg\varphi \to \bot)$, i.e., $\vdash_{\mathsf{L}_{\mathsf{lex}}} (\Box_0\theta\wedge\forall\psi)\to(\neg\neg\varphi)$, i.e., $\vdash_{\mathsf{L}_{\mathsf{lex}}} (\Box_0\theta\wedge\forall\psi)\to\varphi$. Given the Pullout axiom, we have $\vdash_{\mathsf{L}_{\mathsf{lex}}} \Box_0(\theta\wedge \forall\psi)\to(\Box_0\theta\wedge \forall\psi)$ and thus $\vdash_{\mathsf{L}_{\mathsf{lex}}} \Box_0(\theta\wedge\forall\psi)\to\varphi$. By the Monotonicity Rule for $\Box_0$, we get $\vdash_{\mathsf{L}_{\mathsf{lex}}} \Box_0 \Box_0(\theta\wedge\forall\psi)\to\Box_0\varphi$. By $\mathsf{4}_{\Box_0}$, we have $\vdash_{\mathsf{L}_{\mathsf{lex}}} \Box_0(\theta\wedge\forall\psi)\to \Box_0\Box_0(\theta\wedge\forall\psi)$ and thus $\vdash_{\mathsf{L}_{\mathsf{lex}}} \Box_0 (\theta\wedge\forall\psi)\to\Box_0\varphi$. By the Pullout axiom, we have $\vdash_{\mathsf{L}_{\mathsf{lex}}} (\Box_0\theta\wedge \forall\psi) \to \Box_0(\theta\wedge \forall\psi)$. Hence $\vdash_{\mathsf{L}_{\mathsf{lex}}} (\Box_0 \theta\wedge\forall\psi)\to\Box_0\varphi$. Therefore $((\Box_0 \theta\wedge\forall\psi)\to\Box_0\varphi)\in T$. As $(\Box_0\theta)\in T$ and $(\forall\psi)\in T$, by closure under modus ponens, we have $\Box_0\varphi\in T$. That is, $T\in \| \Box_0\varphi \|$.\\ 

\noindent\textit{Case 2:} Suppose that $\Box_0\theta\not\in T$. 
Note that  $\Box_0\theta\not\in T$ implies that $R^{\Box_0\theta} [T] =W^c$, and since we have $R = R^{\Box_0\theta}$ and $R [T]\subseteq \|\varphi\|$, all this gives us that $W^c \subseteq \|\varphi\|_c$, i.e. all theories in the canonical model contain $\varphi$. We now prove the following:

\begin{claim} The set $\Gamma \coloneqq \{\forall\psi \mid (\forall\psi)\in T\}\cup\{\neg\varphi\}$ is inconsistent.
\begin{proof}
Suppose that $\Gamma$ is consistent. By Lindenbaum's Lemma there is some maximal consistent theory $S$ such that $\Gamma\subseteq S$. Moreover, as $\{\forall\psi \mid (\forall\psi)\in T_0\}\subseteq\{\forall\psi \mid (\forall\psi)\in T\}\subseteq S$, we have $R^\forall T_0 S$ and thus $S\in W^c$. As $\neg\varphi\in \Gamma$ we have  $\neg\varphi\in S$ and thus $S\in \|\neg \varphi\|$.  Therefore $W^c\not\subseteq \|\varphi\|$ (contradiction).
\end{proof}
\end{claim}
Given the Claim, there is a finite $\Gamma_0\subseteq \Gamma$ such that $\Gamma_0\vdash_{\mathsf{L}_{\mathsf{lex}}} \bot$. By the theorem $\vdash_{\mathsf{L}_{\mathsf{lex}}} \forall(\psi_1\wedge\dots\wedge\psi_n)\leftrightarrow (\forall\psi_1\wedge\dots\wedge\forall\psi_n)$ we can assume that $\Gamma_0=\{\forall\psi,\neg\varphi\}$ for some $\psi\in T$. Since $\Gamma_0\vdash_{\mathsf{L}_{\mathsf{lex}}} \bot$ we have $\vdash_{\mathsf{L}_{\mathsf{lex}}} (\forall\psi\wedge\neg\varphi) \to \bot$, so by propositional logic $\vdash_{\mathsf{L}_{\mathsf{lex}}} (\forall\psi)\to(\neg\varphi \to \bot)$, i.e., $\vdash_{\mathsf{L}_{\mathsf{lex}}} (\forall\psi)\to(\neg\neg\varphi)$, i.e., $\vdash_{\mathsf{L}_{\mathsf{lex}}} (\forall\psi)\to\varphi$. By propositional logic, given  $\vdash_{\mathsf{L}_{\mathsf{lex}}} (\forall\psi)\to\varphi$ we can strengthen the antecedent getting $\vdash_{\mathsf{L}_{\mathsf{lex}}} (\Box_0\top\wedge \forall\psi)\to\varphi$. 
Given the $\text{Pullout axiom}^\leftrightarrow$, we have $\vdash_{\mathsf{L}_{\mathsf{lex}}} \Box_0(\top\wedge \forall\psi)\leftrightarrow(\Box_0\top\wedge \forall\psi)$ and thus $\vdash_{\mathsf{L}_{\mathsf{lex}}} \Box_0(\top\wedge\forall\psi)\to\varphi$. By the Monotonicity Rule for $\Box_0$, we get $\vdash_{\mathsf{L}_{\mathsf{lex}}} \Box_0 \Box_0(\top\wedge\forall\psi)\to\Box_0\varphi$. By $\mathsf{4}_{\Box_0}$, we have $\vdash_{\mathsf{L}_{\mathsf{lex}}} \Box_0(\top\wedge\forall\psi)\to \Box_0\Box_0(\top\wedge\forall\psi)$ and thus $\vdash_{\mathsf{L}_{\mathsf{lex}}} \Box_0 (\top\wedge\forall\psi)\to\Box_0\varphi$. By the $\text{Pullout axiom}^\leftrightarrow$, we have $\vdash_{\mathsf{L}_{\mathsf{lex}}} (\Box_0\top\wedge \forall\psi) \leftrightarrow \Box_0(\top\wedge \forall\psi)$. Hence $\vdash_{\mathsf{L}_{\mathsf{lex}}} (\Box_0 \top\wedge\forall\psi)\to\Box_0\varphi$. Therefore $((\Box_0 \top\wedge\forall\psi)\to\Box_0\varphi)\in T$. As $\Box_0\top$ is an axiom of our system, we have $(\Box_0\top)\in T$ and $(\forall\psi)\in T$. Hence by closure under modus ponens, we have $\Box_0\varphi\in T$. That is, $T\in \| \Box_0\varphi \|$. 
\end{proof}
\end{lemma}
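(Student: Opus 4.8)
The statement is a biconditional that bridges the syntactic condition $\Box_0\varphi\in T$ and the semantic condition that some evidence relation supports $\varphi$ at $T$, i.e. $R[T]\subseteq\|\varphi\|$ for some $R\in\mathscr{R}^c$. Since $\Box_0$ is a non-normal (merely monotone) modality whose semantics quantifies \emph{existentially} over the family of evidence relations, the family $\mathscr{R}^c$ is engineered to carry exactly one relation $R^{\Box_0\psi}$ per $\Box_0\psi$-formula that $T_0$ deems possible, together with the distinguished aggregator-witness relation $R'$. The plan is to prove the two directions separately, letting the definitions of $\mathscr{R}^c$, of each $R^{\Box_0\psi}$, and of the choice map $F$ carry the bookkeeping.

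For the forward direction I would take $\Box_0\varphi\in T$ and propose the canonical witness $R^{\Box_0\varphi}$. First one must check this relation actually lies in $\mathscr{R}^c$, which by definition requires $(\exists\Box_0\varphi)\in T_0$. I would establish this by contradiction: otherwise $\forall\neg\Box_0\varphi\in T_0$, which transports to $T$ along $R^\forall$ (as $R^\forall$ preserves $\forall$-formulas), then collapses via $\mathsf{T}_\forall$ to $\neg\Box_0\varphi\in T$, contradicting $\Box_0\varphi\in T$. Granting $R^{\Box_0\varphi}\in\mathscr{R}^c$, the inclusion $R^{\Box_0\varphi}[T]\subseteq\|\varphi\|$ is immediate: any $S$ with $R^{\Box_0\varphi}TS$ inherits $\Box_0\varphi$, and $\mathsf{T}_{\Box_0}$ (available in $\mathsf{L}_\mathsf{lex}$) yields $\varphi\in S$.

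The converse is where the real work lies, and it branches on which relation witnesses the hypothesis $R[T]\subseteq\|\varphi\|$. If $R=R'$, I would argue contrapositively: assuming $\Box_0\varphi\notin T$, the definition of $F$ supplies a theory $S=F(T,\varphi)\in W^c$ with $\varphi\notin S$ and $(T,S)\in R'$, contradicting $R'[T]\subseteq\|\varphi\|$. If instead $R=R^{\Box_0\theta}$ with $(\exists\Box_0\theta)\in T_0$, I would split on whether $\Box_0\theta\in T$. In the main subcase $\Box_0\theta\in T$, I would first show the set $\Gamma=\{\Box_0\theta\}\cup\{\forall\psi\mid\forall\psi\in T\}\cup\{\neg\varphi\}$ is inconsistent; were it consistent, a Lindenbaum extension $S$ would sit in $W^c$ (inheriting $T_0$'s $\forall$-formulas through $T$), satisfy $R^{\Box_0\theta}TS$ since $\Box_0\theta\in S$, yet omit $\varphi$, breaking the hypothesis. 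From this inconsistency I would extract, after collapsing the $\forall$-conjuncts, a theorem $\vdash(\Box_0\theta\wedge\forall\psi)\to\varphi$ and then reshape it into $\Box_0\varphi\in T$. The degenerate subcase $\Box_0\theta\notin T$ forces $R^{\Box_0\theta}[T]=W^c$, hence $W^c\subseteq\|\varphi\|$, and the same manoeuvre runs with $\theta$ replaced by $\top$, triggered by $\mathsf{N}_{\Box_0}$ (i.e. $\Box_0\top$).

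The chief obstacle is assembling that final syntactic derivation correctly, precisely because $\Box_0$ is non-normal. Since $\Box_0$ validates only the \emph{Monotonicity Rule} rather than axiom $\mathsf{K}$, one cannot simply "box the implication" $(\Box_0\theta\wedge\forall\psi)\to\varphi$. The \emph{Pullout} axiom is exactly the device that repairs this: it lets the $\forall$-conditions ride inside the $\Box_0$, converting $\Box_0\theta\wedge\forall\psi$ into $\Box_0(\theta\wedge\forall\psi)$, so that monotonicity then boxes the implication, $\mathsf{4}_{\Box_0}$ supplies the extra layer of $\Box_0$, and a final application of Pullout returns to $\Box_0\theta\wedge\forall\psi$. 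Since $\Box_0\theta,\forall\psi\in T$, closure under modus ponens delivers $\Box_0\varphi\in T$, completing the argument.
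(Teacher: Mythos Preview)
Your proposal is correct and follows essentially the same approach as the paper's proof: the same forward-direction argument via $\exists\Box_0\varphi\in T_0$ and $\mathsf{T}_{\Box_0}$, the same case split on $R=R'$ versus $R=R^{\Box_0\theta}$ in the converse (with the further split on $\Box_0\theta\in T$), and the same Pullout--Monotonicity--$\mathsf{4}_{\Box_0}$--Pullout chain to derive $\Box_0\varphi\in T$. Your identification of the Pullout axiom as the crucial device compensating for the non-normality of $\Box_0$ is exactly the point the paper's derivation hinges on.
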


\begin{lemma}[Truth Lemma] \label{truth lemma lex} For every $\varphi\in\mathscr{L}$, we have: $\llb \varphi\rrb_{M^c}=\|\varphi \|$.
\begin{proof} The proof is by induction on the complexity of $\varphi$. The base case follows  from the definition of $V^c$. For the inductive case, suppose that for all $T \in W^c$ and all formulas $\psi$ of lower complexity than $\varphi$, we have $\llb \psi\rrb_{M^c}=\|\psi \|$. The Boolean cases where $\varphi = \neg\psi$ and $\varphi = \psi_1 \wedge \psi_2$ follow from the induction hypothesis together with the standard facts about maximal consistent theories included in proposition \ref{mcs}. Only the modalities remain. Let $\varphi= \exists\psi$ and consider any $T\in M^c$. We have $T\in \|\exists\psi\|$ iff (proposition \ref{ex forall lex}) $\|\varphi\|\neq\emptyset$ iff (induction hypothesis) $\llb \psi\rrb_{M^c}$ iff $\llb \exists\psi \rrb_{M^c}=W^c$ iff $T\in \llb \exists\psi \rrb_{M^c}$. Now let $\varphi= \Box_0\psi$ and consider any $T\in M^c$. We have $T\in \|\Box_0\psi\|$ iff (proposition \ref{ex box0 lex}) there is an $R\in\mathscr{R}^c$ such that $R[T]\subseteq \|\psi\|$ iff (induction hypothesis) there is an $R\in\mathscr{R}^c$ such that $R[T]\subseteq \llb \psi \rrb_{M^c}$ iff $T\in \llb \Box_0\psi \rrb_{M^c}$. Let $\varphi= \Diamond\psi$. We have $T\in \|\Diamond\psi\|$ iff (proposition \ref{ex box lex}) there is an $S\in\|\psi\|$ such that $R^{\Box}TS$ iff (induction hypothesis) there is an $S\in\llb \psi \rrb_{M^c}$ such that $R^{\Box}TS$ iff there is an $S\in\llb \psi \rrb_{M^c}$ such that $(T,S)\in Ag^c(\lan \mathscr{R}^{c}, \preceq^{c}\ran)$ iff  $T\in \llb \Diamond\psi \rrb_{M^c}$.
\end{proof}
\end{lemma}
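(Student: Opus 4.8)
The plan is to prove the equality $\llb\varphi\rrb_{M^c}=\|\varphi\|$ by induction on the structure of $\varphi\in\mathscr{L}$, leaning on the three Existence Lemmas (Lemmas~\ref{ex forall lex}, \ref{ex box lex}, \ref{ex box0 lex}) proved above together with the standard closure properties of maximal consistent sets recorded in Proposition~\ref{mcs}. The base case $\varphi=p$ is immediate, since the canonical valuation is defined by $V^c(p)=\|p\|$, so $\llb p\rrb_{M^c}=V^c(p)=\|p\|$. For the inductive step I assume $\llb\psi\rrb_{M^c}=\|\psi\|$ for all subformulas $\psi$ of $\varphi$; the Boolean cases $\varphi=\neg\psi$ and $\varphi=\psi_1\wedge\psi_2$ then follow from the induction hypothesis and the facts $\neg\psi\in T\Leftrightarrow\psi\notin T$ and $\psi_1\wedge\psi_2\in T\Leftrightarrow\psi_1\in T\text{ and }\psi_2\in T$ for maximal consistent $T$.

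It remains to treat the modalities, which I would handle via whichever of operator or dual is most convenient. For $\varphi=\exists\psi$: I first observe that, since $\forall$ is an $\mathsf{S5}$ modality and $W^c$ is the submodel generated from $T_0$ by $R^\forall$ (so $R^\forall$ restricted to $W^c$ is universal), the set $\|\exists\psi\|$ is either $\emptyset$ or all of $W^c$; then I chain $T\in\|\exists\psi\|\Leftrightarrow\|\exists\psi\|\neq\emptyset\Leftrightarrow\|\psi\|\neq\emptyset$ (Lemma~\ref{ex forall lex}) $\Leftrightarrow\llb\psi\rrb_{M^c}\neq\emptyset$ (induction hypothesis) $\Leftrightarrow\llb\exists\psi\rrb_{M^c}=W^c\Leftrightarrow T\in\llb\exists\psi\rrb_{M^c}$, where the penultimate equivalence is just the global reading of $\exists$ in the satisfaction clause. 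For $\varphi=\Box_0\psi$: I chain $T\in\|\Box_0\psi\|\Leftrightarrow$ (Lemma~\ref{ex box0 lex}) there is $R\in\mathscr{R}^c$ with $R[T]\subseteq\|\psi\|\Leftrightarrow$ (induction hypothesis) there is $R\in\mathscr{R}^c$ with $R[T]\subseteq\llb\psi\rrb_{M^c}\Leftrightarrow T\in\llb\Box_0\psi\rrb_{M^c}$, the last step being the satisfaction clause for $\Box_0$. For $\varphi=\Diamond\psi$: I chain $T\in\|\Diamond\psi\|\Leftrightarrow$ (Lemma~\ref{ex box lex}) there is $S\in\|\psi\|$ with $R^\Box TS\Leftrightarrow$ (induction hypothesis) there is $S\in\llb\psi\rrb_{M^c}$ with $R^\Box TS\Leftrightarrow$ there is $S\in\llb\psi\rrb_{M^c}$ with $(T,S)\in Ag^c(\langle\mathscr{R}^c,\preceq^c\rangle)$ (using that $Ag^c$ returns exactly $R^\Box$ on the canonical ordered family) $\Leftrightarrow T\in\llb\Diamond\psi\rrb_{M^c}$.

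The genuine difficulty is not in the Truth Lemma itself but upstream: once the Existence Lemmas are available each modal case is a one-line chain, and those lemmas in turn rest on the careful design of $M^c$ --- in particular on the book-keeping function $F$ (hence on $R'$), which guarantees that whenever $\Box_0\psi\notin T$ some $R'$-successor of $T$ falsifies $\psi$, and on the stipulation that $Ag^c$ outputs $R^\Box$ precisely on $\langle\mathscr{R}^c,\preceq^c\rangle$ so that the $\Diamond$-clause lines up with $R^\Box$. The only subtlety internal to the lemma that I would flag is the $\exists$ case: one must first note that $\|\exists\psi\|$ is $\emptyset$-or-$W^c$ (so membership at $T$ reduces to non-emptiness) before the Existence Lemma for $\forall$ applies; everything else is routine bookkeeping.
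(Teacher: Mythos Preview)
Your proposal is correct and follows essentially the same route as the paper's proof: induction on formula complexity, with the three modal cases dispatched via the Existence Lemmas for $\forall$, $\Box_0$, and $\Box$ respectively, and the $\Diamond$ case closed by the defining clause $Ag^c(\langle\mathscr{R}^c,\preceq^c\rangle)=R^\Box$. Your version is in fact slightly more explicit than the paper's in the $\exists$ case, since you spell out why $\|\exists\psi\|$ is $\emptyset$-or-$W^c$ before invoking Lemma~\ref{ex forall lex}, whereas the paper's chain leaves that step implicit.
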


\begin{lemma} \label{lemma comp lex} $\mathsf{L}_\mathsf{lex}$ is strongly complete with respect to the class of pre-models (and hence it is also complete with respect to \textsf{REL} models).
\begin{proof}
It suffices to show that every $\mathsf{L}_\mathsf{lex}$-consistent set of formulas is satisfiable on some $\mathsf{lex}$ model. Let $\Gamma$ be an $\mathsf{L}_\mathsf{lex}$-consistent set of formulas. By Lindenbaum’s Lemma, there is a maximally consistent set $T_0$ such that $\Gamma\subseteq T_0$. Choose any canonical pre-model $M^c$ for $T_0$. By Lemma \ref{truth lemma lex}: $M^c,T_0\models \varphi$ for all $\varphi\in T_0$.
\end{proof}
\end{lemma}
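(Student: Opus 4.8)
The plan is a textbook Henkin-style canonical-model argument, carried out through the specific pre-model construction fixed in Definition~\ref{canmod}. It is standard (see, e.g.,~\cite{blackburn2002modal}) that a logic is strongly complete with respect to a class of models exactly when every consistent set of formulas is satisfiable on some model in that class; since every pre-model is in particular a \textsf{REL} model, it suffices to show that an arbitrary $\mathsf{L}_\mathsf{lex}$-consistent set $\Gamma$ is satisfiable on a pre-model, and the parenthetical claim about \textsf{REL} models then follows for free. So I would fix $\Gamma$, extend it by Lindenbaum's Lemma to a maximal consistent set $T_0$, take $M^c$ to be the canonical pre-model for $T_0$, and aim to establish $M^c,T_0\models\varphi$ for every $\varphi\in\Gamma$.

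Two things then have to be checked. First, that $M^c$ is genuinely a \textsf{REL} model (Proposition~\ref{is rel mod pre}): each $R^{\Box_0\varphi}$ is reflexive by definition and transitive by a case split on whether $\Box_0\varphi\in T$; the auxiliary relation $R'$ is a preorder because it is a reflexive-transitive closure; the trivial order $W^c\times W^c$ lies in $\mathscr{R}^c$ because $\mathsf{N}$ for $\Box_0$ makes $\Box_0\top$ a theorem, so $R^{\Box_0\top}=W^c\times W^c$; and $R^\Box$ is a preorder by $\mathsf{T}_\Box$ and $\mathsf{4}_\Box$, so $Ag^c$ is well-defined. Second, and at the heart of the argument, the Truth Lemma $\llb\varphi\rrb_{M^c}=\|\varphi\|$, proved by induction on $\varphi$: the propositional and Boolean cases are routine from the properties of maximal consistent sets, and each modal case reduces to an Existence Lemma — Lemma~\ref{ex forall lex} for $\forall$ (packaging finitely many $\forall$-formulas, applying Necessitation, $\mathsf{K}_\forall$, and the $\mathsf{S5}$-theorem $\vdash\forall\forall\psi\leftrightarrow\forall\psi$), Lemma~\ref{ex box lex} for $\Box$ (the analogous maneuver using $\mathsf{4}_\Box$ and Universality for $\Box$), and Lemma~\ref{ex box0 lex} for $\Box_0$, whose right-to-left direction splits according to whether the witnessing relation is some $R^{\Box_0\theta}$ (handled via the Pullout axiom, the Monotonicity Rule for $\Box_0$ and $\mathsf{4}_{\Box_0}$) or the auxiliary $R'$ (handled by the $F$-edges, which exist because Universality for $\Box_0$ together with Lemma~\ref{ex forall lex} supply a $\neg\varphi$-world whenever $\Box_0\varphi\notin T$). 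With the Truth Lemma in place, $M^c,T_0\models\varphi$ for all $\varphi\in T_0\supseteq\Gamma$, so $\Gamma$ is satisfiable on the pre-model $M^c$.

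I expect the main obstacle to be the $\Box_0$ clause of the Truth Lemma, i.e. Lemma~\ref{ex box0 lex}. Because $\Box_0$ is a non-normal, merely monotone modality read as an existential over the evidence family $\mathscr{R}^c$, the canonical family must be rich enough that $\Box_0\varphi\in T$ forces some $R\in\mathscr{R}^c$ with $R[T]\subseteq\|\varphi\|$, yet lean enough that $\Box_0\varphi\notin T$ kills every such $R$; this is exactly what the carefully engineered relation $R'$ and choice function $F$ are for, and the two directions must be dovetailed against those definitions using precisely the Pullout and Monotonicity principles. The secondary delicate point is keeping $Ag^c$ and $R^\Box$ in sync — $Ag^c$ is defined by cases so that $Ag^c(\langle\mathscr{R}^c,\preceq^c\rangle)=R^\Box$ exactly on the canonical family — so that $\llb\Diamond\psi\rrb_{M^c}$ is computed the way the semantics of $\Box$ requires; this is also what prevents $M^c$ from being a $\mathsf{lex}$-model on the nose, and is the reason the later unraveling steps are still needed.
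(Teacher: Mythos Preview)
Your proposal is correct and follows essentially the same approach as the paper: extend $\Gamma$ to a maximal consistent $T_0$ via Lindenbaum's Lemma, build the canonical pre-model $M^c$ for $T_0$, and invoke the Truth Lemma (Lemma~\ref{truth lemma lex}) to conclude $M^c,T_0\models\varphi$ for all $\varphi\in T_0\supseteq\Gamma$. Your write-up is in fact more careful than the paper's, which contains a slip (it says ``satisfiable on some $\mathsf{lex}$ model'' where it should say ``on some pre-model''); your explicit remark that pre-models are \textsf{REL} models is exactly what justifies the parenthetical claim.
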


\subsection{Step 2: Unravelling the canonical pre-model}

Next, we will unravel the canonical pre-model. We first fix some preliminary notions. First, define a set of ``evidential indices''
$$\emph{I}\coloneqq\{\Box_0\varphi \mid (\exists\Box_0\varphi)\in T_0\}\cup \{\Box\}\cup \{(\varphi, j)\mid\varphi\in \mathscr{L}, j\in \{l,r\}\},$$
where $l$ is a symbol for ``left'' copy and $r$ is a symbol for ``right'' copy. We use $\epsilon, \epsilon'$ as meta-variables ranging over evidential indices in $\emph{I}$. To each $\epsilon\in \emph{I}$, we associate a corresponding relation $R^{\epsilon}$ on the canonical model $W^c$, as follows: $R^{\Box_0\varphi}$ and $R^{\Box}$ are as before (the relations in the canonical pre-model), and $R^{(\varphi, l)} = R^{(\varphi, r)} \coloneqq \{(T, S)\mid S=F(T, \varphi)\}$. 

\begin{definition}[Histories] Let $M^c=\langle W^c, \lan \mathscr{R}^c, \preceq^c\ran, V^c, Ag^c\rangle$ be a canonical pre-model for $T_0$. The set of histories rooted at $T_0$ is the following set of finite sequences:
$$\tilde{W}\coloneqq\{(T_0, \epsilon_1, T_1, \epsilon_2, \ldots, \epsilon_n, T_n)\mid n\geq 0, \epsilon_i \in \emph{I} \mbox{ and } T_{i-1} R^{\epsilon_i} T_i \mbox{ for all } i\leq n\}$$
The set $\tilde{W}$ forms the set of worlds of the unravelled tree.
\end{definition}

Basically, histories record all finite sequences of worlds in $M^c$ starting with $T_0$ and passing to $R^\epsilon$-successors at each step, where $\epsilon\in I$.

\begin{definition}[$\beta$] We denote by $\beta: \tilde{W}\to W^c$ the map returning the last theory in each history, i.e. $\beta (T_0, \epsilon_1, T_1, \epsilon_2, \ldots, \epsilon_n, T_n)\coloneqq T_n$ for all histories in $\tilde{W}$.
\end{definition}
 
We now define the relations that will feature in the unravelling of $M^c$ around $T_0$. 
 
\begin{definition}[$\to^{\epsilon}$ relations] For a history $w= (T_0, \epsilon_1, T_1, \epsilon_2, \ldots, \epsilon_n, T_n)\in \tilde{W}$, we denote by
$$(w, \epsilon, T)\coloneqq (T_0, \epsilon_1, T_1, \epsilon_2, \ldots, \epsilon_n, T_n, \epsilon, T)$$
the history obtained by extending the history $w$ with the sequence $(\epsilon, T)$ (where $T\in W^c$). Using this notation, we define the following relations $\to^{\epsilon}$ over $\tilde{W}$, labelled by indices in $\emph{I}$:
$$w \to^{\epsilon} w' \,\, \mbox{ iff } \,\, w'=(w, \epsilon, T) \mbox{ for some } T\in W^c$$
\end{definition}

We now define the unravelled tree for $T_0$.

\begin{definition}[Unravelled tree] Let $M^c=\langle W^c, \lan \mathscr{R}^c, \preceq^c\ran, V^c, Ag^c\rangle$ be a canonical pre-model for $T_0$. The unravelling of $M^c$ around $T_0$ is the structure $\tilde{K}=\lan \tilde{W}, \{\to^{\epsilon} \mid \epsilon\in I\}, \tilde{V}\ran$ with
$$\tilde{V}(p)\coloneqq \{w\in \tilde{W}\mid \beta(w)\in V^c(p)\}$$
\end{definition}

In the tree unravelling, one history has another history accessible if the second is one step longer than the first. The valuation on histories is copied from that on their last nodes. We now define paths on this tree of histories.

\begin{definition}[$\mathcal{R}$-path] Let $w,w'\in \tilde{W}$ and let $\mathcal{R}\subseteq \{\to^{\epsilon} \mid \epsilon\in I\}$. An $\mathcal{R}$-path from $w$ to $w'$ is a finite sequence 
\[p=(w_0,\epsilon_1, w_1, \epsilon_2, \dots,\epsilon_{n-1}, w_n) \]
where $w_0=w$, $w_n=w'$, $w_k\in \tilde{W}$ for $k=1,2,\dots,n$, $\epsilon_k\in I$ for $k = 1, 2, \dots, n - 1$ and $w_k\to^{\epsilon_k}w_{k+1}$ for $k = 1, 2, \dots, n - 1$.
For an $\mathcal{R}$-path $p= (w_0,\epsilon_1, w_1, \epsilon_2, \dots,\epsilon_{n-1}, w_n)$ from $w$ to $w'$, we denote by
$$(p, \epsilon, w'')\coloneqq (w_0, \epsilon_1, w_1, \epsilon_2, \ldots, \epsilon_{n-1}, w_n, \epsilon, w'')$$
the path obtained by extending the path $p$ with $(\epsilon, w'')$.
If $\mathcal{R}$ is not specified, we speak of a \textit{path}. For any path $p=(w_0,\epsilon_1, w_1, \epsilon_2, \dots,\epsilon_{n-1}, w_n)$ we define $first(p) = w_0$ and $last(p) = w_n$.
\end{definition}

The following is a standard results about (unravelled) trees, which we will refer to later on.

\begin{lemma}[Uniqueness of paths] \label{path unique}  Let $\tilde{K}$ be the unravelling of $M^c$ around $T_0$. Let $w,w'\in\tilde{W}$ and $\mathcal{R}\subseteq\{\to^{\epsilon} \mid \epsilon\in I\}$. Then, there is at most one $\mathcal{R}$-path $p$ from $w$ to $w'$.
\end{lemma}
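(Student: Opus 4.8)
The plan is to exploit the defining feature of a tree unravelling: each edge $\to^{\epsilon}$ extends a history by exactly one index--theory pair $(\epsilon, T)$, so edges strictly increase the length of histories and never ``branch backwards''. First I would record the elementary observation that, if $w \to^{\epsilon} w'$, then, viewing histories as the finite alternating sequences $(T_0, \epsilon_1, T_1, \ldots, \epsilon_n, T_n)$ of which $\tilde{W}$ consists, the history $w$ is a proper prefix of $w'$, we have $\mathsf{len}(w') = \mathsf{len}(w) + 2$, and the label $\epsilon$ is precisely the index occupying the new (second-to-last) position of $w'$. All three facts are immediate from the definitions of $\tilde{W}$, of $(w, \epsilon, T)$, and of $\to^{\epsilon}$.

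Next I would chain this observation along a path. Given an $\mathcal{R}$-path from $w$ to $w'$, write it as $w = u_0, u_1, \ldots, u_m = w'$ with $u_{k-1} \to^{\epsilon_k} u_k$ for $1 \le k \le m$. By the observation and transitivity of ``is a prefix of'', each $u_k$ is a prefix of $w'$ with $\mathsf{len}(u_k) = \mathsf{len}(w) + 2k$. Since a finite sequence has exactly one prefix of each length not exceeding its own, this forces $m = (\mathsf{len}(w') - \mathsf{len}(w))/2$ and makes each $u_k$ the unique prefix of $w'$ of length $\mathsf{len}(w) + 2k$; in particular each $u_k$ depends only on $w$ and $w'$. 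Moreover $\epsilon_k$ is the index sitting in $w'$ immediately before the last theory of $u_k$, so it too is determined by $w'$. Hence the entire sequence $(u_0, \epsilon_1, u_1, \ldots, \epsilon_m, u_m)$ is uniquely determined by $w$ and $w'$, which is the claim. Passing to a sub-relation $\mathcal{R} \subseteq \{\to^{\epsilon} \mid \epsilon \in I\}$ only removes available edges, so it cannot create additional paths; thus there is at most one $\mathcal{R}$-path from $w$ to $w'$.

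I expect no genuine obstacle: this is the standard ``trees have unique paths'' fact, and the only care needed is notational bookkeeping of the alternating theory/index structure of histories. If one prefers to sidestep even that, the same argument runs as a straightforward induction on the number of edges $m$: for $m = 0$ the path is just $(w)$ and $w = w'$; for the inductive step, the final edge $u_{m-1} \to^{\epsilon_m} w'$ already pins down $u_{m-1}$ as the unique prefix of $w'$ of length $\mathsf{len}(w') - 2$ and $\epsilon_m$ as the index just before the last theory of $w'$, after which the induction hypothesis applied to the path from $w$ to $u_{m-1}$ concludes.
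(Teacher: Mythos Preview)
Your argument is correct: exploiting that each $\to^{\epsilon}$-step appends exactly one $(\epsilon,T)$ pair, so every intermediate vertex of a path from $w$ to $w'$ must be the unique prefix of $w'$ of the appropriate length, and each label is read off from the corresponding position in $w'$. The paper itself does not give a proof of this lemma at all; it simply records it as ``a standard result about (unravelled) trees'' and moves on, so your write-up is strictly more detailed than what appears there.
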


\subsection{Step 3: Completeness with respect to $\mathsf{lex}$-models}

Step 2 unravelled the canonical pre-model from step 1. Using the structure from the unravelled tree, we now define a \textsf{REL} model $\tilde{M}$ from it. We then show that this model is in fact a $\mathsf{lex}$-model. Finally, we define a variant of a bounded morphism defined for \textsf{REL} models, which we call \textit{bounded aggregation-morphism}. Bounded aggregation-morphisms work on \textsf{REL} models in the same way as standard bounded-morphisms do on Kripke models: for \textsf{REL} models, modal satisfaction is invariant under bounded aggregation-morphisms. We then show that $M^c$ is a bounded-morphic image of $\tilde{M}$, which gives us completeness. We first define the model $\tilde{M}$. 

\begin{definition}[$\tilde{M}$] Let $\tilde{K}$ be the unravelling of $M^c$ around $T_0$. The structure $\tilde{M}=\lan \tilde{W}, \lan \tilde{\mathscr{R}},\tilde{\preceq}\ran, \tilde{V}, \tilde{Ag}\ran$ has:
$$\tilde{\mathscr{R}}\coloneqq \{\tilde{R}^{\Box_0\varphi}\mid (\exists\Box_0\varphi)\in T_0\}\cup\{R'_l, R'_r\} \cup \{\tilde{W}\times \tilde{W}\}$$
where:
$$\tilde{R}^{\Box_0\varphi}\coloneqq (\to^{\Box_0\varphi})^*$$
$$R'_l = (\to^{\Box}\cup \bigcup\{\to^{(\varphi,l)}\mid\varphi\in \mathscr{L}\})^*$$
$$R'_r = (\to^{\Box}\cup \bigcup\{\to^{(\varphi,r)}\mid\varphi\in \mathscr{L}\})^*$$
Moreover, $\tilde{\preceq}$ is the reflexive closure of $\{(R,Q) \mid R\in\tilde{\mathscr{R}}, Q\in \{R'_l, R'_r\} \}$ 
Finally, the aggregator $\tilde{Ag}$ is given by: 
$$
\tilde{Ag}(\lan\mathscr{R},\preceq\ran)= 
\begin{cases}
\tilde{R}^{\Box} \coloneqq (\to^{\Box})^* & \text{ if } \lan\mathscr{R},\preceq\ran=\lan \tilde{\mathscr{R}}, \tilde{\preceq}\ran\\
\mathsf{lex}(\lan\mathscr{R},\preceq\ran) & \text{ otherwise }
\end{cases}
$$
\end{definition}

\begin{proposition} \label{all preorders} All the evidence relations in $\tilde{\mathscr{R}}\setminus \{\tilde{W}\times \tilde{W}\}$ are reflexive, transitive and anti-symmetric.
\begin{proof}
Reflexivity and transitivity follow immediately from the fact that each $R\in\tilde{\mathscr{R}}\setminus \{\tilde{W}\times \tilde{W}\}$ is the reflexive transitive closure of some other relation, and $\tilde{W}\times \tilde{W}$ is reflexive and transitive. Hence we just need to show the anti-symmetry of the relations. Let $R\in\tilde{\mathscr{R}}\setminus \{\tilde{W}\times \tilde{W}\}$ and suppose $Rwv$ and $Rvw$. First, we consider the case $R= \tilde{R}^{\Box_0\varphi}$ for some $\varphi$ such that $\exists\Box_0\varphi\in T_0$. I.e. $R= (\to^{\Box_0\varphi})^*$. Given $Rwv$ there is some $n\geq 0$ such that:
\[w=w_0 \to^{\Box_0\varphi} w_1 \to^{\Box_0\varphi}\dots \to^{\Box_0\varphi} w_n=v\]
Similarly, given $Rvw$ there is some $m\geq 0$ such that:
\[v=w'_0 \to^{\Box_0\varphi} w'_1 \to^{\Box_0\varphi}\dots \to^{\Box_0\varphi} w'_m=w\]
By definition of $\to^{\Box_0\varphi}$, we have $w_1=(w,\Box_0\varphi,T_1)$ for some $T_1\in W^c$, $w_2=(w,\Box_0\varphi,T_1,\Box_0\varphi,T_2)$ for some $T_2\in W^c$, and proceeding in this way we get
\begin{equation} \label{first eq}
w_n=v=(w,\Box_0\varphi,T_1,\Box_0\varphi,T_2,\dots, \Box_0\varphi, T_n) \text{ where } T_i\in W^c, \text{for }  i\leq n
\end{equation}

Similarly, we have $w'_1=(v,\Box_0\varphi,T'_1)$ for some $T'_1\in W^c$, $w'_2=(v,\Box_0\varphi,T'_1,\Box_0\varphi,T'_2)$ for some $T'_2\in W^c$, and proceeding in this way we get
\begin{equation} \label{sec eq}
w'_m=w=(v,\Box_0\varphi,T'_1,\Box_0\varphi,T'_2,\dots, \Box_0\varphi, T'_n) \text{ where } T'_i\in W^c, \text{for }  i\leq m
\end{equation}
Hence we must have $n=m=0$. For otherwise, substituting $v$ in \ref{sec eq} with the expression in \ref{first eq} we get
$$w=(w,\Box_0\varphi,T_1,\Box_0\varphi,T_2,\dots, \Box_0\varphi, T_n, v,\Box_0\varphi,T'_1,\Box_0\varphi,T'_2,\dots, \Box_0\varphi, T'_m) \text{ for } n>0 \text{ or } m>0$$
which is impossible. Therefore $w=w_0=w_n=v$, as required.\\

Now we consider the case $R= R'_l$, i.e. $R= (\to^{\Box}\cup \bigcup\{\to^{(\varphi,l)}\mid\varphi\in \mathscr{L}\})^*$. Given $Rwv$ there is some $n\geq 0$ such that:
\[w=w_0 \to^{i_0} w_1 \to^{i_2}\dots \to^{i_{n-1}} w_n=v \text{ where } i_k\in \{\Box\}\cup\{(\varphi,l)\mid \varphi\in \mathscr{L}\}, \text{ for } k=1,\dots, n-1\] 
Similarly, given $Rvw$ there is some $m\geq 0$ such that:
\[v=w'_0 \to^{j_0} w'_1 \to^{j_2}\dots \to^{j_{m-1}} w'_n=v \text{ where } j_k\in \{\Box\}\cup\{(\varphi,l)\mid \varphi\in \mathscr{L}\}, \text{ for } k=1,\dots, m-1\] 
Reasoning as we did in the case of $R= \tilde{R}^{\Box_0\varphi}$, we conclude that $m=n=0$ and hence $w=v$. The case of $R=R'_r$ is analogous to the one just discussed, and we are done.

\end{proof}
\end{proposition}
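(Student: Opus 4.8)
The plan is to treat reflexivity, transitivity, and anti-symmetry separately, the first two being immediate and anti-symmetry being the only point requiring an argument. For reflexivity and transitivity: by construction every relation in $\tilde{\mathscr{R}}\setminus\{\tilde{W}\times\tilde{W}\}$ is the reflexive transitive closure $(\cdot)^*$ of some relation on $\tilde{W}$ --- namely $\to^{\Box_0\varphi}$ in the case of $\tilde{R}^{\Box_0\varphi}$, and a union of $\to^{\Box}$ with the $\to^{(\varphi,j)}$ in the case of $R'_l$ and $R'_r$ --- so it is automatically reflexive and transitive. This disposes of two thirds of the claim with essentially no work.

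For anti-symmetry, the key observation is that $\tilde{W}$ is a set of histories, i.e.\ finite sequences, and each step relation $\to^{\epsilon}$ moves a history $w$ only to a \emph{proper one-step extension} $(w,\epsilon,T)$ of $w$ for some $T\in W^c$. Hence, if $R$ denotes one of the three relations and $Rwv$ holds with $w\neq v$, then unfolding the $*$ yields a nonempty chain $w=w_0\to^{\epsilon_1}w_1\to^{\epsilon_2}\cdots\to^{\epsilon_n}w_n=v$ with $n\geq 1$, and iterating the extension observation shows that $v$ is literally the sequence $w$ followed by $(\epsilon_1,T_1,\dots,\epsilon_n,T_n)$ for suitable $T_i\in W^c$; in particular $\mathsf{len}(v)>\mathsf{len}(w)$. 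Thus each of $\tilde{R}^{\Box_0\varphi}$, $R'_l$, $R'_r$ is contained in the ``initial-segment'' relation on $\tilde{W}$, which is a genuine partial order and so anti-symmetric. Concretely: if $Rwv$ and $Rvw$ both hold, then $\mathsf{len}(w)\geq\mathsf{len}(v)\geq\mathsf{len}(w)$, which forces both chains to be empty, whence $w=v$. (Alternatively, one could invoke uniqueness of paths in the unravelled tree: $Rwv$ means there is an $\mathcal{R}$-path from $w$ to $v$ for the appropriate set $\mathcal{R}$ of $\to^{\epsilon}$-relations, so witnesses for $Rwv$ and $Rvw$ concatenate to an $\mathcal{R}$-path from $w$ to $w$, which by uniqueness must be the trivial length-zero path, giving $w=v$.)

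I do not anticipate a real obstacle here; the argument is purely structural. The only mildly delicate point is bookkeeping: the three families of relations must be handled uniformly, and the sole feature of the step relations $\to^{\epsilon}$ that is used is that they append material to histories, so the same ``lengths cannot both weakly increase unless nothing happens'' reasoning applies verbatim to $\tilde{R}^{\Box_0\varphi}$ and, through the union, to $R'_l$ and $R'_r$. Note also that the relation $\tilde{W}\times\tilde{W}$ is correctly excluded from the statement, since it fails anti-symmetry as soon as $\tilde{W}$ has more than one element.
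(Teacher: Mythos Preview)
Your proposal is correct and follows essentially the same approach as the paper: both arguments reduce anti-symmetry to the observation that each step relation $\to^{\epsilon}$ strictly extends histories, so the reflexive transitive closure can only relate $w$ to proper extensions of $w$ (or to $w$ itself). The paper phrases this via an explicit substitution leading to a sequence being a proper extension of itself, while you phrase it more compactly via the length function; these are the same idea.
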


\begin{proposition} \label{intersect l r} In $\tilde{M}$ we have:
$$R'_l \cap R'_d=\tilde{R}^{\Box}$$
\begin{proof}

\end{proof} ($\subseteq$) Let $(w,v)\in R'_l \cap R'_d$. Then we have $(w,v)\in R'_l$, i.e., $$(w,v)\in (\to^{\Box}\cup \bigcup\{\to^{(\varphi,l)}\mid\varphi\in \mathscr{L}\})^*$$
Hence, there is some $n\geq 0$ such that:
\[w=w_0 \to^{i_0} w_1 \to^{i_2}\dots \to^{i_{n-1}} w_n=v \text{ where } i_k\in \{\Box\}\cup\{(\varphi,l)\mid \varphi\in \mathscr{L}\}, \text{ for } k=1,\dots, n-1\] 
Similarly, we have $(w,v)\in R'_r$, i.e., $$(w,v)\in (\to^{\Box}\cup \bigcup\{\to^{(\varphi,r)}\mid\varphi\in \mathscr{L}\})^*$$
Hence, there is some $m\geq 0$ such that:
\[w=w'_0 \to^{j_0} w'_1 \to^{j_2}\dots \to^{j_{m-1}} w'_m=v \text{ where } j_k\in \{\Box\}\cup\{(\varphi,r)\mid \varphi\in \mathscr{L}\}, \text{ for } k=1,\dots, m-1\] 
By definition of $\to^{i_k}$, we have $w_1=(w,i_0,T_1)$ for some $T_1\in W^c$, $w_2=(w,i_0,T_1,i_1,T_2)$ for some $T_2\in W^c$, and proceeding in this way we get
\begin{equation} \label{first eq new}
w_n=v=(w,i_0,T_1,i_1,T_2,\dots, i_{n-1}, T_n) 
\end{equation}
where $T_i\in W^c$ and $i_k\in \{\Box\}\cup\{(\varphi,l)\mid \varphi\in \mathscr{L}\}$, for $k=1,\dots, n-1$. Reasoning in a similar way, we get 
\begin{equation} \label{sec eq new}
w'_m=v=(w,j_0,T'_1,j_1,T'_2,\dots, j_{m-1}, T'_m) 
\end{equation}
where $T'_i\in W^c$ and $j_k\in \{\Box\}\cup\{(\varphi,r)\mid \varphi\in \mathscr{L}\}$, for $k=1,\dots, m-1$.\\

Given the expressions \ref{first eq new} and \ref{sec eq new}, we have $w_n=v=w'_m$. Hence $n=m$ and for all $k<n$, $i_k=j_k$. Hence we must have $i_k=\Box=j_k$ for all $k<n$. This means that the path
\[w=w_0 \to^{i_0} w_1 \to^{i_2}\dots \to^{i_{n-1}} w_n=v\] 
can be rewritten as 
\[w=w_0 \to^{\Box} w_1 \to^{\Box}\dots \to^{\Box} w_n=v\]
which is an $\{\Box\}$-path from $w$ to $v$. Hence $(w,v)\in\tilde{R}^{\Box}=(\to^{\Box})^*$.\\

$(\supseteq)$ Let $(w,v)\in\tilde{R}^{\Box}=(\to^{\Box})^*$. Then there is some $n\geq 0$ such that:
\[w=w_0 \to^{\Box} w_1 \to^{\Box}\dots \to^{\Box} w_n=v\]
The $\{\Box\}$-path described above is also an $\{\Box\}\cup \bigcup\{(\varphi,l)\mid \varphi\in\mathscr{L}\}$-path and an $\{\Box\}\cup \bigcup\{(\varphi,r)\mid \varphi\in\mathscr{L}\}$-path. Hence we have $(w,v)\in R'_l$ and $(w,v)\in R'_r$. Thus $(w,v)\in R'_l \cap R'_r$.
\end{proposition}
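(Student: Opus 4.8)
The plan is to establish the two set inclusions separately. The inclusion $\tilde{R}^{\Box}\subseteq R'_l\cap R'_r$ is immediate, while the reverse inclusion $R'_l\cap R'_r\subseteq\tilde{R}^{\Box}$ is the substantive part, and it follows from the fact that the unravelling $\tilde{K}$ is a genuine tree, so that any two worlds are joined by at most one path (Lemma~\ref{path unique}).

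For $\tilde{R}^{\Box}\subseteq R'_l\cap R'_r$: every $\to^{\Box}$-step is also a step of $\to^{\Box}\cup\bigcup\{\to^{(\varphi,l)}\mid\varphi\in\mathscr{L}\}$ and of $\to^{\Box}\cup\bigcup\{\to^{(\varphi,r)}\mid\varphi\in\mathscr{L}\}$, so a $\to^{\Box}$-path from $w$ to $v$ already witnesses both $(w,v)\in R'_l$ and $(w,v)\in R'_r$; the case $w=v$ is covered by reflexivity of all three relations.

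For $R'_l\cap R'_r\subseteq\tilde{R}^{\Box}$: let $(w,v)\in R'_l\cap R'_r$; if $w=v$ the claim is trivial, so assume $w\neq v$. Unfolding the two reflexive transitive closures yields a path $p_l$ from $w$ to $v$ whose edge labels all lie in $\{\Box\}\cup\{(\varphi,l)\mid\varphi\in\mathscr{L}\}$, and a path $p_r$ from $w$ to $v$ whose edge labels all lie in $\{\Box\}\cup\{(\varphi,r)\mid\varphi\in\mathscr{L}\}$. I would then invoke the structure of the unravelling: a history of the form $(w,\epsilon,T)$ literally records its incoming label $\epsilon$ as its penultimate entry, so the sequence of worlds of any path determines its sequence of labels; and since histories are nested finite sequences whose length strictly increases along each $\to^{\epsilon}$-edge, there is at most one path between $w$ and $v$ (Lemma~\ref{path unique}, taken over the full family of labelled relations). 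Hence $p_l$ and $p_r$ are literally the same path, labels included. Every label of this common path therefore belongs to both $\{\Box\}\cup\{(\varphi,l)\mid\varphi\in\mathscr{L}\}$ and $\{\Box\}\cup\{(\varphi,r)\mid\varphi\in\mathscr{L}\}$; as $(\varphi,l)\neq(\psi,r)$ for all $\varphi,\psi$, each such label must be $\Box$. So the common path is a $\to^{\Box}$-path from $w$ to $v$, giving $(w,v)\in(\to^{\Box})^*=\tilde{R}^{\Box}$.

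The one step needing care is the coincidence of $p_l$ and $p_r$: this is where the tree structure of the unravelling is essential, and I would spell it out via Lemma~\ref{path unique} together with the observation that the edge labels of a path are recovered from its worlds. Everything else is purely combinatorial bookkeeping on finite sequences.
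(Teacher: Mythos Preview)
Your proposal is correct and follows essentially the same approach as the paper's proof. The only cosmetic difference is that you invoke Lemma~\ref{path unique} explicitly (together with the observation that labels are recoverable from the histories), whereas the paper re-derives path uniqueness in-line by writing $v$ out explicitly as the concatenation $(w,i_0,T_1,\dots,i_{n-1},T_n)=(w,j_0,T'_1,\dots,j_{m-1},T'_m)$ and reading off $n=m$ and $i_k=j_k$ directly from equality of finite sequences; both routes amount to the same tree-uniqueness argument.
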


\begin{proposition} \label{tilde is lex} $\tilde{M}$ is a $\mathsf{lex}$ model.
\begin{proof} To establish that $\tilde{M}$ is a $\mathsf{lex}$ model, we need to show that it meets the condition of a \textsf{REL} model and that $\tilde{Ag}=\mathsf{lex}$. That is, we have to show:
\begin{enumerate}[leftmargin=*]
\item $\tilde{\mathscr{R}}$ is a family of evidence, i.e., every $R\in \tilde{\mathscr{R}}$ is a preorder.
\item $\tilde{W}\times\tilde{W}\in \tilde{\mathscr{R}}$, i.e., the trivial evidence order is a piece of available evidence.
\item $\tilde{R}^{\Box} = \mathsf{lex}(\lan \tilde{\mathscr{R}}, \tilde{\preceq}\ran)$ (which given the definition of $\tilde{Ag}$, gives $\tilde{Ag}=\mathsf{lex}$ as required)
\end{enumerate}
Item 1 follows from \ref{all preorders}, and Item 2 follows from the definition of $\tilde{M}$. Hence Item 3 remains to be shown. Note first that by \ref{intersect l r}, we have $R'_l \cap R'_d=\tilde{R}^{\Box}$. \\ 

($\subseteq$) Suppose that $(w,v)\in \mathsf{lex}(\lan \tilde{\mathscr{R}}, \tilde{\preceq})$. Note that $\mathsf{lex}$ is given here by
\begin{equation} \label{lex ag can}
(w,v)\in \mathsf{lex}(\lan \tilde{\mathscr{R}},\tilde{\preceq}\ran) \text{ iff } \forall R' \in \mathscr{R} \ (R' wv \ \lor \ \exists R \in \tilde{\mathscr{R}} (R'\tilde{\prec} R \wedge R^< wv))
\end{equation}
Suppose for reductio that $(w,v)\not \in R'_l \cap R'_d$. Then $(w,v)\not \in R'_l$ or $(w,v)\not \in R'_r$. Without loss of generality, suppose $(w,v)\not \in R'_l$. Given \ref{lex ag can}, we have in particular:
\begin{equation} \label{lex ag can2}
(R'_l wv \ \lor \ \exists R \in \tilde{\mathscr{R}} (R'_l\tilde{\prec} R \wedge R^< wv))
\end{equation}
Note that the definition of $\tilde{\preceq}$ is such that $R'_l$ has no relation strictly above it other than $\tilde{W}\times \tilde{W}$. And $\tilde{W}\times \tilde{W}$ is symmetric and thus it is not the case that $(\tilde{W}\times \tilde{W})^<wv$. Hence the right disjunct in \ref{lex ag can2} is false. Therefore we must have $R'_l wv$, contradicting our assumption to the contrary.\\ 

Suppose that $(w,v)\in R'_l \cap R'_d$. Then $R'_l wv$ and $R'_r wv$. Suppose first that $w=v$. As $\mathsf{lex}(\lan \tilde{\mathscr{R}},\tilde{\preceq}\ran)$ is a preorder, we have $(w,v)\in \mathsf{lex}(\lan \tilde{\mathscr{R}},\tilde{\preceq}\ran)$ and we are done. Suppose now that $w\neq v$. By proposition \ref{all preorders}, $R'_l$ and $R'_r$ are antisymmetric. Thus from $w\neq v$, $R'_l wv$ and $R'_r wv$, we get $(R'_l)^< wv$ and $(R'_r)^< wv$. Hence, as we have 
$$R \tilde{\prec} R'_l, R'_r \text{ for all } R\in\tilde{\mathscr{R}}\setminus \{R'_l, R'_r \}$$ 
from the definition of $\mathsf{lex}$ we get $(w,v)\in \mathsf{lex}(\lan \tilde{\mathscr{R}}, \tilde{\preceq})$ as required. 
\end{proof}
\end{proposition}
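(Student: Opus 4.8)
The plan is to check directly the three defining requirements for $\tilde{M}$ to be a $\mathsf{lex}$-model: (i) every relation in $\tilde{\mathscr{R}}$ is a preorder; (ii) the trivial evidence order $\tilde{W}\times\tilde{W}$ lies in $\tilde{\mathscr{R}}$; and (iii) $\tilde{Ag}(\langle\tilde{\mathscr{R}},\tilde{\preceq}\rangle)=\mathsf{lex}(\langle\tilde{\mathscr{R}},\tilde{\preceq}\rangle)$. For (iii), note that by the definition of $\tilde{Ag}$ the claim reduces to the single identity $\tilde{R}^{\Box}=\mathsf{lex}(\langle\tilde{\mathscr{R}},\tilde{\preceq}\rangle)$, since on every other preordered family $\tilde{Ag}$ already returns $\mathsf{lex}$ by fiat.

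Requirements (i) and (ii) are immediate. Every member of $\tilde{\mathscr{R}}$ is either $\tilde{W}\times\tilde{W}$, which is reflexive and transitive, or a reflexive transitive closure ($\tilde{R}^{\Box_0\varphi}$, $R'_l$, $R'_r$), hence a preorder; this is exactly the content of Proposition~\ref{all preorders}, which in fact also yields the antisymmetry of $R'_l$ and $R'_r$ used below. Requirement (ii) holds by construction of $\tilde{\mathscr{R}}$.

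The real work is requirement (iii), and I would base it on two earlier facts: $R'_l\cap R'_r=\tilde{R}^{\Box}$ (Proposition~\ref{intersect l r}) and the antisymmetry of $R'_l,R'_r$ (Proposition~\ref{all preorders}). First I would read off the order type of $\tilde{\preceq}$: $R'_l$ and $R'_r$ are tied top elements, every other evidence relation lies strictly below both of them, and nothing lies strictly above $R'_l$ or $R'_r$ --- in particular $\tilde{W}\times\tilde{W}$ does not, because the only non-reflexive pairs in $\tilde{\preceq}$ have $R'_l$ or $R'_r$ as their second component. Unfolding Definition~\ref{lex def} with this in hand shows that $(w,v)\in\mathsf{lex}(\langle\tilde{\mathscr{R}},\tilde{\preceq}\rangle)$ holds iff $R'_l wv$ and $R'_r wv$ and, for every $\tilde{R}^{\Box_0\varphi}$ in $\tilde{\mathscr{R}}$, one of $\tilde{R}^{\Box_0\varphi}wv$, $(R'_l)^<wv$, $(R'_r)^<wv$ holds: the clauses for $R'_l$ and $R'_r$ themselves collapse to $R'_l wv$ and $R'_r wv$ since nothing strictly dominates them, and the clause for $\tilde{W}\times\tilde{W}$ is vacuous.

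It then suffices to observe that $R'_l wv\wedge R'_r wv$ already entails the remaining family of disjunctions, which I would verify by splitting on whether $w=v$. If $w=v$, reflexivity of $\tilde{R}^{\Box_0\varphi}$ gives $\tilde{R}^{\Box_0\varphi}ww$; if $w\neq v$, antisymmetry of $R'_l$ together with $R'_l wv$ gives $\neg R'_l vw$, hence $(R'_l)^<wv$. Thus $\mathsf{lex}(\langle\tilde{\mathscr{R}},\tilde{\preceq}\rangle)=\{(w,v)\mid R'_l wv\text{ and }R'_r wv\}=R'_l\cap R'_r$, which equals $\tilde{R}^{\Box}$ by Proposition~\ref{intersect l r}, establishing (iii). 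I expect the only genuinely error-prone point to be this last bookkeeping with $\mathsf{lex}$: one has to remember that the strict clauses $(R'_l)^<wv$, $(R'_r)^<wv$ fail on the diagonal, so the case $w=v$ must be dispatched separately via reflexivity rather than antisymmetry.
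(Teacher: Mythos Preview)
Your proposal is correct and follows essentially the same route as the paper: both arguments reduce Item~3 to the identity $\mathsf{lex}(\langle\tilde{\mathscr{R}},\tilde{\preceq}\rangle)=R'_l\cap R'_r$ via Proposition~\ref{intersect l r}, analyse the shape of $\tilde{\preceq}$ to see that the clauses for $R'_l$ and $R'_r$ force $R'_l wv\wedge R'_r wv$, and then discharge the remaining clauses by the $w=v$ / $w\neq v$ split using reflexivity and antisymmetry (Proposition~\ref{all preorders}). The only organisational difference is that you compute $\mathsf{lex}$ in one pass by showing the $\tilde{R}^{\Box_0\varphi}$-clauses are redundant given $R'_l wv\wedge R'_r wv$, whereas the paper argues the two inclusions separately; your reading of $\tilde{\preceq}$ (nothing strictly above $R'_l,R'_r$) is in fact cleaner than the paper's phrasing, which speaks of $\tilde{W}\times\tilde{W}$ as lying above $R'_l$ before dismissing it via $(\tilde{W}\times\tilde{W})^<=\emptyset$.
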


We now introduce the notion of a bounded aggregation-morphism. This is, as we will show, a truth-preserving map between \textsf{REL} models, which works similarly to standard bounded morphisms for Kripke models.

\begin{definition}[Bounded aggregation-morphism] Let $M=\langle W, \lan \mathscr{R}, \preceq\ran, V, Ag \rangle$ and $M'=\langle W', \lan \mathscr{R}', \preceq'\ran, V', Ag' \rangle$ be two \textsf{REL} models. A mapping $f: W \to W'$ is a bounded aggregation-morphism if the following hold:
\begin{enumerate}
\item Valuation condition: for all $w\in W$, $w\in V(p)$ iff $f(w)\in V'(p)$
\item Forth conditions:
\begin{enumerate}
\item[(a)] for all $R\in {\mathscr R}$, for all $w\in W$, there exists some $R'\in \mathscr{R}'$ such that $R'[ f(w) ] \subseteq \{ f(v)\mid Rwv\}$
\item[(b)] for all $w, v\in W$, if $Ag(\lan {\mathscr R},\preceq\ran)wv$ then $Ag'(\lan {\mathscr R}',\preceq'\ran) f(w) f(v)$
\end{enumerate}
\item Back conditions:
\begin{enumerate}
\item[(a)] for all $R' \in {\mathscr R}'$ and all $w\in W$ there exists some $R\in {\mathscr R}$ such that $\{f(v)\mid Rwv\} \subseteq R'[f(w)]$.
\item[(b)] for all $w\in W$, $v' \in W'$, if $Ag'(\lan {\mathscr R}',\preceq'\ran) f(w) v'$ then there exists some world $v\in W$ such that $Ag(\lan {\mathscr R},\preceq\ran)wv$ and $f(v)=v'$.
\end{enumerate}
\end{enumerate}
\end{definition}

\begin{proposition} \label{bounded invariant} Let $M=\langle W, \lan \mathscr{R}, \preceq\ran, V, Ag \rangle$ and $M'=\langle W', \lan \mathscr{R}', \preceq'\ran, V', Ag' \rangle$ be two \textsf{REL} models. Let $f: W \to W'$ be a surjective bounded aggregation-morphism. Then for all $w\in W$ and $\varphi\in\mathscr{L}$: $M,w\models \varphi$ iff $M',f(w)\models \varphi$. That is: modal satisfaction is invariant under surjective bounded aggregation-morphisms.
\begin{proof}
By induction on the structure of $\varphi$. The base case holds by the valuation condition. The boolean cases are shown by unfolding the definitions, so we consider the cases involving modalities.\\

Suppose $M,w\models \Diamond_0\psi$. Then for all $R\in \mathscr{R}$ there is some $v\in W$ such that $Rwv$ and $M,v\models \psi$. Now we want to show: $M,f(w)\models \Diamond_0\psi$. That is, for all $R'\in \mathscr{R}'$ there is some $v'\in W'$ such that $R'f(w)v'$ and $M',v'\models \psi$. Let $R'\in \mathscr{R}'$ be arbitrary. By the back condition 3(a), there is some $R\in {\mathscr R}$ such that $\{f(v)\mid Rwv\} \subseteq R'[f(w)]$. Hence given $Rwv$, we have $f(v)\in R'[f(w)]$. That is, $R'f(w)f(v)$. By induction hypothesis, given $M,v\models \psi$ we have $M',f(v)\models \psi$. As $R'$ was arbitrarily picked, this holds for all relations in $\mathscr{R}'$. Hence $M',f(w)\models \Diamond_0\psi$.\\
 
 Suppose now that $M',f(w)\models \Diamond_0\psi$. Then for all $R'\in \mathscr{R}'$ there is some $v'\in W'$ such that $R'f(w)v'$ and $M',v'\models \psi$. Now we want to show: $M,w\models \Diamond_0\psi$. That is, for all $R\in \mathscr{R}$ there is some $v\in W$ such that $Rwv$ and $M,v\models \psi$. Let $R\in \mathscr{R}$ be arbitrary. By the forth condition $2(a)$, there exists some $R'\in \mathscr{R}'$ such that $R'[ f(w) ] \subseteq \{ f(v)\mid Rwv\}$. We have $R'f(w)v'$ and $M',v'\models \psi$ for some $v'\in W'$. As $f$ is surjective, we have $v'=f(u)$ for some $u\in W$. Hence given $R'[ f(w) ] \subseteq \{ f(v)\mid Rwv\}$ and  $f(u)\in R'[ f(w) ]$, we get $f(u)\in \{ f(v)\mid Rwv\}$. Hence $Rwu$. By induction hypothesis, given $M',f(u)\models \psi$ we get $M,u\models \psi$. As $R$ was arbitrarily picked, this holds for all relations in $\mathscr{R}$. Hence we have $M,w\models \diamond_0\psi$.\\
 
Now suppose $M,w\models \Diamond\psi$. Then there is some $v\in W$ such that $ Ag(\ran\mathscr{R},\preceq\lan)wv$ and $M,v\models \psi$. By the forth condition $2(b)$, we have $Ag'(\ran\mathscr{R}',\preceq'\lan)f(w)f(v)$. By induction hypothesis, $M',f(v)\models \psi$. Hence $M',f(w)\models \psi$.\\
 
Lastly, suppose $M,f(w)\models \Diamond\psi$. Then there is some $v'\in W'$ such that $ Ag'(\ran\mathscr{R}',\preceq'\lan)wv$ and $M',v'\models \psi$. Hence by the back condition $3(b)$, there is some world $v\in W$ such that $ Ag(\ran\mathscr{R},\preceq\lan)wv$ and $f(v)=v'$. By induction hypothesis, we get $M,v\models \psi$. Hence $M,w\models \Diamond\psi$.

\end{proof}
\end{proposition}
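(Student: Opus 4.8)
The plan is to proceed by induction on the structure of $\varphi$, mimicking the classical proof that modal satisfaction is invariant under bounded morphisms, with the evidence family $\mathscr{R}$ and the aggregator $Ag$ taking over the role played in the Kripke case by a single accessibility relation. The atomic case is exactly the valuation condition, and the cases $\varphi = \neg\psi$ and $\varphi = \psi_1 \wedge \psi_2$ will follow by unfolding the semantic clauses and invoking the induction hypothesis. So the real content lies in the three modalities $\Box_0$, $\Box$ and $\forall$. Throughout I would phrase the induction hypothesis in the convenient form: for every world $v\in W$, $v\in\llb\psi\rrb_M$ iff $f(v)\in\llb\psi\rrb_{M'}$.

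For the $\Box_0$ case, recall that $M,w\models\Box_0\psi$ holds iff there is some $R\in\mathscr{R}$ with $R[w]\subseteq\llb\psi\rrb_M$. For the left-to-right direction I would take such an $R$ and feed it to Forth condition~2(a), obtaining $R'\in\mathscr{R}'$ with $R'[f(w)]\subseteq\{f(v)\mid Rwv\}$; since $R[w]\subseteq\llb\psi\rrb_M$, the induction hypothesis turns this into $R'[f(w)]\subseteq\llb\psi\rrb_{M'}$, so $M',f(w)\models\Box_0\psi$. The converse is the mirror image, run through Back condition~3(a): a witness $R'$ with $R'[f(w)]\subseteq\llb\psi\rrb_{M'}$ yields some $R\in\mathscr{R}$ with $\{f(v)\mid Rwv\}\subseteq R'[f(w)]$, and then every $R$-successor of $w$ lands in $\llb\psi\rrb_M$ by the induction hypothesis, so $M,w\models\Box_0\psi$. (One may instead argue through the dual $\Diamond_0$, whose clause ``every $R\in\mathscr{R}$ has an $R$-successor satisfying $\psi$'' is precisely what the forth and back conditions for $\mathscr{R}$ are tailored to; the bookkeeping is the same either way.)

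For $\Box$ I would dualize and work with $\Diamond$: $M,w\models\Diamond\psi$ iff some $Ag(\lan\mathscr{R},\preceq\ran)$-successor of $w$ satisfies $\psi$. The left-to-right direction is then Forth condition~2(b) followed by the induction hypothesis; the right-to-left direction uses Back condition~3(b), which is exactly strong enough because it hands back a genuine $f$-preimage $v$ of the witnessing successor $v'$, so the induction hypothesis applies to $v$. For $\forall$, note that $M,w\models\forall\psi$ iff $\llb\psi\rrb_M = W$; by the induction hypothesis this is equivalent to $f[W]\subseteq\llb\psi\rrb_{M'}$, and here \emph{surjectivity} of $f$ lets me upgrade this to $\llb\psi\rrb_{M'} = W'$, i.e. $M',f(w)\models\forall\psi$.

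I do not expect a genuine obstacle: this is a routine ``invariance under (bounded) morphism'' argument. The one place demanding care is getting the inclusion directions right in the $\Box_0$ case, since $\Box_0$ is an existential quantifier over evidence relations with a universal matrix, so the relevant forth and back conditions are image-\emph{inclusion} statements ($R'[f(w)]\subseteq\{f(v)\mid Rwv\}$ and its converse) rather than the usual successor-matching conditions, and one must line these up with the semantics without flipping a direction. The second thing to keep in mind is that surjectivity of $f$ is genuinely used, namely in the $\forall$ case, so that hypothesis cannot be dropped.
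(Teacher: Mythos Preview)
Your proposal is correct and follows essentially the same approach as the paper: induction on $\varphi$, with the modal cases handled by matching each semantic clause to the corresponding forth/back condition of the bounded aggregation-morphism. The only cosmetic differences are that the paper argues via the duals $\Diamond_0$ and $\Diamond$ (which you note is equivalent), and that you explicitly treat the $\forall$ case and identify where surjectivity is used, whereas the paper's proof leaves that case implicit.
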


\begin{proposition} \label{is bounded m} The map $\beta: \tilde{W}\to W^c$ is a surjective bounded aggregation-morphism.
\begin{proof} We need to check that $\beta$ satisfies the conditions of a surjective bounded aggregation-morphism.
\begin{enumerate}[leftmargin=*]
\item Surjectivity: Let $T\in W^c$ be arbitrary. We need to show that there is some $h\in\tilde{W}$ such that $\beta(h)=T$. Recall that we showed in \ref{is rel mod pre}.2. that $W^c\times W^c= R^{\Box_0\top}\in \mathscr{R^c}$. Hence $R^{\Box_0\top} T_0 T$. Thus the history $h=(T_0,\Box_0\top,T)$ is an element of $\tilde{W}$ with $\beta(h)=T$, as required.
\item Valuation condition. This follows from the definition of $\tilde{V}$.
\item Forth conditions:
\begin{enumerate}
\item[(a)] We need to show that for all $R\in \tilde{{\mathscr R}}$, for all $w\in\tilde{W}$, there exists some $R''\in \mathscr{R}^c$ such that $R''[ \beta(w) ] \subseteq \{ \beta(v)\mid Rwv\}$. Let $R\in \tilde{{\mathscr R}}$ and $w\in\tilde{W}$ be arbitrary. 
Suppose first that $R= \tilde{R}^{\Box_0\varphi}$ for some $\varphi$ with $\exists\Box_0\varphi\in T_0$. Consider $R^{\Box_0\varphi}\in \mathscr{R}^c$. Take any $T\in R^{\Box_0\varphi}[\beta(w)]$, i.e., $R^{\Box_0\varphi}\beta(w) T$. We will show that $T\in \{ \beta(v)\mid \tilde{R}^{\Box_0\varphi}wv\}$. Note that, given $R^{\Box_0\varphi}\beta(w) T$, the history $w'=(w,\Box_0\varphi, T)$ is in $\tilde{W}$. This means that $w\to^{\Box_0\varphi}w'$. Hence $(\to^{\Box_0\varphi})^* ww'$, i.e., $\tilde{R}^{\Box_0\varphi}w w'$. Given $\beta(w')= T$, we get $T\in \{ \beta(v)\mid \tilde{R}^{\Box_0\varphi}wv\}$, as required.\\

Suppose now that $R= R'_l=(\to^\Box \cup \bigcup\{\to^{(\varphi,l)} \mid \varphi\in \mathscr{L}\})^*$. Consider $R'=(R^\Box \cup \bigcup\{R^{(\varphi,l)}\})^*\in \mathscr{R}^c$. Take any $T\in R'[\beta(w)]$, i.e., $R'\beta(w) T$. We will show that $T\in \{ \beta(v)\mid R'_l wv\}$. Given $R'\beta(w) T$, for some $n\geq 0$, there is a path:
\[\beta(w)=S_0 R^{\epsilon_0} S_1 R^{\epsilon_2},\dots ,R^{\epsilon_{n-1}} S_n=T\] 
where $S_i\in W^c$, $\epsilon_k\in \{\Box\}\cup\{(\varphi,l)\mid \varphi\in \mathscr{L}\}$, for $k<n$. Hence there are histories  $w_1=(w,\epsilon_0,S_1)$, $w_2=(w,\epsilon_0,S_1,\epsilon_1,S_2)$, up to $w_n=(w,\epsilon_0,S_1,\epsilon_1,S_2,\dots, \epsilon_{n-1}, T)$. Hence, by definition of $\to^{\epsilon_k}$, for each $k<n$, we have $w_k\to^{\epsilon_k}w_{k+1}$. Hence there is a path
\[w=w_0 \to^{\epsilon_0} w_1 \to^{\epsilon_2},\dots ,\to^{\epsilon_{n-1}} w_n\] 
And hence $R'_l w w_n$. Given $\beta(w_n)=T$, we get $T\in \{ \beta(v)\mid R'_l wv\}$, as required.\\ 

The case of $R=R'_r$ is analogous to the one above. Hence we have left the case $R=\tilde{W}\times \tilde{W}$. Consider $R^{\Box_0\top}=W^c\times W^c\in \mathscr{R}^c$. Take any $T\in R^{\Box_0\top}[\beta(w)]$, i.e., $R^{\Box_0\top}\beta(w) T$. This just means that $T\in W^c$. We will show that $T\in \{ \beta(v)\mid (w,v)\in \tilde{W}\times \tilde{W}\}=\{ \beta(v)\mid v\in \tilde{W}\}$. As $\beta$ is surjective, we know that there is some $u\in \tilde{W}$ such that $\beta(u)=T$, and we are done.

\item[(b)] We need to show that for all $w, v\in \tilde{W}$, if $\tilde{Ag}(\lan \tilde{{\mathscr R}},\tilde{\preceq}\ran)wv$ then $Ag^c(\lan {\mathscr R}^c,\preceq^c\ran) \beta(w) \beta(v)$. Let $w, v\in \tilde{W}$ be arbitrary and suppose that $\tilde{Ag}(\lan \tilde{{\mathscr R}},\tilde{\preceq}\ran)wv$. By proposition \ref{tilde is lex}.3, given $\tilde{Ag}(\lan \tilde{{\mathscr R}},\tilde{\preceq}\ran)wv$ we have $\tilde{R}^\Box wv$, i.e., $(\to^\Box)^* wv$. Hence, for some $n\geq 0$, there is a path:
\[w=w_0 \to^{\Box} w_1 \to^{\Box}\dots \to^{\Box} w_n=v\]
Hence there are histories $w_1=(w,\Box,S_1)$, $w_2=(w,\Box,S_1,\Box,S_2)$, up to $w_n=v=(w,\Box,S_1,\Box,S_2,\dots, \Box, S_n)$. Hence by definition of $w_n$ we have
$$\beta(w) R^\Box S_1 R^\Box S_2,\dots, R^\Box S_n$$
And since $R^\Box$ is transitive, we get $R^\Box \beta(w) S_n$, i.e.,  $R^\Box \beta(w) \beta(v)$, as required.
\end{enumerate}
\item Back conditions:
\begin{enumerate}
\item[(a)] We need to show that for all $R''\in {\mathscr R}^c$ and all $w\in \tilde{W}$ there exists some $R\in \tilde{{\mathscr R}}$ such that $\{\beta(v)\mid Rwv\} \subseteq R''[\beta(w)]$. Let $R'' \in {\mathscr R}^c$ and $w\in \tilde{W}$ be arbitrary. We reason by cases. First, suppose that $R''= R^{\Box_0\varphi}$. Consider $\tilde{R}^{\Box_0\varphi}\in \tilde{{\mathscr R}}$. We will show that $\{\beta(v)\mid \tilde{R}^{\Box_0\varphi}wv\} \subseteq R^{\Box_0\varphi}[\beta(w)]$. Take any $\beta(u)\in \{\beta(v)\mid  \tilde{R}^{\Box_0\varphi}wv\}$. We have $\tilde{R}^{\Box_0\varphi}wu$, i.e., $(\to^{\Box_0\varphi})^*wu$. Hence for some $n\geq 0$, there is a path:
\[w=w_0 \to^{\Box_0\varphi} w_1 \to^{\Box_0\varphi},\dots ,\to^{\Box_0\varphi} w_n=u\] 
where $w_i\in \tilde{W}$, for $k\leq n$. Hence there are histories  $w_1=(w,\Box_0\varphi,S_1)$, $w_2=(w,\Box_0\varphi,S_1,\Box_0\varphi,S_2)$, up to $w_n=u=(w,\Box_0\varphi,S_1,\Box_0\varphi,S_2,\dots, \Box_0\varphi, S_n)$. Hence, by definition of $w_n$, we have
\[\beta(w) R^{\Box_0} S_1 R^{\Box_0},\dots ,R^{\Box_0} \beta(u)\] 
And since $R^{\Box_0}$ is transitive, we get $R^{\Box_0}\beta(w)\beta(u)$, as required.\\ 

Suppose now that $R''=R'\in W^c$. Consider $R'_l\in \tilde{{\mathscr R}}$. We will show that $\{\beta(v)\mid R'_l wv\} \subseteq R'[\beta(w)]$. Take any $\beta(u)\in \{\beta(v)\mid  R'_l wv\}$. We have $R'_l wu$, i.e., $(\to^{\Box}\cup\bigcup\{\to^{(\varphi,l)}\mid \varphi\in \mathscr{L}\})^*wu$. Hence for some $n\geq 0$, there is a path:
\[w=w_0 \to^{\epsilon_0} w_1 \to^{\epsilon_1},\dots ,\to^{\epsilon_{n-1}} w_n=u\] 
where $w_i\in \tilde{W}$, $\epsilon_k\in \{ \Box \} \cup \bigcup\{(\varphi,l)\mid \varphi\in \mathscr{L}\}$ for $k< n$. By definition of $\to^{\epsilon_k}$, $w_1=(w,\epsilon_0,S_1)$ for some $S_1\in W^c$, $w_2=(w,\epsilon_0,S_1,\epsilon_1\varphi,S_2)$ for some $S_2\in W^c$, up to $w_n=u=(w,\epsilon_0,S_1,\epsilon_1,S_2,\dots, \epsilon_{n-1}, S_n)$ where $S_i\in W^c$ for $i\leq n$. Hence, by definition of $w_n$, we have a path 
\[\beta(w) R^{\epsilon_0} S_1 R^{\epsilon_1} S_2,\dots ,R^{\epsilon_{n-1}} \beta(u)\] 
Since $R'=(R^\Box \cup \bigcup\{R^{(\varphi,l)}\})^*$, the path above is a path from $\beta(w)$ to $\beta(u)$ along $R'$, i.e., we have $R'\beta(w)\beta(u)$, as required.
\item[(b)] We need to show that for all $w\in \tilde{W}$, $T \in W^c$, if $Ag^c(\lan {\mathscr R}^c,\preceq^c\ran) \beta(w) T$ then there exists some history $v\in \tilde{W}$ such that $\tilde{Ag}(\lan \tilde{{\mathscr R}},\tilde{\preceq}\ran)wv$ and $\beta(v)=T$. Let $w\in \tilde{W}$ and $T \in W^c$ be arbitrary, and suppose that $Ag^c(\lan {\mathscr R}^c,\preceq^c\ran) \beta(w) T$, i.e., $R^\Box \beta(w)T$. Then the history $w'=(w,\Box, T)$ is in $\tilde{W}$ and $\beta(w')=T$, as required. 
\end{enumerate}
\end{enumerate}
\end{proof}
\end{proposition}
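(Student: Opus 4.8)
The plan is to verify directly that $\beta$ satisfies each clause in the definition of a surjective bounded aggregation-morphism, exploiting the fact that a history in $\tilde{W}$ is nothing but a path through $M^c$ starting at $T_0$, and that $\beta$ simply reads off its endpoint. Surjectivity is easy: since $W^c\times W^c=R^{\Box_0\top}\in\mathscr{R}^c$ (Proposition~\ref{is rel mod pre}), for any $T\in W^c$ we have $R^{\Box_0\top}T_0T$, so $(T_0,\Box_0\top,T)$ is a legal history with $\beta$-value $T$. The valuation condition is immediate from $\tilde{V}(p)=\{w\mid\beta(w)\in V^c(p)\}$.

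The heart of the argument is the forth and back conditions for the basic-evidence relations (clauses 2(a) and 3(a)). The relations of $\tilde{M}$ were defined precisely so as to mirror those of $M^c$: $\tilde{R}^{\Box_0\varphi}=(\to^{\Box_0\varphi})^*$ pairs with $R^{\Box_0\varphi}$; both copies $R'_l$ and $R'_r$ pair with the canonical relation $R'=(R^{\Box}\cup\bigcup_\varphi R^{(\varphi,l)})^*$ (using $R^{(\varphi,l)}=\{(T,F(T,\varphi))\}$); and $\tilde{W}\times\tilde{W}$ pairs with $R^{\Box_0\top}=W^c\times W^c$. For clause 2(a), given $R\in\tilde{\mathscr{R}}$ and a history $w$, I would pick the matching $R''\in\mathscr{R}^c$ and prove $R''[\beta(w)]\subseteq\{\beta(v)\mid Rwv\}$ by lifting each single $R''$-edge out of $\beta(w)$ to a one-step history extension $w\to^{\epsilon}(w,\epsilon,T)$ and chaining such extensions so as to lift an entire $R''$-walk in $M^c$ to an $R$-walk in $\tilde{M}$ (this is where the reflexive transitive closures are used); for the trivial relation $\tilde{W}\times\tilde{W}$ this last step also needs surjectivity. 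Clause 3(a) is the mirror image: given $R''\in\mathscr{R}^c$ and a history $w$, I would pick the matching $R\in\tilde{\mathscr{R}}$ and project any history path $w=w_0\to^{\epsilon_0}\cdots\to^{\epsilon_{n-1}}w_n$ witnessing $Rw w_n$ down to a walk $\beta(w)R^{\epsilon_0}\beta(w_1)\cdots R^{\epsilon_{n-1}}\beta(w_n)$ in $M^c$ (each prefix $(w,\epsilon_0,S_1,\dots)$ being a legal history forces $\beta(w_k)R^{\epsilon_k}\beta(w_{k+1})$), and then the transitivity facts of Proposition~\ref{is rel mod pre} together with the definition of $R'$ collapse this walk to $R''\beta(w)\beta(w_n)$.

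Finally, for the aggregated-evidence clauses 2(b) and 3(b) the crucial input is Proposition~\ref{tilde is lex}: on the designated family one has $\tilde{Ag}(\langle\tilde{\mathscr{R}},\tilde{\preceq}\rangle)=\tilde{R}^{\Box}=(\to^{\Box})^*$, while $Ag^c(\langle\mathscr{R}^c,\preceq^c\rangle)=R^{\Box}$. Then 2(b) reduces to projecting a $\to^{\Box}$-walk from $w$ to $v$ onto an $R^{\Box}$-walk from $\beta(w)$ to $\beta(v)$ and collapsing it by transitivity of $R^{\Box}$; and 3(b) reduces to observing that from $R^{\Box}\beta(w)T$ the extension $(w,\Box,T)$ is a legal history with $w\to^{\Box}(w,\Box,T)$ and $\beta$-value $T$, witnessing $\tilde{Ag}(\langle\tilde{\mathscr{R}},\tilde{\preceq}\rangle)w(w,\Box,T)$. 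I expect the main obstacle to be organizing the case analysis in 2(a) and 3(a) cleanly, in particular getting the pairing between $\tilde{\mathscr{R}}$ and $\mathscr{R}^c$ right for the two copies $R'_l,R'_r$ and for the trivial relation, and carefully checking that lifting or projecting along the $*$ yields a legal history (respectively, a legal walk in $M^c$) at each intermediate step.
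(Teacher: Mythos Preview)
Your proposal is correct and follows essentially the same approach as the paper's proof: the same surjectivity witness $(T_0,\Box_0\top,T)$, the same pairing of relations ($\tilde{R}^{\Box_0\varphi}\leftrightarrow R^{\Box_0\varphi}$, $R'_l,R'_r\leftrightarrow R'$, $\tilde{W}\times\tilde{W}\leftrightarrow R^{\Box_0\top}$), the same lift/project technique for walks through the reflexive transitive closures in clauses 2(a)/3(a), and the same use of Proposition~\ref{tilde is lex} to reduce the aggregator clauses 2(b)/3(b) to the $\tilde{R}^{\Box}$--$R^{\Box}$ correspondence. The paper carries out exactly the case analysis you anticipate as the main obstacle.
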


With the results established above, we can now show the main result. 

\begin{claim}$\mathsf{L}_\mathsf{lex}$ is complete w.r.t. $\mathsf{lex}$ models.

\begin{proof}
Let $\Gamma$ be a $\mathsf{L}_\mathsf{lex}$-consistent set of formulas. By Lindenbaum's Lemma, $\Gamma$ can be extended to a maximal consistent set $T_0$. Choose any canonical pre-model $M^c$
for $T_0$. By Lemma 14, $M^c,T_0\models \varphi$ for all $\varphi\in T_0$. Let $\tilde{K}$ be the unraveling of $M^c$ around $T_0$ and let $\tilde{M}$ be the $\mathsf{lex}$ model generated from $\tilde{K}$. Note that the history $(T_0)\in \tilde{W}$. Let $\beta: \tilde{W}\to W^c$ be the map defined above. By proposition \ref{is bounded m}, $\beta$ is a surjective bounded aggregation-morphism. By  proposition \ref{bounded invariant}, we have $M^c,T_0\models \psi$ iff $\tilde{M}, \beta(T_0)\models \psi$. Hence, in particular, $\tilde{M}, \beta(T_0)\models \varphi$ for all $\varphi\in T_0$.
\end{proof}

\end{claim}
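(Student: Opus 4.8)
The plan is to prove the two halves separately, since they rest on quite different machinery, and to dispatch soundness first. \textbf{Soundness} for both systems is routine: I would check that each listed axiom is valid on the relevant model class and that the rules preserve validity. The only non-immediate checks are Pullout and $\Box_0\varphi\to\Box\varphi$ for $\cap$-models (using that $Ag_\cap$ outputs $\bigcap\mathscr{R}$, so the combined relation refines every basic evidence relation), and Universality for $\Box$ in the $\mathsf{lex}$ case (using $W^2\in\mathscr{R}$ together with the definition of $\mathsf{lex}$). For the \textbf{completeness of $\mathsf{L}_\cap$}, I would exploit the bridge to neighbourhood evidence logics already established. Since $\mathsf{L}$ is strongly complete and has the finite model property with respect to \textsf{NEL} models (Theorem 6 of \cite{Baltag2016}), any $\mathsf{L}_\cap$-consistent set $\Gamma$ is satisfied at a world $w$ of some \emph{finite}, hence feasible, \textsf{NEL} model $M$. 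Proposition 1 then gives $Rel(M),w\models\varphi$ for every $\varphi\in\Gamma$, and $Rel(M)$ is by construction a $\cap$-model; so $\Gamma$ is satisfiable on a $\cap$-model, which is exactly strong completeness.

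The substantial work is the \textbf{completeness of $\mathsf{L}_\mathsf{lex}$}, for which I would follow a three-stage canonical construction. First I build, for each consistent theory $T_0$, a canonical \emph{pre-model} $M^c$ whose worlds are the maximally consistent theories reachable from $T_0$ under $R^\forall$, whose basic evidence relations are the $R^{\Box_0\varphi}$ (plus one designated relation $R'$ assembled from $R^\Box$ and a choice function $F$), and whose aggregator is \emph{defined by cases} to return $R^\Box$ on the actual canonical family $\lan\mathscr{R}^c,\preceq^c\ran$ and the trivial relation otherwise. After verifying $M^c$ is a genuine \textsf{REL} model, I prove three Existence Lemmas (for $\forall$, $\Box$, and $\Box_0$), each by a Lindenbaum-style consistency argument that leans on the matching axioms — $\mathsf{S5}$ for $\forall$, $\mathsf{S4}$ for $\Box$, and Pullout together with $\mathsf{4}_{\Box_0}$ and the Monotonicity rule for $\Box_0$. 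The Truth Lemma $\llb\varphi\rrb_{M^c}=\|\varphi\|$ then follows by induction on complexity, giving completeness with respect to pre-models.

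Finally I convert the pre-model into an honest $\mathsf{lex}$-model. I would unravel $M^c$ around $T_0$ into a tree $\tilde K$ of histories, and then define $\tilde M$ using the key device of keeping \emph{two} copies $R'_l,R'_r$ of the combined relation, each the reflexive--transitive closure of $\to^\Box$ together with the left/right copies of the $F$-edges. Because these copies are antisymmetric on the tree and intersect exactly in $\tilde R^\Box=(\to^\Box)^*$, and because placing them jointly at the top of the priority order makes the lexicographic rule compute precisely this intersection, one obtains $\mathsf{lex}(\lan\tilde{\mathscr R},\tilde\preceq\ran)=\tilde R^\Box$; hence $\tilde M$ is a bona fide $\mathsf{lex}$-model. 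The last-theory map $\beta:\tilde W\to W^c$ is then shown to be a surjective \emph{bounded aggregation-morphism}, and since modal satisfaction is invariant under such maps, truth transfers from $\tilde M$ to $M^c$, yielding $\tilde M,(T_0)\models\varphi$ for all $\varphi\in T_0$.

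\textbf{The main obstacle} is exactly this last step. The canonical aggregator is not literally $\mathsf{lex}$ — it is patched in by cases — so one cannot read off a $\mathsf{lex}$-model directly. The crux is engineering the unravelling so that $\mathsf{lex}$ applied to the resulting equally-prioritised top evidence genuinely reproduces the combined canonical relation; this is what forces the two-copy construction and the antisymmetry argument, and it is where the invariance-under-bounded-aggregation-morphisms machinery must be set up with care.
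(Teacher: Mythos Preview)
Your proposal is correct and follows essentially the same approach as the paper: the three-stage argument (canonical pre-model with case-defined aggregator and Existence/Truth Lemmas; unravelling around $T_0$ into a tree of histories; the two-copy device $R'_l,R'_r$ placed at the top of the priority so that their intersection $\tilde R^{\Box}$ is recovered by $\mathsf{lex}$, together with the surjective bounded aggregation-morphism $\beta$) matches the paper's construction in all key respects. You also correctly identify the main obstacle, namely that the canonical aggregator is only $\mathsf{lex}$ \emph{after} the unravelling-and-duplication step, which is exactly where the paper invests its effort.
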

\subsection*{PROOF OF THEOREM 2}

We recall the theorem:

\dynamicscap*

\noindent We prove the soundness and completeness of $\mathsf{L}^!$, $\mathsf{L}^+$ and $\mathsf{L}^\Uparrow$ separately.

\subsubsection*{Soundness and completeness of $\mathsf{L}^!$.} 
\begin{claim} $\mathsf{L}^!$ is sound w.r.t. $\cap$-models.

\begin{proof} A straighforward validity check. 

\end{proof}
\end{claim}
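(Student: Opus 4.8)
The plan is to exploit the fact that $\mathsf{L}^!$ is obtained from the static system $\mathsf{L}$ by adding only the six reduction-axiom schemas $\text{PA1}_\cap$--$\text{PA6}_\cap$ and no new inference rules. Since $\mathsf{L}$ is sound with respect to $\cap$-models (part of Theorem 1) and its inference rules are already known to preserve validity, it suffices to verify that each of the six new schemas is valid on every $\cap$-model. So I would fix an arbitrary $\cap$-model $M=\langle W,\mathscr{R},V,Ag_\cap\rangle$, a world $w\in W$, and formulas $\varphi,\psi,\psi'$, abbreviate $P\coloneqq\llb\varphi\rrb_M$, and check each biconditional at $(M,w)$ by unfolding the clause ``$M,w\models[!\varphi]\psi$ iff $M,w\models\varphi$ implies $M^{!P},w\models\psi$''. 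Whenever $M,w\not\models\varphi$ both sides of each schema are vacuously true, so the only real work is in the subcase $M,w\models\varphi$, i.e.\ $w\in P$.

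Two elementary observations do the heavy lifting. First, because $[!\varphi]$ is interpreted via a partial function, $M,v\models\varphi\to[!\varphi]\psi$ is equivalent to $M,v\models[!\varphi]\psi$ (the pattern $a\to(a\to b)$ collapses to $a\to b$); this directly yields $\text{PA2}_\cap$ and $\text{PA3}_\cap$ and lets the guard $\varphi$ nested inside the modalities in $\text{PA4}_\cap$ and $\text{PA6}_\cap$ be inserted or absorbed at will. Second, one must relate the accessibility structure of $M^{!P}$ to that of $M$: for $\Box_0$, whenever $w\in P$ the condition $(R\cap P^2)wv$ is equivalent to ``$Rwv$ and $v\in P$'', and the existential quantifier over $\mathscr{R}^{!P}=\{R\cap P^2\mid R\in\mathscr{R}\}$ matches the one over $\mathscr{R}$; for $\Box$ the corresponding input is the commutation fact that the intersection rule survives restriction, $Ag_\cap^{!P}(\langle\mathscr{R}^{!P},\preceq^{!P}\rangle)=\bigcap\mathscr{R}^{!P}=\bigl(\bigcap\mathscr{R}\bigr)\cap P^2=Ag_\cap(\langle\mathscr{R},\preceq\rangle)\cap P^2$, after which $\text{PA5}_\cap$ runs exactly parallel to $\text{PA4}_\cap$.

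Concretely I would take the cases in the order $\text{PA1}_\cap$ (atoms: $w\in V^{!P}(p)=V(p)\cap P$ iff $w\in V(p)$, using $w\in P$), then $\text{PA2}_\cap$ and $\text{PA3}_\cap$ (Boolean, by the partial-function semantics), then $\text{PA4}_\cap$ and $\text{PA5}_\cap$ (via the relation and aggregator computations just described), and finally $\text{PA6}_\cap$ (since $W^{!P}=P$, the universal modality on $M^{!P}$ becomes a bounded universal over $P$ in $M$). Each case is a short chain of equivalences and I do not anticipate a genuine obstacle; the only thing requiring care is bookkeeping the domain change from $W$ to $P$ and remembering that every nested occurrence of $[!\varphi]$ announces $\llb\varphi\rrb_M$, not $\llb\varphi\rrb_{M^{!P}}$.
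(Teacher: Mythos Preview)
Your proposal is correct and is exactly what the paper's proof amounts to: the paper literally writes only ``A straightforward validity check'' and nothing more, so you have simply spelled out the details of that check. The only minor slip is that when you mention ``the guard $\varphi$ nested inside the modalities in $\text{PA4}_\cap$ and $\text{PA6}_\cap$'', you presumably meant $\text{PA4}_\cap$ and $\text{PA5}_\cap$ (the two modal clauses with an explicit inner $\varphi\to$); but your observation that $[!\varphi]\psi$ is equivalent to $\varphi\to[!\varphi]\psi$ covers all three cases anyway, so this does not affect the argument.
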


\noindent \textit{Claim}. $\mathsf{L}^!$ is complete w.r.t $\cap$-models.
\begin{proof} The proof is standard, following the approach presented, e.g., in \cite{vanDitmarsch2007}, Chapter 7.
\end{proof}

\subsubsection*{Soundness and completeness of $\mathsf{L}^+$.} As before, we first consider soundness.

\begin{claim}  $\mathsf{L}^+$ is sound w.r.t $\cap$-models.
\begin{proof} Let $M=\langle W, \mathscr{R}, V, Ag_\cap\rangle$ be a $\cap$-model, $w$ a world in $M$, $\pi$ be an evidence program with normal form $\bigcup_{s\in S_0(I)}(?s(\bm{\varphi},\bm{\psi}); A ; ?\psi_{s_{\mathsf{len}(s)}}) \cup (?\top)$. 
\begin{enumerate}
\item Axiom $\text{EA4}_\cap$: We first prove the following:
\begin{claim} $\llb \pi\rrb_M[w]\subseteq \llb [+ \pi]\chi \rrb_{M}$ iff $M,w\models [+  \pi]\chi \wedge \bigwedge_{s\in S_0(I)}( s(\bm{\varphi},\bm{\psi}) \to \forall (\psi_{s_{\mathsf{len}(s)}} \to [+ \pi]\chi))$.
\begin{proof}
($\Rightarrow$) Suppose $\llb \pi\rrb_M[w]\subseteq \llb [+ \pi]\chi \rrb_{M}$. As $\pi$ is a  $*$-program, $\llb \pi\rrb_M$ is reflexive and thus $M,w\models [+ \pi]\chi$. It remains to be shown that 
\[M,w\models \bigwedge_{s\in S_0(I)}( s(\bm{\varphi},\bm{\psi}) \to \forall (\psi_{s_{\mathsf{len}(s)}} \to [+ \pi]\chi))\]
Take any $s\in S_0(I)$ and suppose that $M,w\models s(\bm{\varphi},\bm{\psi})$. We need to show that $M,w\models \forall (\psi_{s_{\mathsf{len}(s)}} \to [+ \pi]\chi)$. Take any $v\in W$ and suppose $M,v\models \psi_{s_{\mathsf{len}(s)}}$. If we show that $M,v\models [+ \pi]\chi$, we are done. Given $M,w\models s(\bm{\varphi},\bm{\psi})$ and $M,v\models \psi_{s_{\mathsf{len}(s)}}$, by Proposition \ref{x sees y}, we have $(w,v)\in \llb ?s(\bm{\varphi},\bm{\psi}); A ; ?\psi_{s_{\mathsf{len}(s)}}\rrb_M$. Thus 
\[(w,v)\in \bigcup_{s\in S_0(I)}\llb?s(\bm{\varphi},\bm{\psi}); A ; ?\psi_{s_{\mathsf{len}(s)}}\rrb_M\]
Hence as 
\small
\begin{flalign*}
\llb\pi\rrb_M & =\llb\bigcup_{s\in S_0(I)}\big(?s(\bm{\varphi},\bm{\psi}); A ; ?\psi_{s_{\mathsf{len}(s)}}\big) \cup (?\top)\rrb_M \\
 & = \llb\bigcup_{s\in S_0(I)}\big(?s(\bm{\varphi},\bm{\psi}); A ; ?\psi_{s_{\mathsf{len}(s)}}\big)\rrb_M \cup \llb ?\top\rrb_M \\
 & = \bigcup_{s\in S_0(I)}\llb?s(\bm{\varphi},\bm{\psi}); A ; ?\psi_{s_{\mathsf{len}(s)}}\rrb_M \cup \llb ?\top\rrb_M 
\end{flalign*}
\normalsize
we have $(w,v)\in \llb\pi\rrb_M$. Hence, given $\llb \pi\rrb_M[w]\subseteq \llb [+ \pi]\chi \rrb_{M}$ we have $M,v\models [+ \pi]\chi$, as required.

$(\Leftarrow$) Suppose that $M,w\models [+ \pi]\chi \wedge \bigwedge_{s\in S_0(I)}( s(\bm{\varphi},\bm{\psi}) \to \forall (\psi_{s_{\mathsf{len}(s)}} \to [+ \pi]\chi))$. We will show that $\llb \pi\rrb_M[w]\subseteq \llb [+ \pi]\chi \rrb_{M}$. Take any $v$ and suppose $(w,v)\in \llb \pi\rrb_M$. We need to show that $v\in \llb [+ \pi]\chi \rrb_{M}$. If $v=w$, given $M,w\models [+ \pi]\chi$ we are done. So suppose $v\neq w$. Note that 
\small
\begin{flalign*}
& (w,v) \in \llb\pi\rrb_M \\
\text{ iff } & (w,v)\in \llb\bigcup_{s\in S_0(I)}?\big(s(\bm{\varphi},\bm{\psi}); A ; ?\psi_{s_{\mathsf{len}(s)}}\big) \cup (?\top)\rrb_M\\
\text{ iff } & (w,v)\in  \llb\bigcup_{s\in S_0(I)}\big(?s(\bm{\varphi},\bm{\psi}); A ; ?\psi_{s_{\mathsf{len}(s)}}\big)\rrb_M\text{ or } (w,v)\in\llb?\top\rrb_M\\
\text{ iff } & (w,v)\in  \llb\bigcup_{s\in S_0(I)}\big(?s(\bm{\varphi},\bm{\psi}); A ; ?\psi_{s_{\mathsf{len}(s)}}\big)\rrb_M \text{ or } w=v\\
\text{ iff } & (w,v)\in  \llb\bigcup_{s\in S_0(I)}\big(?s(\bm{\varphi},\bm{\psi}); A ; ?\psi_{s_{\mathsf{len}(s)}}\big)\rrb_M \ \ \ (\text{as } w\neq v \text{ by assumption })\\
\text{ iff } & (w,v)\in \bigcup_{s\in S_0(I)} \llb?s(\bm{\varphi},\bm{\psi}); A ; ?\psi_{s_{\mathsf{len}(s)}}\rrb_M\\
\text{ iff } &\text{ for some } s'\in S_0(I), (w,v)\in \llb?s'(\bm{\varphi},\bm{\psi}); A ; ?\psi_{s'_{\mathsf{len}(s')}}\rrb_M\\
\text{ iff } &\text{ for some } s'\in S_0(I), w\in \llb s'(\bm{\varphi},\bm{\psi})\rrb_M \text{ and } v\in\llb\psi_{s'_{\mathsf{len}(s')}}\rrb_M \ \ \ \text{(by Proposition } \ref{x sees y})
\end{flalign*}
\normalsize
Since we have $M,w\models \bigwedge_{s\in S_0(I)}( s(\bm{\varphi},\bm{\psi}) \to \forall (\psi_{s_{\mathsf{len}(s)}} \to [+ \pi]\chi))$, we get in particular
\[M,w\models s'(\bm{\varphi},\bm{\psi}) \to \forall (\psi_{s'_{\mathsf{len}(s')}} \to [+ \pi]\chi)\]
Thus from $w\in \llb s'(\bm{\varphi},\bm{\psi})\rrb_M$ we get $M,w\models \forall (\psi_{s'_{\mathsf{len}(s')}} \to [+ \pi]\chi)$. And given $v\in\llb\psi_{s'_{\mathsf{len}(s')}}\rrb_M$ we get $M,v\models [+ \pi]\chi$, as required.
\end{proof}
\end{claim}
Given the Claim, we have:
\small
\begin{flalign*}
& M,w\models [+ \pi]\Box_0\chi &\\
\text{ iff } & M^{+ \pi},w\models \Box_0 \chi \\
\text{ iff } & \text{there is an } R\in \mathscr{R}\cup\{\llb\pi\rrb_M\} \text{ such that } R[w]\subseteq \llb \chi \rrb_{M^{+ \pi}}\\
\text{ iff } & \text{there is an } R\in \mathscr{R}\cup\{\llb\pi\rrb_M\} \text{ such that } R[w]\subseteq \llb [+ \pi]\chi \rrb_{M}\\
\text{ iff } & \text{there is an } R\in \mathscr{R} \text{ such that } R[w]\subseteq \llb  [+ \pi]\chi \rrb_{M}\\ 
& \text{ or } \llb \pi\rrb_M[w]\subseteq \llb [+ \pi]\chi \rrb_{M}\\
\text{ iff } & M,w\models\Box_0  [+ \pi]\chi\\ 
& \text{ or } M,w\models  [+ \pi]\chi \wedge \bigwedge_{s\in S_0(I)}( s(\bm{\varphi},\bm{\psi}) \to \forall (\psi_{s_{\mathsf{len}(s)}} \to  [+ \pi]\chi)) \ \ \ (\text{ by the Claim above)})\\
\text{ iff } & M,w\models\Box_0  [+ \pi]\chi \lor \big( [+ \pi]\chi \wedge \bigwedge_{s\in S_0(I_i)}( s(\bm{\varphi},\bm{\psi}) \to \forall (\psi_{s_{\mathsf{len}(s)}} \to [+ \lan \pi_i\ran_{i< n}]\chi))\big)
\end{flalign*}
\normalsize
\item Axiom $\text{EA5}_\cap$:
We first prove the following:
\begin{claim} $\bigcap(\mathscr{R}\cup\{\llb\pi\rrb_M\})[w]\subseteq \llb [+ \pi]\chi \rrb_{M}$ iff $M,w\models [+  \pi]\chi \wedge \bigwedge_{s\in S_0(I)}( s(\bm{\varphi},\bm{\psi}) \to \Box (\psi_{s_{\mathsf{len}(s)}} \to [+ \pi]\chi))$.
\begin{proof}
($\Rightarrow$) Suppose $\bigcap(\mathscr{R}\cup\{\llb\pi\rrb_M\})[w]\subseteq \llb [+ \pi]\chi \rrb_{M}$. As $\bigcap(\mathscr{R}\cup\{\llb\pi\rrb_M\})$ is reflexive, we have $M,w\models [+ \pi]\chi$. It remains to be shown that 
\[M,w\models \bigwedge_{s\in S_0(I)}( s(\bm{\varphi},\bm{\psi}) \to \Box (\psi_{s_{\mathsf{len}(s)}} \to [+ \pi]\chi))\]
Take any $s\in S_0(I)$ and suppose that $M,w\models s(\bm{\varphi},\bm{\psi})$. We need to show that $M,w\models \Box (\psi_{s_{\mathsf{len}(s)}} \to [+ \pi]\chi)$. Take any $v\in \bigcap(\mathscr{R})[w]$ and suppose $M,v\models \psi_{s_{\mathsf{len}(s)}}$. If we show that $M,v\models [+ \pi]\chi$, we are done. Given $M,w\models s(\bm{\varphi},\bm{\psi})$ and $M,v\models \psi_{s_{\mathsf{len}(s)}}$, by Proposition \ref{x sees y}, we have $(w,v)\in \llb ?s(\bm{\varphi},\bm{\psi}); A ; ?\psi_{s_{\mathsf{len}(s)}}\rrb_M$. Thus 
\[(w,v)\in \bigcup_{s\in S_0(I_k)}\llb?s(\bm{\varphi},\bm{\psi}); A ; ?\psi_{s_{\mathsf{len}(s)}}\rrb_M\]
Hence as 
\small
\begin{flalign*}
\llb\pi\rrb_M & = \llb\bigcup_{s\in S_0(I)}\big(?s(\bm{\varphi},\bm{\psi}); A ; ?\psi_{s_{\mathsf{len}(s)}}\big) \cup (?\top)\rrb_M \\
 & = \llb\bigcup_{s\in S_0(I)}\big(?s(\bm{\varphi},\bm{\psi}); A ; ?\psi_{s_{\mathsf{len}(s)}}\big)\rrb_M \cup \llb ?\top\rrb_M \\
 & = \bigcup_{s\in S_0(I)}\llb?s(\bm{\varphi},\bm{\psi}); A ; ?\psi_{s_{\mathsf{len}(s)}}\rrb_M \cup \llb ?\top\rrb_M 
\end{flalign*}
\normalsize
we have $(w,v)\in \llb\pi\rrb_M$. Hence, given $\bigcap(\mathscr{R}\cup\{\llb\pi\rrb_M\})[w]\subseteq \llb \pi\rrb_M[w]\subseteq \llb [+ \pi]\chi \rrb_{M}$ we have $M,v\models [+ \pi]\chi$, as required.

$(\Leftarrow$) Suppose that $M,w\models [+ \pi]\chi \wedge \bigwedge_{s\in S_0(I)}( s(\bm{\varphi},\bm{\psi}) \to \Box (\psi_{s_{\mathsf{len}(s)}} \to [+ \pi]\chi))$. We will show that $\bigcap(\mathscr{R}\cup\{\llb\pi\rrb_M\})[w]\subseteq \llb [+ \pi]\chi \rrb_{M}$. Take any $v$ and suppose $(w,v)\in \bigcap(\mathscr{R}\cup\{\llb\pi\rrb_M\})$. We need to show that $v\in \llb [+ \pi]\chi \rrb_{M}$. If $v=w$, given $M,w\models [+ \pi]\chi$ we are done. So suppose $v\neq w$. Since $(w,v)\in \bigcap(\mathscr{R}\cup\{\llb\pi\rrb_M\})= \bigcap(\mathscr{R})\cap\llb\pi\rrb_M$, we have $(w,v) \in \llb\pi\rrb_M$. Reasoning as we did in the proof of $\text{EA4}_\cap$, we get 
\begin{flalign*}
& (w,v) \in \llb\pi\rrb_M \text{ iff }\text{ for some } s'\in S_0(I), w\in \llb s'(\bm{\varphi},\bm{\psi})\rrb_M \text{ and } v\in\llb\psi_{s'_{\mathsf{len}(s')}}\rrb_M
\end{flalign*}
Given that $M,w\models \bigwedge_{s\in S_0(I)}( s(\bm{\varphi},\bm{\psi}) \to \Box (\psi_{s_{\mathsf{len}(s)}} \to [+ \pi]\chi))$, we have in particular
\[M,w\models s'(\bm{\varphi},\bm{\psi}) \to \Box (\psi_{s'_{\mathsf{len}(s)}} \to [+ \pi]\chi)\]
Thus from $w\in \llb s'(\bm{\varphi},\bm{\psi})\rrb_M$ we get $M,w\models \Box (\psi_{s'_{\mathsf{len}(s)}} \to [+ \pi]\chi)$. And given $(w,v)\in \bigcap\mathscr{R}$ and $v\in\llb\psi_{s'_{\mathsf{len}(s')}}\rrb_M$, we get $M,v\models [+ \pi]\chi$, as required.
\end{proof}
\end{claim}
Given the Claim, we have 
\small
\begin{flalign*}
& M,w\models [+ \pi]\Box\chi &\\
\text{ iff } & M^{+ \pi},w\models \Box \chi \\
\text{ iff } & \bigcap(\mathscr{R}\cup\{\llb\pi\rrb_M\})[w]\subseteq \llb \chi \rrb_{M^{+ \pi}}\\
\text{ iff } & \bigcap(\mathscr{R}\cup\{\llb\pi\rrb_M\})[w]\subseteq \llb [+ \pi]\chi \rrb_{M}\\
\text{ iff } & M,w\models [+ \pi]\chi \wedge \bigwedge_{s\in S_0(I_i)}( s(\bm{\varphi},\bm{\psi}) \to \Box (\psi_{s_{\mathsf{len}(s)}} \to [+ \pi]\chi)) \ \ \ (\text{ by the Claim above)})
\end{flalign*}
\normalsize
\item Axiom $\text{EA6}_\cap$: 
\[M,w\models  [+ \pi]\forall\chi \text{ iff }  M^{+ \pi},w\models \forall \chi \text{ iff } \llb \chi \rrb_{M^{+ \pi}}=W^{+ \pi} \\
\text{ iff } \llb  [+ \pi] \chi \rrb_{M}=W \text{ iff } M,w \models \forall [+ \pi]\chi\]
\end{enumerate}
\end{proof}
\end{claim}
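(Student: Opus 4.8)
The plan is to prove soundness and completeness of each of $\mathsf{L}^!$, $\mathsf{L}^+$ and $\mathsf{L}^\Uparrow$ by the standard reductive analysis for dynamic epistemic logics: soundness by checking that each reduction axiom is valid over $\cap$-models (the inference rules being inherited from the static system), and completeness by showing that every formula of the dynamic language is provably equivalent to a dynamic-free formula of $\mathscr{L}$, so that completeness reduces to completeness of the static logic of $\cap$-models, already established. The one non-routine ingredient is the Normal Form Lemma (\ref{unf}), which lets us replace the evidence program $\pi$ inside a modality by its normal form $\bigcup_{s\in S_0(I)}(?s(\bm{\varphi},\bm{\psi});A;?\psi_{s_{\mathsf{len}(s)}})\cup(?\top)$, thereby turning $\llb\pi\rrb_M$ into a finite union of ``test$;A;$test'' relations whose membership is governed, by the standard $\mathsf{PDL}$ fact (\ref{x sees y}), by the $\mathscr{L}$-conditions ``$M,w\models s(\bm{\varphi},\bm{\psi})$ and $M,v\models\psi_{s_{\mathsf{len}(s)}}$''.

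For \textbf{soundness}: the update axioms $\text{PA1}_\cap$--$\text{PA6}_\cap$ are verified by the familiar public-announcement computation, unfolding $M,w\models[!\varphi]\psi$ as ``$M,w\models\varphi$ implies $M^{!\llb\varphi\rrb_M},w\models\psi$'' and using that $W^{!P}=P$, that the relations of $\mathscr{R}^{!P}$ are those of $\mathscr{R}$ restricted to $P^2$, and that intersection commutes with restriction. For the addition axioms $\text{EA1}_\cap$--$\text{EA6}_\cap$ one uses $\bigcap\mathscr{R}^{+R}[w]=\bigcap\mathscr{R}[w]\cap\llb\pi\rrb_M[w]$; the interesting cases $\text{EA4}_\cap$ and $\text{EA5}_\cap$ follow from two auxiliary equivalences characterising, respectively, ``$\llb\pi\rrb_M[w]\subseteq\llb[+\pi]\chi\rrb_M$'' and ``$\bigcap\mathscr{R}^{+R}[w]\subseteq\llb[+\pi]\chi\rrb_M$'' via the normal-form conditions. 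For the upgrade axioms $\text{EU1}_\cap$--$\text{EU6}_\cap$ one must instead describe $\bigcap\mathscr{R}^{\Uparrow R}[w]=\bigcap_{R'\in\mathscr{R}}\big(R^<\cup(R\cap R')\big)[w]$ in terms of the normal form of $R$: this is precisely what the auxiliary formulas $\pi^<(\chi)$ and $\pi^\cap(\chi)$ (and the helper $\mathsf{suc^<}$), with their case split over index-subsets $J\subseteq I$ recorded by $J(\bm{\varphi})$ and $J(\bm{\psi})$, are designed to encode, and verifying $\text{EU4}_\cap$, $\text{EU5}_\cap$ amounts to matching these formulas clause by clause against that set.

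For \textbf{completeness}: read the reduction axioms left-to-right as rewrite rules pushing a dynamic modality past each static connective and deleting it at atoms; by replacement of provable equivalents (legitimate since the static system has necessitation/monotonicity for all its modalities) every dynamic formula rewrites to a dynamic-free one. To see this terminates I would introduce a complexity measure that strictly decreases at each step. The subtlety is that tests inside evidence programs may themselves carry dynamic modalities, and the $\Box_0$- and $\Box$-axioms reintroduce the fixed formulas $s(\bm{\varphi},\bm{\psi})$ and $\psi_{s_{\mathsf{len}(s)}}$ taken from the normal form of $\pi$; this is handled by processing dynamic modalities from the outside in and weighting each unreduced dynamic operator above the size of both its program and its scope, so that pushing a modality inward strictly lowers the measure despite the normal-form ``overhead''. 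With the reduction in hand, a consistent dynamic set $\Gamma$ is, modulo provable equivalence, a consistent set of $\mathscr{L}$-formulas, hence satisfiable on a $\cap$-model by static completeness, and soundness carries satisfaction back to $\Gamma$ itself.

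The step I expect to be the main obstacle is the soundness of the upgrade axioms $\text{EU4}_\cap$ and $\text{EU5}_\cap$: writing out $\bigcap_{R'\in\mathscr{R}}\big(R^<\cup(R\cap R')\big)[w]$ and reconciling it, clause by clause, with the intricate combination of $\pi^<(\chi)$, $\pi^\cap(\chi)$ and $\mathsf{suc^<}$ requires carefully tracking which of the $\varphi_i$ and which of the $\psi_i$ hold at $w$ (hence the index-subset bookkeeping), and separating the contributions of the strict part $R^<$ and of the agreement part $R\cap R'$ is where the genuine combinatorial content lies. A secondary difficulty is pinning down the termination measure for the completeness rewriting when dynamic modalities occur inside program tests.
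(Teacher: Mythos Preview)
Your approach to the soundness of $\mathsf{L}^+$ is essentially the same as the paper's: you correctly identify that the substantive cases are $\text{EA4}_\cap$ and $\text{EA5}_\cap$, that each reduces via the Normal Form Lemma to an auxiliary equivalence characterising respectively $\llb\pi\rrb_M[w]\subseteq\llb[+\pi]\chi\rrb_M$ and $\bigcap(\mathscr{R}\cup\{\llb\pi\rrb_M\})[w]\subseteq\llb[+\pi]\chi\rrb_M$ in terms of the normal-form conditions $s(\bm{\varphi},\bm{\psi})$ and $\psi_{s_{\mathsf{len}(s)}}$, and that Proposition~\ref{x sees y} is what translates membership in the ``test$;A;$test'' pieces into these $\mathscr{L}$-conditions. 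Your proposal addresses more than the stated claim (it sketches all of Theorem~2, including the upgrade axioms and the completeness-by-translation argument), but restricted to the soundness of $\mathsf{L}^+$ it matches the paper's proof step for step.
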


\begin{claim} $\mathsf{L}^+$ is complete with respect to the class of $\cap$-models.
\begin{proof} The proof follows the approach presented above for $\mathsf{L}^!$.\end{proof}
\end{claim}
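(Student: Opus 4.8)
I would treat the three systems uniformly, by the standard method of reductive analysis, relying on the completeness of the static base logic with respect to $\cap$-models, established in Theorem~1. For each system I would first check soundness by verifying that every reduction axiom is valid over $\cap$-models and that the inference rules preserve validity; since the underlying static axioms are sound by Theorem~1, this suffices. For completeness I would show that every formula of the dynamic language is provably equivalent, in the respective system, to a formula of the static language $\mathscr{L}$: the reduction axioms, read from left to right, rewrite any dynamic modality into ones applied to strictly simpler formulas, so iterating them eliminates all dynamic modalities. Once a dynamic formula $\varphi$ is reduced to a provably equivalent static $\varphi^\flat\in\mathscr{L}$, any consistent set reduces to a consistent set of static formulas, which is satisfiable on a $\cap$-model by Theorem~1; soundness then transports satisfaction back to $\varphi$.

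The soundness computations differ per system but share a common engine. For $\mathsf{L}^!$ the only nontrivial cases are $\text{PA4}_\cap$ and $\text{PA5}_\cap$; the latter uses that restriction commutes with intersection, i.e. $\bigcap\{R\cap P^2 \mid R\in\mathscr{R}\} = (\bigcap\mathscr{R})\cap P^2$, so that the aggregated order of $M^{!P}$ is exactly the restriction of $Ag_\cap(\mathscr{R})$ to $P$, which is what the $\Box$-reduction axiom encodes. For $\mathsf{L}^+$ the decisive cases are $\text{EA4}_\cap$ and $\text{EA5}_\cap$: here $\mathscr{R}^{+R} = \mathscr{R}\cup\{R\}$ with $R=\llb\pi\rrb_M$, so $\bigcap\mathscr{R}^{+R} = (\bigcap\mathscr{R})\cap\llb\pi\rrb_M$, and the Normal Form Lemma is invoked to express $(w,v)\in\llb\pi\rrb_M$ syntactically as a disjunction over sequences $s\in S_0(I)$ of the formulas $s(\bm{\varphi},\bm{\psi})$ and $\psi_{s_{\mathsf{len}(s)}}$; the two auxiliary claims carry out exactly this translation of ``the new piece supports $\chi$'' into the object language. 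For $\mathsf{L}^\Uparrow$ the transformation replaces each $R'\in\mathscr{R}$ by $R^<\cup(R\cap R')$, so the successors of a world split into a strict part and an agreement part; the auxiliary formulas $\pi^<(\chi)$ and $\pi^\cap(\chi)$ in $\text{EU4}_\cap$ and $\text{EU5}_\cap$ are precisely the object-language descriptions of these two parts, again obtained by unfolding the normal form of $\pi$.

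For the completeness step I would make the reduction procedure explicit and well founded. The subtlety is that evidence programs may contain tests $?\varphi$ whose formulas themselves carry dynamic modalities, so the reduction axioms, which assume $\pi$ is in normal form with tests in $\mathscr{L}$, cannot be applied blindly. The clean remedy is to always rewrite an \emph{innermost} occurrence of a dynamic modality $[\bullet\pi]\chi$: minimality guarantees that $\chi\in\mathscr{L}$ and that every test inside $\pi$ is already in $\mathscr{L}$, so $\pi$ admits a normal form by the Normal Form Lemma and the reduction axioms apply. One then checks that pushing $[\bullet\pi]$ through the Boolean and modal structure of $\chi$ strictly decreases a complexity measure on $\chi$, so the innermost modality is eliminated in finitely many steps, and iterating removes all dynamic modalities. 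Provable equivalence of each original formula to its static reduct then follows, since every reduction axiom is a theorem of the system.

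The main obstacle is the soundness verification for $\mathsf{L}^\Uparrow$, specifically $\text{EU4}_\cap$ and $\text{EU5}_\cap$. Unlike update and addition, where the aggregated order is obtained from the old one by a single set operation, upgrade rebuilds each evidence order as $R^<\cup(R\cap R')$, so computing the new aggregated order requires unfolding the normal form of $\pi$ twice --- once to describe when a world lies strictly above another under $R$, and once to describe agreement between $R$ and the $R'$ --- and then intersecting the results. Getting the book-keeping of the index partitions $J\subseteq I$ in $J(\bm{\varphi})$, $J(\bm{\psi})$ and $\mathsf{suc^<}(\chi)$ to match the semantic definition exactly is the delicate part; by contrast, the completeness arguments for all three systems are routine once termination of the reduction is in place.
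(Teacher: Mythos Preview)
Your proposal is correct and takes essentially the same approach as the paper: reduce every dynamic formula to a provably equivalent static one via the reduction axioms, then invoke the completeness of $\mathsf{L}_\cap$ with respect to $\cap$-models from Theorem~1. The paper's own proof of this claim is a one-liner deferring to the standard method (as in van Ditmarsch et al.), whereas you spell out more of the machinery --- in particular the innermost-reduction discipline needed because tests in $\pi$ may themselves contain dynamic modalities, and the termination argument --- but the underlying strategy is identical.
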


\subsubsection*{Soundness and completeness of $\mathsf{L}^\Uparrow$} As before, we first consider soundness. Before proving the main claim, we introduce a lemma about the formulas occurring in the reduction axioms.

\begin{lemma} \label{lemma upg}  
Let $M=\langle W, \mathscr{R}, V, Ag_\cap\rangle$ be a $\cap$-model, $w$ a world in $M$, $\pi\in \Pi_*$ be a program with normal form $\bigcup_{s\in S_0(I)}(?s(\bm{\varphi},\bm{\psi}); A ; ?\psi_{s_{\mathsf{len}(s)}}) \cup (?\top)$. Then 
\begin{enumerate}
\item $M,w\models \pi^\cap(\chi)$ iff there is an  $R\in\mathscr{R}: (R\cap\llb \pi\rrb_M)[w]\subseteq \llb [\Uparrow \pi]\chi \rrb_M$
\item $M,w\models [\Uparrow \pi]\chi
\wedge \pi^<(\chi)$ iff $w\in\llb [\Uparrow \pi]\chi\rrb_M$ and $\llb \pi\rrb^<_M[w]\subseteq \llb [\Uparrow \pi]\chi\rrb_M$
\end{enumerate}

\begin{proof} $ $

Item 1: ($\Rightarrow$) Let $M,w\models \pi^\cap(\chi)$, i.e., 
\[M,w\models[\Uparrow \pi]\chi \wedge \bigvee_{J\subseteq I}\big( J(\bm{\varphi}) \wedge \Box_0\big((\bigvee_{\substack{s\in S_0(I):\\ s_1\in J}}(\exists(s(\bm{\varphi},\bm{\psi})) \wedge \psi_{s_{\mathsf{len}(s)}})) \to [\Uparrow\pi]\chi\big)\big)\]

We need to show that there is an  $R\in\mathscr{R}: (R\cap\llb \pi\rrb_M)[w]\subseteq \llb [\Uparrow \pi]\chi \rrb_M$. Note first that we have 
\[M,w\models\bigvee_{J\subseteq I}\big( J(\bm{\varphi}) \wedge \Box_0\big((\bigvee_{\substack{s\in S_0(I):\\ s_1\in J}}(\exists(s(\bm{\varphi},\bm{\psi})) \wedge \psi_{s_{\mathsf{len}(s)}})) \to [\Uparrow\pi]\chi\big)\big)\]
Then, there is a $J\subseteq I$ such that $ M,w\models J(\bm{\varphi})$ and 
\[M,w\models \Box_0\big((\bigvee_{\substack{s\in S_0(I):\\ s_1\in J}}(\exists(s(\bm{\varphi},\bm{\psi})) \wedge \psi_{s_{\mathsf{len}(s)}})) \to [\Uparrow\pi]\chi\big)\]
Hence, there is some $R\in\mathscr{R}$ such that, for all $v$ with $Rwv$
\begin{equation}\label{eq cap up}
M,v\models (\bigvee_{\substack{s\in S_0(I):\\ s_1\in J}}(\exists(s(\bm{\varphi},\bm{\psi})) \wedge \psi_{s_{\mathsf{len}(s)}})) \to [\Uparrow\pi]\chi
\end{equation}
Now take any $u$ such that $(w,u)\in R\cap \llb \pi\rrb_M$. If we show that $u\in \llb [\Uparrow \pi]\chi \rrb_M$, we are done. Note first that given $M,w\models \pi^\cap(\chi)$, we have $M,w\models[\Uparrow \pi]\chi$, so if $w=u$ we are done. Suppose $w\neq u$. As $(w,u)\in R\cap \llb \pi\rrb_M$, we have $(w,u)\in \llb \pi\rrb_M$. Note that
\small
\begin{flalign*}
& (w,u)\in\llb\pi\rrb_M\\
\text{ iff } & (w,u)\in \llb\bigcup_{s\in S_0(I)}?\big(s(\bm{\varphi},\bm{\psi}); A ; ?\psi_{s_{\mathsf{len}(s)}}\big) \cup (?\top)\rrb_M  \\ 
\text{ iff } & (w,u)\in  \llb\bigcup_{s\in S_0(I)}\big(?s(\bm{\varphi},\bm{\psi}); A ; ?\psi_{s_{\mathsf{len}(s)}}\big)\rrb_M\text{ or } (w,u)\in\llb?\top\rrb_M\\
\text{ iff } & (w,u)\in  \llb\bigcup_{s\in S_0(I)}\big(?s(\bm{\varphi},\bm{\psi}); A ; ?\psi_{s_{\mathsf{len}(s)}}\big)\rrb_M \text{ or } w=u\\
\text{ iff } & (w,u)\in  \llb\bigcup_{s\in S_0(I)}\big(?s(\bm{\varphi},\bm{\psi}); A ; ?\psi_{s_{\mathsf{len}(s)}}\big)\rrb_M \text{ (since by assumption  } w\neq u)\\
\text{ iff } & (w,u)\in  \bigcup_{s\in S_0(I)} \llb?s(\bm{\varphi},\bm{\psi}); A ; ?\psi_{s_{\mathsf{len}(s)}}\rrb_M\\
\text{ iff } & \exists s^\star\in S_0(I)( (w,u)\in  \llb?s^\star(\bm{\varphi},\bm{\psi}); A ; ?\psi_{s^\star_{\mathsf{len}(s^\star)}}\rrb_M)\\
\text{ iff } & \exists s^\star\in S_0(I)( w\in  \llb s^\star(\bm{\varphi},\bm{\psi})\rrb_M \text{ and } u\in\llb\psi_{s^\star_{\mathsf{len}(s^\star)}}\rrb_M) \ \ \text{(by Prop. } \ref{x sees y})
\end{flalign*}
\normalsize
Thus, we have $M,w\models s^\star(\bm{\varphi},\bm{\psi})$ and hence $M,u\models \exists( s^\star(\bm{\varphi},\bm{\psi}))$. Recall that 
\[s^\star(\bm{\varphi},\bm{\psi}) = \varphi_{s^\star_1} \wedge \bigwedge^{\mathsf{len}(s^\star)}_{k=2}(\exists(\psi_{s^\star_{k-1}} \wedge \varphi_{s^\star_k}))) \]
Hence $M,w\models \varphi_{s^\star_1}$ and as $M,w\models J(\bm{\varphi})$, we must have $\varphi_{s^\star_1}=\varphi_j$ for some $j\in J$. This, together with $M,u\models \psi_{s^\star_{\mathsf{len}(s^\star)}}$ gives us
\[M,u\models \bigvee_{\substack{s\in S_0(I):\\ s_1\in J}}(\exists(s(\bm{\varphi},\bm{\psi})) \wedge \psi_{s_{\mathsf{len}(s)}})\]
which, given the statement \ref{eq cap up} above, implies $M,u\models [\Uparrow\pi]\chi$, as required.

$(\Leftarrow)$ Suppose that there is an  $R\in\mathscr{R}$ such that  $(R\cap\llb \pi\rrb_M)[w]\subseteq \llb [\Uparrow \pi]\chi \rrb_M$. First, note that $R\cap\llb \pi\rrb_M$ is reflexive, and hence $M,w\models [\Uparrow\pi]\chi$. Hence it remains to be shown that 
\[M,w\models\bigvee_{J\subseteq I}\big( J(\bm{\varphi}) \wedge \Box_0\big((\bigvee_{\substack{s\in S_0(I):\\ s_1\in J}}(\exists(s(\bm{\varphi},\bm{\psi})) \wedge \psi_{s_{\mathsf{len}(s)}})) \to [\Uparrow\pi]\chi\big)\big)\]
It is clear that there is some $J\subseteq I$ such that $M,w\models J(\bm{\varphi})$, so we must show for this $J$ that 
\[M,w\models \Box_0\big((\bigvee_{\substack{s\in S_0(I):\\ s_1\in J}}(\exists(s(\bm{\varphi},\bm{\psi})) \wedge \psi_{s_{\mathsf{len}(s)}})) \to [\Uparrow\pi]\chi\big)\]
Consider $R$ and take any $v$ such that $Rwv$. We need to show that 
\[M,v\models (\bigvee_{\substack{s\in S_0(I):\\ s_1\in J}}(\exists(s(\bm{\varphi},\bm{\psi})) \wedge \psi_{s_{\mathsf{len}(s)}})) \to [\Uparrow\pi]\chi\]
Suppose that 
\[M,v\models \bigvee_{\substack{s\in S_0(I):\\ s_1\in J}}(\exists(s(\bm{\varphi},\bm{\psi})) \wedge \psi_{s_{\mathsf{len}(s)}})\]
Then there is some $s\in S_0(I)$ with $s_1\in J$ such that $M,v\models \exists(s(\bm{\varphi},\bm{\psi})) \wedge \psi_{s_{\mathsf{len}(s)}}$. Hence there is some $u$ such that $M,u\models s(\bm{\varphi},\bm{\psi})$. Recall that 
\[s(\bm{\varphi},\bm{\psi}) = \varphi_{s_1} \wedge \bigwedge^{\mathsf{len}(s)}_{k=2}(\exists(\psi_{s_{k-1}} \wedge \varphi_{s_k}))) \]
Given $s_1\in J$ and $M,w\models J(\bm{\varphi})$, we have $M,w\models \varphi_{s_1}$ and thus $M,w\models s(\bm{\varphi},\bm{\psi})$. This, together with $M,v\psi_{s_{\mathsf{len}(s)}}$, implies $(w,v)\in \llb?s(\bm{\varphi},\bm{\psi}); A ; ?\psi_{s_{\mathsf{len}(s)}}\rrb_M$ and hence $(w,v)\in \llb \textsf{nf}(\pi)\rrb_M$ (where $\textsf{nf}(\pi)$ is the normal form for $\pi$), which means $(w,v)\in \llb \pi\rrb_M$. As $(w,v)\in R$ and $(w,v)\in \llb \pi\rrb_M$ we have $(w,v)\in R\cap\llb \pi\rrb_M$. Thus given $(R\cap\llb \pi\rrb_M)[w]\subseteq \llb [\Uparrow \pi]\chi \rrb_M$ we have $M,v\models [\Uparrow \pi]\chi$, as required.\\

Item 2: ($\Rightarrow$) Let $M,w\models [\Uparrow \pi]\chi
\wedge \pi^<(\chi)$. Then $w\in\llb [\Uparrow \pi]\chi\rrb_M$, so it remains to be shown that $\llb \pi\rrb^<_M[w]\subseteq \llb [\Uparrow \pi]\chi\rrb_M$. We have $M,w\models \pi^<(\chi)$, i.e.,
\[M,w\models \bigvee_{J\subseteq I}\big( J(\bm{\psi}) \wedge \bigwedge_{s\in S_0(I)}( s(\bm{\varphi},\bm{\psi}) \to \forall\big( \psi_{s_{\mathsf{len}(s)}} \wedge \bigwedge_{s'\in S_0(I)}\big( s'(\bm{\varphi},\bm{\psi}) \to \forall( \psi_{s'_{\mathsf{len}(s')}} \to \bigwedge_{j\in J}\neg\varphi_j)\big) \to [\Uparrow\pi]\chi\big))\big)\]
Then, there is a $J\subseteq I$ such that $ M,w\models J(\bm{\psi})$ and 
\begin{equation} \label{upg eq}
M,w\models \bigwedge_{s\in S_0(I)}( s(\bm{\varphi},\bm{\psi}) \to \forall\big( (\psi_{s_{\mathsf{len}(s)}} \wedge \bigwedge_{s'\in S_0(I)}( s'(\bm{\varphi},\bm{\psi}) \to \forall( \psi_{s'_{\mathsf{len}(s')}} \to \bigwedge_{j\in J}\neg\varphi_j))) \to [\Uparrow\pi]\chi)\big)
\end{equation}
We need to show that $\llb \pi\rrb^<_M[w]\subseteq \llb [\Uparrow \pi]\chi \rrb_M$. Take any $v$ such that $(w,v)\in\llb\pi\rrb_M^<$, i.e., $(w,v)\in\llb\pi\rrb_M$  and $(v,w)\not\in\llb\pi\rrb_M$. We will show that $v\in \llb [\Uparrow \pi]\chi \rrb_M$. First, observe that 
\small
\begin{flalign*}
& (w,v)\in\llb\pi\rrb_M\\
\text{ iff } & (w,v)\in \llb\bigcup_{s\in S_0(I)}?\big(s(\bm{\varphi},\bm{\psi}); A ; ?\psi_{s_{\mathsf{len}(s)}}\big) \cup (?\top)\rrb_M  \\ 
\text{ iff } & (w,v)\in  \llb\bigcup_{s\in S_0(I)}\big(?s(\bm{\varphi},\bm{\psi}); A ; ?\psi_{s_{\mathsf{len}(s)}}\big)\rrb_M\text{ or } (w,v)\in\llb?\top\rrb_M\\
\text{ iff } & (w,v)\in  \llb\bigcup_{s\in S_0(I)}\big(?s(\bm{\varphi},\bm{\psi}); A ; ?\psi_{s_{\mathsf{len}(s)}}\big)\rrb_M \text{ or } w=v\\
\text{ iff } & (w,v)\in  \llb\bigcup_{s\in S_0(I)}\big(?s(\bm{\varphi},\bm{\psi}); A ; ?\psi_{s_{\mathsf{len}(s)}}\big)\rrb_M \text{ or } w=v\\
\text{ iff } & (w,v)\in  \bigcup_{s\in S_0(I)} \llb?s(\bm{\varphi},\bm{\psi}); A ; ?\psi_{s_{\mathsf{len}(s)}}\rrb_M \text{ or } w=v\\
\text{ iff } & \exists s^\star\in S_0(I)( (w,v)\in  \llb?s^\star(\bm{\varphi},\bm{\psi}); A ; ?\psi_{s^\star_{\mathsf{len}(s')}}\rrb_M) \text{ or } w=v\\
\text{ iff } & \exists s^\star\in S_0(I)( w\in  \llb s^\star(\bm{\varphi},\bm{\psi})\rrb_M \text{ and } v\in\llb\psi_{s^\star_{\mathsf{len}(s^\star)}}\rrb_M) \text{ or } w=v  \ \ \text{(by Prop. } \ref{x sees y})
\end{flalign*}
\normalsize
Moreover, note that 
\small
\begin{flalign*}
& (v,w)\not\in\llb\pi\rrb_M \\
\text{ iff } & (v,w)\not\in \llb\bigcup_{s\in S_0(I)}?\big(s(\bm{\varphi},\bm{\psi}); A ; ?\psi_{s_{\mathsf{len}(s)}}\big) \cup (?\top)\rrb_M \\
\text{ iff } & (v,w)\not\in  \llb\bigcup_{s\in S_0(I)}\big(?s(\bm{\varphi},\bm{\psi}); A ; ?\psi_{s_{\mathsf{len}(s)}}\big)\rrb_M\text{ and } (w,v)\not\in\llb?\top\rrb_M\\
\text{ iff } & (v,w)\not\in \llb\bigcup_{s\in S_0(I)}\big(?s(\bm{\varphi},\bm{\psi}); A ; ?\psi_{s_{\mathsf{len}(s)}}\big)\rrb_M \text{ and } w\neq v \\
\text{ iff } & (v,w)\not\in \bigcup_{s\in S_0(I)} \llb ?s(\bm{\varphi},\bm{\psi}); A ; ?\psi_{s_{\mathsf{len}(s)}}\rrb_M \text{ and } w\neq v\\
\text{ iff } & \forall s\in S_0(I) ( (v,w)\not\in \llb?s(\bm{\varphi},\bm{\psi}); A ; ?\psi_{s_{\mathsf{len}(s)}}\rrb_M)  \text{ and } w\neq v\\
\text{ iff } & \forall s\in S_0(I) ( v\not\in \llb s(\bm{\varphi},\bm{\psi})\rrb_M \text{ or } w\not\in \llb\psi_{s_{\mathsf{len}(s)}}\rrb_M)  \text{ and } w\neq v\\
\text{ iff } & \forall s\in S_0(I) ( v\in \llb s(\bm{\varphi},\bm{\psi})\rrb_M \text{ implies } w\not\in \llb\psi_{s_{\mathsf{len}(s)}}\rrb_M)  \text{ and } w\neq v
\end{flalign*}
\normalsize
Hence we have $\exists s^\star\in S_0(I)( w\in  \llb s^\star(\bm{\varphi},\bm{\psi})\rrb_M \text{ and } v\in\llb\psi_{s^\star_{\mathsf{len}(s^\star)}}\rrb_M)$ and $\forall s\in S_0(I) ( v\in \llb s(\bm{\varphi},\bm{\psi})\rrb_M \text{ implies } w\not\in \llb\psi_{s_{\mathsf{len}(s)}}\rrb_M)$. From \ref{upg eq}, we have in particular 
\[M,w\models s^\star(\bm{\varphi},\bm{\psi}) \to \forall\big( (\psi_{s^\star_{\mathsf{len}(s^\star)}} \wedge \bigwedge_{s'\in S_0(I)}( s'(\bm{\varphi},\bm{\psi}) \to \forall( \psi_{s'_{\mathsf{len}(s')}} \to \bigwedge_{j\in J}\neg\varphi_j))) \to [\Uparrow\pi]\chi)\]
We have $w\in  \llb s^\star(\bm{\varphi},\bm{\psi})\rrb_M$, from which we get
\[M,w\models \forall\big( (\psi_{s^\star_{\mathsf{len}(s^\star)}} \wedge \bigwedge_{s'\in S_0(I)}( s'(\bm{\varphi},\bm{\psi}) \to \forall( \psi_{s'_{\mathsf{len}(s')}} \to \bigwedge_{j\in J}\neg\varphi_j))) \to [\Uparrow\pi]\chi)\]
Thus, in particular
\[M,v\models (\psi_{s^\star_{\mathsf{len}(s^\star)}} \wedge \bigwedge_{s'\in S_0(I)}( s'(\bm{\varphi},\bm{\psi}) \to \forall( \psi_{s'_{\mathsf{len}(s')}} \to \bigwedge_{j\in J}\neg\varphi_j))) \to [\Uparrow\pi]\chi\]
We already have $v\in\llb\psi_{s^\star_{\mathsf{len}(s^\star)}}\rrb_M$, so if we show that
\begin{equation} \label{conj s'}
M,v\models \bigwedge_{s'\in S_0(I)}( s'(\bm{\varphi},\bm{\psi}) \to \forall( \psi_{s'_{\mathsf{len}(s')}} \to \bigwedge_{j\in J}\neg\varphi_j))
\end{equation}
we will get $M,v\models [\Uparrow\pi]\chi$, as required. Take any $s'\in S_0(I)$ and suppose that $M,v\models s'(\bm{\varphi},\bm{\psi})$. We need to show that $M,v\models \forall( \psi_{s'_{\mathsf{len}(s')}} \to \bigwedge_{j\in J}\neg\varphi_j))$. Consider any $u\in W$ and suppose $M,u\models \psi_{s'_{\mathsf{len}(s')}}$. Towards a contradiction, suppose that $M,u\models \varphi_j$, for some $j\in J$. Consider the sequence $s''\coloneqq s'| \lan j\ran$. Note that

\[s''(\bm{\varphi},\bm{\psi}) = \varphi_{s''_1} \wedge \bigwedge^{\mathsf{len}(s'')}_{k=2}\exists(\psi_{s''_{k-1}} \wedge \varphi_{s''_k}) = s'(\bm{\varphi},\bm{\psi}) \wedge \exists(\psi_{s'_{{\mathsf{len}(s')}}} \wedge \varphi_{j})\]

Given $M,u\models \psi_{s'_{\mathsf{len}(s')}}$ and $M,u\models \varphi_j$, we have $M,v\models\exists(\psi_{s'_{{\mathsf{len}(s')}}} \wedge \varphi_{j})$. This, together with $M,v\models s'(\bm{\varphi},\bm{\psi})$ gives us $M,v\models s''(\bm{\varphi},\bm{\psi})$. From $\forall s\in S_0(I) ( v\in \llb s(\bm{\varphi},\bm{\psi})\rrb_M \text{ implies } w\not\in \llb\psi_{s_{\mathsf{len}(s)}}\rrb_M)$ we have: $v\in \llb s''(\bm{\varphi},\bm{\psi})\rrb_M \text{ implies } w\not\in \llb\psi_{{s''}_{\mathsf{len}(s'')}}\rrb_M$. Hence we have $w\not\in \llb\psi_{{s''}_{\mathsf{len}(s'')}}\rrb_M$, i.e., $M,w\not\models \psi_j$. But since $M,w\models J(\bm{\psi})$, we also have $M,w\models \psi_j$ (contradiction). Thus we have $M,v\models \forall( \psi_{s'_{\mathsf{len}(s')}} \to \bigwedge_{j\in J}\neg\varphi_j))$. As $s'$ was arbitrarily picked we get statement \ref{conj s'} above, which together with  $M,v\models \psi_{s^\star_{\mathsf{len}(s^\star)}}$ implies $M,v\models [\Uparrow\pi]\chi$. Since $v$ was picked arbitrarily, we get $\llb \pi\rrb^<_M[w]\subseteq \llb [\Uparrow \pi]\chi \rrb_M$, as required. 

$(\Leftarrow)$ Suppose that $ w \in \llb [\Uparrow \pi]\chi \rrb_M$ and $ \llb \pi\rrb^<_M[w]\subseteq \llb [\Uparrow \pi]\chi \rrb_M$. We need to show $M,w\models [\Uparrow \pi]\chi \wedge \pi^<(\chi)$. $M,w\models [\Uparrow \pi]\chi $ is immediate, so it remains to be shown that $M,w\models \pi^<(\chi)$, i.e.,
\[M,w\models \bigvee_{J\subseteq I}\big( J(\bm{\psi}) \wedge \bigwedge_{s\in S_0(I)}( s(\bm{\varphi},\bm{\psi}) \to \forall\big( (\psi_{s_{\mathsf{len}(s)}} \wedge \bigwedge_{s'\in S_0(I)}( s'(\bm{\varphi},\bm{\psi}) \to \forall( \psi_{s'_{\mathsf{len}(s')}} \to \bigwedge_{j\in J}\neg\varphi_j))) \to [\Uparrow\pi]\chi)\big)\big)\] 
Clearly, there is a $J\subseteq I$ such that $M,w\models J(\bm{\psi})$, so it remains to be shown that;
\[M,w\models \bigwedge_{s\in S_0(I)}( s(\bm{\varphi},\bm{\psi}) \to \forall\big( (\psi_{s_{\mathsf{len}(s)}} \wedge \bigwedge_{s'\in S_0(I)}( s'(\bm{\varphi},\bm{\psi}) \to \forall( \psi_{s'_{\mathsf{len}(s')}} \to \bigwedge_{j\in J}\neg\varphi_j))) \to [\Uparrow\pi]\chi)\big)\] 
Take any $s\in S_0(I)$ and suppose that $M,w\models s(\bm{\varphi},\bm{\psi})$. We need to show that 
\[M,w\models \forall\big( (\psi_{s_{\mathsf{len}(s)}} \wedge \bigwedge_{s'\in S_0(I)}( s'(\bm{\varphi},\bm{\psi}) \to \forall( \psi_{s'_{\mathsf{len}(s')}} \to \bigwedge_{j\in J}\neg\varphi_j))) \to [\Uparrow\pi]\chi\big)\] 
Take any $v\in W$ and suppose that 
\[M,v\models \psi_{s_{\mathsf{len}(s)}}\wedge \bigwedge_{s'\in S_0(I)}( s'(\bm{\varphi},\bm{\psi}) \to \forall( \psi_{s'_{\mathsf{len}(s')}} \to \bigwedge_{j\in J}\neg\varphi_j))\] 
We need to show that $M,v\models [\Uparrow\pi]\chi$. Note that, if $v=w$ we are done, so suppose that $v\neq w$.
Since $\llb \pi\rrb^<_M[w]\subseteq \llb [\Uparrow \pi]\chi \rrb_M$, if we show that $(w,v)\in\llb \pi\rrb^<_M$ we are done. We show this next, i.e., $(w,v)\in \llb \pi\rrb_M$ and $(v,w)\not \in \llb \pi\rrb_M$. Note that $M,w\models s(\bm{\varphi},\bm{\psi})$ and $M,v\models \psi_{s_{\mathsf{len}(s)}}$. Hence $(w,v)\in \llb ?s(\bm{\varphi},\bm{\psi});A;?\psi_{s_{\mathsf{len}(s)}}\rrb_M$ and thus $(w,v)\in \llb \bigcup_{s'\in S_0(I)}(?s'(\bm{\varphi},\bm{\psi}); A ; ?\psi_{s'_{\mathsf{len}(s')}}) \cup (?\top)\rrb_M$, which gives $(w,v)\in \llb \pi\rrb_M$. Towards a contradiction, suppose that $(v,w) \in \llb \pi\rrb_M$. Then 
\[(v,w)\in \llb \bigcup_{s'\in S_0(I)}(?s'(\bm{\varphi},\bm{\psi}); A ; ?\psi_{s'_{\mathsf{len}(s')}})\cup(?\top)\rrb_M \] 
Since $v\neq w$ by assumption, $(v,w)\not\in \llb?\top\rrb_M$, so we must have
\[(v,w)\in \llb \bigcup_{s'\in S_0(I)}(?s'(\bm{\varphi},\bm{\psi}); A ; ?\psi_{s'_{\mathsf{len}(s')}}) \rrb_M=\bigcup_{s'\in S_0(I)}\llb ?s'(\bm{\varphi},\bm{\psi}); A ; ?\psi_{s'_{\mathsf{len}(s')}} \rrb_M \] 

By Lemma \ref{unf}, this means that  there is a finite $vw$-path along $\llb\bigcup_{i\in I}(?\varphi_i;A;?\psi_i)\rrb_M$. That is, for some $s''\in S_0(I)$, $v\in \llb\varphi_{s''_1}\rrb_M$ and there are $z_1,z_2,\dots,z_{\mathsf{len}(s'')},z_{\mathsf{len}(s'')+1}$ such that $z_1=v$ and $z_{\mathsf{len}(s'')+1}=w$ and for each $k\in \{1,\dots,\mathsf{len}(s'')\}$, $(z_k,z_{k+1})\in \llb?\varphi_{s''_k} ; A ; ?\psi_{s''_k}\rrb_M$. Given $M,w\models J(\bm{\psi})$, we must have $\psi_{s''_{\mathsf{len}(s'')}}=\psi_j$ for some $j\in J$. Let $s^\star=\lan s''_1,\dots,s''_{\mathsf{len}(s'')-1}\ran$. 
As we have 
\[M,v\models \bigwedge_{s'\in S_0(I)}( s'(\bm{\varphi},\bm{\psi}) \to \forall( \psi_{s'_{\mathsf{len}(s')}} \to \bigwedge_{j\in J}\neg\varphi_j))\]
In particular, we also have
\[M,v\models s^\star(\bm{\varphi},\bm{\psi}) \to \forall( \psi_{s^\star_{\mathsf{len}(s^\star)}} \to \bigwedge_{j\in J}\neg\varphi_j)\]
As $M,v\models s''(\bm{\varphi},\bm{\psi})$, we also have $M,v\models s^\star(\bm{\varphi},\bm{\psi})$, which gives us
\begin{equation}\label{conj up}
M,v\models \forall( \psi_{s^\star_{\mathsf{len}(s^\star)}} \to \bigwedge_{j\in J}\neg\varphi_j)
\end{equation}
Given that $\psi_{s^\star_{\mathsf{len}(s^\star)}} = \psi_{s''_{\mathsf{len}(s'')-1}}$, we get 
\[M,z_{\mathsf{len}(s'')}\models\psi_{s^\star_{\mathsf{len}(s^\star)}}\]
and thus from statement \ref{conj up} above we get
\[M,z_{\mathsf{len}(s'')}\models\bigwedge_{j\in J} \neg \varphi_j\]
But then $(z_{\mathsf{len}(s'')},w)\not\in\llb?\varphi_{s''_{\mathsf{len}(s'')}} ; A ; ?\psi_{s''_{\mathsf{len}(s'')}}\rrb_M$ (contradiction).

\end{proof}

\end{lemma}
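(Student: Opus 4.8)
The plan is to read both biconditionals off the combinatorial description of $\llb\pi\rrb_M$ furnished by the Normal Form Lemma. The point of the lemma is that under an upgrade each $R'\in\mathscr{R}$ is replaced by $R^<\cup(R\cap R')$ with $R=\llb\pi\rrb_M$, and $(R^<\cup(R\cap R'))[w]=\llb\pi\rrb^<_M[w]\cup(R'\cap\llb\pi\rrb_M)[w]$; the abbreviations $\pi^<(\chi)$ and $\pi^\cap(\chi)$ are designed to capture, respectively, the constraint coming from the first summand and the existence of an $R'$ making the second summand behave. So I would first record the two facts that do all the work: \emph{(a)} for $w\neq v$, $(w,v)\in\llb\pi\rrb_M$ iff there is $s\in S_0(I)$ with $M,w\models s(\bm\varphi,\bm\psi)$ and $M,v\models\psi_{s_{\mathsf{len}(s)}}$ (immediate from Lemma~\ref{unf} and Proposition~\ref{x sees y}); and \emph{(b)} consequently $(w,v)\in\llb\pi\rrb^<_M$ iff such an $s$ exists for $(w,v)$ but none exists for $(v,w)$. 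I would also note that $M,w\models s(\bm\varphi,\bm\psi)$ amounts to $M,w\models\varphi_{s_1}$ together with the global, $\exists$-expressible nonemptiness of each $\llb\psi_{s_{k-1}}\wedge\varphi_{s_k}\rrb_M$, so that whether $w$ satisfies $s(\bm\varphi,\bm\psi)$ is governed by the single local conjunct $\varphi_{s_1}$.

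For Item~1, unfold $\pi^\cap(\chi)=[\Uparrow\pi]\chi\wedge\bigvee_{J\subseteq I}\big(J(\bm\varphi)\wedge\Box_0(\dots\to[\Uparrow\pi]\chi)\big)$. \emph{Left to right}: the disjunction yields a set $J$ with $M,w\models J(\bm\varphi)$ and a witness $R\in\mathscr{R}$ all of whose successors satisfy the displayed implication; given $(w,u)\in R\cap\llb\pi\rrb_M$, dispatch $u=w$ by the first conjunct, and for $u\neq w$ extract via fact~(a) a sequence $s$ with $M,w\models s(\bm\varphi,\bm\psi)$ (hence $s_1\in J$, since $M,w\models\varphi_{s_1}$ and $M,w\models J(\bm\varphi)$) and $M,u\models\psi_{s_{\mathsf{len}(s)}}$, then use $Rwu$ and the implication to conclude $M,u\models[\Uparrow\pi]\chi$. \emph{Right to left}: reflexivity of $R\cap\llb\pi\rrb_M$ gives $M,w\models[\Uparrow\pi]\chi$; take $J=\{j\in I\mid M,w\models\varphi_j\}$, so $M,w\models J(\bm\varphi)$, and verify the $\Box_0$-conjunct with the same $R$, reading any $R$-successor $v$ satisfying the antecedent back as $(w,v)\in\llb\pi\rrb_M$ via fact~(a) (the relevant sequence has $s_1\in J$, so $\varphi_{s_1}$ holds at $w$, and the $\exists$-conjuncts are witnessed at $v$), whence $(w,v)\in R\cap\llb\pi\rrb_M$ and $M,v\models[\Uparrow\pi]\chi$.

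For Item~2 the conjunct $w\in\llb[\Uparrow\pi]\chi\rrb_M$ matches $M,w\models[\Uparrow\pi]\chi$ immediately, so the content is the equivalence of $M,w\models\pi^<(\chi)$ with $\llb\pi\rrb^<_M[w]\subseteq\llb[\Uparrow\pi]\chi\rrb_M$. Here $\pi^<(\chi)=\bigvee_{J\subseteq I}\big(J(\bm\psi)\wedge\mathsf{suc^<}(\chi)\big)$, with $J$ pinning down the $\psi_j$ true at $w$ and the inner clause $\bigwedge_{s'}\big(s'(\bm\varphi,\bm\psi)\to\forall(\psi_{s'_{\mathsf{len}(s')}}\to\bigwedge_{j\in J}\neg\varphi_j)\big)$ serving as a finitary surrogate for ``no $\llb\pi\rrb_M$-path leads back from $v$ to $w$''. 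Both directions unfold $(w,v)\in\llb\pi\rrb^<_M$ by facts~(a)--(b), match the ``some $s$ for $(w,v)$'' half against the outer pattern $s(\bm\varphi,\bm\psi)\to\forall(\psi_{s_{\mathsf{len}(s)}}\wedge\dots\to[\Uparrow\pi]\chi)$ of $\mathsf{suc^<}(\chi)$, and match the ``no $s$ for $(v,w)$'' half against the inner clause. For the step that must derive a contradiction from a putative reverse path, I would invoke the path-shortening results behind Lemma~\ref{unf} (Propositions~\ref{ray} and~\ref{pathshort}) to obtain a short $vw$-path, peel off its last edge $z\to w$, note $M,z\models\psi_j$ for some $j\in J$ because $M,w\models J(\bm\psi)$, and feed the prefix sequence of the path into the inner clause instantiated at $v$ to force $M,z\models\bigwedge_{j\in J}\neg\varphi_j$, contradicting $M,z\models\varphi_j$.

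The main obstacle is precisely that last point in Item~2: arranging the bookkeeping of sequences --- the vertices of an arbitrary reverse path, its prefix $s^\star$, the one-step extensions $s'\,\langle j\rangle$ --- so that the \emph{finite} conjunction $\mathsf{suc^<}(\chi)$ provably rules out \emph{every} reverse $\llb\pi\rrb_M$-path, not merely the length-one ones. Everything else is careful but mechanical unfolding of the normal form; the real care lies in tracking which subformulas are evaluated at $w$, which at $v$, and which encode the global $\exists$-witnessed nonemptiness conditions hidden inside the abbreviation $s(\bm\varphi,\bm\psi)$.
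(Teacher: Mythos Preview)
Your plan is correct and matches the paper's proof essentially step for step: the same two facts (a) and (b) drawn from the normal form, the same choice of $J$ (the unique $J$ with $M,w\models J(\bm\varphi)$ for Item~1, respectively $J(\bm\psi)$ for Item~2), the same witness $R$ for the $\Box_0$, and in Item~2 the same prefix/one-step-extension bookkeeping $s^\star$, $s'|\langle j\rangle$ that you already single out as the crux. One small slip in your sketch of the $(\Leftarrow)$ direction of Item~2: after peeling off the last edge $z\to w$ you should record $M,w\models\psi_j$ (not $M,z\models\psi_j$), which combined with $M,w\models J(\bm\psi)$ gives $j\in J$; the penultimate vertex $z$ then satisfies $\varphi_j$ as the source of that edge and $\psi_{s^\star_{\mathsf{len}(s^\star)}}$ as the target of the previous edge, and it is the latter that lets you instantiate the inner clause to force $M,z\models\bigwedge_{j\in J}\neg\varphi_j$---exactly the contradiction you state at the end.
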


Having shown the lemmma above, we now consider the main soundness claim.

\begin{claim}$\mathsf{L}^\Uparrow$ is sound w.r.t $\cap$-models.
\begin{proof} Let $M=\langle W, \mathscr{R}, V, Ag_\cap\rangle$ be a $\cap$-model, $w$ a world in $M$, $\pi\in \Pi_*$ be a program with normal form $\bigcup_{s\in S_0(I)}(?s(\bm{\varphi},\bm{\psi}); A ; ?\psi_{s_{\mathsf{len}(s)}}) \cup (?\top)$. The validity of $\text{EU1}_\cap$ follows from the fact that the evidence addition transformer does not change the valuation function. The validity of the Boolean reduction axioms $\text{EU2}_\cap$ and $\text{EU3}_\cap$ can be proven by unfolding the definitions.
\begin{enumerate}
\item Axiom $\text{EU4}_\cap$:
\small
\begin{flalign*}
& M,w\models [\Uparrow\pi] \Box_0 \chi \\
\text{ iff } & M^{\Uparrow\pi},w\models \Box_0 \chi \\
\text{ iff } & \text{there is an } R\in \mathscr{R}^{+\pi} \text{ such that } R[w]\subseteq \llb \chi \rrb_{M^{\Uparrow\pi}}\\
\text{ iff } & \text{there is an } R\in \mathscr{R} \text{ such that } \big(\llb \pi\rrb_M^< \cup (\llb \pi\rrb_M \cap R)\big)[w]\subseteq \llb [\Uparrow\pi]\chi \rrb_{M}\\
\text{ iff } & \text{there is an }  R\in \mathscr{R} \text{ such that }  \llb \pi\rrb_M^<[w] \cup (\llb \pi\rrb_M \cap R)[w]\subseteq \llb [\Uparrow\pi]\chi \rrb_{M}\\
 \text{ iff } & \llb \pi\rrb_M^<[w]\subseteq \llb [\Uparrow\pi]\chi \rrb_{M} \text{ and there is an } R\in \mathscr{R} \text{ such that }  (\llb \pi\rrb_M \cap R)[w]\subseteq \llb [\Uparrow\pi]\chi \rrb_{M}\\
 \text{ iff } & \llb \pi\rrb_M^<[w]\subseteq \llb [\Uparrow\pi]\chi \rrb_{M} \text{ and there is an } R\in \mathscr{R} \text{ such that }  (\llb \pi\rrb_M \cap R)[w]\subseteq \llb [\Uparrow\pi]\chi \rrb_{M}\\
 & \text{ and } w\in\llb [\Uparrow\pi]\chi \rrb_{M}  \ \ \ (\text{as } w\in (\llb \pi\rrb_M \cap R)[w]) \\
  \text{ iff } & M,w\models [\Uparrow\pi]\chi \wedge  \pi^<(\chi) \wedge \pi^\cap(\chi) \ \ \   (\text{by Lemma } \ref{lemma upg}) 
\end{flalign*}
\normalsize
\item Axiom $\text{EU5}_\cap$: We first prove the following:
\begin{claim} $(\llb \pi \rrb_M \cap \bigcap\mathscr{R})[w]\subseteq \llb [\Uparrow \pi]\chi \rrb_{M}$ iff $M,w\models [\Uparrow  \pi]\chi \wedge \bigwedge_{s\in S_0(I)}( s(\bm{\varphi},\bm{\psi}) \to \Box (\psi_{s_{\mathsf{len}(s)}} \to [\Uparrow \pi]\chi))$.
\begin{proof} The proof is identical to the one used for the proof of the validity of `Axiom $\text{EA5}_\cap$'. 
\end{proof} 
\end{claim}
Note next that
\small
\begin{align*}
\bigcap \mathscr{R}^{\Uparrow\pi} & = \bigcap_{R\in \mathscr{R}} \Big( \llb \pi \rrb_M^< \cup (\llb \pi \rrb_M \cap R)\Big) = \llb \pi \rrb_M^< \cup \bigcap_{R\in \mathscr{R}} (\llb \pi \rrb_M \cap R)= \llb \pi \rrb_M^< \cup (\llb \pi \rrb_M \cap \bigcap\mathscr{R})
\end{align*}
\normalsize
Thus,
\small
\begin{align*}
&M,w\models [\Uparrow\pi] \Box \chi &\\
&\text{ iff } M^{\Uparrow\pi},w\models \Box \chi \\
& \text{ iff } \bigcap\mathscr{R}^{\Uparrow\pi}[w]\subseteq \llb \chi \rrb_{M^{\Uparrow\pi}}\\
& \text{ iff } \bigcap\mathscr{R}^{\Uparrow\pi}[w]\subseteq \llb [\Uparrow\pi]\chi \rrb_{M}\\
& \text{ iff } \big(\llb \pi \rrb_M^< \cup (\llb \pi \rrb_M \cap \bigcap\mathscr{R})\big)[w] \subseteq \llb [\Uparrow\pi]\chi \rrb_{M}\\
& \text{ iff } \llb \pi \rrb_M^<[w] \cup (\llb \pi \rrb_M \cap \bigcap\mathscr{R})[w] \subseteq \llb [\Uparrow\pi]\chi \rrb_{M}\\
& \text{ iff } \llb \pi \rrb_M^<[w]\subseteq \llb [\Uparrow\pi]\chi \rrb_{M} \text{ and } (\llb \pi \rrb_M \cap \bigcap\mathscr{R})[w] \subseteq \llb [\Uparrow\pi]\chi \rrb_{M}\\
& \text{ iff } w\in \llb [\Uparrow\pi]\chi \rrb_{M} \text{ and } \llb \pi\rrb_M^<[w]\subseteq \llb [\Uparrow\pi]\chi \rrb_{M} \text{ and } (\llb \pi \rrb_M \cap \bigcap\mathscr{R})[w] \subseteq \llb [\Uparrow\pi]\chi \rrb_{M}\\
& (\text{ as } w\in \big(\llb \pi\rrb_M \cap R\big)[w])\\
& M,w\models [\Uparrow\pi]\chi \wedge  \pi^<(\chi) \wedge  \bigwedge_{s\in S_0(I)}( s(\bm{\varphi},\bm{\psi}) \to \Box (\psi_{s_{\mathsf{len}(s)}} \to [\Uparrow \pi]\chi)) \\ 
& (\text{by Lemma } \ref{lemma upg} \text{ and the Claim above} )
\end{align*}
\normalsize
\item Axiom $\text{EU6}_\cap$: 
\[M,w\models  [\Uparrow \pi]\forall\chi \text{ iff }  M^{\Uparrow \pi},w\models \forall \chi \text{ iff } \llb \chi \rrb_{M^{\Uparrow \pi}}=W^{\Uparrow \pi} \\
\text{ iff } \llb  [\Uparrow \pi] \chi \rrb_{M}=W \text{ iff } M,w \models \forall [\Uparrow \pi]\chi\]
\end{enumerate}
\end{proof}
\end{claim}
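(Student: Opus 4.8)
The plan is to exploit the fact that $\mathsf{L}^\Uparrow$ is the base system $\mathsf{L}$ enlarged only by the reduction axioms $\text{EU1}_\cap$–$\text{EU6}_\cap$, with the same inference rules. Soundness of the base axioms and rules is already secured by Theorem~1, so it suffices to check that each of the six reduction axioms is valid on every $\cap$-model $M=\langle W,\mathscr{R},V,Ag_\cap\rangle$. Writing $R\coloneqq\llb\pi\rrb_M$ and recalling that upgrade sends $\mathscr{R}$ to $\mathscr{R}^{\Uparrow\pi}=\{R^<\cup(R\cap R')\mid R'\in\mathscr{R}\}$ while fixing $W$ and $V$, the four ``easy'' axioms fall out by unfolding the clause $M,w\models[\Uparrow\pi]\psi$ iff $M^{\Uparrow\pi},w\models\psi$: $\text{EU1}_\cap$ because $V^{\Uparrow\pi}=V$; $\text{EU2}_\cap$ and $\text{EU3}_\cap$ because $\Uparrow\pi$ is a deterministic, total operation on models, so the dynamic modality commutes with the Boolean connectives; and $\text{EU6}_\cap$ because $W^{\Uparrow\pi}=W$, so the universal modality is preserved.

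The content lies in $\text{EU4}_\cap$ and $\text{EU5}_\cap$, which I would reduce to a single auxiliary lemma. The key observation is that every relation in $\mathscr{R}^{\Uparrow\pi}$ has the shape $R^<\cup(R\cap R')$, so its successor set at $w$ splits as $R^<[w]\cup(R\cap R')[w]$. For $\text{EU4}_\cap$, $M,w\models[\Uparrow\pi]\Box_0\chi$ says some relation in $\mathscr{R}^{\Uparrow\pi}$ supports $\chi$ at $w$ in the upgraded model; pushing the modality inside via $\llb\chi\rrb_{M^{\Uparrow\pi}}=\llb[\Uparrow\pi]\chi\rrb_M$, this becomes ``$R^<[w]\subseteq\llb[\Uparrow\pi]\chi\rrb_M$ and $(R\cap R')[w]\subseteq\llb[\Uparrow\pi]\chi\rrb_M$ for some $R'\in\mathscr{R}$'', with reflexivity of each relation contributing the separate conjunct $w\in\llb[\Uparrow\pi]\chi\rrb_M$. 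The uniform strict-part condition is exactly what the abbreviation $\pi^<(\chi)$ encodes, and the existential over $R'$ for the intersection part is exactly $\pi^\cap(\chi)$, so $\text{EU4}_\cap$ is read off from the lemma. For $\text{EU5}_\cap$ the same decomposition applies after the set identity $\bigcap\mathscr{R}^{\Uparrow\pi}=R^<\cup(R\cap\bigcap\mathscr{R})$, which holds because intersection distributes over the union with the fixed set $R^<$; the $\bigcap\mathscr{R}$ part then yields the $\Box$-guarded conjunction over $S_0(I)$ exactly as in the addition axiom $\text{EA5}_\cap$, while the strict part again yields $\pi^<(\chi)$.

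Everything therefore hinges on the auxiliary lemma, which expresses the two semantic conditions $\exists R'\in\mathscr{R}\,(R\cap R')[w]\subseteq\llb[\Uparrow\pi]\chi\rrb_M$ and $R^<[w]\subseteq\llb[\Uparrow\pi]\chi\rrb_M$ syntactically. For this I would invoke the Normal Form Lemma (Lemma~\ref{unf}) to rewrite $\llb\pi\rrb_M$ as $\bigcup_{s\in S_0(I)}\llb?s(\bm{\varphi},\bm{\psi});A;?\psi_{s_{\mathsf{len}(s)}}\rrb_M\cup\llb?\top\rrb_M$, giving the membership test: $(w,v)\in\llb\pi\rrb_M$ iff $w=v$ or some $s$ witnesses $w\in\llb s(\bm{\varphi},\bm{\psi})\rrb_M$ and $v\in\llb\psi_{s_{\mathsf{len}(s)}}\rrb_M$. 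The $\pi^\cap$ half is then routine bookkeeping against the guess $J(\bm{\varphi})$ of which $\varphi_j$ hold at $w$. The genuinely delicate half is $\pi^<$, where one must also capture the \emph{absence} of a back-edge, $(v,w)\notin\llb\pi\rrb_M$; this is where I expect the main obstacle. Assuming a back-edge existed, the normal form would yield a $vw$-path and hence a witnessing sequence $s''$ ending in some $\psi_j$ with $j$ in the guessed set $J$ of $\psi$-values true at $w$; the $\mathsf{suc^<}$ conjunct, instantiated at the truncation $s^\star$ of $s''$, then forces $\bigwedge_{j\in J}\neg\varphi_j$ at the penultimate vertex of the path, contradicting the edge condition there. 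Matching the quantifier structure of $\mathsf{suc^<}(\chi)$ to the existence and non-existence of $\pi$-paths via Lemma~\ref{unf} is the crux; once the lemma is in hand, $\text{EU4}_\cap$ and $\text{EU5}_\cap$ follow immediately, and soundness of $\mathsf{L}^\Uparrow$ is complete.
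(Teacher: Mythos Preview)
Your proposal is correct and follows essentially the same route as the paper: the easy axioms are dispatched the same way, and for $\text{EU4}_\cap$ and $\text{EU5}_\cap$ you rely on the same successor-set decomposition $R^<[w]\cup(R\cap R')[w]$, the same set identity $\bigcap\mathscr{R}^{\Uparrow\pi}=\llb\pi\rrb_M^<\cup(\llb\pi\rrb_M\cap\bigcap\mathscr{R})$, and the same auxiliary lemma (the paper's Lemma~\ref{lemma upg}) characterizing $\pi^\cap(\chi)$ and $\pi^<(\chi)$ semantically. Your sketch of the delicate $\pi^<$ direction---assume a back-edge, extract a witnessing sequence $s''$ via the normal form, truncate to $s^\star$, and contradict the $\mathsf{suc^<}$ conjunct at the penultimate vertex---is exactly the argument the paper gives.
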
  

\begin{claim} $\mathsf{L}^\Uparrow$ is complete w.r.t. $\cap$-models.
\begin{proof} The follows the same steps used to prove completeness of $\mathsf{L}^!$. 
\end{proof}
\end{claim}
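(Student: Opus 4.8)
The plan is to derive completeness of $\mathsf{L}^\Uparrow$ from the completeness of the static base logic with respect to $\cap$-models (Theorem 1), by the standard reduction-axiom method, exactly as the excerpt does for $\mathsf{L}^!$. The target is to show that every formula of $\mathscr{L}^\Uparrow$ is provably equivalent in $\mathsf{L}^\Uparrow$ to a formula of the static language $\mathscr{L}$. Once this is established, completeness follows by a uniform argument: given an $\mathsf{L}^\Uparrow$-consistent formula $\varphi$, let $\varphi^\star\in\mathscr{L}$ be its static reduct; since $\vdash_{\mathsf{L}^\Uparrow}\varphi\leftrightarrow\varphi^\star$, the formula $\varphi^\star$ is also consistent, so by Theorem 1 it is satisfiable at some state $w$ of a $\cap$-model $M$; and because every reduction axiom is valid (soundness was proven just above), the equivalence $\varphi\leftrightarrow\varphi^\star$ is valid, whence $M,w\models\varphi$. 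Thus every consistent formula is satisfiable on a $\cap$-model, which is completeness.

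The heart of the matter is the elimination of the dynamic modality $[\Uparrow\pi]$. First I would use the Normal Form Lemma (Lemma~\ref{unf}) to assume that each program $\pi$ inside an upgrade modality is in the normal form $\bigcup_{s\in S_0(I)}(?s(\bm{\varphi},\bm{\psi}); A ; ?\psi_{s_{\mathsf{len}(s)}}) \cup (?\top)$ presupposed by the axioms. I then define a translation that pushes $[\Uparrow\pi]$ inward according to the top-level shape of its scope: $\text{EU1}_\cap$ on atoms, $\text{EU2}_\cap$ and $\text{EU3}_\cap$ on Booleans, and $\text{EU4}_\cap$, $\text{EU5}_\cap$, $\text{EU6}_\cap$ on $\Box_0$, $\Box$, $\forall$ respectively. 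Two bookkeeping facts are needed. For nested dynamic modalities of the form $[\Uparrow\rho][\Uparrow\pi]\chi$, no axiom applies directly, so the inner $[\Uparrow\pi]\chi$ must be reduced to static form before the outer modality becomes eliminable; to assemble the local biconditional rewrites into a global provable equivalence one needs replacement of provable equivalents under $[\Uparrow\pi]$, which I would derive from $\text{EU2}_\cap$ and $\text{EU3}_\cap$ (these make $[\Uparrow\pi]$ a Boolean homomorphism, and in particular yield $\vdash[\Uparrow\pi]\top$ and $\vdash[\Uparrow\pi]\bot\leftrightarrow\bot$, from which congruence follows). The remaining Boolean and static-modal clauses of the translation, and the matching $\vdash_{\mathsf{L}^\Uparrow}\varphi\leftrightarrow\varphi^\star$ proof, are routine and parallel the $\mathsf{L}^!$ case.

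The main obstacle is purely the termination of this rewriting, i.e.\ exhibiting a well-founded complexity measure $c$ on $\mathscr{L}^\Uparrow$ that strictly decreases under each rewrite and is monotone under subformula substitution. The difficulty is that the right-hand sides of $\text{EU4}_\cap$ and $\text{EU5}_\cap$ are not visibly simpler: they reintroduce $\Box_0$, $\Box$, $\forall$ and several fresh copies of $[\Uparrow\pi]\chi$ buried inside $\pi^<(\chi)$ and $\pi^\cap(\chi)$, together with the test formulas $s(\bm{\varphi},\bm{\psi})$, $\psi_{s_{\mathsf{len}(s)}}$, $\varphi_j$, $\psi_j$ coming from $\pi$. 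The observations that make a measure work are: the program $\pi$ is left unchanged by every rewrite of a given $[\Uparrow\pi]\chi$, while its scope is always replaced by proper subformulas of $\chi$, so a measure that multiplies a weight for $\pi$ by the modal complexity of $\chi$ strictly decreases; and, by always selecting an \emph{innermost} occurrence of the dynamic modality (one whose scope and whose program tests contain no further $[\Uparrow\cdot]$), the fragments $\varphi_j,\psi_j$ and all the material in $\pi^<(\chi),\pi^\cap(\chi)$ are purely static, so the newly introduced $\Box_0,\Box,\forall$ are applied only to static formulas and contribute nothing further to eliminate. I would therefore weight dynamic modalities above static connectives, set $c([\Uparrow\pi]\chi)$ to dominate $c$ of every right-hand side, and argue by induction on the number of dynamic-modality occurrences, reducing innermost ones first. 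With termination secured, the translation and the lifting argument of the first paragraph deliver completeness of $\mathsf{L}^\Uparrow$ with respect to $\cap$-models.
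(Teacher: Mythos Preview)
Your proposal is correct and follows essentially the same approach as the paper: the paper's proof simply defers to the standard reduction-axiom technique (as for $\mathsf{L}^!$, which in turn cites \cite{vanDitmarsch2007}, Chapter~7), and you have spelled out that technique in detail, including the innermost-first elimination strategy and the termination argument. The paper gives no further details beyond the one-line pointer, so your elaboration is a faithful and more explicit version of the intended argument.
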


\subsection*{PROOF OF THEOREM 3}

We recall the theorem:

\dynamiclex*

We first introduce some lemmas. During our discussion of evidence upgrade in $\cap$-models,  we showed in Lemma \ref{lemma upg} that $[\Uparrow\pi]\chi \wedge\pi^<(\chi)$ is true at a state $w$ in a $\cap$-model $M$ if $[\Uparrow\pi]\chi$ is true at $w$ and $\llb \pi\rrb^<_M[w]\subseteq \llb [\oplus\pi]\chi\rrb_M$. It is easy to see that, since the formula $\pi^<(\chi)$ contains no occurrences of $\Box$ (only $\forall$), the result transfers to $\mathsf{lex}$ models, as the semantics of $\forall$ is the same in all \textsf{REL} models. Hence we have:

\begin{lemma} \label{lemma 1 for lex}
Let $M=\langle W, \lan \mathscr{R},\preceq\ran, V, \mathsf{lex}\rangle$, $w$ a world in $M$, $\pi\in \Pi_*$ be a program with normal form $\bigcup_{s\in S_0(I)}(?s(\bm{\varphi},\bm{\psi}); A ; ?\psi_{s_{\mathsf{len}(s)}}) \cup (?\top)$. Then 
\[M,w\models [\oplus\pi]\chi
\wedge \pi^<(\chi) \emph{ iff } w\in\llb [\oplus\pi]\chi\rrb_M \text{ and } \llb \pi\rrb^<_M[w]\subseteq \llb [\oplus\pi]\chi\rrb_M\]
\begin{proof} Follows from Lemma \ref{lemma upg}. \end{proof}
\end{lemma}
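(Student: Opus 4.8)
The plan is to reduce the statement to Item 2 of Lemma \ref{lemma upg} by observing that the proof of that item, despite being placed in the $\cap$-model section, is entirely \emph{aggregator-free}. Inspecting the abbreviations involved — $\pi^<(\chi)$, $\mathsf{suc^<}(\chi)$, $J(\bm{\varphi})$, $J(\bm{\psi})$ and $s(\bm{\varphi},\bm{\psi})$ — one sees that the only modality occurring in any of them is $\forall$, apart from the compound block $[\oplus\pi]\chi$, which the argument never unfolds and only ever refers to through its extension $\llb [\oplus\pi]\chi\rrb_M$. Since $M,w\models\forall\theta$ iff $\llb\theta\rrb_M=W$ in \emph{every} \textsf{REL} model, these truth conditions do not depend on the aggregator, so in a $\mathsf{lex}$-model they behave exactly as in a $\cap$-model once $[\Uparrow\pi]\chi$ is replaced by $[\oplus\pi]\chi$.

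Concretely, I would re-run the proof of Item 2 of Lemma \ref{lemma upg} with $[\oplus\pi]$ in place of $[\Uparrow\pi]$ and $M$ a $\mathsf{lex}$-model, checking line by line that nothing $\cap$-specific is used. For the left-to-right direction, the first conjunct of $[\oplus\pi]\chi\wedge\pi^<(\chi)$ gives $w\in\llb [\oplus\pi]\chi\rrb_M$; for the inclusion one fixes the disjunct $J$ witnessing $\pi^<(\chi)$, takes $(w,v)\in\llb\pi\rrb^<_M$, applies the Normal Form Lemma (Lemma \ref{unf}) together with the standard $\mathsf{PDL}$ fact about $?\varphi;A;?\psi$ to obtain a sequence $s^\star$ with $w\in\llb s^\star(\bm{\varphi},\bm{\psi})\rrb_M$, $v\in\llb\psi_{s^\star_{\mathsf{len}(s^\star)}}\rrb_M$ and the ``no return'' condition on $v$, instantiates the matching conjunct of $\mathsf{suc^<}(\chi)$ first at $w$ and then at $v$, and closes via the ``append $j$ to $s'$'' move that contradicts $M,w\models J(\bm{\psi})$. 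For the right-to-left direction one picks the $J$ with $M,w\models J(\bm{\psi})$ and, for the relevant $v$, shows $(w,v)\in\llb\pi\rrb_M$ by the same $\mathsf{PDL}$ fact and $(v,w)\notin\llb\pi\rrb_M$ by tracing a hypothetical $vw$-path backwards through Lemma \ref{unf} and applying the $\forall$-conjunct at its penultimate vertex. All the machinery here — Lemma \ref{unf}, the path propositions feeding into it, the $\mathsf{PDL}$ composition facts, and the semantics of $\forall$ — is valid across all \textsf{REL} models, and the aggregator and the specific shape of the model transformation are never invoked. If one prefers not to copy the proof, the clean route is to first abstract Item 2 of Lemma \ref{lemma upg} to: for any \textsf{REL} model $M$ and any formula $\xi$, $M,w\models\xi\wedge\pi^<(\xi)$ iff $w\in\llb\xi\rrb_M$ and $\llb\pi\rrb^<_M[w]\subseteq\llb\xi\rrb_M$; the present lemma is then the instance $\xi=[\oplus\pi]\chi$ in a $\mathsf{lex}$-model.

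The only real work is this bookkeeping verification — confirming that in the original $\cap$-model proof $[\oplus\pi]\chi$ is used only through its extension, that no occurrence of $\Box$ or $\Box_0$ is introduced, and that every relational fact about $\llb\pi\rrb_M$ and $\llb\pi\rrb^<_M$ comes from the model-independent normal-form machinery — after which the transfer is immediate, which is what makes the one-line justification ``follows from Lemma \ref{lemma upg}'' legitimate. I do not expect any genuine mathematical obstacle.
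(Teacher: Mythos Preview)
Your proposal is correct and matches the paper's own justification exactly: the paragraph preceding the lemma in the paper makes precisely your observation that $\pi^<(\chi)$ contains no occurrences of $\Box$ (only $\forall$), so the argument of Item~2 of Lemma~\ref{lemma upg} transfers verbatim once $[\Uparrow\pi]\chi$ is uniformly replaced by $[\oplus\pi]\chi$. Your abstracted version with a generic $\xi$ is a nice cleanup, but substantively you and the paper are doing the same thing.
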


We now introduce another lemma which will be useful in the proof of the reduction axioms.

\begin{lemma} \label{lemma 2 for lex} Let $M=\langle W, \lan \mathscr{R},\preceq\ran, V, \mathsf{lex}\rangle$, $w$ a world in $M$, $\pi\in \Pi_*$ be a program with normal form $\bigcup_{s\in S_0(I)}(?s(\bm{\varphi},\bm{\psi}); A ; ?\psi_{s_{\mathsf{len}(s)}}) \cup (?\top)$. Then $$(\mathsf{lex}(\lan\mathscr{R},\preceq\ran)\cap \llb\pi\rrb_M )[w]\subseteq \llb [\oplus\pi]\chi \rrb_{M}$$
\begin{center}
\emph{iff} 
\end{center}
$$M,w\models [+  \pi]\chi \wedge \bigwedge_{s\in S_0(I)}( s(\bm{\varphi},\bm{\psi}) \to \Box (\psi_{s_{\mathsf{len}(s)}} \to [\oplus\pi]\chi))$$
\begin{proof}
Straightforward.
\end{proof}

\end{lemma}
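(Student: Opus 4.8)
The plan is to mirror the soundness argument for $\text{EA5}_\cap$ given in the proof of Theorem~2, replacing the role played there by $\bigcap\mathscr{R}$ with the aggregated order $\mathsf{lex}(\lan\mathscr{R},\preceq\ran)$, which is precisely the relation over which $\Box$ quantifies in a $\mathsf{lex}$-model. First I would record two reflexivity facts: $\mathsf{lex}(\lan\mathscr{R},\preceq\ran)$ is a preorder (by the definition of an evidence aggregator), and $\llb\pi\rrb_M=\llb\pi\rrb^*_M$ is reflexive since $\pi$ is a $*$-program; hence $(w,w)\in \mathsf{lex}(\lan\mathscr{R},\preceq\ran)\cap\llb\pi\rrb_M$. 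This single observation forces $w\in\llb[\oplus\pi]\chi\rrb_M$ from the left-hand inclusion, yielding the first conjunct on the right, and conversely it lets the case $v=w$ be discharged immediately in the other direction.

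The workhorse is the normal-form characterization of $\llb\pi\rrb_M$ already established in the $\text{EA5}_\cap$ argument: using the Normal Form Lemma~\ref{unf} together with Proposition~\ref{x sees y}, for $v\neq w$ one has $(w,v)\in\llb\pi\rrb_M$ iff there is some $s'\in S_0(I)$ with $w\in\llb s'(\bm{\varphi},\bm{\psi})\rrb_M$ and $v\in\llb\psi_{s'_{\mathsf{len}(s')}}\rrb_M$. I would reuse this equivalence verbatim, noting only that it is a statement about $\llb\pi\rrb_M$ alone and is therefore insensitive to which aggregator the model carries, so it transfers unchanged from the $\cap$-case to the $\mathsf{lex}$-case.

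With these facts in hand the biconditional splits into the two expected directions. For ($\Rightarrow$), assume the inclusion; reflexivity gives the first conjunct, and for each $s$ with $M,w\models s(\bm{\varphi},\bm{\psi})$ I would take an arbitrary $\mathsf{lex}(\lan\mathscr{R},\preceq\ran)$-successor $v$ of $w$ (i.e.\ a $\Box$-accessible world) with $M,v\models\psi_{s_{\mathsf{len}(s)}}$; Proposition~\ref{x sees y} places $(w,v)$ in $\llb\pi\rrb_M$, so $(w,v)\in\mathsf{lex}(\lan\mathscr{R},\preceq\ran)\cap\llb\pi\rrb_M$ and the inclusion delivers $M,v\models[\oplus\pi]\chi$, which is exactly what the conjunct $s(\bm{\varphi},\bm{\psi})\to\Box(\psi_{s_{\mathsf{len}(s)}}\to[\oplus\pi]\chi)$ asserts. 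For ($\Leftarrow$), assume the right-hand formula and take any $v$ with $(w,v)\in\mathsf{lex}(\lan\mathscr{R},\preceq\ran)\cap\llb\pi\rrb_M$; if $v=w$ the first conjunct closes the case, and if $v\neq w$ the characterization supplies an $s'$ with $M,w\models s'(\bm{\varphi},\bm{\psi})$ and $M,v\models\psi_{s'_{\mathsf{len}(s')}}$, while $(w,v)\in\mathsf{lex}(\lan\mathscr{R},\preceq\ran)$ means $v$ is a genuine $\Box$-successor, so instantiating the conjunct indexed by $s'$ gives $M,v\models[\oplus\pi]\chi$.

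The computation is genuinely routine, which is why the paper marks it ``Straightforward''. The only points to watch are bookkeeping ones: that $\Box$ in a $\mathsf{lex}$-model ranges precisely over $\mathsf{lex}(\lan\mathscr{R},\preceq\ran)$-successors, so that ``$\Box$-accessible'' and ``$\mathsf{lex}(\lan\mathscr{R},\preceq\ran)wv$'' are interchangeable; and that the reflexive anchor supplies the first conjunct as $[\oplus\pi]\chi$, in line with the $\text{EA5}_\cap$ template. The main (mild) obstacle is merely keeping the two index names $s$ and $s'$ and the source/target tests $s(\bm{\varphi},\bm{\psi})$ versus $\psi_{s_{\mathsf{len}(s)}}$ aligned across the two directions; no idea beyond the $\text{EA5}_\cap$ proof is needed.
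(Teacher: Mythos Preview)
Your proposal is correct and is exactly the expansion the paper intends: the Claim inside the $\text{EA5}_\cap$ soundness argument carries over verbatim once $\bigcap\mathscr{R}$ is replaced by $\mathsf{lex}(\lan\mathscr{R},\preceq\ran)$, since in a $\mathsf{lex}$-model that is precisely the relation over which $\Box$ ranges. The paper records this as ``Straightforward'' for that reason, and your identification of the reflexivity anchor and the normal-form characterization as the only two ingredients is on target.
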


We now show one more lemma, before presenting the proof system $\mathsf{L}^+_{\mathsf{lex}}$. With these lemmas in place, the proofs of the validity of the reduction axioms in this system will be almost immediate.

\begin{lemma} \label{lemma 3 for lex} Let $M=\langle W, \lan \mathscr{R},\preceq\ran, V, \mathsf{lex}\rangle$, $w$ a world in $M$, $\pi\in \Pi_*$ be a program and $\varphi$ a formula. Then
\[(\mathsf{lex}(\lan\mathscr{R},\preceq\ran)\cap \llb\pi\rrb_M )[w]\subseteq \llb\varphi\rrb_M \text{ and } \llb\pi\rrb^<_M[w]\subseteq \llb\varphi\rrb_M\]
\begin{center}
\emph{iff}
\end{center}
\[\mathsf{lex}(\lan \mathscr{R}^{\oplus\pi},\preceq^{\oplus\pi}\ran)[w]\subseteq \llb\varphi\rrb_M\]

\begin{proof} Straighforward.
\end{proof}

\end{lemma}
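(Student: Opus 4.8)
The plan is to reduce the biconditional to a single relational identity and then read off the two containments as images of a union. Concretely, I would first prove that the prioritized addition of $\llb\pi\rrb_M$ realises a \emph{binary lexicographic merge} of $\llb\pi\rrb_M$ (as top-priority evidence) with the old aggregate, i.e.
\[\mathsf{lex}(\lan \mathscr{R}^{\oplus\pi},\preceq^{\oplus\pi}\ran) = \llb\pi\rrb_M^< \cup \big(\llb\pi\rrb_M \cap \mathsf{lex}(\lan\mathscr{R},\preceq\ran)\big).\]
Granting this, taking the $w$-image of both sides and using $(A\cup B)[w]=A[w]\cup B[w]$ yields
\[\mathsf{lex}(\lan \mathscr{R}^{\oplus\pi},\preceq^{\oplus\pi}\ran)[w] = \big(\mathsf{lex}(\lan\mathscr{R},\preceq\ran)\cap\llb\pi\rrb_M\big)[w] \cup \llb\pi\rrb^<_M[w],\]
and the stated equivalence is then immediate, since a union of two sets is included in $\llb\varphi\rrb_M$ iff each of the two sets is.

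To prove the relational identity I would unfold Definition~\ref{lex def}. Write $P\coloneqq\llb\pi\rrb_M$ and $L\coloneqq\mathsf{lex}(\lan\mathscr{R},\preceq\ran)$. The two structural facts I need about $\preceq^{\oplus\pi}$ are: (i) $P$ is \emph{strictly maximal}, i.e.\ $R'\prec^{\oplus\pi}P$ for every $R'\in\mathscr{R}$ while nothing lies strictly above $P$; and (ii) for $R',R\in\mathscr{R}$ the relation $R'\prec^{\oplus\pi}R$ coincides with $R'\prec R$ (the newly added pairs all have $P$ as their second coordinate). Using (i), the conjunct of the $\mathsf{lex}$-condition corresponding to $R'=P$ forces $Pwv$; using (i)--(ii), for each $R'\in\mathscr{R}$ the corresponding conjunct unfolds to $P^<wv \lor \big(R'wv \lor \exists R\in\mathscr{R}(R'\prec R\wedge R^<wv)\big)$, whose parenthesised part is exactly the $R'$-conjunct of the condition defining $L$. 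Hence $(w,v)\in\mathsf{lex}(\lan\mathscr{R}^{\oplus\pi},\preceq^{\oplus\pi}\ran)$ iff $Pwv$ holds and, for every $R'\in\mathscr{R}$, either $P^<wv$ or the $L$-condition for $R'$ holds.

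A case split on whether $P^<wv$ finishes the identity. If $P^<wv$, then $Pwv$ holds (as $P^<\subseteq P$) and every $R'$-conjunct is satisfied by its $P^<wv$ disjunct, so $(w,v)$ lies in $\mathsf{lex}(\lan\mathscr{R}^{\oplus\pi},\preceq^{\oplus\pi}\ran)$ and also in $P^<$; if $\neg P^<wv$, the remaining requirement is $Pwv$ together with the full $L$-condition, i.e.\ $(w,v)\in P\cap L$, which again matches $P^<\cup(P\cap L)$ on this region. I expect the main obstacle to be precisely the priority bookkeeping in fact (i): one must verify carefully that the relation $\preceq^{\oplus\pi}$ really makes $\llb\pi\rrb_M$ strictly dominate the whole old family (so that the top-priority disjunct $\llb\pi\rrb_M^<$ becomes available and no old relation survives strictly above it), which is what collapses the nested $\mathsf{lex}$-quantifiers to the clean two-piece union; the subsequent case analysis and the passage to $w$-images are routine. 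Lemmas~\ref{lemma 1 for lex} and~\ref{lemma 2 for lex} can also be invoked to phrase the two resulting containments syntactically, but they are not needed for the identity itself.
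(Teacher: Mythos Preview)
The paper records only the word ``Straightforward'' as its proof, so your proposal is already far more detailed than what appears there. Your reduction to the binary lexicographic-merge identity
\[
\mathsf{lex}(\langle \mathscr{R}^{\oplus\pi},\preceq^{\oplus\pi}\rangle)\;=\;\llb\pi\rrb_M^< \,\cup\, \bigl(\llb\pi\rrb_M \cap \mathsf{lex}(\langle\mathscr{R},\preceq\rangle)\bigr)
\]
is the natural argument and is almost certainly what the authors have in mind (this is precisely the Andr\'eka et al.\ merge invoked elsewhere in the paper for evidence upgrade); the passage to $w$-images and the final equivalence are indeed routine once this identity is in hand.

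Your caution about fact (i) is well placed and worth making explicit. The definition $\preceq^{\oplus\pi}=\preceq\cup\{(R',\llb\pi\rrb_M)\mid R'\in\mathscr{R}\}$ only guarantees that $\llb\pi\rrb_M$ is \emph{strictly} maximal when $\llb\pi\rrb_M\notin\mathscr{R}$. If $\llb\pi\rrb_M$ already belongs to $\mathscr{R}$ and was previously strictly below some $R''$, the update makes $\llb\pi\rrb_M$ and $R''$ tied rather than putting $\llb\pi\rrb_M$ strictly on top; in that degenerate case the identity (and in fact the lemma as stated) can fail. The paper does not address this corner case either, so under the tacit assumption $\llb\pi\rrb_M\notin\mathscr{R}$ your argument is complete and matches the intended one.
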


We now turn to the main soundness claim.

\begin{claim} $\mathsf{L}^\oplus$ is sound w.r.t. $\mathsf{lex}$-models.

\begin{proof} Let $M=\langle W, \lan \mathscr{R},\preceq\ran, V, \mathsf{lex}\rangle$, $w$ a world in $M$, $\pi$ be an evidence program with normal form $\bigcup_{s\in S_0(I)}(?s(\bm{\varphi},\bm{\psi}); A ; ?\psi_{s_{\mathsf{len}(s)}}) \cup (?\top)$. 
\begin{enumerate}[leftmargin=*]
\item Axiom $\oplus$EA4: Note that this same axiom was called $\text{EA4}_\cap$ in the system $\mathsf{L}^+_\cap$ for $\cap$-models. The effects on evidence possession (as expressed by $\Box_0$-formulas) of evidence addition in $\cap$-models are the same as the effects of prioritized evidence addition in $\mathsf{lex}$ models; in both cases, the piece of evidence $\llb\pi\rrb_M$ is added to the initial body of evidence $\mathscr{R}$. Thus, it is easy to see that the proof of the validity of $\text{EA4}_\cap$ in $\cap$-models can be straightforwardly adapted to show the validity of $\oplus$EA4 in $\mathsf{lex}$ models.  
\item Axiom $\oplus${EA5}:
\begin{flalign*}
& M,w\models [\oplus\pi]\Box\chi &\\
\text{ iff } & M^{+ \pi},w\models \Box \chi \\
\text{ iff } & \mathsf{lex}(\mathscr{R}^{\oplus\pi},\preceq^{\oplus\pi})[w]\subseteq \llb \chi \rrb_{M^{+^{\oplus\pi}}}\\
\text{ iff } & \mathsf{lex}(\mathscr{R}^{\oplus\pi},\preceq^{\oplus\pi})[w]\subseteq \llb [\oplus\pi]\chi \rrb_{M}\\
\text{ iff } & (\mathsf{lex}(\lan\mathscr{R},\preceq\ran)\cap \llb\pi\rrb_M )[w]\subseteq \llb[\oplus\pi]\chi \rrb_M \text{ and } \llb\pi\rrb^<_M[w]\subseteq \llb[\oplus\pi]\chi \rrb_M  \ \ ( \text{by Lemma } \ref{lemma 3 for lex})\\
\text{ iff } & M,w\models [\oplus\pi]\chi
\wedge \pi^<(\chi) \wedge \bigwedge_{s\in S_0(I)}( s(\bm{\varphi},\bm{\psi}) \to \Box (\psi_{s_{\mathsf{len}(s)}} \to [\oplus\pi]\chi))  (\text{by Lemmas } \ref{lemma 1 for lex},\ref{lemma 2 for lex} )
\end{flalign*}

\item Axiom $\text{EA6}_\cap$: 
\[M,w\models  [\oplus\pi]\forall\chi \text{ iff }  M^{\oplus\pi},w\models \forall \chi \text{ iff } \llb \chi \rrb_{M^{\oplus\pi}}=W^{\oplus\pi} \text{ iff } \llb  [\oplus\pi] \chi \rrb_{M}=W \text{ iff } M,w \models \forall [\oplus\pi]\chi \]

\end{enumerate}
\end{proof}

\end{claim}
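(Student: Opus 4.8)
The plan is to establish soundness by the standard reductive route. Since $\mathsf{L}^\oplus$ is obtained by adjoining the reduction axioms $\oplus\text{EA}1$–$\oplus\text{EA}6$ to the static base $\mathsf{L}_\mathsf{lex}$, whose soundness over $\mathsf{lex}$-models is part of Theorem 1, it suffices to check that each reduction axiom is valid on every $\mathsf{lex}$-model (the inherited inference rules preserve validity by Theorem 1). First I would dispatch the easy axioms. $\oplus\text{EA}1$ holds because prioritized addition changes neither the domain nor the valuation, so $p$ and $[\oplus\pi]p$ have the same truth set. Because $M\mapsto M^{\oplus\llb\pi\rrb_M}$ is a \emph{total} operation on models, the modality $[\oplus\pi]$ behaves like the box of a function: unfolding $M,w\models[\oplus\pi]\chi$ as $M^{\oplus\llb\pi\rrb_M},w\models\chi$ and reasoning Booleanly gives $\oplus\text{EA}2$ and $\oplus\text{EA}3$, while $\oplus\text{EA}6$ follows since $W^{\oplus\llb\pi\rrb_M}=W$, so $\forall$ commutes with the update.

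Next I would treat $\oplus\text{EA}4$, the axiom for $\Box_0$. The key observation is that prioritized addition acts on basic evidence exactly as plain evidence addition does in $\cap$-models: the priority order is irrelevant to the semantics of $\Box_0$, and in both cases the single relation $\llb\pi\rrb_M$ is adjoined to $\mathscr{R}$ while the existing relations are left untouched. Hence a relation witnessing $\Box_0\chi$ after the update is either an old member of $\mathscr{R}$ or the fresh $\llb\pi\rrb_M$, yielding the disjunction in $\oplus\text{EA}4$; its validity proof is the verbatim transfer of the proof of $\text{EA}4_\cap$ from Theorem 2, invoking the normal-form characterization (Lemma \ref{unf}) and Proposition \ref{x sees y} to read off $\llb\pi\rrb_M[w]$ syntactically.

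The heart of the argument, and the main obstacle, is $\oplus\text{EA}5$, where the aggregator genuinely enters. The plan is to first isolate the algebraic fact that inserting $R=\llb\pi\rrb_M$ at the top of the priority order collapses the lexicographic rule into a binary lexicographic merge:
\[
\mathsf{lex}(\lan\mathscr{R}^{\oplus\llb\pi\rrb_M},\preceq^{\oplus\llb\pi\rrb_M}\ran)=\llb\pi\rrb^<_M\cup\big(\llb\pi\rrb_M\cap\mathsf{lex}(\lan\mathscr{R},\preceq\ran)\big).
\]
This is read off Definition \ref{lex def}: because $R$ is placed strictly above every element of $\mathscr{R}$ and nothing lies strictly above $R$ except the always-present trivial order $W^2$ — whose strict part is empty and so discharges no clause — the defining clause for $R$ forces $Rwv$, while for each old relation the presence of $R$ above it lets $R^<wv$ satisfy its clause. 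Taking $w$-successors gives exactly Lemma \ref{lemma 3 for lex}: $\mathsf{lex}(\lan\mathscr{R}^{\oplus\pi},\preceq^{\oplus\pi}\ran)[w]\subseteq\llb\chi\rrb_M$ iff both $\llb\pi\rrb^<_M[w]\subseteq\llb\chi\rrb_M$ and $(\mathsf{lex}(\lan\mathscr{R},\preceq\ran)\cap\llb\pi\rrb_M)[w]\subseteq\llb\chi\rrb_M$. The first conjunct is matched syntactically by $\pi^<(\chi)$ (Lemma \ref{lemma 1 for lex}, whose $\forall$-based formula transfers unchanged from the $\cap$-case since $\forall$ has the same semantics in all \textsf{REL} models), and the second by the conjunct $[\oplus\pi]\chi$ together with $\bigwedge_{s\in S_0(I)}(s(\bm{\varphi},\bm{\psi})\to\Box(\psi_{s_{\mathsf{len}(s)}}\to[\oplus\pi]\chi))$ (Lemma \ref{lemma 2 for lex}, paralleling $\text{EA}5_\cap$). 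Chaining these equivalences with the functionality identity $\llb\chi\rrb_{M^{\oplus\pi}}=\llb[\oplus\pi]\chi\rrb_M$ closes the biconditional $\oplus\text{EA}5$.

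I expect the delicate points to be (a) verifying the collapse identity above with $W^2$ handled correctly — it must never interfere, which relies precisely on its symmetry — and (b) keeping the reflexive diagonal straight, since the normal form carries a $?\top$ (equivalently an $x=y$) disjunct, so the successor-set reading $\mathsf{lex}(\cdots)[w]\subseteq\llb\chi\rrb_M$ must account for $w$ itself, which is covered by the reflexive $[\oplus\pi]\chi$ conjunct. Once the collapse identity and Lemmas \ref{lemma 1 for lex}–\ref{lemma 3 for lex} are secured, matching each side of $\oplus\text{EA}5$ to its syntactic counterpart via Lemma \ref{unf} and Proposition \ref{x sees y} is routine, completing the validity check for all six reduction axioms and hence the soundness of $\mathsf{L}^\oplus$ with respect to $\mathsf{lex}$-models.
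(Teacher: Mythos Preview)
Your proposal is correct and follows essentially the same route as the paper: the easy axioms are dispatched directly, $\oplus$EA4 is reduced to the $\cap$-case since the $\Box_0$-semantics ignores the priority order, and $\oplus$EA5 is handled via the collapse identity (the paper's Lemma~\ref{lemma 3 for lex}) followed by the syntactic matchings of Lemmas~\ref{lemma 1 for lex} and~\ref{lemma 2 for lex}. One small slip: after prioritized addition nothing---including $W^2$---lies strictly above $R$ in $\preceq^{\oplus R}$, so your ``except $W^2$'' remark is off, though your parenthetical correctly notes that $(W^2)^<=\emptyset$ would neutralize it anyway.
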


\begin{claim} $\mathsf{L}^\oplus$ is complete w.r.t. $\mathsf{lex}$-models.
\begin{proof}
The proof follows the same steps used above to prove completeness for other proof systems with dynamic modalities.
\end{proof}

\end{claim}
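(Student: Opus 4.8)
The plan is to prove soundness and strong completeness by the standard reductive analysis. Soundness amounts to checking that each reduction axiom $\oplus$EA1--$\oplus$EA6 is valid over $\mathsf{lex}$-models, the inherited static axioms and rules of $\mathsf{L}_\mathsf{lex}$ being sound already by Theorem 1. Strong completeness will then follow by using these reduction axioms to translate every formula of $\mathscr{L}^\oplus$ into a provably equivalent formula of the static language $\mathscr{L}$, thereby reducing satisfiability of consistent sets to the already-established strong completeness of $\mathsf{L}_\mathsf{lex}$ with respect to $\mathsf{lex}$-models.

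For soundness, the atomic, Boolean and $\forall$ axioms ($\oplus$EA1, $\oplus$EA2, $\oplus$EA3, $\oplus$EA6) are verified directly by unfolding the definition of $M^{\oplus\llb\pi\rrb_M}$, using that prioritized addition changes neither the domain nor the valuation. The $\Box_0$ axiom $\oplus$EA4 is justified exactly as the $\cap$-model addition axiom $\text{EA4}_\cap$: since $\Box_0$ quantifies existentially over the evidence orders and prioritized addition merely augments $\mathscr{R}$ with the single new order $\llb\pi\rrb_M$ irrespective of its position in the priority order, the effect on $\Box_0$ is identical to plain addition. The genuinely new case is the aggregated-evidence axiom $\oplus$EA5, whose soundness rests on a binary-merge decomposition of $\mathsf{lex}$ under top-priority addition: writing $R:=\llb\pi\rrb_M$, I would show
\[
\mathsf{lex}(\lan \mathscr{R}^{\oplus R}, \preceq^{\oplus R}\ran) = R^< \cup \big(R \cap \mathsf{lex}(\lan \mathscr{R}, \preceq\ran)\big).
\]
This is proved from Definition \ref{lex def}: because $R$ is placed strictly above every old evidence order, the top-priority constraint for $R$ forces $Rwv$, while each constraint for an old order is discharged either by $R^< wv$ (when $w$ strictly $R$-dominates $v$) or, on $R$-tied pairs, collapses to the old $\mathsf{lex}$-condition. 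Substituting this identity into the satisfaction clause for $\Box$ splits the requirement into $\llb\pi\rrb^<_M[w]\subseteq\llb[\oplus\pi]\chi\rrb$ and $(R\cap\mathsf{lex})[w]\subseteq\llb[\oplus\pi]\chi\rrb$; the former is a purely $\forall$-based (hence aggregator-insensitive) condition captured syntactically by $\pi^<(\chi)$ via the Normal Form Lemma, and the latter is captured by $\bigwedge_{s}(s(\bm{\varphi},\bm{\psi})\to\Box(\psi_{s_{\mathsf{len}(s)}}\to[\oplus\pi]\chi))$, with $[\oplus\pi]\chi$ itself recording $w\in\llb[\oplus\pi]\chi\rrb$ (valid since $R$ is reflexive).

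For completeness I would define a translation $t:\mathscr{L}^\oplus\to\mathscr{L}$ that repeatedly rewrites innermost dynamic modalities by the reduction axioms, first replacing each program $\pi$ by its normal form so that the $s(\bm{\varphi},\bm{\psi})$ abbreviations apply. Termination is secured by a complexity measure on which every reduction axiom strictly decreases when read left-to-right: dynamic modalities are weighted so that $[\oplus\pi]\heartsuit\chi$ (for $\heartsuit\in\{\Box_0,\Box,\forall\}$) outweighs all its reducts, which holds even though $\oplus$EA5 reintroduces $[\oplus\pi]\chi$ inside $\pi^<(\chi)$ and the $\Box$-conjunct, because there the dynamic modality governs the strictly simpler $\chi$ rather than $\Box\chi$. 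Standard reasoning then yields $\vdash_{\mathsf{L}^\oplus}\chi\leftrightarrow t(\chi)$ for every $\chi$. Given an $\mathsf{L}^\oplus$-consistent set $\Gamma$, its pointwise image $t[\Gamma]\subseteq\mathscr{L}$ is $\mathsf{L}_\mathsf{lex}$-consistent; by Theorem 1 it is satisfied at some world $w$ of a $\mathsf{lex}$-model $M$, and the provable equivalences transfer satisfaction back to $\Gamma$ at $(M,w)$.

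The main obstacle is the soundness of $\oplus$EA5, concentrated in the displayed binary-merge decomposition: one must verify that placing $R$ at strictly highest priority collapses the lexicographic condition to ``$R^<$ on strictly-dominating pairs, old $\mathsf{lex}$ on $R$-tied pairs,'' and that the resulting two inclusions are faithfully rendered by $\pi^<(\chi)$ and the $\Box$-conjunct. The delicate point is that $\mathsf{suc}^<(\chi)$ must express strictness of $\llb\pi\rrb_M$ through its nested quantification over sequences $s,s'$, so checking that this syntactic pattern exactly matches ``$(w,v)\in\llb\pi\rrb_M$ and $(v,w)\notin\llb\pi\rrb_M$'' is where the real care lies. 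A secondary, routine obstacle is exhibiting the terminating complexity measure, since $\oplus$EA5 is not syntactically size-decreasing.
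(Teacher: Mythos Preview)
Your proposal is correct and takes essentially the same approach as the paper: the standard reduction-axiom method, translating $\mathscr{L}^\oplus$ into $\mathscr{L}$ via the provable equivalences and then invoking strong completeness of $\mathsf{L}_\mathsf{lex}$ (Theorem~1). Note that the claim in question is specifically the \emph{completeness} half; the soundness material you develop (the binary-merge decomposition for $\mathsf{lex}$ under top-priority addition and the analysis of $\pi^<(\chi)$) is handled in the paper as a separate soundness claim via Lemmas~\ref{lemma 1 for lex}--\ref{lemma 3 for lex}, so while your treatment is accurate and matches theirs, it exceeds what this particular claim asks for.
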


\subsection*{PROOF OF THEOREM 4}

We recall the theorem:

\lc*

The soundness proof is straighforward. We focus on completeness.  The proof is based on the the completeness-via-canonicity approach. In particular, we construct of a canonical \textsf{REL} model for each $\mathsf{L}_{c}$-consistent theory $T_0$. This model is almost identical to the pre-model used in the completeness proof for $\mathsf{L}_\mathsf{lex}$.

\begin{definition}[Canonical model for $T_0$] A canonical model for $T_0$ is a structure  $M^c=\langle W^c, \lan \mathscr{R}^c, \preceq^c\ran, V^c, Ag^c\rangle$ with:
\small
\begin{itemize}
\item $W^c\coloneqq \{ T \mid T \text{ is a maximally consistent theory and } R^\forall T_0 T \}$
\item $\mathscr{R}^c\coloneqq \{ R^{\Box_0\varphi} \mid \varphi\in \mathscr{L}_{c} \text{ and } (\exists\Box_0\varphi)\in T_0 \}$ 
\item $\preceq^c$ is some preorder on $\mathscr{R}^c$
\item $V^c$ is a valuation function given by $V^c(p)\coloneqq \| p \|$
\item $Ag^c$ is an aggregator for $W^c$ with 
$$
Ag^c(\lan\mathscr{R},\preceq\ran)= 
\begin{cases}
R^{\vec{\pi}} & \text{ if } \lan\mathscr{R},\preceq\ran=\lan \mathscr{R}^{c\oplus\llb \vec{\pi}\rrb_{M^c}}, \preceq^{c\oplus\llb \vec{\pi}\rrb_{M^c}}\ran\\
W^c \times W^c & \text{ otherwise }
\end{cases}
$$
\end{itemize}
\normalsize
where:
\small
\begin{itemize}
\item $R^{\forall}$, and each $R^{\Box_0\varphi}$ are defined in the same way as in Definition \ref{canmod}.
\item for each $\varphi\in \mathscr{L}_{c}$, $\| \varphi \|\coloneqq \{T\in W^c \mid \varphi\in T \}$
\item for each sequence of evidence programs $\vec{\pi}$, $R^{\vec{\pi}}\subseteq W^c \times W^c$ is given by: $R^{\vec{\pi}}TS$ iff for all $\varphi\in \mathscr{L}_{c}$: $\Box^{\vec{\pi}}\varphi\in T \Rightarrow \varphi\in S$.
\end{itemize}
\normalsize
\end{definition}

We first show that this canonical model is indeed a \textsf{REL} model.

\begin{proposition} \label{is rel mod} $M^c$ is a \textsf{REL} model.
\begin{proof} In order to show that $M^c$ is an \textsf{REL} model, we have to show that: 
\begin{enumerate}
\item $\mathscr{R}^c$ is a family of evidence, i.e., every $R\in \mathscr{R}$ is a preorder.
\item $R^{\Box_0\top} = W^c\times W^c$, i.e., the trivial evidence order is an element of $\mathscr{R}^c$, as required.
\item $R^{\vec{\pi}}$ is a preorder for each $\vec{\pi}$, and thus $Ag^c$ is well-defined.
\end{enumerate}
The proofs of all three items are analogous to the ones given in Proposition \ref{is rel mod pre}.
\end{proof}
\end{proposition}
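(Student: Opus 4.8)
The plan is to verify the three conditions listed in the statement—that each $R\in\mathscr{R}^c$ is a preorder, that $R^{\Box_0\top}=W^c\times W^c\in\mathscr{R}^c$, and that each $R^{\vec{\pi}}$ is a preorder so that $Ag^c$ is well-defined—following the blueprint of Proposition \ref{is rel mod pre}, but adapting to two differences from the pre-model case: here $\mathscr{R}^c$ consists \emph{only} of the relations $R^{\Box_0\varphi}$ (there is no distinguished relation $R'$), and the aggregator $Ag^c$ is indexed by arbitrary finite sequences $\vec{\pi}$ of evidence programs rather than by $\Box$ alone.

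For item~1, I would fix an arbitrary $R^{\Box_0\varphi}\in\mathscr{R}^c$ and show it is a preorder exactly as in the pre-model case. Reflexivity is immediate from the definition $R^{\Box_0\varphi}TS$ iff $\Box_0\varphi\in T\Rightarrow\Box_0\varphi\in S$. For transitivity, suppose $R^{\Box_0\varphi}TS$ and $R^{\Box_0\varphi}SU$: if $\Box_0\varphi\notin T$ the implication defining $R^{\Box_0\varphi}TU$ holds vacuously, and if $\Box_0\varphi\in T$ then $\Box_0\varphi\in S$ and hence $\Box_0\varphi\in U$, so $R^{\Box_0\varphi}TU$. For item~2, I would use that $\mathsf{L}_c$ inherits axiom $\mathsf{N}$ for $\Box_0$ (that is, $\Box_0\top$) from $\mathsf{L}_\mathsf{lex}$: since $\Box_0\top$ is then a theorem, it lies in every maximally consistent set, so the implication defining $R^{\Box_0\top}TS$ holds for all $T,S\in W^c$, giving $R^{\Box_0\top}=W^c\times W^c$. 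To see that this relation actually belongs to $\mathscr{R}^c$, I need $(\exists\Box_0\top)\in T_0$; this follows because $\Box_0\top$ is a theorem and $\vdash_{\mathsf{L}_c}\varphi\to\exists\varphi$ (derivable from $\mathsf{T}_\forall$, as in the Existence Lemma for $\forall$), so $\exists\Box_0\top$ is a theorem and hence a member of $T_0$.

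Item~3 is the only step where the parametrized modalities enter, but the argument mirrors the proof that $R^\Box$ is a preorder in the pre-model. By the definition of $\mathsf{L}_c$, the $\mathsf{S4}$ axioms and rules governing $\Box$ in $\mathsf{L}_\mathsf{lex}$ are imposed schematically on each $\Box^{\vec{\pi}}$; in particular, $\mathsf{T}$ and $\mathsf{4}$ hold for every $\Box^{\vec{\pi}}$. Reflexivity of $R^{\vec{\pi}}$ follows from $\mathsf{T}$: if $\Box^{\vec{\pi}}\varphi\in T$, then $(\Box^{\vec{\pi}}\varphi\to\varphi)\in T$, so $\varphi\in T$ by closure under modus ponens, giving $R^{\vec{\pi}}TT$. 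Transitivity follows from $\mathsf{4}$: given $R^{\vec{\pi}}TS$, $R^{\vec{\pi}}SU$, and $\Box^{\vec{\pi}}\varphi\in T$, we obtain $\Box^{\vec{\pi}}\Box^{\vec{\pi}}\varphi\in T$, hence $\Box^{\vec{\pi}}\varphi\in S$, hence $\varphi\in U$. Once every $R^{\vec{\pi}}$ is a preorder, $Ag^c$ is well-defined, since its value on any input is either some $R^{\vec{\pi}}$ or the preorder $W^c\times W^c$.

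I expect no genuine obstacle here: the whole argument is a routine verification, and the only point demanding care is confirming that the $\mathsf{S4}$ reasoning transfers uniformly across \emph{all} sequences $\vec{\pi}$ at once, which is guaranteed by the schematic form of the axioms stipulated in the definition of $\mathsf{L}_c$.
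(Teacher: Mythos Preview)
Your proposal is correct and follows essentially the same approach as the paper's proof of Proposition~\ref{is rel mod pre}, appropriately adapted to the $\mathsf{L}_c$ setting. In fact your item~2 is slightly more careful than the paper's analogous step, since you explicitly verify $(\exists\Box_0\top)\in T_0$ to ensure $R^{\Box_0\top}\in\mathscr{R}^c$ under the membership condition in the definition of $\mathscr{R}^c$.
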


Having established that $M^c$ is a \textsf{REL} model, we prove now the standard lemmas to show that the canonical model works as expected.

\begin{lemma}[Existence Lemma for $\forall$] \label{ex forall} $\|\exists\varphi\|\neq\emptyset$ iff $\|\varphi\|\neq\emptyset$.
\begin{proof} Same as in Lemma \ref{ex forall lex}.
\end{proof}
\end{lemma}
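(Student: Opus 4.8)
The plan is to transfer the argument for Lemma~\ref{ex forall lex} essentially verbatim, since that proof never touches the $\Box$-machinery or the aggregator: it relies only on the $\mathsf{S5}$ axioms for $\forall$, the definitions of $W^c$ and $R^\forall$, Lindenbaum's Lemma, and the standard properties of maximal consistent sets. All of these are shared between $\mathsf{L}_\mathsf{lex}$ and $\mathsf{L}_c$ — in particular $\mathsf{L}_c$ contains the full $\mathsf{S5}$ package for $\forall$, and $W^c$, $R^\forall$ in the present canonical model are defined exactly as in Definition~\ref{canmod} — so the only adjustment is to read the formulas as ranging over $\mathscr{L}_c$ rather than $\mathscr{L}$ and to replace $\vdash_{\mathsf{L}_{\mathsf{lex}}}$ by $\vdash_{\mathsf{L}_c}$ throughout.

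For the direction $\|\exists\varphi\|\neq\emptyset \Rightarrow \|\varphi\|\neq\emptyset$, I would fix some $T\in\|\exists\varphi\|$ and show that $\Gamma\coloneqq\{\forall\psi \mid (\forall\psi)\in T\}\cup\{\varphi\}$ is $\mathsf{L}_c$-consistent by the same reductio as before: an inconsistency would yield $\vdash_{\mathsf{L}_c}(\forall\psi_1\wedge\dots\wedge\forall\psi_n)\to\neg\varphi$, and then, using Necessitation for $\forall$, $\mathsf{K}_\forall$, $\mathsf{4}_\forall$, and the $\mathsf{S5}$ theorem $\vdash_{\mathsf{L}_c}(\forall\forall\psi_1\wedge\dots\wedge\forall\forall\psi_n)\to\forall(\forall\psi_1\wedge\dots\wedge\forall\psi_n)$, one derives $(\forall\neg\varphi)\in T$, contradicting $(\exists\varphi)\in T$. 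Lindenbaum's Lemma then gives a maximal consistent $S\supseteq\Gamma$; since $\{\forall\theta \mid (\forall\theta)\in T_0\}\subseteq\{\forall\psi \mid (\forall\psi)\in T\}\subseteq S$ we obtain $R^\forall T_0 S$, so $S\in W^c$, and $\varphi\in\Gamma\subseteq S$ gives $S\in\|\varphi\|$, as required.

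For the converse $\|\varphi\|\neq\emptyset \Rightarrow \|\exists\varphi\|\neq\emptyset$, I would take any $T\in\|\varphi\|$ and use $\mathsf{T}_\forall$ to get $\vdash_{\mathsf{L}_c}\forall\neg\varphi\to\neg\varphi$; contraposition yields $\vdash_{\mathsf{L}_c}\varphi\to\exists\varphi$, so closure under modus ponens places $(\exists\varphi)\in T$, whence $T\in\|\exists\varphi\|$. There is no genuine obstacle here: the content of the lemma is entirely inherited from the $\forall$-fragment, which is identical across the two systems, and the only point that deserves a line of verification is that none of the proof-theoretic steps invoked above secretly appeals to a $\Box$- or $\Box^{\vec{\pi}}$-principle — they do not, since every derivation used is a pure $\mathsf{S5}_\forall$ derivation — so the argument goes through unchanged.
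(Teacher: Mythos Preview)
Your proposal is correct and is exactly the approach the paper takes: the paper's own proof simply reads ``Same as in Lemma~\ref{ex forall lex}'', and you have faithfully transcribed that argument while checking that the only ingredients used are the $\mathsf{S5}$ axioms for $\forall$, Lindenbaum's Lemma, and the shared definitions of $W^c$ and $R^\forall$. Your explicit remark that no $\Box$- or $\Box^{\vec\pi}$-principle is invoked is precisely the point that justifies the cross-reference.
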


\begin{lemma}[Existence Lemma for $\Box^{\vec{\pi}}$] \label{ex box} Let $\vec{\pi}=\lan \pi_1,\dots,\pi_n\ran$ be arbitrary. $T\in \|\Diamond^{\vec{\pi}}\varphi\|$ iff there is an $S\in \|\varphi\|$ such that $R^{\vec{\pi}}TS$.
\begin{proof}  Analogous to the proof of Lemma \ref{ex box lex}.
\end{proof}
\end{lemma}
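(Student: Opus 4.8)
The plan is to obtain this existence lemma by a line-by-line adaptation of the Existence Lemma for $\Box$ from the $\mathsf{L}_\mathsf{lex}$ completeness argument (Lemma~\ref{ex box lex}). The only properties of $\Box$ used there are that $\mathsf{L}_\mathsf{lex}$ proves the $\mathsf{S4}$-schemes $\mathsf{K}_\Box$, $\mathsf{4}_\Box$ and $\mathsf{T}_\Box$, has the Necessitation rule for $\Box$, and proves Universality for $\Box$ (the scheme $\forall\varphi\to\Box\varphi$). By the definition of $\mathsf{L}_c$, each of these schemes and rules holds for every modality $\Box^{\vec\pi}$ with $\vec\pi$ a (possibly empty) finite sequence of evidence programs; the label $\vec\pi$ is simply carried along unchanged through the whole derivation, so nothing genuinely new is required. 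In particular, the short direction is: from $\varphi\in T$, $\mathsf{T}_{\Box^{\vec\pi}}$ gives $\vdash_{\mathsf{L}_c}\Box^{\vec\pi}\neg\varphi\to\neg\varphi$, whence by contraposition $\vdash_{\mathsf{L}_c}\varphi\to\Diamond^{\vec\pi}\varphi$, and closure of $T$ under modus ponens yields $\Diamond^{\vec\pi}\varphi\in T$, i.e.\ $T\in\|\Diamond^{\vec\pi}\varphi\|$.

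For the main direction, assume $\Diamond^{\vec\pi}\varphi\in T\in W^c$ and set $\Gamma\coloneqq\{\psi\mid\Box^{\vec\pi}\psi\in T\}\cup\{\forall\theta\mid\forall\theta\in T_0\}\cup\{\varphi\}$. I would first show $\Gamma$ is consistent. If not, some finite subset is inconsistent; using the $\mathsf{L}_c$-theorems $\Box^{\vec\pi}(\psi_1\wedge\dots\wedge\psi_k)\leftrightarrow(\Box^{\vec\pi}\psi_1\wedge\dots\wedge\Box^{\vec\pi}\psi_k)$ and $\forall(\theta_1\wedge\dots\wedge\theta_k)\leftrightarrow(\forall\theta_1\wedge\dots\wedge\forall\theta_k)$ one reduces to a subset of the form $\{\Box^{\vec\pi}\psi,\forall\theta,\neg\varphi\}$, so $\vdash_{\mathsf{L}_c}(\Box^{\vec\pi}\psi\wedge\forall\theta)\to\neg\varphi$. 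Applying Necessitation for $\Box^{\vec\pi}$ and then $\mathsf{K}_{\Box^{\vec\pi}}$ yields $\vdash_{\mathsf{L}_c}\Box^{\vec\pi}(\Box^{\vec\pi}\psi\wedge\forall\theta)\to\Box^{\vec\pi}\neg\varphi$; distributing $\Box^{\vec\pi}$ over $\wedge$ and using $\mathsf{4}_{\Box^{\vec\pi}}$ (i.e.\ $\vdash\Box^{\vec\pi}\psi\to\Box^{\vec\pi}\Box^{\vec\pi}\psi$), Universality for $\Box^{\vec\pi}$ instantiated at $\forall\theta$ (i.e.\ $\vdash\forall\forall\theta\to\Box^{\vec\pi}\forall\theta$), and $\mathsf{4}_\forall$ (i.e.\ $\vdash\forall\theta\to\forall\forall\theta$), one arrives at $\vdash_{\mathsf{L}_c}(\Box^{\vec\pi}\psi\wedge\forall\theta)\to\Box^{\vec\pi}\neg\varphi$. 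Since $\Box^{\vec\pi}\psi\in T$ and $\forall\theta\in T$ and $T$ is closed under modus ponens, $\Box^{\vec\pi}\neg\varphi\in T$, contradicting $\Diamond^{\vec\pi}\varphi=\neg\Box^{\vec\pi}\neg\varphi\in T$. Given consistency, Lindenbaum's Lemma gives an MCS $S\supseteq\Gamma$; $\{\forall\theta\mid\forall\theta\in T_0\}\subseteq S$ gives $R^\forall T_0 S$, hence $S\in W^c$; $\{\psi\mid\Box^{\vec\pi}\psi\in T\}\subseteq S$ gives $R^{\vec\pi}TS$ by the definition of $R^{\vec\pi}$; and $\varphi\in\Gamma\subseteq S$ gives $S\in\|\varphi\|$.

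The main ``obstacle'' is really just bookkeeping: one must check that each proof-theoretic step above is licensed by $\mathsf{L}_c$ for the particular modality $\Box^{\vec\pi}$, which is immediate from the stipulation in the definition of $\mathsf{L}_c$ that the axioms and rules for $\Box$ in $\mathsf{L}_\mathsf{lex}$ apply verbatim to each $\Box^{\vec\pi}$. The label $\vec\pi$ never interacts with the argument (the reduction axioms PEA1--PEA6 play no role at this stage), so no complication beyond that of Lemma~\ref{ex box lex} arises. The genuinely delicate points of the $\mathsf{L}_c$-completeness proof lie elsewhere — in verifying that $M^c$ is a \textsf{REL} model so that $Ag^c$ is well defined (Proposition~\ref{is rel mod}) and in the $\Box^{\vec\pi}$-clause of the Truth Lemma, where one must see that $Ag^c(\lan\mathscr{R}^{c\oplus\llb\vec\pi\rrb_{M^c}},\preceq^{c\oplus\llb\vec\pi\rrb_{M^c}}\ran)$ coincides with $R^{\vec\pi}$ — and are not needed here.
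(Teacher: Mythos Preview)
Your proposal is correct and is exactly the approach the paper intends: a direct transcription of Lemma~\ref{ex box lex} with $\Box$ replaced by $\Box^{\vec\pi}$, using only the $\mathsf{S4}$ axioms, Necessitation, and Universality for $\Box^{\vec\pi}$ that $\mathsf{L}_c$ inherits from $\mathsf{L}_\mathsf{lex}$. One small slip (which you have faithfully copied from the paper's own Lemma~\ref{ex box lex}): in the easy direction the hypothesis is not $\varphi\in T$ but the existence of some $S\in\|\varphi\|$ with $R^{\vec\pi}TS$; the fix is immediate---if $\Diamond^{\vec\pi}\varphi\notin T$ then $\Box^{\vec\pi}\neg\varphi\in T$, so $\neg\varphi\in S$ by the definition of $R^{\vec\pi}$, contradicting $\varphi\in S$.
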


\begin{lemma}[Existence Lemma for $\Box_0$] \label{ex box0} $T\in \|\Box_0\varphi\|$ iff there is an $R\in\mathscr{R}^c$ such that $R[T]\subseteq \|\varphi\|$.
\begin{proof} Same as in Lemma \ref{ex box0 lex}.
\end{proof}
\end{lemma}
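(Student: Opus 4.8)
The plan is to mirror the proof of the Existence Lemma for $\Box_0$ in the pre-model case (Lemma~\ref{ex box0 lex}), adapting it to the canonical model $M^c$ for $\mathsf{L}_c$. The only structural difference between the two settings is that here $\mathscr{R}^c$ consists \emph{solely} of the relations $R^{\Box_0\varphi}$ with $(\exists\Box_0\varphi)\in T_0$; there is no analogue of the extra relation $R'$ that appears in the pre-model. This in fact simplifies matters, since in the $(\Leftarrow)$ direction we no longer need to treat the case $R = R'$. All the principles invoked below --- $\mathsf{T}_\forall$, $\mathsf{T}_{\Box_0}$, $\mathsf{N}_{\Box_0}$, $\mathsf{4}_{\Box_0}$, Pullout, and the Monotonicity Rule for $\Box_0$ --- belong to $\mathsf{L}_\mathsf{lex}$ and hence to $\mathsf{L}_c$, so the same derivations are available with $\mathscr{L}$ replaced throughout by $\mathscr{L}_c$.

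For the $(\Rightarrow)$ direction I would proceed exactly as before. Assuming $(\Box_0\varphi)\in T$, I first show $\exists\Box_0\varphi\in T_0$: were it not, then $\forall\neg\Box_0\varphi\in T_0$, and since $R^\forall T_0 T$ this forces $\forall\neg\Box_0\varphi\in T$, whence $\mathsf{T}_\forall$ and closure under modus ponens (Proposition~\ref{mcs}) yield $\neg\Box_0\varphi\in T$, contradicting $(\Box_0\varphi)\in T$. Thus $R^{\Box_0\varphi}\in\mathscr{R}^c$, and the definition of $R^{\Box_0\varphi}$ together with $\mathsf{T}_{\Box_0}$ gives $R^{\Box_0\varphi}[T]\subseteq\|\varphi\|$.

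For the $(\Leftarrow)$ direction, suppose $R[T]\subseteq\|\varphi\|$ for some $R\in\mathscr{R}^c$. By the definition of $\mathscr{R}^c$ we have $R = R^{\Box_0\theta}$ for some $\theta\in\mathscr{L}_c$ with $(\exists\Box_0\theta)\in T_0$, and I split on whether $\Box_0\theta\in T$. If $\Box_0\theta\in T$, I show that $\{\Box_0\theta\}\cup\{\forall\psi\mid(\forall\psi)\in T\}\cup\{\neg\varphi\}$ is inconsistent --- otherwise Lindenbaum's Lemma produces an $R^\forall$-successor $S\in W^c$ of $T_0$ with $R^{\Box_0\theta}TS$ and $\varphi\notin S$, contradicting $R^{\Box_0\theta}[T]\subseteq\|\varphi\|$. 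If $\Box_0\theta\notin T$, then $R^{\Box_0\theta}[T]=W^c$, so $W^c\subseteq\|\varphi\|$, and I instead show that $\{\forall\psi\mid(\forall\psi)\in T\}\cup\{\neg\varphi\}$ is inconsistent. In each case, collapsing the finitely many $\forall$-conjuncts via $\vdash\forall(\psi_1\wedge\dots\wedge\psi_n)\leftrightarrow(\forall\psi_1\wedge\dots\wedge\forall\psi_n)$ reduces the derivation to $\vdash(\Box_0\theta\wedge\forall\psi)\to\varphi$ (respectively $\vdash(\Box_0\top\wedge\forall\psi)\to\varphi$, using $\mathsf{N}_{\Box_0}$ to supply $\Box_0\top$).

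The step I expect to be the crux is upgrading this to $\vdash(\Box_0\theta\wedge\forall\psi)\to\Box_0\varphi$, so that modus ponens inside $T$ delivers $\Box_0\varphi\in T$, i.e.\ $T\in\|\Box_0\varphi\|$. This is the same delicate chain used in Lemma~\ref{ex box0 lex}: apply Pullout to rewrite $\Box_0\theta\wedge\forall\psi$ as $\Box_0(\theta\wedge\forall\psi)$, use the Monotonicity Rule for $\Box_0$ to obtain $\vdash\Box_0\Box_0(\theta\wedge\forall\psi)\to\Box_0\varphi$, then $\mathsf{4}_{\Box_0}$ to re-introduce a single box, and Pullout once more to return to $(\Box_0\theta\wedge\forall\psi)\to\Box_0\varphi$. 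Since nothing in this derivation touches the $\Box^{\vec\pi}$ modalities and all the required axioms are inherited from $\mathsf{L}_\mathsf{lex}$, the argument transfers verbatim, and the lemma follows.
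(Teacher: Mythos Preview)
Your proposal is correct and follows the same approach as the paper, which simply refers back to Lemma~\ref{ex box0 lex}. You have moreover made explicit the one genuine difference between the two settings---namely that $\mathscr{R}^c$ here lacks the auxiliary relation $R'$, so case~(i) of the $(\Leftarrow)$ direction disappears---which the paper's terse ``Same as in Lemma~\ref{ex box0 lex}'' glosses over; this is a helpful clarification, and the remaining argument transfers verbatim as you describe.
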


\begin{lemma}[Truth Lemma] \label{truth lemma} For every formula $\varphi\in\mathscr{L}_{c}$, we have: $\llb \varphi\rrb_{M^c}=\|\varphi \|$.
\begin{proof} Analogous to Lemma \ref{truth lemma lex}.
\end{proof}
\end{lemma}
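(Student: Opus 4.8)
The plan is to prove $\llb\varphi\rrb_{M^c}=\|\varphi\|$ by induction on a suitable complexity measure on $\mathscr{L}_c$-formulas, reusing the three Existence Lemmas (Lemmas \ref{ex forall}, \ref{ex box}, \ref{ex box0}) exactly as the proof of Lemma \ref{truth lemma lex} reuses its counterparts. First I would fix the complexity measure so that $\Box^{\vec{\pi}}\psi$ counts as strictly more complex than $\psi$, ensuring the induction hypothesis applies to $\psi$; the test formulas buried inside the programs of $\vec{\pi}$ will play no role in this particular induction, since, as explained below, the aggregator of the canonical model collapses the relevant ordered family directly to the syntactic relation $R^{\vec{\pi}}$. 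The base case $\varphi=p$ is immediate from $V^c(p)=\|p\|$, and the Boolean cases $\varphi=\neg\psi$ and $\varphi=\psi_1\wedge\psi_2$ follow from the induction hypothesis together with the standard maximal-consistent-set facts recorded in Proposition \ref{mcs}.

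For the cases $\varphi=\exists\psi$ and $\varphi=\Box_0\psi$ I would proceed exactly as in Lemma \ref{truth lemma lex}. The case $\varphi=\exists\psi$ chains: $T\in\|\exists\psi\|$ iff $\|\psi\|\neq\emptyset$ (Lemma \ref{ex forall}) iff $\llb\psi\rrb_{M^c}\neq\emptyset$ (induction hypothesis) iff $\llb\exists\psi\rrb_{M^c}=W^c$ iff $T\in\llb\exists\psi\rrb_{M^c}$. The case $\varphi=\Box_0\psi$ chains through Lemma \ref{ex box0}: $T\in\|\Box_0\psi\|$ iff some $R\in\mathscr{R}^c$ has $R[T]\subseteq\|\psi\|$, iff (induction hypothesis) some $R\in\mathscr{R}^c$ has $R[T]\subseteq\llb\psi\rrb_{M^c}$, iff $T\in\llb\Box_0\psi\rrb_{M^c}$.

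The only genuinely new case is $\varphi=\Box^{\vec{\pi}}\psi$, and this is where I would concentrate. The key observation is definitional: by the definition of $Ag^c$, on the ordered family obtained from $\lan\mathscr{R}^c,\preceq^c\ran$ by the iterated prioritized addition of $\llb\vec{\pi}\rrb_{M^c}$, the aggregator returns precisely $R^{\vec{\pi}}$, i.e. $Ag^c(\lan\mathscr{R}^{c\oplus\llb\vec{\pi}\rrb_{M^c}},\preceq^{c\oplus\llb\vec{\pi}\rrb_{M^c}}\ran)=R^{\vec{\pi}}$. Hence the semantic clause for $\Box^{\vec{\pi}}$ gives $\llb\Box^{\vec{\pi}}\psi\rrb_{M^c}=\{T\mid R^{\vec{\pi}}[T]\subseteq\llb\psi\rrb_{M^c}\}$, and by the induction hypothesis on $\psi$ this equals $\{T\mid R^{\vec{\pi}}[T]\subseteq\|\psi\|\}$. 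It remains to match this set with $\|\Box^{\vec{\pi}}\psi\|$, which I would do by dualizing the Existence Lemma for $\Box^{\vec{\pi}}$: $T\notin\|\Box^{\vec{\pi}}\psi\|$ iff $\Diamond^{\vec{\pi}}\neg\psi\in T$ iff (Lemma \ref{ex box}) there is an $S$ with $R^{\vec{\pi}}TS$ and $\neg\psi\in S$, i.e. $S\notin\|\psi\|$, which is exactly $R^{\vec{\pi}}[T]\not\subseteq\|\psi\|$. Thus $T\in\|\Box^{\vec{\pi}}\psi\|$ iff $R^{\vec{\pi}}[T]\subseteq\|\psi\|$ iff $T\in\llb\Box^{\vec{\pi}}\psi\rrb_{M^c}$, completing the case.

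The main obstacle is not any single calculation but the bookkeeping around the aggregator clause: I must verify that the ordered family fed to $Ag^c$ in the semantic clause for $\Box^{\vec{\pi}}\psi$ is literally the one appearing in the case split defining $Ag^c$, so that the output is $R^{\vec{\pi}}$ rather than the default $W^c\times W^c$. This rests on the well-definedness of $Ag^c$ already secured in Proposition \ref{is rel mod}, and on the complexity measure being chosen so that the embedded test formulas of $\vec{\pi}$ never force a circular appeal to the induction hypothesis, since the collapse to $R^{\vec{\pi}}$ is exactly what lets the induction depend only on the strictly smaller formula $\psi$.
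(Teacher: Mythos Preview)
Your proposal is correct and follows essentially the same approach as the paper, which simply says the proof is analogous to Lemma~\ref{truth lemma lex}. You have faithfully unpacked that analogy: the base and Boolean cases are identical, the $\exists$ and $\Box_0$ cases invoke the corresponding Existence Lemmas, and the $\Box^{\vec{\pi}}$ case replaces the single $\Box$ case of Lemma~\ref{truth lemma lex} by appealing to the definition of $Ag^c$ (which returns $R^{\vec{\pi}}$ on the relevant input) together with Lemma~\ref{ex box}, exactly as the paper's proof of Lemma~\ref{truth lemma lex} uses the definition of $Ag^c$ and Lemma~\ref{ex box lex}. Your added remarks about the complexity measure and why the test formulas inside $\vec{\pi}$ do not create a circular induction are more explicit than anything the paper says, but they do not change the underlying argument.
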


\begin{lemma} \label{lemma comp} $\mathsf{L}_c$ is strongly complete with respect to the class of \textsf{REL} models.
\begin{proof} Analogous to Lemma \ref{lemma comp lex}.
\end{proof}
\end{lemma}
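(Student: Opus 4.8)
The plan is to prove soundness by a routine validity check and to obtain strong completeness through a direct canonical-model construction, using the same completeness-via-canonicity machinery that underlies Theorem 1 for $\mathsf{L}_\mathsf{lex}$. The crucial structural observation is that here we axiomatize the class of \emph{all} \textsf{REL} models, whose aggregators are arbitrary. This gives us a freedom we lacked in the $\mathsf{lex}$ case: we may let the canonical model carry a bespoke aggregator defined \emph{by cases}, so that the canonical accessibility relations for the $\Box^{\vec{\pi}}$ modalities can be baked directly into $Ag^c$. Consequently the unraveling/bounded-morphism step forced in the proof of Theorem 1 (where the canonical relations had to be coerced into the shape dictated by $\mathsf{lex}$) is unnecessary, and the argument collapses to a single canonical model. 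For soundness I would simply check that each axiom of $\mathsf{L}_c$ is valid on \textsf{REL} models and that the rules preserve validity; since the $\Box^{\vec{\pi}}$ axioms are exactly the $\mathsf{S4}$ axioms plus universality read off the aggregated-evidence clause, this is immediate.

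For completeness, fix an $\mathsf{L}_c$-consistent set $\Gamma$, extend it by Lindenbaum to a maximal consistent $T_0$, and build $M^c=\langle W^c, \lan \mathscr{R}^c, \preceq^c\ran, V^c, Ag^c\rangle$: here $W^c$ is the set of maximal consistent theories $R^\forall$-reachable from $T_0$; $\mathscr{R}^c=\{R^{\Box_0\varphi}\mid (\exists\Box_0\varphi)\in T_0\}$; $V^c(p)=\|p\|$; for each finite sequence $\vec{\pi}$ of evidence programs, $R^{\vec{\pi}}TS$ holds iff $\Box^{\vec{\pi}}\chi\in T$ implies $\chi\in S$; and $Ag^c$ returns $R^{\vec{\pi}}$ on the ordered family $\lan \mathscr{R}^{c\oplus\llb\vec{\pi}\rrb_{M^c}}, \preceq^{c\oplus\llb\vec{\pi}\rrb_{M^c}}\ran$ obtained from $\lan\mathscr{R}^c,\preceq^c\ran$ by iterated prioritized addition, and $W^c\times W^c$ otherwise. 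I would first verify that $M^c$ is a \textsf{REL} model: each $R^{\Box_0\varphi}$ is a preorder exactly as in the pre-model argument for Theorem 1, the trivial order is realized by $R^{\Box_0\top}=W^c\times W^c$ (using axiom $\mathsf{N}$ for $\Box_0$), and each $R^{\vec{\pi}}$ is a preorder because $\Box^{\vec{\pi}}$ inherits $\mathsf{T}$ and $\mathsf{4}$ from the $\Box$-axioms of $\mathsf{L}_\mathsf{lex}$ — the very computation used there for $R^\Box$.

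Next I would establish the three Existence Lemmas and the Truth Lemma. The Existence Lemma for $\forall$ transfers verbatim from the $\mathsf{L}_\mathsf{lex}$ proof, as does the one for $\Box_0$. The Existence Lemma for $\Box^{\vec{\pi}}$, asserting $T\in\|\Diamond^{\vec{\pi}}\varphi\|$ iff some $S\in\|\varphi\|$ has $R^{\vec{\pi}}TS$, is obtained by copying the $\Box$-case of the $\mathsf{lex}$ proof with $\Box$ replaced by $\Box^{\vec{\pi}}$ throughout and appealing to the transferred $\mathsf{S4}$ and universality axioms. The Truth Lemma then follows by induction on complexity; the only case differing from $\mathsf{L}_\mathsf{lex}$ is $\Box^{\vec{\pi}}$, where one uses the defining clause of $Ag^c$ to rewrite $Ag^c(\lan \mathscr{R}^{c\oplus\llb\vec{\pi}\rrb_{M^c}}, \preceq^{c\oplus\llb\vec{\pi}\rrb_{M^c}}\ran)[T]$ as $R^{\vec{\pi}}[T]$ and then invokes the Existence Lemma together with the induction hypothesis. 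Strong completeness then follows as in Theorem 1: since $M^c,T_0\models\varphi$ for all $\varphi\in T_0\supseteq\Gamma$, every consistent set is satisfiable on a \textsf{REL} model.

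The step I expect to be the main obstacle is establishing that $Ag^c$ is a genuine (well-defined) function on ordered families, which is precisely what legitimizes the $\Box^{\vec{\pi}}$ case of the Truth Lemma. Because the case split is keyed on ordered families rather than on sequences, one must confirm that whenever two distinct sequences $\vec{\pi}$ and $\vec{\rho}$ induce the \emph{same} transformed family, namely $\lan \mathscr{R}^{c\oplus\llb\vec{\pi}\rrb_{M^c}}, \preceq^{c\oplus\llb\vec{\pi}\rrb_{M^c}}\ran=\lan \mathscr{R}^{c\oplus\llb\vec{\rho}\rrb_{M^c}}, \preceq^{c\oplus\llb\vec{\rho}\rrb_{M^c}}\ran$, the prescribed outputs agree, i.e. $R^{\vec{\pi}}=R^{\vec{\rho}}$. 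The cleanest route is to observe that the transformed family, as a piece of data, depends on $\vec{\pi}$ only through the relations $\llb\vec{\pi}\rrb_{M^c}$ and the fixed starting pair $\lan\mathscr{R}^c,\preceq^c\ran$, and that this is exactly the data entering the aggregated-evidence clause defining $\Box^{\vec{\pi}}$; hence coinciding transformed families leave the canonical accessibility relation unchanged, forcing $R^{\vec{\pi}}=R^{\vec{\rho}}$. Once this consistency is secured, $Ag^c$ is well-defined and all remaining verifications are direct transcriptions of the corresponding steps in the $\mathsf{L}_\mathsf{lex}$ argument.
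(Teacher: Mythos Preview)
Your proposal is correct and follows essentially the same route as the paper: a direct canonical-model construction with the aggregator $Ag^c$ defined by cases on transformed families, followed by the existence and truth lemmas transferred verbatim from the $\mathsf{L}_\mathsf{lex}$ pre-model argument, with no unraveling step needed. You actually go slightly beyond the paper in flagging the well-definedness of $Ag^c$ (coherence of the case split when distinct $\vec{\pi}$, $\vec{\rho}$ yield the same transformed family) as a point requiring care; the paper's appendix simply writes down $Ag^c$ and moves on.
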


\subsection{PROOF OF THEOREM 5}

We recall the theorem:

\lcplus*

We consider soundness first.

\begin{claim} The axioms \emph{PEA1-PEA6} are valid. \label{sound rel+}
\begin{proof} Let $M=\langle W, \lan \mathscr{R},\preceq\ran, V, Ag\rangle$ be an \textsf{REL} model, $w$ a world in $M$ and $\vec{\pi}=\lan \pi_1,\dots,\pi_n\ran$ be a sequence of evidence programs with normal form $\bigcup_{s\in S_0(I_i)}(?s(\bm{\varphi},\bm{\psi}); A ; ?\psi_{s_{\mathsf{len}(s)}}) \cup (?\top)$ for each $i\in\{1,\dots,n\}$.
\begin{enumerate}
\item Axiom PEA4: We first prove the following:
\begin{claim} There is an $k\in\{1,\dots,n\}$ such that $\llb \pi_k\rrb_M[w]\subseteq \llb [\oplus \vec{\pi}]\chi \rrb_{M}$ iff $M,w\models [\oplus \vec{\pi}]\chi \wedge \bigvee_{1\leq i\leq n}(\bigwedge_{s\in S_0(I_i)}( s(\bm{\varphi},\bm{\psi}) \to \forall (\psi_{s_{\mathsf{len}(s)}} \to [\oplus \vec{\pi}]\chi)))$.
\begin{proof}
($\Rightarrow$) Suppose there is an $k\in\{1,\dots,n\}$ such that $\llb \pi_k\rrb_M[w]\subseteq \llb [\oplus \vec{\pi}]\chi \rrb_{M}$. As $\pi_k$ is an evidence program, $\llb \pi_k\rrb_M$ is reflexive and thus $M,w\models [\oplus \vec{\pi}]\chi$. It remains to be shown that 
\begin{equation}\label{eq k}
M,w\models \bigvee_{1\leq i\leq n}(\bigwedge_{s\in S_0(I_i)}( s(\bm{\varphi},\bm{\psi}) \to \forall (\psi_{s_{\mathsf{len}(s)}} \to [\oplus \vec{\pi}]\chi)))
\end{equation}
To show (\ref{eq k}), it suffices to find one $i\in\{1,\dots,n\}$ such that 
\[M,w\models \bigwedge_{s\in S_0(I_i)}( s(\bm{\varphi},\bm{\psi}) \to \forall (\psi_{s_{\mathsf{len}(s)}} \to [\oplus \vec{\pi}]\chi))\]
Consider $i=k$, take any $s\in S_0(I_k)$ and suppose that $M,w\models s(\bm{\varphi},\bm{\psi})$. We need to show that $M,w\models \forall (\psi_{s_{\mathsf{len}(s)}} \to [\oplus \vec{\pi}]\chi)$. Take any $v\in W$ and suppose $M,v\models \psi_{s_{\mathsf{len}(s)}}$. If we show that $M,v\models[\oplus \vec{\pi}]\chi$, we are done. Given $M,w\models s(\bm{\varphi},\bm{\psi})$ and $M,v\models \psi_{s_{\mathsf{len}(s)}}$, by Proposition \ref{x sees y}, we have $(w,v)\in \llb ?s(\bm{\varphi},\bm{\psi}); A ; ?\psi_{s_{\mathsf{len}(s)}}\rrb_M$. Thus 
\[(w,v)\in \bigcup_{s\in S_0(I_k)}\llb?s(\bm{\varphi},\bm{\psi}); A ; ?\psi_{s_{\mathsf{len}(s)}}\rrb_M\]
Hence as 
\small
\begin{flalign*}
\llb\pi_k\rrb_M & = \llb\bigcup_{s\in S_0(I_k)}\big(?s(\bm{\varphi},\bm{\psi}); A ; ?\psi_{s_{\mathsf{len}(s)}}\big) \cup (?\top)\rrb_M \\
 & = \llb\bigcup_{s\in S_0(I_k)}\big(?s(\bm{\varphi},\bm{\psi}); A ; ?\psi_{s_{\mathsf{len}(s)}}\big)\rrb_M \cup \llb ?\top\rrb_M \\
 & = \bigcup_{s\in S_0(I_k)}\llb?s(\bm{\varphi},\bm{\psi}); A ; ?\psi_{s_{\mathsf{len}(s)}}\rrb_M \cup \llb ?\top\rrb_M 
\end{flalign*}
\normalsize
we have $(w,v)\in \llb\pi_k\rrb_M$. Hence, given $\llb \pi_k\rrb_M[w]\subseteq \llb [\oplus \vec{\pi}]\chi \rrb_{M}$ we have $M,v\models [\oplus \vec{\pi}]\chi$, as required.

$(\Leftarrow$) Suppose that $M,w\models [\oplus \vec{\pi}]\chi \wedge \bigvee_{1\leq i\leq n}(\bigwedge_{s\in S_0(I_i)}( s(\bm{\varphi},\bm{\psi}) \to \forall (\psi_{s_{\mathsf{len}(s)}} \to [\oplus \vec{\pi}]\chi)))$. Then there is some $k\in\{1,\dots,n\}$ such that 
\begin{equation}\label{eq conj}
M,w\models \bigwedge_{s\in S_0(I_k)}( s(\bm{\varphi},\bm{\psi}) \to \forall (\psi_{s_{\mathsf{len}(s)}} \to [\oplus \vec{\pi}]\chi)))
\end{equation} 
We will show that $\llb \pi_k\rrb_M[w]\subseteq \llb [\oplus \vec{\pi}]\chi \rrb_{M}$. Take any $v$ and suppose $(w,v)\in \llb \pi_k\rrb_M$. We need to show that $v\in \llb [\oplus \vec{\pi}]\chi \rrb_{M}$. If $v=w$, given $M,w\models [\oplus \vec{\pi}]\chi$ we are done. So suppose $v\neq w$. Note that 
\small
\begin{flalign*}
& (w,v) \in \llb\pi_k\rrb_M\\
\text{ iff } & (w,v)\in \llb\bigcup_{s\in S_0(I_k)}?\big(s(\bm{\varphi},\bm{\psi}); A ; ?\psi_{s_{\mathsf{len}(s)}}\big) \cup (?\top)\rrb_M\\
\text{ iff } & (w,v)\in  \llb\bigcup_{s\in S_0(I_k)}\big(?s(\bm{\varphi},\bm{\psi}); A ; ?\psi_{s_{\mathsf{len}(s)}}\big)\rrb_M\text{ or } (w,v)\in\llb?\top\rrb_M\\
\text{ iff } & (w,v)\in  \llb\bigcup_{s\in S_0(I_k)}\big(?s(\bm{\varphi},\bm{\psi}); A ; ?\psi_{s_{\mathsf{len}(s)}}\big)\rrb_M \text{ or } w=v\\
\text{ iff } & (w,v)\in  \llb\bigcup_{s\in S_0(I_k)}\big(?s(\bm{\varphi},\bm{\psi}); A ; ?\psi_{s_{\mathsf{len}(s)}}\big)\rrb_M \ \ \ (\text{as } w\neq v \text{ by assumption })\\
\text{ iff } & (w,v)\in \bigcup_{s\in S_0(I_k)} \llb?s(\bm{\varphi},\bm{\psi}); A ; ?\psi_{s_{\mathsf{len}(s)}}\rrb_M\\
\text{ iff } &\text{ for some } s'\in S_0(I_k), (w,v)\in \llb?s'(\varphi); A ; ?\psi_{s'_{\mathsf{len}(s')}}\rrb_M\\
\text{ iff } &\text{ for some } s'\in S_0(I_k), w\in \llb s'(\varphi)\rrb_M \text{ and } v\in\llb\psi_{s'_{\mathsf{len}(s')}}\rrb_M \ \ \ \text{(by Prop. } \ref{x sees y})
\end{flalign*}
\normalsize
Given (\ref{eq conj}), we have in particular
\[M,w\models s'(\varphi) \to \forall (\psi_{s'_{\mathsf{len}(s)}} \to [\oplus \vec{\pi}]\chi)\]
Thus from $w\in \llb s'(\varphi)\rrb_M$ we get $M,w\models \forall (\psi_{s'_{\mathsf{len}(s)}} \to [\oplus \vec{\pi}]\chi)$. And given $v\in\llb\psi_{s'_{\mathsf{len}(s')}}\rrb_M$ we get $M,v\models [\oplus \vec{\pi}]\chi$, as required.
\end{proof}
\end{claim}
Given the Claim, we have 
\small
\begin{flalign*}
& M,w\models [\oplus \vec{\pi}]\Box_0\chi &\\
\text{ iff } & M^{\oplus \vec{\pi}},w\models \Box_0 \chi \\
\text{ iff } & \text{there is an } R\in \mathscr{R}\cup \{\llb \pi_i \rrb_M \mid i=1,\dots, n\} \text{ such that } R[w]\subseteq \llb \chi \rrb_{M^{\oplus \vec{\pi}}}\\
\text{ iff } & \text{there is an } R\in \mathscr{R}\cup \{\llb \pi_i \rrb_M \mid i=1,\dots, n\}\text{ such that } R[w]\subseteq \llb [\oplus \vec{\pi}]\chi \rrb_{M}\\
\text{ iff } & \text{there is an } R\in \mathscr{R} \text{ such that } R[w]\subseteq \llb [\oplus \vec{\pi}]\chi \rrb_{M}\\ 
& \text{ or }\text{there is an } i\in\{1,\dots, n\} \text{ such that } \llb \pi_i\rrb_M[w]\subseteq \llb [\oplus \vec{\pi}]\chi \rrb_{M}\\
\text{ iff } & M,w\models\Box_0 [\oplus \vec{\pi}]\chi\\ 
& \text{ or } M,w\models [\oplus \vec{\pi}]\chi \wedge \bigvee_{1\leq i\leq n}(\bigwedge_{s\in S_0(I_i)}( s(\bm{\varphi},\bm{\psi}) \to \forall (\psi_{s_{\mathsf{len}(s)}} \to [\oplus \vec{\pi}]\chi))) \\
& (\text{ by the Claim above)}\\
\text{ iff } & M,w\models\Box_0 [\oplus \vec{\pi}]\chi\lor \big([\oplus \vec{\pi}]\chi \wedge \bigvee_{1\leq i\leq n}(\bigwedge_{s\in S_0(I_i)}( s(\bm{\varphi},\bm{\psi}) \to \forall (\psi_{s_{\mathsf{len}(s)}} \to [\oplus \vec{\pi}]\chi)))\big)
\end{flalign*}
\normalsize
\item Axiom EA5:
$M,w\models [\oplus \vec{\pi}]\Box^{\vec{\rho}}\chi$ iff $M^{\oplus \vec{\pi}},w\models \Box^{\vec{\rho}}\chi$ iff  $\lan W^{\oplus \llb\vec{\pi}\rrb_M}, \lan \mathscr{R}^{\oplus \llb\vec{\pi}\rrb_M}, \preceq^{\oplus \llb\vec{\pi}\rrb_M}\ran,V^{\oplus \llb\vec{\pi}\rrb_M},Ag^{\oplus \llb\vec{\pi}\rrb_M}\ran,w \models \Box^{\vec{\rho}}\chi$  iff  
\[\lan W^{\oplus \llb\vec{\pi}\rrb_M\oplus\llb\vec{\rho}\rrb_M}, \lan \mathscr{R}^{\oplus \llb\vec{\pi}\rrb_M\oplus \llb\vec{\rho}\rrb_M}, \preceq^{\oplus \llb\vec{\pi}\rrb_M\oplus \llb\vec{\pi}\rrb_M}\ran,V^{\oplus \llb\vec{\pi}\rrb_M\oplus \llb\vec{\rho}\rrb_M},Ag^{\oplus \llb\vec{\pi}\rrb_M\oplus \llb\vec{\pi}\rrb_M}\ran,w \models \chi\]
iff  $M,w \models \Box^{\vec{\pi}\oplus \vec{\rho}}\chi$
\item Axiom EA6: $M,w\models [\oplus \vec{\pi}]\forall\chi$ iff $M^{\oplus \vec{\pi}},w\models \forall \chi$ iff $\llb \chi \rrb_{M^{+\oplus \vec{\pi}}}=W^{+\oplus \vec{\pi}}$ iff $\llb [\oplus \vec{\pi}] \chi \rrb_{M}=W$ iff $M,w \models \forall [\oplus \vec{\pi}]\chi$.
\end{enumerate}
\end{proof}
\end{claim}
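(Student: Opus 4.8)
The plan is to show that every instance of PEA1--PEA6 is valid, i.e.\ true at every world of every \textsf{REL} model. I would treat each axiom by unfolding the defining clause $M,w\models[\oplus\vec{\pi}]\varphi$ iff $M^{\oplus\llb\vec{\pi}\rrb_M},w\models\varphi$ and tracking the outermost connective of $\varphi$ across the transformation $M\mapsto M^{\oplus\llb\vec{\pi}\rrb_M}$. The whole argument rests on four structural features of iterated prioritized addition (Definition~\ref{iterated}): (i) it fixes the carrier $W$, the valuation $V$ and the aggregator $Ag$; (ii) it is a total deterministic operation, so the transformed model always exists and is unique; (iii) evidence-program denotations are invariant under it, i.e.\ $\llb\vec{\rho}\rrb_{M^{\oplus\llb\vec{\pi}\rrb_M}}=\llb\vec{\rho}\rrb_M$, since the truth map on $\Pi$ refers only to $W$, $V$ and the universal program $A$, never to $\mathscr{R}$ or $\preceq$; and (iv) it is \emph{compositional}, in the sense made precise below. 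Features (i) and (ii) dispatch the easy axioms: valuation-invariance gives PEA1, while determinism and totality give the Boolean axioms PEA2 and PEA3 by the routine unfolding for functional dynamic modalities, and domain-invariance gives PEA6 exactly as in the displayed computation for $\oplus$EA6 (since $W^{\oplus\llb\vec{\pi}\rrb_M}=W$, the $\forall$ quantifies over the same worlds before and after).

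For PEA4 I would split the basic-evidence clause according to where the supporting evidence order lives. In $M^{\oplus\llb\vec{\pi}\rrb_M}$ the family of evidence orders is $\mathscr{R}\cup\{\llb\pi_1\rrb_M,\dots,\llb\pi_n\rrb_M\}$, so $M^{\oplus\llb\vec{\pi}\rrb_M},w\models\Box_0\chi$ holds iff either some $R\in\mathscr{R}$ has $R[w]\subseteq\llb[\oplus\vec{\pi}]\chi\rrb_M$ --- which, using the identity $\llb\chi\rrb_{M^{\oplus\llb\vec{\pi}\rrb_M}}=\llb[\oplus\vec{\pi}]\chi\rrb_M$ that follows straight from the $[\oplus]$-clause, is exactly $\Box_0[\oplus\vec{\pi}]\chi$ --- or some freshly added $\llb\pi_i\rrb_M$ does. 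The content of the axiom is then the auxiliary Claim already flagged in the text: there is an $i$ with $\llb\pi_i\rrb_M[w]\subseteq\llb[\oplus\vec{\pi}]\chi\rrb_M$ iff $w$ satisfies $[\oplus\vec{\pi}]\chi\wedge\bigvee_{1\le i\le n}\bigwedge_{s\in S_0(I_i)}(s(\bm{\varphi},\bm{\psi})\to\forall(\psi_{s_{\mathsf{len}(s)}}\to[\oplus\vec{\pi}]\chi))$. I would prove this Claim by running the normal form of $\pi_i$ (Lemma~\ref{unf}) through Proposition~\ref{x sees y}: reflexivity of a $*$-program handles the $v=w$ case, while a pair $(w,v)\in\llb\pi_i\rrb_M$ with $v\neq w$ is exactly one with $w\in\llb s(\bm{\varphi},\bm{\psi})\rrb_M$ and $v\in\llb\psi_{s_{\mathsf{len}(s)}}\rrb_M$ for some $s\in S_0(I_i)$, which converts the relational inclusion into the displayed Boolean formula and back.

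PEA5 is the genuinely new axiom, since it is the clause that governs the conditional modality $\Box^{\vec{\pi}}$, and it is where feature (iv) is needed. I would make compositionality precise as the identity
\[
\lan(\mathscr{R}^{\oplus\vec{R}})^{\oplus\vec{S}},(\preceq^{\oplus\vec{R}})^{\oplus\vec{S}}\ran=\lan\mathscr{R}^{\oplus(\vec{R}\mid\vec{S})},\preceq^{\oplus(\vec{R}\mid\vec{S})}\ran,
\]
where $\vec{R}=\llb\vec{\pi}\rrb_M$, $\vec{S}=\llb\vec{\rho}\rrb_M$ and $\mid$ denotes concatenation. The underlying families of evidence orders agree trivially ($\mathscr{R}\cup\{R_i\}\cup\{S_j\}$ on both sides); the real work is to check that the two recipes for the priority order in Definition~\ref{iterated} coincide, i.e.\ that adding $\vec{R}$ and then $\vec{S}$ places each $S_j$ strictly above every $R_i$ and every earlier $S$, and each $R_i$ above the original $\mathscr{R}$ and the earlier $R$'s, exactly as the one-shot addition of $\vec{R}\mid\vec{S}$ does. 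Granting this identity, together with invariances (i) and (iii), the full transformed models satisfy $(M^{\oplus\llb\vec{\pi}\rrb_M})^{\oplus\llb\vec{\rho}\rrb_{M^{\oplus\llb\vec{\pi}\rrb_M}}}=M^{\oplus\llb\vec{\pi}\vec{\rho}\rrb_M}$, so unfolding both sides of PEA5 shows that $M^{\oplus\llb\vec{\pi}\rrb_M},w\models\Box^{\vec{\rho}}\chi$ and $M,w\models\Box^{\vec{\pi}\vec{\rho}}\chi$ express the very same condition --- that every $Ag$-successor of $w$ in the ordered family $\lan\mathscr{R}^{\oplus\llb\vec{\pi}\vec{\rho}\rrb_M},\preceq^{\oplus\llb\vec{\pi}\vec{\rho}\rrb_M}\ran$ verifies $\chi$ --- whence the biconditional. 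I expect the priority-order bookkeeping inside this compositionality identity --- matching the nested unions of Definition~\ref{iterated} under composition so that the ``highest priority'' clauses stack correctly --- to be the main obstacle; everything else is unfolding of definitions together with the two standard \textsf{PDL} facts (Proposition~\ref{x sees y} and Lemma~\ref{unf}). Verifying PEA1--PEA6 in this way establishes the validity of all the reduction axioms, which is the claim.
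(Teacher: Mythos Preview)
Your approach is essentially the paper's: PEA1--PEA3 and PEA6 are handled by invariance of $W$, $V$ and totality of $\oplus$; PEA4 by the same auxiliary Claim characterising $\llb\pi_k\rrb_M[w]\subseteq\llb[\oplus\vec{\pi}]\chi\rrb_M$ via the normal form and Proposition~\ref{x sees y}; and PEA5 by unfolding both sides through the compositionality of iterated prioritized addition. If anything you are more explicit than the paper, isolating the identity $\lan(\mathscr{R}^{\oplus\vec{R}})^{\oplus\vec{S}},(\preceq^{\oplus\vec{R}})^{\oplus\vec{S}}\ran=\lan\mathscr{R}^{\oplus(\vec{R}\mid\vec{S})},\preceq^{\oplus(\vec{R}\mid\vec{S})}\ran$ and the program-invariance $\llb\vec{\rho}\rrb_{M^{\oplus\llb\vec{\pi}\rrb_M}}=\llb\vec{\rho}\rrb_M$, both of which the paper's chain of iffs leaves tacit.

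There is, however, one step in your PEA5 argument that does not go through as written, and the paper's displayed proof shares the same lacuna. You conclude that $M^{\oplus\llb\vec{\pi}\rrb_M},w\models\Box^{\vec{\rho}}\chi$ and $M,w\models\Box^{\vec{\pi}\vec{\rho}}\chi$ ``express the very same condition --- that every $Ag$-successor of $w$ \dots\ verifies $\chi$''. But the $\Box^{\vec{\sigma}}$-clause evaluates $\chi$ \emph{in the ambient model}: on the left the successors must lie in $\llb\chi\rrb_{M^{\oplus\llb\vec{\pi}\rrb_M}}$, on the right in $\llb\chi\rrb_M$, and these sets differ as soon as $\chi$ contains $\Box_0$ or $\Box^{\vec{\sigma}}$ (try $\chi=\Box_0 p$ with $\vec{\pi}$ a single dichotomous program for $p$: then $\Box_0 p$ may hold at a world in $M^{\oplus\llb\vec{\pi}\rrb_M}$ while failing in $M$). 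Your compositionality identity equates the two \emph{aggregated relations} but not the two truth sets of $\chi$. For the displayed biconditional to hold one needs $[\oplus\vec{\pi}]\chi$ on the right rather than $\chi$, or else PEA5 must be read as applying only after the inside-out reduction has already rendered $\chi$ free of $\Box_0$ and $\Box^{\vec{\sigma}}$; either way, the step ``express the very same condition'' requires an additional argument you have not supplied.
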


\begin{claim} $\mathsf{L}_c^\oplus$ is complete w.r.t. \textsf{REL} models.
\begin{proof}
Once we have established the validity of the reduction axioms, the proof is standard and follows the same steps used above to prove completeness for other proof systems with dynamic modalities.
\end{proof}
\end{claim}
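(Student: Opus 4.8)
The plan is the usual reductive argument for dynamic logics axiomatised by reduction axioms (as in \cite{kooi2004reduction,vanDitmarsch2007}), leveraging the strong completeness of the static base logic $\mathsf{L}_c$ with respect to \textsf{REL} models, which is established above. I would handle soundness first, and it is almost immediate: $\mathsf{L}_c^\oplus$ extends $\mathsf{L}_c$ only by the reduction axioms PEA1--PEA6 (the inference rules and remaining axioms are those of $\mathsf{L}_c$, already sound), so it suffices to check that each of PEA1--PEA6 is valid over \textsf{REL} models. PEA1--PEA3 and PEA6 fall out by unfolding the definition of $M^{\oplus\vec R}$ and using that prioritized addition leaves $W$, $V$ and $Ag$ untouched; PEA4 requires in addition the Normal Form Lemma (Lemma~\ref{unf}) to replace each $\ast$-program $\pi_i$ occurring in $\vec\pi$ by a \emph{finite} union $\bigcup_{s}(?s(\bm\varphi,\bm\psi);A;?\psi_{s_{\mathsf{len}(s)}})\cup(?\top)$ --- the union being finite since, by Proposition~\ref{pathshort}, sequences $s$ longer than the size of the index set are redundant --- together with Proposition~\ref{x sees y} to compute membership in a program's extension; PEA5 follows by unfolding the semantics of the conditional modality and observing that iterated prioritized addition composes under concatenation of the program sequences.

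For completeness I would introduce a translation $t\colon \mathscr{L}_c^\oplus \to \mathscr{L}_c$ eliminating the $[\oplus\vec\pi]$ modalities by reading PEA1--PEA6 left-to-right, pushing an innermost dynamic modality toward the atoms. Concretely, $t$ is the identity on atoms and commutes with the Boolean connectives, with $\forall$ and with $\Box_0$; on a formula $[\oplus\vec\pi]\psi$ it recurses according to the shape of $\psi$: $t([\oplus\vec\pi]p)=p$; $t([\oplus\vec\pi]\neg\chi)=\neg\, t([\oplus\vec\pi]\chi)$ and likewise for $\wedge$; $t([\oplus\vec\pi]\forall\chi)=\forall\, t([\oplus\vec\pi]\chi)$; $t([\oplus\vec\pi]\Box_0\chi)$ is obtained from the right-hand side of PEA4 by recursively translating the displayed occurrences of $[\oplus\vec\pi]\chi$ (the auxiliary formulas $s(\bm\varphi,\bm\psi)$ and $\psi_{s_{\mathsf{len}(s)}}$ built from the normal forms already lie in $\mathscr{L}_c$); and, crucially, $t([\oplus\vec\pi]\Box^{\vec\rho}\chi)=\Box^{\vec\pi\vec\rho}\, t(\chi)$ via PEA5, which simply absorbs the dynamic modality into the superscript of the conditional-aggregated-evidence modality. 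One then checks, by induction on a complexity measure $c$ on $\mathscr{L}_c^\oplus$-formulas tailored so that $c$ strictly decreases from left- to right-hand side of every reduction axiom (the kind of measure used in \cite{vanDitmarsch2007}: it weighs an occurrence of $[\oplus\vec\pi]$ by the structural complexity of its scope, and PEA5 removes the outermost dynamic modality outright), that $t$ is total, that $t(\varphi)\in\mathscr{L}_c$ for every $\varphi$, and that $\vdash_{\mathsf{L}_c^\oplus}\varphi\leftrightarrow t(\varphi)$. The last item uses the reduction axioms together with the replacement-of-provable-equivalents rule, derivable in $\mathsf{L}_c^\oplus$ because every subformula context is built from the normal modalities $\forall$ and $\Box^{\vec\rho}$ (for which $\mathsf{L}_c$ has the $\mathsf K$-axiom and necessitation, the axioms for $\Box$ in $\mathsf{L}_{\mathsf{lex}}$ carrying over to each $\Box^{\vec\rho}$) and from $\Box_0$ (for which $\mathsf{L}_c$ has the monotonicity rule).

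With $t$ in hand the argument is routine. Given an $\mathsf{L}_c^\oplus$-consistent set $\Gamma$, the set $t[\Gamma]=\{t(\varphi)\mid\varphi\in\Gamma\}$ is $\mathsf{L}_c^\oplus$-consistent (each $t(\varphi)$ being $\mathsf{L}_c^\oplus$-provably equivalent to $\varphi$), hence $\mathsf{L}_c$-consistent since it lies in $\mathscr{L}_c$ and an $\mathsf{L}_c$-refutation of it would a fortiori be an $\mathsf{L}_c^\oplus$-refutation; by strong completeness of $\mathsf{L}_c$ there is a \textsf{REL} model $M$ and a world $w$ with $M,w\models t(\varphi)$ for all $\varphi\in\Gamma$, and by soundness of PEA1--PEA6 (which yields $\models\varphi\leftrightarrow t(\varphi)$) we conclude $M,w\models\varphi$ for all $\varphi\in\Gamma$. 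Strong completeness of $\mathsf{L}_c^\oplus$ follows.

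\textbf{Main obstacle.} The genuinely delicate point is the well-definedness (termination) of $t$: PEA4 expands the scope of a dynamic modality into the potentially large normal-form formulas, and PEA5 rewrites the superscript sequence of a conditional modality, so one must pin down a complexity measure demonstrably decreasing under all six rewrites --- and, underlying this, one must know that the normal forms supplied by Lemma~\ref{unf} really are finite formulas, which is exactly what Proposition~\ref{pathshort} secures. The remaining ingredients (the detailed validity checks for PEA4 and PEA5, and the equivalence $\vdash_{\mathsf{L}_c^\oplus}\varphi\leftrightarrow t(\varphi)$) are bookkeeping of the same flavour as in the completeness proofs for the earlier dynamic systems.
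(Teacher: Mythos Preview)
Your proposal is correct and is precisely the standard reduction-axiom argument that the paper invokes in its one-line proof; you have simply spelled out the details (translation $t$, termination via a complexity measure, appeal to strong completeness of $\mathsf{L}_c$) that the paper leaves implicit by referring back to the earlier dynamic completeness proofs. One small omission: you should also say how $t$ acts on $\Box^{\vec\rho}$-formulas not in the scope of a dynamic modality (namely $t(\Box^{\vec\rho}\chi)=\Box^{t(\vec\rho)}t(\chi)$, extending $t$ to programs via $t(?\varphi)=?t(\varphi)$), since tests inside $\vec\rho$ may themselves contain dynamic modalities; this is needed for the inside-out recursion to bottom out, but it is routine.
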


\pagebreak

%
% ---- Bibliography ----
%

\bibliographystyle{splncs_srt}

%\bibliography{biblio} 

\begin{thebibliography}{10}

\bibitem{Andr}
Andr\'{e}ka, H., Ryan, M., Schobbens, P.:
\newblock Operators and laws for combining preference relations.
\newblock Journal of Logic and Computation \textbf{12}(1) (2002)  13--53

\bibitem{Baltag2016}
Baltag, A., Bezhanishvili, N., \"{O}zg\"{u}n, A., Smets, S.:
\newblock Justified belief and the topology of evidence.
\newblock In: Logic, Language, Information, and Computation.
\newblock Springer Nature (2016)  83--103

\bibitem{Baltag2014}
Baltag, A., Fiutek, V., Smets, S.
\newblock In: DDL as an ``Internalization'' of Dynamic Belief Revision.
  Springer Netherlands, Dordrecht (2014)  253--280

\bibitem{Baltag_talkingyour}
Baltag, A., Smets, S.:
\newblock Talking your way into agreement: Belief merge by persuasive
  communication (2008)

\bibitem{Baltag_prot}
Baltag, A., Smets, S.:
\newblock Protocols for belief merge: Reaching agreement via communication.
\newblock Logic Journal of IGPL \textbf{21}(3) (2013)  468--487

\bibitem{blackburn2002modal}
Blackburn, P., de~Rijke, M., Venema, Y.:
\newblock Modal Logic.
\newblock Cambridge Tracts in Theoretical Computer Science. Cambridge
  University Press (2002)

\bibitem{chellas}
Chellas, B.F.:
\newblock Modal Logic An Introduction.
\newblock Cambridge Univ Press (1979)

\bibitem{CHRISTENSEN2010}
Christensen, D.:
\newblock Higher-order evidence.
\newblock Philosophy and Phenomenological Research \textbf{81}(1) (may 2010)
  185--215

\bibitem{dietrich2013probabilistic}
Dietrich, F., List, C.:
\newblock Probabilistic opinion pooling generalized part one: General agendas.
\newblock (2013)

\bibitem{faginet}
Fagin, R., Halpern, J.Y., Moses, Y., Vardi, M.Y.:
\newblock {Reasoning About Knowledge}.
\newblock MIT Press (1995)

\bibitem{Girard2011}
Girard, P.
\newblock In: Modal Logic for Lexicographic Preference Aggregation. Springer
  Netherlands, Dordrecht (2011)  97--117

\bibitem{kozen}
Harel, D., Kozen, D., Tiuryn, J.:
\newblock Dynamic Logic (Foundations of Computing).
\newblock The MIT Press (2000)

\bibitem{kooi2004reduction}
Kooi, B., van Benthem, J.:
\newblock Reduction axioms for epistemic actions.
\newblock AiML-2004: Advances in Modal Logic, number UMCS-04-9-1 in Technical
  Report Series (2004)  197--211

\bibitem{Liu2011}
Liu, F.:
\newblock Reasoning about Preference Dynamics.
\newblock Springer Nature (2011)

\bibitem{paris2015pure}
Paris, J., Vencovsk{\'a}, A.:
\newblock Pure inductive logic.
\newblock Cambridge University Press (2015)

\bibitem{SahaRay2013}
Ray, S.S.:
\newblock Graph Theory with Algorithms and its Applications.
\newblock Springer Science + Business Media (2013)

\bibitem{vanBenthem2011}
van Benthem, J.
\newblock In: Belief Update as Social Choice. Springer Netherlands, Dordrecht
  (2011)  151--160

\bibitem{vBDP-evnbdhd-APAL}
{van Benthem}, J., Fern\'andez-Duque, D., Pacuit, E.:
\newblock Evidence and plausibility in neighborhood structures.
\newblock Annals of Pure and Applied Logic \textbf{165}(1) (2014)  106 -- 133

\bibitem{vBP-dynev}
van Benthem, J., Pacuit, E.:
\newblock Dynamic logics of evidence-based beliefs.
\newblock Studia Logica \textbf{99}(1-3) (2011)  61 -- 92

\bibitem{vBP-logdynev}
van Benthem, J., Pacuit, E.:
\newblock Logical dynamics of evidence.
\newblock In van Ditmarsch, H., Lang, J., Ju, S., eds.: Logic, Rationality, and
  Interaction (LORI-III). (2011)  1--27

\bibitem{vanDitmarsch2007}
van Ditmarsch, H., van~der Hoek, W., Kooi, B.:
\newblock Dynamic Epistemic Logic.
\newblock Springer Science + Business Media (2007)

\bibitem{vanEijck}
van Eijck, J.:
\newblock Yet more modal logics of preference change and belief revision.
\newblock In: New Perspectives on Games and Interaction (Amsterdam University
  Press - Texts in Logic and Games).
\newblock Amsterdam University Press (2009)

\end{thebibliography}

\end{document}